\DeclareMathAlphabet\mathbb{U}{msb}{m}{n}
\DeclareMathAlphabet{\mathrm}    {OT1}{cmr}{m}{n}
\DeclareMathAlphabet{\mathrmbf}  {OT1}{cmr}{bx}{n}
\DeclareMathAlphabet{\mathrmit}  {OT1}{cmr}{m}{it}
\DeclareMathAlphabet{\mathrmbfit}{OT1}{cmr}{bx}{it}
\DeclareMathAlphabet{\mathsf}    {OT1}{cmss}{m}{n}
\DeclareMathAlphabet{\mathsfbf}  {OT1}{cmss}{bx}{n}
\DeclareMathAlphabet{\mathsfit}  {OT1}{cmss}{m}{sl}
\DeclareMathAlphabet{\mathtt}    {OT1}{cmtt}{m}{n}
\DeclareMathAlphabet{\mathttbf}  {OT1}{cmtt}{bx}{n}
\DeclareMathAlphabet{\mathttit}  {OT1}{cmtt}{m}{it}
\DeclareMathAlphabet{\mathpzc}   {OT1}{pzc}{m}{it}
\newcommand{\keywords}[1]{\par\addvspace\baselineskip\noindent\enspace\ignorespaces{\bfseries Keywords:\,}#1}
\newcommand{\comment}[1]{}
\begin{document}

\pagestyle{headings}
\title{The {\ttfamily ERA} of {\ttfamily FOLE}: Superstructure} 
\titlerunning{{\ttfamily FOLE-ERA} Superstructure}  
\author{Robert E. Kent}
\institute{Ontologos}
\maketitle

\begin{abstract}
This paper discusses the representation of ontologies
in the first-order logical environment {\ttfamily FOLE}.
An ontology defines the primitives 
with which to model the knowledge resources for a community of discourse.
%(Gruber~\cite{gruber:eds2009}).
These primitives consist of classes, relationships and properties.
%,
%are represented by the entity-relationship-attribute {\ttfamily ERA} data model
%(Chen~\cite{chen:76}). 
An ontology uses formal axioms to constrain the interpretation of these primitives. 
In short, an ontology specifies a logical theory.
%
%%%%%%%%%%%%%%%%%%%%%%%%%%%%%%%%%%%%%%%%%%%%%%%%%%%%%%%%%%%%%%%%%%%%%%%%%%%%%%%%%%%%%%%%%%
%\newline
%\rule{290pt}{1pt}
%\newline
%%%%%%%%%%%%%%%%%%%%%%%%%%%%%%%%%%%%%%%%%%%%%%%%%%%%%%%%%%%%%%%%%%%%%%%%%%%%%%%%%%%%%%%%%%
%
This paper continues the discussion 
of the representation 
and interpretation
of ontologies
in the first-order logical environment {\ttfamily FOLE}.
%(Kent~\cite{kent:iccs2013}).
%\begin{itemize}
%\item 
%Ontologies are represented and interpreted in
%%\item 
%(many-sorted) first-order logic.
%\end{itemize}
%
%%%%%%%%%%%%%%%%%%%%%%%%%%%%%%%%%%%%%%%%%%%%%%%%%%%%%%%%%%%%%%%%%%%%%%%%%%%%%%%%%%%%%%%%%%
%\newline
%\rule{290pt}{1pt}
%\newline
%%%%%%%%%%%%%%%%%%%%%%%%%%%%%%%%%%%%%%%%%%%%%%%%%%%%%%%%%%%%%%%%%%%%%%%%%%%%%%%%%%%%%%%%%%
%
%Five papers 
%We
%provide a rigorous mathematical representation for 
%the {\ttfamily ERA} (entity-relationship-attribute) data model 
%%(Chen \cite{chen:76}) 
%in particular,
%and ontologies in general,
%within the first-order logical environment {\ttfamily FOLE}.
%
The formalism and semantics of (many-sorted) first-order logic 
can be developed in both
a \emph{classification form}
and
an 
\emph{interpretation form}.
%
%
%\begin{description}
%\comment{
%\item[classification form:] 
Two papers,
``The {\ttfamily ERA} of {\ttfamily FOLE}: Foundation'',
defining the concept of a structure,
%Kent~\cite{kent:fole:era:found} 
and 
the current paper,
%``The {\ttfamily ERA} of {\ttfamily FOLE}: Superstructure'',
defining the concept of a sound logic,
%\cite{kent:fole:era:supstruc})
%, 
%which provide a \emph{foundation} and \emph{superstructure} for {\ttfamily FOLE},
represent 
the 
%formalism and semantics of (many-sorted) first-order logic 
%in a 
\emph{classification form},
corresponding to ideas discussed in the ``Information Flow Framework''.
%(IFF~\cite{iff}).
%}
%\item[interpretation form:] 
Two papers, 
``The {\ttfamily FOLE} Table'',
defining the concept of a relational table,
%Kent \cite{kent:fole:era:tbl} 
and 
``The {\ttfamily FOLE} Database'',
defining the concept of a relational database,
%\cite{kent:fole:era:db},
represent 
%the interpretation of 
%(many-sorted) first-order logic in an 
the \emph{interpretation form},
expanding on material found in the paper 
``Database Semantics''.
%(Kent~\cite{kent:db:sem}).
%
%\end{description}
%\end{itemize}
%
%Although 
%the classification form 
%%(and \texttt{FOLE} itself)
%follows the entity-relationship-attribute data model of Chen,
%the interpretation form incorporates the relational data model of Codd.
%
%
Although 
the classification form 
%(and \texttt{FOLE} itself)
follows the entity-relationship-attribute data model of Chen,
the interpretation form incorporates the relational data model of Codd.
A fifth paper
``{\ttfamily FOLE} Equivalence''
%\cite{kent:fole:equiv}
proves that
the classification form 
%of {\ttfamily FOLE}
is equivalent to
the interpretation form.
% of {\ttfamily FOLE}.
%
In general,
the {\ttfamily FOLE} representation uses a conceptual structures approach,
that is completely compatible with 
the theory of institutions, 
%(Goguen and Burstall),
%~\cite{goguen:burstall:92}), 
formal concept analysis 
%(Ganter and Wille),
%~\cite{ganter:wille:99}), 
and information flow.
% (Barwise and Seligman).
%~\cite{barwise:seligman:97}).
%
\keywords{formula, constraint, interpretation, satisfaction, consequence.}
%\keywords{formula,sequent,constraint,interpretation,satisfaction,specification,entailment,consequence,logic,soundness,residuation.}
%\keywords{entity, attribute, relationship, schema, universe, structure.}
%\keywords{interpretation, specification, structure, logic, information system, interoperability.}
\end{abstract}

\tableofcontents

%%%%%%%%%%%%%%%%%%%%%%%%%%%%%%%%%%%%%%%%%%%%%%%%%%%%%%%%%%%%%%%%%%%%%%%%%%%%%%%%%%%%%%%%%%%%%%%%%%%%%%%%%%%%%%%%%%%%%%%%

%%%%%%%%%%%%%%%%%%%%%%%%%%%%%%%%%%%%%%%%%%%%%%%%%%%%%%%%%%%%%%%%%%%%%%%%%%%%%%%%%%%%%%%%%%%%%%%%%%%%
%%%%%%%%%%%%%%%%%%%%%%%%%%%%%%%%%%%%%%%%%%%%%%%%%%%%%%%%%%%%%%%%%%%%%%%%%%%%%%%%%%%%%%%%%%%%%%%%%%%%
%%%%%%%%%%%%%%%%%%%%%%%%%%%%%%%%%%%%%%%%%%%%%%%%%%%%%%%%%%%%%%%%%%%%%%%%%%%%%%%%%%%%%%%%%%%%%%%%%%%%
\newpage
\section{Introduction}\label{sec:intro}
%%%%%%%%%%%%%%%%%%%%%%%%%%%%%%%%%%%%%%%%%%%%%%%%%%%%%%%%%%%%%%%%%%%%%%%%%%%%%%%%%%%%%%%%%%%%%%%%%%%%
%%%%%%%%%%%%%%%%%%%%%%%%%%%%%%%%%%%%%%%%%%%%%%%%%%%%%%%%%%%%%%%%%%%%%%%%%%%%%%%%%%%%%%%%%%%%%%%%%%%%
%%%%%%%%%%%%%%%%%%%%%%%%%%%%%%%%%%%%%%%%%%%%%%%%%%%%%%%%%%%%%%%%%%%%%%%%%%%%%%%%%%%%%%%%%%%%%%%%%%%%

%%%%%%%%%%%%%%%%%%%%%%%%%%%%%%%%%%%%%%%%%%%%%%%%%%%%%%%%%%%%%%%%%%%%%%%%%%%%%%%%
%%%%%%%%%%%%%%%%%%%%%%%%%%%%%%%%%%%%%%%%%%%%%%%%%%%%%%%%%%%%%%%%%%%%%%%%%%%%%%%%
%\newpage
%
\subsection{Philosophy.}\label{sub:sec:phil}
%%%%%%%%%%%%%%%%%%%%%%%%%%%%%%%%%%%%%%%%%%%%%%%%%%%%%%%%%%%%%%%%%%%%%%%%%%%%%%%%
%%%%%%%%%%%%%%%%%%%%%%%%%%%%%%%%%%%%%%%%%%%%%%%%%%%%%%%%%%%%%%%%%%%%%%%%%%%%%%%%

%%%%%%%%%%%%%%%%%%%%%%%%%%%%%%%%%%%%%%%%%%%%%%%%%%%%%%%%%%%%%%%%%%%%%%%%%%%%%%%%%%%%%%%%%%
%%%%%%%%%%%%%%%%%%%%%%%%%%%%%%%%%%%%%%%%%%%%%%%%%%%%%%%%%%%%%%%%%%%%%%%%%%%%%%%%%%%%%%%%%%
%\newpage
%\subsection{Architecture.}\label{sub:sec:arch}
%%%%%%%%%%%%%%%%%%%%%%%%%%%%%%%%%%%%%%%%%%%%%%%%%%%%%%%%%%%%%%%%%%%%%%%%%%%%%%%%%%%%%%%%%%
%%%%%%%%%%%%%%%%%%%%%%%%%%%%%%%%%%%%%%%%%%%%%%%%%%%%%%%%%%%%%%%%%%%%%%%%%%%%%%%%%%%%%%%%%%

%Two papers are concerned with the presentation of {\ttfamily FOLE}.
%:
%(1) the {\ttfamily FOLE} foundation (Kent~\cite{kent:fole:era:found}),
%(2) the {\ttfamily FOLE} superstructure, and
%(3) the {\ttfamily FOLE} interpretation (Kent~\cite{kent:fole:era:interp}).
%
%\newline
%\newline
%\begin{description}
%\item[The {\ttfamily FOLE} foundation,] 
%
%\underline{The {\ttfamily FOLE} foundation},
%which is concerned with showing how the {\ttfamily ERA} data model is represented in {\ttfamily %FOLE}, 
%was presented in the paper (Kent~\cite{kent:fole:era:found}) that preceded this one;
%in order to understand the current paper on the {\ttfamily FOLE} superstructure,
%we presume that the reader is familiar with the concepts discussed and the notations used in %the {\ttfamily FOLE} foundation paper.
%
%\newline
%\newline
%
%\item[The {\ttfamily FOLE} superstructure,] 
%\underline{The {\ttfamily FOLE} superstructure},
\comment{
This paper,
which is concerned with the classification form of {\ttfamily FOLE} 
(see Fig.\,\ref{fbr:ctx}),
is presented in this paper in two parts: 
%\begin{itemize}
%\item 
The {\ttfamily FOLE} logical environment is discussed in \S\,\ref{sub:sec:log:env},
where
we define formulas, sequents, constraints; 
we extend interpretation and classification from entity types to formulas;
we define satisfaction for sequents and constraints; and
we show that {\ttfamily FOLE} is an institution and logical environment.
%\item 
The {\ttfamily FOLE} architecture is developed in \S\,\ref{sub:sec:arch:comps},
where
we define the architectural components of specifications and logics
by developing the logical notions of entailment, consequence, residuation and soundness.
%\end{itemize}
%
%\newline
%\newline
%\item[The {\ttfamily FOLE} interpretation,] 
%\underline{The {\ttfamily FOLE} interpretation},
%which is concerned with database interpretation%,
%will be presented in the paper (Kent~\cite{kent:fole:era:interp}) that follows this one.
%
%\newline
%\end{description}
%\newline

%Two further papers are pending on the integration of federated systems of knowledge:
%
%%%%%%%%%%%%%%%%%%%%%%%%%%%%%%%%%%%%%%%%%%%%%%%%%%%%%%%%%%%%%%%%%%%%%%%%%%%%%%%%%%%%%%%%%%
%%%%%%%%%%%%%%%%%%%%%%%%%%%%%%%%%%%%%%%%%%%%%%%%%%%%%%%%%%%%%%%%%%%%%%%%%%%%%%%%%%%%%%%%%%
%\footnote{
The paper ``System Consequence'' (Kent~\cite{kent:iccs2009}) gave a general and abstract solution,
at the level of logical environments,
to the interoperation of information systems via the channel theory of information flow (Barwise and Seligman \cite{barwise:seligman:97}).
Since {\ttfamily FOLE} is a logical environment (see \S\ref{sub:sub:sec:inst:asp}),
we can apply this approach to interoperability
for information systems based on first-order logic and relational databases.
%}
%%%%%%%%%%%%%%%%%%%%%%%%%%%%%%%%%%%%%%%%%%%%%%%%%%%%%%%%%%%%%%%%%%%%%%%%%%%%%%%%%%%%%%%%%%
%%%%%%%%%%%%%%%%%%%%%%%%%%%%%%%%%%%%%%%%%%%%%%%%%%%%%%%%%%%%%%%%%%%%%%%%%%%%%%%%%%%%%%%%%%
%
%one discusses integration over a fixed type domain and
%the other discusses integration over a fixed universe.
}

%\newpage
%\paragraph{Architecture.}

Following the theory of general systems, 
an information system consists of a collection of interconnected parts called information resources and 
a collection of part-part relationships between pairs of information resources called constraints.
Formal information systems have specifications as their information resources.
Semantic information systems have logics as their information resources.
A formal information system has an underlying distributed system with languages as component parts 
(formalism flows along language links).
A semantic information system has an underlying distributed system with structures as component parts
(formalism flows along structure links).
Hence,
semantic information systems allow information flow over a semantic multiverse.

The paper ``System Consequence'' 
%(Kent~\cite{kent:iccs2009}) 
gave a general and abstract solution,
at the level of logical environments,
to the interoperation of information systems via the channel theory of information flow.
Since {\ttfamily FOLE} is a logical environment,
we can apply this approach to interoperability
for information systems based on first-order logic and relational databases.  
In this paper we show that 
formal {\ttfamily FOLE} systems interoperate in a general sense 
(since the context of {\ttfamily FOLE} languages has all sums), whereas
semantic {\ttfamily FOLE} systems interoperate in a restricted sense 
(since the context of {\ttfamily FOLE} structures has sums over fixed universes).
However,
we show that distributed databases in a semantic multiverse are interoperable 
when each defines a portal into a common universe.

The ideas of conservative extensions and modular information systems
can be formulated in terms of channels and system morphisms
at the general and abstract level of logical environments.
By illustrating these ideas
in the {\ttfamily FOLE} logical environment,
we capture the idea of modular federated databases.
%\rule[0pt]{4pt}{1pt}\rule{1pt}{10pt}
%}% system interoperability in FOLE
%%%%%%%%%%%%%%%%%%%%%%%%%%%%%%%%%%%%%%%%%%%%%%%%%%%%%%%%%%%%%%%%%%%%%%%%%%%%%%%%%%%%%%%%%%
%%%%%%%%%%%%%%%%%%%%%%%%%%%%%%%%%%%%%%%%%%%%%%%%%%%%%%%%%%%%%%%%%%%%%%%%%%%%%%%%%%%%%%%%%%

%%%%%%%%%%%%%%%%%%%%%%%%%%%%%%%%%%%%%%%%%%%%%%%%%%%%%%%%%%%%%%%%%%
%%%%%%%%%%%%%%%%%%%%%%%%%%%%%%%%%%%%%%%%%%%%%%%%%%%%%%%%%%%%%%%%%%
%\newpage
\subsection{Knowledge Representation}
%\label{subsec:rel:mod}
%\paragraph{The Relational Model.}
%%%%%%%%%%%%%%%%%%%%%%%%%%%%%%%%%%%%%%%%%%%%%%%%%%%%%%%%%%%%%%%%%
%%%%%%%%%%%%%%%%%%%%%%%%%%%%%%%%%%%%%%%%%%%%%%%%%%%%%%%%%%%%%%%%%

Many-sorted (multi-sorted) first-order predicate logic 
represents a community's ``universe of discourse'' as 
a heterogeneous collection of objects
by conceptually scaling
%partitioning 
the universe according to types.
The \emph{relational model} (Codd~\cite{codd:90}) 
%for database management 
%\cite{codd:90}
%uses a structure and language 
%consistent with this logic.
%The relational model  
is an approach for the information management of a ``community of discourse''
%%%%%%%%%%%%%%%%%%%%%%%%%%%%%%%%%%%%%%%%%%%%%%%%%%%%%%%%%%%%%%%%%%%%%%%%%%%%%%%%
%%%%%%%%%%%%%%%%%%%%%%%%%%%%%%%%%%%%%%%%%%%%%%%%%%%%%%%%%%%%%%%%%%%%%%%%%%%%%%%%
\footnote{Examples include:
an academic discipline;
a commercial enterprise;
library science;
the legal profession;
%fans of a particular sport, such as soccer;
%a religious group;
%one branch of the armed forces;
%the choreographed ballroom dance community;
etc.}
%A query in the latter community might be:
%find all dances,
%whose \emph{rhythm} is ``foxtrot'',
%whose \emph{choreographer} is ``smith'',
%which contains a ``telemark'' \emph{figure}.}
%%%%%%%%%%%%%%%%%%%%%%%%%%%%%%%%%%%%%%%%%%%%%%%%%%%%%%%%%%%%%%%%%%%%%%%%%%%%%%%%
%%%%%%%%%%%%%%%%%%%%%%%%%%%%%%%%%%%%%%%%%%%%%%%%%%%%%%%%%%%%%%%%%%%%%%%%%%%%%%%%
using the semantics and formalism of (many-sorted) first-order predicate logic. 
%\textsf{MFOL}.
%
The relational model was initially discussed in two papers:
``A Relational Model of Data for Large Shared Data Banks''
by Codd \cite{codd:70} 
and
``The Entity-Relationship Model -- Toward a Unified View of Data'' 
by Chen \cite{chen:76}. 
The relational model follows many-sorted logic
by representing data in terms of many-sorted relations, 
subsets of the Cartesian product of multiple domains. 
All data is represented horizontally in terms of tuples, 
which are grouped vertically into relations. 
A database organized in terms of the relational model 
is a called relational database.
The relational model provides a method 
for modeling the data stored in a relational database 
and for defining queries upon it. 
%

%%%%%%%%%%%%%%%%%%%%%%%%%%%%%%%%%%%%%%%%%%%%%%%%%%%%%%%%%%%%%%%%%%%%%%%%%%%%%%%%%%%%%%%%%%
%%%%%%%%%%%%%%%%%%%%%%%%%%%%%%%%%%%%%%%%%%%%%%%%%%%%%%%%%%%%%%%%%%%%%%%%%%%%%%%%%%%%%%%%%%
%\newpage
\subsection{First Order Logical Environment}\label{sub:sec:arch}
%%%%%%%%%%%%%%%%%%%%%%%%%%%%%%%%%%%%%%%%%%%%%%%%%%%%%%%%%%%%%%%%%%%%%%%%%%%%%%%%%%%%%%%%%%
%%%%%%%%%%%%%%%%%%%%%%%%%%%%%%%%%%%%%%%%%%%%%%%%%%%%%%%%%%%%%%%%%%%%%%%%%%%%%%%%%%%%%%%%%%

%%%%%%%%%%%%%%%%%%%%%%%%%%%%%%%%%%%%%%%%%%%%%%%%%%%%%%%%%%%%%%%%%%%%%%%%%%%%%%%%%%%%%%%%%%
%%%%%%%%%%%%%%%%%%%%%%%%%%%%%%%%%%%%%%%%%%%%%%%%%%%%%%%%%%%%%%%%%%%%%%%%%%%%%%%%%%%%%%%%%%
%\newpage
%\subsection{Background}\label{sub:sec:background}
%%%%%%%%%%%%%%%%%%%%%%%%%%%%%%%%%%%%%%%%%%%%%%%%%%%%%%%%%%%%%%%%%%%%%%%%%%%%%%%%%%%%%%%%%%
%%%%%%%%%%%%%%%%%%%%%%%%%%%%%%%%%%%%%%%%%%%%%%%%%%%%%%%%%%%%%%%%%%%%%%%%%%%%%%%%%%%%%%%%%%

%%%%%%%%%%%%%%%%%%%%%%%%%%%%%%%%%%%%%%%%%%%%%%%%%%%%%%%%%%%%%%%%%%%%%%%%%%%%%%%%%%%%%%%%%%
%\newpage
\paragraph{Basics.}
%%%%%%%%%%%%%%%%%%%%%%%%%%%%%%%%%%%%%%%%%%%%%%%%%%%%%%%%%%%%%%%%%%%%%%%%%%%%%%%%%%%%%%%%%%

%The first order logical environment {\texttt{FOLE}} is a category-theoretic approach to many-sorted first order predicate logic.
The \emph{first-order logical environment} \texttt{FOLE}
%(Kent~\cite{kent:iccs2013})
is a category-theoretic representation for 
many-sorted (multi-sorted) first-order predicate logic. 
%
%%%%%%%%%%%%%%%%%%%%%%%%%%%%%%%%%%%%%%%%%%%%%%%%%%%%%%%%%%%%%%%%%%%%%%%%%%%%%%%%
%%%%%%%%%%%%%%%%%%%%%%%%%%%%%%%%%%%%%%%%%%%%%%%%%%%%%%%%%%%%%%%%%%%%%%%%%%%%%%%%
\footnote{Following the original discussion of {\ttfamily FOLE} (Kent~\cite{kent:iccs2013}), 
we use 
the term \emph{mathematical context} for the concept of a category,
the term \emph{passage} for the concept of a functor, and
the term \emph{bridge} for the concept of a natural transformation.
A context represents some ``species of mathematical structure''. 
A passage is a ``natural construction on structures of one species, 
yielding structures of another species'' 
(Goguen \cite{goguen:cm91}).}
%%%%%%%%%%%%%%%%%%%%%%%%%%%%%%%%%%%%%%%%%%%%%%%%%%%%%%%%%%%%%%%%%%%%%%%%%%%%%%%%
%%%%%%%%%%%%%%%%%%%%%%%%%%%%%%%%%%%%%%%%%%%%%%%%%%%%%%%%%%%%%%%%%%%%%%%%%%%%%%%%
% 
%Hence, 
The relational model can naturally be represented in \texttt{FOLE}.
The {\texttt{FOLE}} approach to logic, 
and hence to databases, 
%uses lists and classifications.
%The {\texttt{FOLE}} representation for database tables 
relies upon two mathematical concepts:
(1) lists and (2) classifications.
Lists represent database signatures and tuples;
classifications represent data-types and logical predicates.
{\texttt{FOLE}} 
represents the header of a database table as a list of sorts, and
represents the body of a database table as a set of tuples 
classified by the header.
The notion of a list is common in category theory.
The notion of a classification is described in two books:
``Information Flow: The Logic of Distributed Systems''
by Barwise and Seligman \cite{barwise:seligman:97} and
''Formal Concept Analysis: Mathematical Foundations''
by Ganter and Wille \cite{ganter:wille:99}.

%%%%%%%%%%%%%%%%%%%%%%%%%%%%%%%%%%%%%%%%%%%%%%%%%%%%%%%%%%%%%%%%%%%%%%%%

%%%%%%%%%%%%%%%%%%%%%%%%%%%%%%%%%%%%%%%%%%%%%%%%%%%%%%%%%%%%%%%%%%%%%%%%%%%%%%%%%%%%%%%%%%
%\newpage
\paragraph{Architecture.}
%%%%%%%%%%%%%%%%%%%%%%%%%%%%%%%%%%%%%%%%%%%%%%%%%%%%%%%%%%%%%%%%%%%%%%%%%%%%%%%%%%%%%%%%%%

%\begin{description}
%
%\item[{\fbox{\bf Codd} interpretation form:}]
%.\newline

%\newpage

A series of papers provides a rigorous mathematical basis for {\ttfamily FOLE} 
%interpretation 
%and to define 
by defining 
an architectural semantics for the relational data model,
thus providing the foundation for
the formalism and semantics of first-order logical/relational database systems.
%\item 
%``The First-order Logical Environment''
%\cite{kent:iccs2013},
This architecture 
%of
%are concerned with the presentation of
%{\ttfamily FOLE}
consists of two hierarchies of two nodes each:
the classification hierarchy
and
the interpretation hierarchy.

%These papers expand
%upon material found within the paper ``Database Semantics'' \cite{kent:db:sem}.
%\end{description}
%
%This paper falls into two parts: fixed-shape versus variable-shape structures 
%(schemas, schemed domains, or databases).
%Fixed-shape structures are develop by lifting various components of 
%(signatures, signed domains, or tables) to diagrams. 
%Variable-shape structures are develop by using Kan extensions on fixed-shape structures.

%The first order logical environment {\texttt{FOLE}} is described in multiple papers by the author:
%\begin{itemize}
%\item 
%\item 
%``The {\ttfamily ERA} of {\ttfamily FOLE}: Foundation''
%\cite{kent:fole:era:found},
%\item 
%``The {\ttfamily ERA} of {\ttfamily FOLE}: Superstructure''
%\cite{kent:fole:era:supstruc},
%\item 
%``The {\ttfamily FOLE} Table''
%\cite{kent:fole:era:tbl},
%\end{itemize}
%

\begin{itemize}
%\begin{description}
%
\item
%[classification hierarchy:] 
Two papers provide a precise mathematical basis for \texttt{FOLE} classification.
%\begin{itemize}
%\item 
The paper 
%\item 
``The {\ttfamily ERA} of {\ttfamily FOLE}: Foundation''
\cite{kent:fole:era:found}
%(2015),
develops the notion of a \texttt{FOLE} \underline{\emph{structure}},
following the entity-relationship model of Chen~\cite{chen:76}.
%providing a rigorous mathematical representation for the {\ttfamily ERA}
%data model (Chen~\cite{chen:76}).
% in particular, and ontologies in general, within the first-order
%logical environment \texttt{FOLE}.
%\item 
This provides a basis for
the current paper 
%\item 
``The {\ttfamily ERA} of {\ttfamily FOLE}: Superstructure''
\cite{kent:fole:era:supstruc},
%(2016),
which develops the notion of a \texttt{FOLE} \underline{\emph{sound logic}}.
%\end{itemize}
\newline
\item
%[interpretation hierarchy:] 
Two papers provide a precise mathematical basis for \texttt{FOLE} interpretation.
Both of these papers expand on material found in the paper 
%(Kent~\cite{kent:db:sem}).
``Database Semantics''
\cite{kent:db:sem}.
%\begin{itemize}
%\item 
The paper 
%(Kent \cite{kent:fole:era:tbl})
``The {\ttfamily FOLE} Table''
\cite{kent:fole:era:tbl},
develops the notion of a \texttt{FOLE} \underline{\emph{table}}
following the relational model of Codd~\cite{codd:90}.
%\item 
This provided a basis for the paper 
%(following Kent~\cite{kent:db:sem})
``The {\ttfamily FOLE} Database''
\cite{kent:fole:era:db},
%\cite{kent:fole:era:tbl},
which develops the notion of a \texttt{FOLE} relational \underline{\emph{database}}.
\end{itemize}
%\end{description}
%
%In addition,

%%%%%%%%%%%%%%%%%%%%%%%%%%%%%%%%%%%%%%%%%%%%%%%%%%%%%%%%%%%%%%%%%%
%%%%%%%%%%%%%%%%%%%%%%%%%%%%%%%%%%%%%%%%%%%%%%%%%%%%%%%%%%%%%%%%%%
%\newpage
%\paragraph{Architecture.}
%\subsection{{\ttfamily FOLE} Architecture}\label{sub:sec:architecture}
%\paragraph{Overview.}
%%%%%%%%%%%%%%%%%%%%%%%%%%%%%%%%%%%%%%%%%%%%%%%%%%%%%%%%%%%%%%%%%
%%%%%%%%%%%%%%%%%%%%%%%%%%%%%%%%%%%%%%%%%%%%%%%%%%%%%%%%%%%%%%%%%

%
\begin{flushleft}
{{\setlength{\extrarowheight}{1.6pt}
{{
{\begin{tabular}[t]{l@{\hspace{20pt}}l}
%\rule[-5pt]{0pt}{1pt}
%{\footnotesize{\textsf{Figures}}} &
%{\footnotesize{\textsf{Tables}}}
%\\
{{{\begin{minipage}{230pt}
%{\fbox{\begin{tabular}[t]{p{2.5in}l}
The architecture of
%are concerned with the presentation of
{\ttfamily FOLE}
is pictured briefly
on the right
%in Fig.\,\ref{hierarchy},
and more completely in
Fig.\,1
of the preface of
the paper
%``{\ttfamily FOLE} Equivalence''
\cite{kent:fole:equiv}.
This consists of two hierarchies of two nodes each.
\comment{
\\
\vspace{-16pt}
\begin{description}
\item[{The classification hierarchy}]
%(LHS of Fig.\,\ref{hierarchy})
on the left
defines
%\begin{itemize}
%\item
{\ttfamily FOLE} Structures
\cite{kent:fole:era:found}
at the bottom
%\item
and {\ttfamily FOLE} Sound Logics
\cite{kent:fole:era:supstruc}
at the top.
%\end{itemize}
%
\item[\emph{The interpretation hierarchy}]
on the right
%with
%RHS of Fig.\,\ref{hierarchy}
%\begin{itemize}
%\item
defines
{\ttfamily FOLE} Tables \cite{kent:fole:era:tbl}
at the bottom
%\item
and {\ttfamily FOLE} Databases 
%\cite{kent:fole:era:db}
(this paper)
at the top.
%\end{itemize}
%
\end{description}
\vspace{-5pt}
}
The paper
``{\ttfamily FOLE} Equivalence''
\cite{kent:fole:equiv}
proves that
{\ttfamily FOLE} sound logics
are equivalent to
{\ttfamily FOLE} databases.
%proves the equivalence of these two hierarchies.
%\end{tabular}}}
%
\end{minipage}}}}
%%%%%%%%%%%%%%%%%%%%%%%%%%%%%%%%%%%%%%%%%%%%%%%%%%%%%%%%%%%%%%%%%%%%%%%%%%%%%%%%
%%%%%%%%%%%%%%%%%%%%%%%%%%%%%%%%%%%%%%%%%%%%%%%%%%%%%%%%%%%%%%%%%%%%%%%%%%%%%%%%
&
%%%%%%%%%%%%%%%%%%%%%%%%%%%%%%%%%%%%%%%%%%%%%%%%%%%%%%%%%%%%%%%%%%%%%%%%%%%%%%%%
%%%%%%%%%%%%%%%%%%%%%%%%%%%%%%%%%%%%%%%%%%%%%%%%%%%%%%%%%%%%%%%%%%%%%%%%%%%%%%%%
{{\begin{tabular}{c@{\hspace{5pt}}}
\setlength{\unitlength}{0.36pt}
\begin{picture}(180,120)(-50,-20)
%\put(60.3,80){\makebox(0,0){\huge{$\bullet$}}}
\put(60.5,80.5){\makebox(0,0){\tiny{$\equiv$}}}
%\put(60.5,79.8){\makebox(0,0){\huge{$\circ$}}}
%\put(60.5,79.8){\makebox(0,0){\huge{$\bullet$}}}
%\put(100,70){\vector(-1,0){80}}
%\qbezier(100,75)(60,65)(20,75)
%
\put(-60,93){\makebox(0,0){\tiny{\tt{Relational}}}}
\put(-60,73){\makebox(0,0){\tiny{\tt{Calculus}}}}
%\put(180,30){\makebox(0,0){\huge{$\bullet$}}}
\put(180,10){\makebox(0,0){\tiny{\sf{Relational}}}}
\put(180,-10){\makebox(0,0){\tiny{\sf{Algebra}}}}
\qbezier(100,87)(60,97)(20,87)
\put(20,87){\vector(-4,-1){0}}
\qbezier(20,75)(60,65)(100,75)
\put(100,75){\vector(4,1){0}}
\put(0.3,80){\makebox(0,0){\huge{$\bullet$}}}
\put(120.3,80){\makebox(0,0){\huge{$\circ$}}}
\put(1,10){\makebox(0,0){\huge{$\circ$}}}
\put(122,10){\makebox(0,0){\huge{$\circ$}}}
%
%\put(45,128){\line(-1,-1){33}}
%\put(75,128){\line(1,-1){33}}
\put(0,68){\line(0,-1){40}}
\put(120,68){\line(0,-1){40}}
\put(46,-15){\scriptsize{\ttfamily FOLE}}
\put(22,-35){\scriptsize{\textsf{architecture}}}
\end{picture}
\end{tabular}}}
\end{tabular}}
}}}}
%\end{tabular}}}
\end{flushleft}
In the relational model there are two approaches for database management:
%The relational algebra and the relational calculus;
the relational algebra,
%(Kent \cite{kent:fole:rel:ops}), 
which defines an imperative language,
and
the relational calculus, which defines a declarative language.
The paper 
``Relational Operations in \texttt{FOLE}''
\cite{kent:fole:rel:ops}
%demonstrate how {\ttfamily FOLE} 
represents relational algebra
by
expressing the relational operations of database theory in a clear and implementable representation.
The relational calculus
will be represented in \texttt{FOLE} in a future paper.

%%%%%%%%%%%%%%%%%%%%%%%%%%%%%%%%%%%%%%%%%%%%%%%%%%%%%%%%%%%%%%%%%%%%%%%%%%%%%%%%%%%%%%%%%%
%%%%%%%%%%%%%%%%%%%%%%%%%%%%%%%%%%%%%%%%%%%%%%%%%%%%%%%%%%%%%%%%%%%%%%%%%%%%%%%%%%%%%%%%%%
%\newpage
\subsection{Overview}
\label{sub:sec:overvu}
%\label{sub:sec:overview}
%%%%%%%%%%%%%%%%%%%%%%%%%%%%%%%%%%%%%%%%%%%%%%%%%%%%%%%%%%%%%%%%%%%%%%%%%%%%%%%%%%%%%%%%%%
%%%%%%%%%%%%%%%%%%%%%%%%%%%%%%%%%%%%%%%%%%%%%%%%%%%%%%%%%%%%%%%%%%%%%%%%%%%%%%%%%%%%%%%%%%

%{\huge{$\star$}}
The first-order logical environment {\ttfamily FOLE} (Kent~\cite{kent:iccs2013}) is a framework for 
defining the semantics and formalism of logic and databases in an integrated and coherent fashion.
Institutions in general, and logical environments in particular, 
give equivalent heterogeneous and homogeneous representations for logical systems.
{\ttfamily FOLE} is an institution, 
since ``satisfaction is invariant under change of notation".
%the satisfaction relation is preserved during information flow along any language morphism.
{\ttfamily FOLE} is a logical environment, 
since ``satisfaction respects structure linkage''.
%vertical structure morphisms imply intent order.
As an institution,
the architecture of {\ttfamily FOLE} consists of
languages as indexing components,
structures to represent semantic content,
specifications to represent formal content, and
logics to combine formalism with semantics.
{\ttfamily FOLE} structures are interpreted as relational/logical databases.

This paper,
which is concerned with the classification form of {\ttfamily FOLE} 
(see Fig.\,\ref{fbr:ctx}),
is presented in two parts:
the logical environment 
and
the architecture. 
%of {\ttfamily FOLE}.
%\begin{itemize}
%\item 
%
\S\,\ref{sec:intro}
is an introduction,
which gives a brief discussion
of the 
philosophy,
knowledge representation,
basics, and 
architecture of {\ttfamily FOLE}.
%as presented in several papers, 
%It then gives an overview of this paper.
%\item 
%
\S\,\ref{sub:sec:log:env} discusses
the {\ttfamily FOLE} logical environment,
where
we define formulas, sequents, constraints; 
we extend interpretation and classification from entity types to formulas;
we define satisfaction for sequents and constraints; and
we show that {\ttfamily FOLE} is an institution and logical environment.
%\item 
%
\S\,\ref{sub:sec:arch:comps}
develops 
the {\ttfamily FOLE} architecture,
where
we define the architectural components of specifications and logics
by developing the logical notions of entailment, consequence, residuation and soundness.
%\item 
%
\S\,\ref{sec:conclu}
gives 
the conclusion and future work.
%\end{itemize}
%
%\newline
%\newline
%\item[The {\ttfamily FOLE} interpretation,] 
%\underline{The {\ttfamily FOLE} interpretation},
%which is concerned with database interpretation%,
%will be presented in the paper (Kent~\cite{kent:fole:era:interp}) that follows this one.
%
%\newline
%\end{description}
%\newline
%Two further papers are pending on the integration of federated systems of knowledge:
%
%%%%%%%%%%%%%%%%%%%%%%%%%%%%%%%%%%%%%%%%%%%%%%%%%%%%%%%%%%%%%%%%%%%%%%%%%%%%%%%%%%%%%%%%%%
%%%%%%%%%%%%%%%%%%%%%%%%%%%%%%%%%%%%%%%%%%%%%%%%%%%%%%%%%%%%%%%%%%%%%%%%%%%%%%%%%%%%%%%%%%
%
\comment{
The paper ``System Consequence'' (Kent~\cite{kent:iccs2009}) gave a general and abstract solution,
at the level of logical environments,
to the interoperation of information systems via the channel theory of information flow (Barwise and Seligman \cite{barwise:seligman:97}).
Since {\ttfamily FOLE} is a logical environment (see \S\ref{sub:sub:sec:inst:asp}),
we can apply this approach to interoperability
for information systems based on first-order logic and relational databases.
}
%%%%%%%%%%%%%%%%%%%%%%%%%%%%%%%%%%%%%%%%%%%%%%%%%%%%%%%%%%%%%%%%%%%%%%%%%%%%%%%%%%%%%%%%%%
%%%%%%%%%%%%%%%%%%%%%%%%%%%%%%%%%%%%%%%%%%%%%%%%%%%%%%%%%%%%%%%%%%%%%%%%%%%%%%%%%%%%%%%%%%
%
\comment{
\begin{itemize}
\item 
Section~\ref{sec:intro}
is an introduction.
This gives a background,
the architecture,
an overview,
and a brief synopsis.
\item 
Section~\ref{sub:sec:log:env} 
defines the formalism,
semantics and
satisfaction.
\item 
Section~\ref{sub:sec:arch:comps}
defines the architectural components
of specifications,
logics,
and sound logices.
\item 
Finally,
gives 
the conclusion and future work is discussed in 
\S\,\ref{sec:conclu}.
\end{itemize}
}
Table\,\ref{tbl:fig:tab}
lists the figures and tables in this paper.
\begin{table}
\begin{center}
{{\scriptsize{\setlength{\extrarowheight}{1.6pt}{
\begin{tabular}{l@{\hspace{20pt}}l}
\\
%%%%%%%%%%%%%%%%%%%%%%%%%%%%%%%%%%%%%%%%%%%%%%%%%%%%%%%%%%%%%%%%%%%%%%%%%%%%%%%%
{\fbox{
\begin{tabular}[t]{|l@{\hspace{8pt}}l@{\hspace{2pt}:\hspace{5pt}}l|}
\hline
\S\ref{sub:sub:sec:log:obj}
& 
Fig.~\ref{fig:log:ord}
&
Logic Order
\\\cline{1-1}
\S\ref{sub:sub:sec:snd:log:flow}
& 
Fig.~\ref{fbr:ctx}
&
{\ttfamily FOLE} Superstructure
\\\hline
\end{tabular}}}
%%%%%%%%%%%%%%%%%%%%%%%%%%%%%%%%%%%%%%%%%%%%%%%%%%%%%%%%%%%%%%%%%%%%%%%%%%%%%%%%
&
%%%%%%%%%%%%%%%%%%%%%%%%%%%%%%%%%%%%%%%%%%%%%%%%%%%%%%%%%%%%%%%%%%%%%%%%%%%%%%%%
{\fbox{
%\begin{tabular}[t]{l@{\hspace{2pt}:\hspace{8pt}}l}
\begin{tabular}[t]{|l@{\hspace{8pt}}l@{\hspace{2pt}:\hspace{5pt}}l|}
\hline
\S\ref{sub:sec:overvu}
& 
Tbl.~\ref{tbl:fig:tab}
&
Figures and Tables
\\\hline\hline
\S\ref{sub:sub:sec:fml} 
& 
Tbl.~\ref{tbl:syn:flow}
&
Syntactic Flow
\\
%\S\ref{sub:sub:sec:fml}
& 
Tbl.~\ref{tbl:fmla:fn}
&
Formula Function
\\\cline{1-1}
\S\ref{sub:sub:sec:constr}
& 
Tbl.~\ref{tbl:axioms}
&
Axioms
\\\hline\hline
\S\ref{sub:sub:sec:fmla:struc}  
& 
Tbl.~\ref{tbl:sem:flow}
&
Semantic Flow
\\\cline{1-1}
\S\ref{sub:sub:sec:fml:interp}
& 
Tbl.~\ref{tbl:fml:int}
&
Formula Interpretation
\\
%\hline
%\S\ref{sub:sub:sec:fml:interp}
& 
Tbl.~\ref{tbl:fml-sem:refl}
&
Formal/Semantics Reflection
\\\cline{1-1}
\S\ref{sub:sub:sec:fml:struc:obj}
& 
Tbl.~\ref{tbl:fmla:cls}
&
Formula Classification
\\\hline
\end{tabular}}}
%%%%%%%%%%%%%%%%%%%%%%%%%%%%%%%%%%%%%%%%%%%%%%%%%%%%%%%%%%%%%%%%%%%%%%%%%%%%%%%%
\\ \\
\end{tabular}}}}}
\end{center}
\caption{Figures and Tables}
\label{tbl:fig:tab}
\end{table}

\comment{\fbox{Uncovered the veiled comments!}}

%%%%%%%%%%%%%%%%%%%%%%%%%%%%%%%%%%%%%%%%%%%%%%%%%%%%%%%%%%%%%%%%%%%%%%%%%%%%%%%%%%%%%%%%%%
%%%%%%%%%%%%%%%%%%%%%%%%%%%%%%%%%%%%%%%%%%%%%%%%%%%%%%%%%%%%%%%%%%%%%%%%%%%%%%%%%%%%%%%%%%
\comment{
{\scriptsize{\begin{description}
\item[To Do:] \mbox{}
\begin{itemize}
\item 
Change the list map notation 
%\newline
from
$\mathrmbf{List}(Y_{2})\xleftarrow[{\scriptscriptstyle\sum}_{g}]{\mathrmbf{List}(g)}\mathrmbf{List}(Y_{1})$
to
$\mathrmbf{List}(Y_{2})\xleftarrow[{\scriptstyle\exists}_{g}]{\mathrmbf{List}(g)}\mathrmbf{List}(Y_{1})$,
%\newline
since this is in the semantic aspect not the formal aspect.
\item 
Define the relation passage
$\mathrmbf{Rel}(\mathcal{A}_{2})\xleftarrow{\mathrmbfit{rel}_{{\langle{f,g}\rangle}}}\mathrmbf{Rel}(\mathcal{A}_{2})$
for a typed domain morphism
$\mathcal{A}_{2}\xrightleftharpoons{{\langle{f,g}\rangle}}\mathcal{A}_{1}$.
\end{itemize}
\end{description}}}
}
%%%%%%%%%%%%%%%%%%%%%%%%%%%%%%%%%%%%%%%%%%%%%%%%%%%%%%%%%%%%%%%%%%%%%%%%%%%%%%%%%%%%%%%%%%
%%%%%%%%%%%%%%%%%%%%%%%%%%%%%%%%%%%%%%%%%%%%%%%%%%%%%%%%%%%%%%%%%%%%%%%%%%%%%%%%%%%%%%%%%%

%%%%%%%%%%%%%%%%%%%%%%%%%%%%%%%%%%%%%%%%%%%%%%%%%%%%%%%%%%%%%%%%%%%%%%%%%%%%%%%%%%%%%%%%%%
%%%%%%%%%%%%%%%%%%%%%%%%%%%%%%%%%%%%%%%%%%%%%%%%%%%%%%%%%%%%%%%%%%%%%%%%%%%%%%%%%%%%%%%%%%
\comment{
%%%%%%%%%%%%%%%%%%%%%%%%%%%%%%%%%%%%%%%%%%%%%%%%%%%%%%%%%%%%%%%%%%%%%%%%%%%%%%%%%%%%%%%%%%%%%%%%%%%%
%%%%%%%%%%%%%%%%%%%%%%%%%%%%%%%%%%%%%%%%%%%%%%%%%%%%%%%%%%%%%%%%%%%%%%%%%%%%%%%%%%%%%%%%%%%%%%%%%%%%
%%%%%%%%%%%%%%%%%%%%%%%%%%%%%%%%%%%%%%%%%%%%%%%%%%%%%%%%%%%%%%%%%%%%%%%%%%%%%%%%%%%%%%%%%%%%%%%%%%%%
%\newpage
\section{Introduction}
%%%%%%%%%%%%%%%%%%%%%%%%%%%%%%%%%%%%%%%%%%%%%%%%%%%%%%%%%%%%%%%%%%%%%%%%%%%%%%%%%%%%%%%%%%%%%%%%%%%%
%%%%%%%%%%%%%%%%%%%%%%%%%%%%%%%%%%%%%%%%%%%%%%%%%%%%%%%%%%%%%%%%%%%%%%%%%%%%%%%%%%%%%%%%%%%%%%%%%%%%
%%%%%%%%%%%%%%%%%%%%%%%%%%%%%%%%%%%%%%%%%%%%%%%%%%%%%%%%%%%%%%%%%%%%%%%%%%%%%%%%%%%%%%%%%%%%%%%%%%%%

\begin{description}
\item[Data Model 2,3:] 
we develop the entity-relationship-attribute data model
\item[Architecture 4--6:] 
we develop the {\ttfamily FOLE} logical environment from the ERA data model;
{\ttfamily FOLE} gives a single cohesive metatheory for the notions of logic, database, ontology;
we give a brief explanation of {\ttfamily FOLE},
including the database interpretation;
\item[Information Systems 7,8:] 
we define information systems,
which includes logical database systems;
by using a general notion of system fusion,
we define general morphisms of information systems 
\item[Formal Channels 9--12:] 
we define conservative extension and system modularization:
\begin{itemize}
\item 
we define conservative extensions along information channels,
which implies system consequence;
this serves as a definition of modular ontology:
if an ontology is formalized as a logical theory, 
a subtheory is a module 
when 
the whole ontology is a conservative extension of the subtheory
\item 
we define conservative extensions along system morphisms;
this involves
conservative extensions along the information channels within a system morphism;
\end{itemize}
\item[Semantic Channels 12-14:] 
we define fusion and system consequence along multi-universe semantic systems via portals
\end{description}
}
%%%%%%%%%%%%%%%%%%%%%%%%%%%%%%%%%%%%%%%%%%%%%%%%%%%%%%%%%%%%%%%%%%%%%%%%%%%%%%%%%%%%%%%%%%
%%%%%%%%%%%%%%%%%%%%%%%%%%%%%%%%%%%%%%%%%%%%%%%%%%%%%%%%%%%%%%%%%%%%%%%%%%%%%%%%%%%%%%%%%%

%%%%%%%%%%%%%%%%%%%%%%%%%%%%%%%%%%%%%%%%%%%%%%%%%%%%%%%%%%%%%%%%%%%%%%%%%%%%%%%%%%%%%%%%%%
%%%%%%%%%%%%%%%%%%%%%%%%%%%%%%%%%%%%%%%%%%%%%%%%%%%%%%%%%%%%%%%%%%%%%%%%%%%%%%%%%%%%%%%%%%
%\comment{% system interoperability in FOLE
%\rule{1pt}{10pt}\rule[9pt]{4pt}{1pt}

%\include{foundation}
	%\include{intro}
	%\include{overview}
	%\include{era-data-model}
	%\include{fole-components}
	%\include{connections}
	%\include{conclusion}
%%%%%%%%%%%%%%%%%%%%%%%%%%%%%%%%%%%%%%%%%%%%%%%%%%%%%%%%%%%%%%%%%%%%%%%%%%%%%%%%%%%%%%%%%%%%%%%%%%%%
%%%%%%%%%%%%%%%%%%%%%%%%%%%%%%%%%%%%%%%%%%%%%%%%%%%%%%%%%%%%%%%%%%%%%%%%%%%%%%%%%%%%%%%%%%%%%%%%%%%%
%%%%%%%%%%%%%%%%%%%%%%%%%%%%%%%%%%%%%%%%%%%%%%%%%%%%%%%%%%%%%%%%%%%%%%%%%%%%%%%%%%%%%%%%%%%%%%%%%%%%
\newpage
\section{Logical Environment}\label{sub:sec:log:env}
%%%%%%%%%%%%%%%%%%%%%%%%%%%%%%%%%%%%%%%%%%%%%%%%%%%%%%%%%%%%%%%%%%%%%%%%%%%%%%%%%%%%%%%%%%%%%%%%%%%%
%%%%%%%%%%%%%%%%%%%%%%%%%%%%%%%%%%%%%%%%%%%%%%%%%%%%%%%%%%%%%%%%%%%%%%%%%%%%%%%%%%%%%%%%%%%%%%%%%%%%
%%%%%%%%%%%%%%%%%%%%%%%%%%%%%%%%%%%%%%%%%%%%%%%%%%%%%%%%%%%%%%%%%%%%%%%%%%%%%%%%%%%%%%%%%%%%%%%%%%%%

%%%%%%%%%%%%%%%%%%%%%%%%%%%%%%%%%%%%%%%%%%%%%%%%%%%%%%%%%%%%%%%%%%%%%%%%%%%%%%%%%%%%%%%%%%%%%%%%%%%%
%%%%%%%%%%%%%%%%%%%%%%%%%%%%%%%%%%%%%%%%%%%%%%%%%%%%%%%%%%%%%%%%%%%%%%%%%%%%%%%%%%%%%%%%%%%%%%%%%%%%
%\newpage
\subsection{Formalism.}\label{sub:sub:sec:fmlism}
%%%%%%%%%%%%%%%%%%%%%%%%%%%%%%%%%%%%%%%%%%%%%%%%%%%%%%%%%%%%%%%%%%%%%%%%%%%%%%%%%%%%%%%%%%%%%%%%%%%%
%%%%%%%%%%%%%%%%%%%%%%%%%%%%%%%%%%%%%%%%%%%%%%%%%%%%%%%%%%%%%%%%%%%%%%%%%%%%%%%%%%%%%%%%%%%%%%%%%%%%

%%%%%%%%%%%%%%%%%%%%%%%%%%%%%%%%%%%%%%%%%%%%%%%%%%%%%%%%%%%%%%%%%%%%%%%%%%%%%%%%%%%%%%%%%%%%%%%%%%%%
%\newpage
\subsubsection{Formulas.}\label{sub:sub:sec:fml}
\footnote{We use concepts and notations presented in the {\ttfamily FOLE} foundation paper (Kent~\cite{kent:fole:era:found}).}
%
%%%%%%%%%%%%%%%%%%%%%%%%%%%%%%%%%%%%%%%%%%%%%%%%%%%%%%%%%%%%%%%%%%%%%%%%%%%%%%%%%%%%%%%%%%%%%%%%%%%%

Let $\mathcal{S} = {\langle{R,\sigma,X}\rangle}$ be a fixed schema with 
a set of entity types $R$,
a set of sorts (attribute types) $X$
and a signature function
$R\xrightarrow{\sigma}\mathrmbf{List}(X)$.
The set of entity types $R$ is partitioned
$R = \bigcup_{{\langle{I,s}\rangle}\in\mathrmbf{List}(X)}R(I,s)$
into fibers,
where
$R(I,s){\;\subseteq\;}R$
is the fiber (subset) of all entity types with signature ${\langle{I,s}\rangle}$.
These are called ${\langle{I,s}\rangle}$-ary entity types. 
\footnote{This is a slight misnomer, 
since the signature of $r$ is $\sigma(R)={\langle{I,s}\rangle}$,
whereas the arity of $r$ is $\alpha(R)=I$.}
Here,
we follow the tuple, domain, and relation calculi from database theory,
using logical operations to extend the set of basic entity types $R$
to a set of defined entity types $\widehat{R}$ called formulas or queries.
%
%An entity type represents a class of entities in the situation that we are modelling.
%In the interpretation of a structure $\mathcal{M}$,
%we regard the set of entity types $R$ to name a set of base relations or database tables.
%
%The set of entity types are partitioned into fibers:
%for any signature ${\langle{I,s}\rangle}$,
%let $R(I,s){\;\subseteq\;}R$ denote the fiber (subset) of all entity types with this signature.
%These are called ${\langle{I,s}\rangle}$-ary entity types.
%
%for any signature ${\langle{I,s}\rangle}$,
%let $R(I,s){\;\subseteq\;}R$ denote the fiber (subset) of all entity types with this signature.
%These are called ${\langle{I,s}\rangle}$-ary entity types.
%Then,
%$R = \bigcup_{{\langle{I,s}\rangle}\in\mathrmbf{List}(X)}R(I,s)$.

%
Formulas, which are defined entity types corresponding to queries, 
are constructed by using logical connectives within a fiber and logical flow 
along signature morphisms between fibers (Tbl.~\ref{tbl:syn:flow}).
\footnote{An $\mathcal{S}$-signature morphism ${\langle{I',s'}\rangle}\xrightarrow{h}{\langle{I,s}\rangle}$
in $\mathrmbf{List}(X)$
is an arity function $I'\xrightarrow{\,h\,}I$ that preserves signature $s' = h{\,\cdot\,}s$.}
\footnote{The full version of \texttt{FOLE} 
(Kent~\cite{kent:iccs2013})
defines syntactic flow along term vectors.}
Logical connectives on formulas
express intuitive notions of natural language operations on the interpretation (extent, view) of formulas.
These connectives include:
conjunction, disjunction, negation, implication, etc.
For any signature ${\langle{I,s}\rangle}$,
let $\widehat{R}(I,s) \subseteq \widehat{R}$ 
denote the set of all formulas with this signature.
There are called ${\langle{I,s}\rangle}$-ary formulas.
The set of $\mathcal{S}$-formulas is partitioned as
$\widehat{R} = \bigcup_{{\langle{I,s}\rangle}\in\mathrmbf{List}(X)}\widehat{R}(I,s)$.
\begin{flushleft}
{{\footnotesize{\begin{minipage}{345pt}
\begin{itemize}
\item[\textsf{fiber:}]
Let ${\langle{I,s}\rangle}$ be any signature.
Any ${\langle{I,s}\rangle}$-ary entity type (relation symbol) is an 
%(atomic) 
${\langle{I,s}\rangle}$-ary formula;
that is, $R(I,s) \subseteq \widehat{R}(I,s)$.
For a pair of ${\langle{I,s}\rangle}$-ary formulas $\varphi$ and $\psi$, 
there are the following ${\langle{I,s}\rangle}$-ary formulas:
meet $(\varphi{\,\wedge\,}\psi)$, join $(\varphi{\,\vee\,}\psi)$,
implication $(\varphi{\,\rightarrowtriangle\,}\psi)$ and 
difference $(\varphi{\,\setminus\,}\psi)$. 
For ${\langle{I,s}\rangle}$-ary formula $\varphi$,
there is an ${\langle{I,s}\rangle}$-ary negation formula $\neg\varphi$.
There are top/bottom ${\langle{I,s}\rangle}$-ary formulas $\top_{{\langle{I,s}\rangle}}$ and $\bot_{{\langle{I,s}\rangle}}$.
\item[\textsf{flow:}]
Let ${\langle{I',s'}\rangle}\xrightarrow{h}{\langle{I,s}\rangle}$ be any signature morphism.
For ${\langle{I,s}\rangle}$-ary formula $\varphi$,
there are ${\langle{I',s'}\rangle}$-ary existentially/universally quantified formulas
${\scriptstyle\sum}_{h}(\varphi)$ and ${\scriptstyle\prod}_{h}(\varphi)$.
\footnote{For any index $i \in I$,
quantification for the complement inclusion signature function
${\langle{I\setminus\{i\},s'}\rangle}\xrightarrow{\text{inc}_{i}}{\langle{I,s}\rangle}$
gives the traditional syntactic quantifiers
$\forall_{i}\varphi,\exists_{i}{\varphi}$.}
For an ${\langle{I',s'}\rangle}$-ary formula $\varphi'$,
there is an ${\langle{I,s}\rangle}$-ary substitution formula
${h}^{\ast}(\varphi') = \varphi'(t)$.
\end{itemize}
\end{minipage}}}}
\end{flushleft}
\begin{table}[h]
\begin{center}
{{\footnotesize{\setlength{\extrarowheight}{2pt}{$\begin{array}{c}
{\langle{I',s'}\rangle}\xrightarrow{h}{\langle{I,s}\rangle}
\\
{\setlength{\unitlength}{0.6pt}\begin{picture}(120,60)(0,-25)
\put(0,0){\makebox(0,0){\footnotesize{$\widehat{R}(I',s')$}}}
\put(120,0){\makebox(0,0){\footnotesize{$\widehat{R}(I,s)$}}}
\put(60,20){\makebox(0,0){\scriptsize{${\scriptstyle\sum}_{{h}}$}}}
\put(62,2){\makebox(0,0){\scriptsize{${h}^{\ast}$}}}
\put(60,-22){\makebox(0,0){\scriptsize{${\scriptstyle\prod}_{{h}}$}}}
\put(85,12){\vector(-1,0){50}}
\qbezier(35,0)(43,0)(51,0)\qbezier(69,0)(77,0)(85,0)\put(85,0){\vector(1,0){0}}
\put(85,-12){\vector(-1,0){50}}
\end{picture}}
\end{array}$}}}}
%\\
%\textsf{Syntactic Flow}
\end{center}
\caption{Syntactic Flow}
\label{tbl:syn:flow}
\end{table}

\comment{
\begin{center}
{\fbox{\begin{tabular}{c}
\setlength{\unitlength}{0.78pt}
\begin{picture}(120,10)(0,-10)
\put(0,0){\makebox(0,0){\footnotesize{$\mathrmbfit{ext}_{\widehat{\mathcal{E}}}(I',s')$}}}
\put(120,0){\makebox(0,0){\footnotesize{$\mathrmbfit{ext}_{\widehat{\mathcal{E}}}(I,s)$}}}
\put(60,20){\makebox(0,0){\scriptsize{${\scriptstyle\sum}_{{h}}$}}}
\put(62,2){\makebox(0,0){\scriptsize{${h}^{\ast}$}}}
\put(60,-22){\makebox(0,0){\scriptsize{${\scriptstyle\prod}_{{h}}$}}}
\put(85,12){\vector(-1,0){50}}
\qbezier(35,0)(43,0)(51,0)\qbezier(69,0)(77,0)(85,0)\put(85,0){\vector(1,0){0}}
\put(85,-12){\vector(-1,0){50}}
\end{picture}
\end{tabular}}}
\end{center}
}

In general,
we regard formulas to be constructed entities or queries (defining views and interpretations; i.e., relations/tables),
not assertions.
Contrast this with the use of ``asserted formulas'' below.
For example, 
in a corporation data model the conjunction
$(\mathtt{Salaried}{\,\wedge\,}\mathtt{Married})$
is not an assertion, but a constructed entity type or query that defines the view
``salaried employees that are married''.
Formulas form a schema 
$\mathrmbfit{fmla}(\mathcal{S}) = {\langle{\widehat{R},\widehat{\sigma},X}\rangle}$ 
that extends $\mathcal{S}$
with $\mathcal{S}$-formulas as entity types:
% (relation names):
with the inductive definitions above,
the set of entity types 
is extended
to a set of logical formulas
$R\xhookrightarrow{\mathrmit{inc}_{\mathcal{S}}}\widehat{R}$,
and
the entity type signature function 
is extended
to a formula signature function $\widehat{R}\xrightarrow{\;\widehat{\sigma}\;}\mathrmbf{List}(X)$
with
$\sigma = \mathrmit{inc}_{\mathcal{S}}{\;\cdot\;}\widehat{\sigma}$.

A schema morphism 
$\mathcal{S}_{2}\xRightarrow{{\langle{r,f}\rangle}}\mathcal{S}_{1}$
can be extended to a formula schema morphism 
$\mathrmbfit{fmla}(r,f) = {\langle{\hat{r},f}\rangle} :
\mathrmbfit{fmla}(\mathcal{S}_{2}) = {\langle{\widehat{R}_{2},\hat{\sigma}_{2},X_{2}}\rangle} \Longrightarrow
{\langle{\widehat{R}_{1},\hat{\sigma}_{1},X_{1}}\rangle} = \mathrmbfit{fmla}(\mathcal{S}_{1})$.
The formula function $\hat{r} : \widehat{R}_{2} \rightarrow \widehat{R}_{1}$,
which satisfies the condition
$\mathrmit{inc}_{\mathcal{S}_{2}}{\;\cdot\;}\hat{r} = r{\;\cdot\;}\mathrmit{inc}_{\mathcal{S}_{1}}$,
is recursively defined in Tbl.~\ref{tbl:fmla:fn}.
We can show, 
by induction on source formulas $\varphi_{2}\in\widehat{R}_{2}$,
that signatures are preserved
$\widehat{r}{\,\cdot\,}\widehat{\sigma}_{1} = \widehat{\sigma}_{2}{\,\cdot\,}{\scriptstyle\sum}_{f}$.
\footnote{This translation is the formal part of an ``interpretation in first-order logic'' 
(Barwise and Seligman~\cite{barwise:seligman:97}). 
The semantic part is the fiber passage of structures
$\mathrmbfit{struc}^{\curlywedge}_{{\langle{r,f}\rangle}} : \mathrmbf{Struc}(\mathcal{S}_{2})\leftarrow\mathrmbf{Struc}(\mathcal{S}_{1})$
along the schema morphism $\mathcal{S}_{2}\xRightarrow{{\langle{r,f}\rangle}}\mathcal{S}_{1}$
(see the appendix of Kent~\cite{kent:fole:era:found}).
The precise meaning of ``interpretation in first-order logic'',
as an infomorphism between truth classifications,
is given below by Prop.~\ref{prop:ins} on institutions and Prop.~\ref{prop:log:env} on logical environments.}
\begin{table}
\begin{center}
{\footnotesize{\setlength{\extrarowheight}{2pt}\begin{tabular}{|r@{\hspace{20pt}}l@{\hspace{10pt}$=$\hspace{10pt}}l|}
\multicolumn{3}{l}{\rule[-6pt]{0pt}{12pt}\textsf{fiber:} signature ${\langle{I_{2},s_{2}}\rangle}$}
\\ \hline
\textit{operator} & \multicolumn{1}{l}{} & 
\\
entity type
& $\hat{r}(r_{2})$
& $r(r_{2})$
\\
meet
& $\hat{r}(\varphi_{2}{\,\wedge_{{\langle{I_{2},s_{2}}\rangle}}\,}\psi_{2})$
& $\hat{r}(\varphi_{2}){\,\wedge_{{\scriptscriptstyle\sum}_{f}(I_{2},s_{2})}\,}\hat{r}(\psi_{2})$
\\
join
& $\hat{r}(\varphi_{2}{\,\vee_{{\langle{I_{2},s_{2}}\rangle}}\,}\psi_{2})$
& $\hat{r}(\varphi_{2}){\,\vee_{{\scriptscriptstyle\sum}_{f}(I_{2},s_{2})}\,}\hat{r}(\psi_{2})$
\\
negation
& $\hat{r}(\neg_{{\langle{I_{2},s_{2}}\rangle}}\,\varphi)$
& $\neg_{{\scriptscriptstyle\sum}_{f}(I_{2},s_{2})}\,\hat{r}(\varphi)$
\\
implication
& $\hat{r}(\varphi{\,\rightarrowtriangle_{{\langle{I_{2},s_{2}}\rangle}}\,}\psi)$
& $\hat{r}(\varphi){\,\rightarrowtriangle_{{\scriptscriptstyle\sum}_{f}(I_{2},s_{2})}\,}\hat{r}(\psi)$
\\
difference
& $\hat{r}(\varphi{\,\setminus_{{\langle{I_{2},s_{2}}\rangle}}\,}\psi)$
& $\hat{r}(\varphi){\,\setminus_{{\scriptscriptstyle\sum}_{f}(I_{2},s_{2})}\,}\hat{r}(\psi)$
\\ \hline
\multicolumn{3}{l}{}
\\
\multicolumn{3}{l}{\rule[-6pt]{0pt}{12pt}\textsf{flow:} signature morphism 
${\langle{I_{2}',s_{2}'}\rangle} \xrightarrow{h} {\langle{I_{2},s_{2}}\rangle}$}
\\ \hline
\textit{operator} & \multicolumn{1}{l}{} & 
\\
existential
& $\hat{r}({\scriptstyle\sum}_{h}(\varphi_{2}))$
& ${\scriptstyle\sum}_{h}(\hat{r}(\varphi_{2}))$ 
\\
universal
& $\hat{r}({\scriptstyle\prod}_{h}(\varphi_{2}))$
& ${\scriptstyle\prod}_{h}(\hat{r}(\varphi_{2}))$ 
\\
substitution
& $\hat{r}({h}^{\ast}(\varphi_{2}'))$
& ${h}^{\ast}(\hat{r}(\varphi_{2}'))$ 
\\ \hline
\end{tabular}}}
\end{center}
\caption{Formula Function}
\label{tbl:fmla:fn}
\end{table}
Hence,
there is an idempotent formula passage
%\newline\newline\mbox{}\rule{30pt}{0pt}\hfill{
$\mathrmbf{Sch}\xrightarrow{\mathrmbfit{fmla}}\mathrmbf{Sch}$
on schemas.
%}\hfill{
%(Fig.~\ref{fbr:ctx})
%}\newline

%%%%%%%%%%%%%%%%%%%%%%%%%%%%%%%%%%%%%%%%%%%%%%%%%%%%%%%%%%%%%%%%%%%%%%%%%%%%%%%%%%%%%%%%%%%%%%%%%%%%
%\newpage
\subsubsection{Sequents.}\label{sub:sub:sec:seq}
%%%%%%%%%%%%%%%%%%%%%%%%%%%%%%%%%%%%%%%%%%%%%%%%%%%%%%%%%%%%%%%%%%%%%%%%%%%%%%%%%%%%%%%%%%%%%%%%%%%%

To make an assertion about things, 
we use a sequent.
Let $\mathcal{S} = {\langle{R,\sigma,X}\rangle}$ be a schema.
A (binary) $\mathcal{S}$-sequent 
\footnote{Since {\ttfamily FOLE} formulas are not just types,
but are constructed using, 
inter alia, 
conjunction and disjunction operations, 
we can restrict attention to \emph{binary} sequents.} 
is a pair of formulas
$(\varphi,\psi){\,\in\,}\widehat{R}{\,\times}\widehat{R}$
with the same signature 
$\widehat{\sigma}(\varphi) = {\langle{I,s}\rangle} = \widehat{\sigma}(\psi)$. 
To be explicit that we are making an assertion,
% using a sequent $(\varphi,\psi)$,
we use the turnstile notation $\varphi{\;\vdash\;}\psi$ for a sequent.
Then, 
we are claiming that a specialization-generalization relationship exists between the formulas $\varphi$ and $\psi$.
A asserted sequent $\varphi{\;\vdash\;}\psi$ expresses interpretation widening,
with the interpretation (view) of $\varphi$ required to be within the interpretation (view) of $\psi$.
%
%Hence,
%for each signature ${\langle{I,s}\rangle}$
%there is a fiber preorder
%$\mathrmbf{Fmla}_{\mathcal{S}}(I,s) = {\langle{\widehat{R},\vdash}\rangle}$
%consisting of all $\mathcal{S}$-formulas with this signature.
%
%Let 
%%$\mathrmbfit{seq}(\mathcal{S})=\widehat{R}{\times}\widehat{R}$
%$\widehat{R}{\,\times}\widehat{R}$
%denote the set of all sequents.
%
An \emph{asserted} formula $\varphi\in\widehat{R}$ 
can be identified with the sequent $\top{\;\vdash\;}\varphi$ in $\widehat{R}{\,\times}\widehat{R}$,
which asserts the universal view ``all entities'' of signature $\sigma(\varphi)={\langle{I,s}\rangle}$.
%}
%
Hence,
from an entailment viewpoint we can say that ``formulas are sequents''.
In the opposite direction,
there is an enfolding map
$\widehat{R}{\,\times}\widehat{R}\rightarrow\widehat{R}$
that maps $\mathcal{S}$-sequents to 
%(implication) 
$\mathcal{S}$-formulas
$({\varphi}{\;\vdash\;}{\psi})\mapsto(\varphi{\,\rightarrowtriangle\,}\psi)$.
%{\vartriangleright_{\mathcal{S}}}
%
The axioms (Tbl.~\ref{tbl:axioms}) make sequents into an order.
Let $\mathrmbf{Con}_{\mathcal{S}}(I,s)={\langle{\widehat{R}(I,s),\vdash}\rangle}$ denote the fiber preorder of $\mathcal{S}$-formulas.

%%%%%%%%%%%%%%%%%%%%%%%%%%%%%%%%%%%%%%%%%%%%%%%%%%%%%%%%%%%%%%%%%%%%%%%%%%%%%%%%%%%%%%%%%%%%%%%%%%%%
%\newpage
\subsubsection{Constraints.}\label{sub:sub:sec:constr}
%%%%%%%%%%%%%%%%%%%%%%%%%%%%%%%%%%%%%%%%%%%%%%%%%%%%%%%%%%%%%%%%%%%%%%%%%%%%%%%%%%%%%%%%%%%%%%%%%%%%

Sequents only connect formulas within a particular fiber: 
an $\mathcal{S}$-sequent $\varphi{\;\vdash\;}\psi$
is between two formulas with the same signature
$\widehat{\sigma}(\varphi) = {\langle{I,s}\rangle} = \widehat{\sigma}(\psi)$,
and hence between elements in the same fiber $\varphi,\psi{\;\in\;}\widehat{R}(I,s)$. 
We now define a useful notion that connects formulas across fibers.
An $\mathcal{S}$-constraint 
$\varphi'{\,\xrightarrow{h\,}\,}\varphi$
consists of a signature morphism
$\widehat{\sigma}(\varphi')={\langle{I',s'}\rangle}\xrightarrow{h}{\langle{I,s}\rangle}=\widehat{\sigma}(\varphi)$
and 
a binary sequent ${\scriptstyle\sum}_{h}(\varphi){\;\vdash\;}\varphi'$ in $\mathrmbf{Con}_{\mathcal{S}}(I',s')$,
or equivalently by the axioms of Tbl.~\ref{tbl:axioms},
a binary sequent $\varphi{\;\vdash\;}{h}^{\ast}(\varphi')$ in $\mathrmbf{Con}_{\mathcal{S}}(I,s)$.
Hence,
a constraint requires 
that the interpretation of the $h^{\mathrm{th}}$-projection of $\varphi$ be within the interpretation of $\varphi'$,
or equivalently,
that the interpretation of $\varphi$ be within the interpretation of the $h^{\mathrm{th}}$-substitution of $\varphi'$. 
\footnote{In some sense,
this formula/constraint approach to formalism turns the tuple calculus upside down,
with atoms in the tuple calculus becoming constraints here.}

Given any schema $\mathcal{S}$,
an $\mathcal{S}$-constraint ${\varphi'}\xrightarrow{h}{\varphi}$
has source formula ${\varphi'}$ and target formula ${\varphi}$.
Constraints are closed under composition:
${\varphi''}\xrightarrow{h'}{\varphi'}\xrightarrow{h}{\varphi}
={\varphi''}\xrightarrow{h'{\,\cdot\,}h}{\varphi}$.
Let $\mathrmbf{Cons}(\mathcal{S})$ denote the mathematical context, 
whose set of objects are $\mathcal{S}$-formulas 
and whose set of morphisms are $\mathcal{S}$-constraints.
This context is fibered over the projection passage
$\mathrmbf{Cons}(\mathcal{S})\rightarrow\mathrmbf{List}(X)
:\bigl(\varphi'{\,\xrightarrow{h\,}\,}\varphi\bigr)
\mapsto
\bigl({\langle{I',s'}\rangle}\xrightarrow{h}{\langle{I,s}\rangle}\bigr)$.

Sequents are special cases of constraints:
a sequent $\varphi'{\;\vdash\;}\varphi$
asserts a constraint ${\varphi}\xrightarrow{1}{\varphi'}$
that uses an identity signature morphism.
\footnote{A constraint in the fiber $\mathrmbf{Con}_{\mathcal{S}}(I,s)$
uses an identity signature morphism ${\varphi}\xrightarrow{1}{\varphi'}$,
and hence is a sequent $\varphi'{\;\vdash\;}\varphi$.}
Since an asserted formula $\varphi$ 
can be identified with the sequent $\top{\;\vdash\;}\varphi$,
it can also be identified with the constraint ${\varphi}\xrightarrow{1}{\top}$.
Thus,
from an entailment viewpoint we can say that
``formulas are sequents are constraints''.
In the opposite direction,
there are enfolding maps
that map $\mathcal{S}$-constraints to $\mathcal{S}$-formulas:
either
$\bigl({\varphi'}\xrightarrow{h}{\varphi}\bigr)\mapsto\bigl({\scriptstyle\sum}_{h}(\varphi){\,\rightarrowtriangle\,}\varphi'\bigr)$ 
with signature ${\langle{I',s'}\rangle}$, 
or
$\bigl({\varphi'}\xrightarrow{h}{\varphi}\bigr)\mapsto\bigl(\varphi{\,\rightarrowtriangle\,}{h}^{\ast}(\varphi')\bigr)$ 
with signature ${\langle{I,s}\rangle}$.

\begin{sloppypar}
%\begin{proposition}
Given any schema morphism $\mathcal{S}_{2}\stackrel{{\langle{r,f}\rangle}}{\Longrightarrow}\mathcal{S}_{1}$,
there is a constraint passage
%\[\mbox{\footnotesize{
$\mathrmbf{Cons}(\mathcal{S}_{2})\xrightarrow{\mathrmbfit{cons}_{{\langle{r,f}\rangle}}}\mathrmbf{Cons}(\mathcal{S}_{1})$.
%}\normalsize}\]
%\end{proposition}
%
An $\mathcal{S}_{2}$-formula $\varphi_{2}\in\mathrmbfit{fmla}(\mathcal{S}_{2})$
is mapped to 
the $\mathcal{S}_{1}$-formula $\widehat{r}(\varphi_{2})\in\mathrmbfit{fmla}(\mathcal{S}_{1})$.
An $\mathcal{S}_{2}$-constraint $\varphi_{2}'\xrightarrow{h}\varphi_{2}$ in ${\mathrmbf{Cons}}(\mathcal{S}_{2})$
with $\mathcal{S}_{2}$-enfolding $\varphi_{2}{\,\rightarrowtriangle\,}{h}^{\ast}(\varphi_{2}')$
in $\mathrmbf{Con}_{\mathcal{S}_{2}}(I_{2},s_{2})$
%$({\scriptstyle\sum}_{h}(\varphi_{2}){\,\rightarrowtriangle\,}\varphi_{2})$
%$\vartriangleright_{\mathcal{S}}\!\!(\varphi_{2}'\xrightarrow{h}\varphi_{2})=
is mapped to 
the $\mathcal{S}_{1}$-constraint $\widehat{r}(\varphi_{2}')\xrightarrow{h}\widehat{r}(\varphi_{2})$ in ${\mathrmbf{Cons}}(\mathcal{S}_{1})$
with $\mathcal{S}_{1}$-enfolding 
%$({\scriptstyle\sum}_{h}(\widehat{r}(\varphi_{2}')){\,\rightarrowtriangle\,}\widehat{r}(\varphi_{2}))
%=(\widehat{r}({\scriptstyle\sum}_{h}(\varphi_{2}')){\,\rightarrowtriangle\,}\widehat{r}(\varphi_{2}))
%=\widehat{r}({\scriptstyle\sum}_{h}(\varphi_{2}'){\,\rightarrowtriangle\,}\varphi_{2})$.
$\widehat{r}(\varphi_{2}){\,\rightarrowtriangle\,}{h}^{\ast}(\widehat{r}(\varphi_{2}'))
=\widehat{r}(\varphi_{2}){\,\rightarrowtriangle\,}\widehat{r}({h}^{\ast}(\varphi_{2}'))
=\widehat{r}\bigl(\varphi_{2}{\,\rightarrowtriangle\,}{h}^{\ast}(\varphi_{2}')\bigr)$
in $\mathrmbf{Con}_{\mathcal{S}_{1}}({\scriptstyle\sum}_{f}(I_{2},s_{2}))$
using Tbl.~\ref{tbl:fmla:fn}.
The passage $\mathrmbf{Sch}{\;\xrightarrow{\mathrmbfit{cons}}\;}\mathrmbf{Cxt}$ forms an indexed context of constraints.
\end{sloppypar}

\begin{table}
\begin{center}
{\scriptsize{\setlength{\extrarowheight}{3.5pt}\begin{tabular}{|r@{\hspace{3pt}:\hspace{15pt}}l|}
\multicolumn{2}{l}{$\text{schema:}\;\mathcal{S}$} \\ 
%\hline
\multicolumn{2}{l}{\rule[-6pt]{0pt}{16pt}\textsf{fiber:} signature ${\langle{I,s}\rangle}$}
\\ \hline
reflexivity 
&
$\varphi{\;\vdash\,}\varphi$
\\
transitivity
&
$\varphi{\;\vdash\;}\varphi'$ and $\varphi'{\;\vdash\;}\varphi''$ implies $\varphi{\;\vdash\;}\varphi''$
\\ \hline\hline
meet
&
$\psi{\;\vdash\;}(\varphi{\,\wedge\,}\varphi')$ 
iff 
$\psi{\;\vdash\;}\varphi$ 
and
$\psi{\;\vdash\;}\varphi'$ 
\\
\multicolumn{1}{|c}{}
&
$(\varphi{\,\wedge\,}\varphi'){\;\vdash\;}\varphi$, 
$(\varphi{\,\wedge\,}\varphi'){\;\vdash\;}\varphi'$ 
\\ \hline
join
&
$(\varphi{\,\vee\,}\varphi'){\;\vdash\;}\psi$ 
iff 
$\varphi{\;\vdash\;}\psi$ 
and
$\varphi'{\;\vdash\;}\psi$ 
\\
\multicolumn{1}{|c}{}
&
$\varphi'{\;\vdash\;}(\varphi{\,\vee\,}\varphi)$, 
$\varphi'{\;\vdash\;}(\varphi{\,\vee\,}\varphi')$ 
\\ \hline
implication
&
$(\varphi{\;\wedge\;}\varphi'){\;\vdash\;}\psi$
iff
$\varphi{\;\vdash\;}(\varphi'{\rightarrowtriangle\,}\psi)$
\\ \hline
negation
&
$\neg\,(\neg\,(\varphi)){\;\vdash\;}\varphi$ 
\\ \hline
top and bottom
&
$\bot_{{\langle{I,s}\rangle}}{\;\vdash\;}\varphi{\;\vdash\;}\top_{{\langle{I,s}\rangle}}$
\\ \hline
%%%%%%%%%%%%%%%%%%%%%%%%%%%%%%%%%%%%%%%%%%%%%%%%%%%%%%%%%%%%%%%%%%%%%%%%%%%%%%%%
%\multicolumn{2}{l}{}\\
\multicolumn{2}{l}{\rule[-6pt]{0pt}{20pt}\textsf{flow:} signature morphism 
${\langle{I',s'}\rangle} \xrightarrow{h} {\langle{I,s}\rangle}$}
\\ \hline
${\scriptstyle\sum}_{h}$-monotonicity
&
$\varphi{\;\vdash\;}\psi$ 
implies
${\scriptstyle\sum}_{h}(\varphi){\;\vdash'\;}{\scriptstyle\sum}_{h}(\psi)$
\\
${h}^{\ast}$-monotonicity
&
$\varphi'{\;\vdash'\;}\psi'$ 
implies
${h}^{\ast}(\varphi'){\;\vdash\;}{h}^{\ast}(\psi')$
\\
${\scriptstyle\prod}_{h}$-monotonicity
&
$\varphi{\;\vdash\;}\psi$ 
implies
${\scriptstyle\prod}_{h}(\varphi){\;\vdash'\;}{\scriptstyle\prod}_{h}(\psi)$
\\ \hline\hline
adjointness
&
${\scriptstyle\sum}_{h}(\varphi){\;\vdash'\;}\varphi'$
iff
$\varphi{\;\vdash\;}{h}^{\ast}(\varphi')$
\\
\multicolumn{1}{|c}{}
&
$\varphi{\;\vdash\;}{h}^{\ast}({\scriptstyle\sum}_{h}(\varphi))$,
${\scriptstyle\sum}_{h}({h}^{\ast}(\varphi')){\;\vdash'\;}\varphi'$
%\\ \cline{2-3}
\\ \hline
%%%%%%%%%%%%%%%%%%%%%%%%%%%%%%%%%%%%%%%%%%%%%%%%%%%%%%%%%%%%%%%%%%%%%%%%%%%%%%%%
%\multicolumn{2}{c}{}\\
%%%%%%%%%%%%%%%%%%%%%%%%%%%%%%%%%%%%%%%%%%%%%%%%%%%%%%%%%%%%%%%%%%%%%%%%%%%%%%%%
\multicolumn{2}{l}{\rule[-6pt]{0pt}{24pt}$\text{schema morphism:}\;\mathcal{S}_{2}\stackrel{{\langle{r,f}\rangle}}{\Longrightarrow}\mathcal{S}_{1}$} 
\\ \hline 
%\multicolumn{3}{c}{}\vspace{-6pt}\\\cline{2-3}
%\multicolumn{2}{l}{$\mathcal{S}_{2}\stackrel{{\langle{r,f}\rangle}}{\Longrightarrow}\mathcal{S}_{1}$} \\ \hline
$\widehat{r}$-monotonicity
&
$\varphi_{2}{\,\vdash_{2}\,}\psi_{2}$ 
implies
$\widehat{r}(\varphi_{2}){\;\vdash_{1}\;}\widehat{r}(\psi_{2})$ 
\\ \hline
\end{tabular}}}
\end{center}
\caption{Axioms}
\label{tbl:axioms}
\end{table}
%

%%%%%%%%%%%%%%%%%%%%%%%%%%%%%%%%%%%%%%%%%%%%%%%%%%%%%%%%%%%%%%%%%%%%%%%%%%%%%%%%%%%%%%%%%%%%%%%%%%%%
%%%%%%%%%%%%%%%%%%%%%%%%%%%%%%%%%%%%%%%%%%%%%%%%%%%%%%%%%%%%%%%%%%%%%%%%%%%%%%%%%%%%%%%%%%%%%%%%%%%%
\newpage
\subsection{Semantics.}\label{sub:sub:sec:fmla:struc}  
%%%%%%%%%%%%%%%%%%%%%%%%%%%%%%%%%%%%%%%%%%%%%%%%%%%%%%%%%%%%%%%%%%%%%%%%%%%%%%%%%%%%%%%%%%%%%%%%%%%%
%%%%%%%%%%%%%%%%%%%%%%%%%%%%%%%%%%%%%%%%%%%%%%%%%%%%%%%%%%%%%%%%%%%%%%%%%%%%%%%%%%%%%%%%%%%%%%%%%%%%

For any structure $\mathcal{M} = {\langle{\mathcal{E},{\langle{\sigma,\tau}\rangle},\mathcal{A}}\rangle}$, 
the semantics of formulas involves both 
a formula interpretation function $\mathrmbfit{I}_{\mathcal{M}}$ defined in Tbl.~\ref{tbl:fml:int} and 
a formula classification $\widehat{\mathcal{E}}$ defined in Tbl.~\ref{tbl:fmla:cls}.
Formula interpretation is independently defined,
but formula classification depends upon formula interpretation.

%%%%%%%%%%%%%%%%%%%%%%%%%%%%%%%%%%%%%%%%%%%%%%%%%%%%%%%%%%%%
%\newpage
\paragraph{Semantic Quantifiers.}
%%%%%%%%%%%%%%%%%%%%%%%%%%%%%%%%%%%%%%%%%%%%%%%%%%%%%%%%%%%%

Both formula interpretation and formula classification
use semantic quantifiers (and substitution) in their definitions.
Here we give an intuitive expression for these.
Let $\mathcal{A}={\langle{X,Y,\models_{\mathcal{A}}}\rangle}$ be a type domain (attribute classification) and 
let $\mathcal{S} = {\langle{R,\sigma,X}\rangle}$ be a schema
with common sort set $X$. 
If ${\langle{I',s'}\rangle}\xrightarrow{h}{\langle{I,s}\rangle}$
is an $\mathcal{S}$-signature morphism
with the associated tuple function
$\mathrmbfit{tup}_{\mathcal{A}}(I',s')\xleftarrow{\mathrmbfit{tup}_{h}}\mathrmbfit{tup}_{\mathcal{A}}(I,s)$,
we have the following adjoint functions.
%%%%%%%%%%%%%%%%%%%%%%%%%%%%%%
%
%\footnote{A type domain $\mathcal{A}$ determines the tuple passage
%$\mathrmbf{List}(X)^{\text{op}}\xrightarrow{\mathrmbfit{tup}_{\mathcal{A}}}\mathrmbf{Set}$.
%
%A type domain morphism 
%$\mathcal{A}_{2}\xrightleftharpoons{{\langle{f,g}\rangle}}\mathcal{A}_{1}$
%determines the tuple bridge
%$\mathrmbfit{tup}_{\mathcal{A}_{2}}
%\xLeftarrow{\tau_{{\langle{f,g}\rangle}}}({\scriptstyle\sum}_{f})^{\mathrm{op}}
%{\;\circ\;}\mathrmbfit{tup}_{\mathcal{A}_{1}}$.}
%
%%%%%%%%%%%%%%%%%%%%%%%%%%%%%%
\footnote{Recall from the original and foundation papers on {\ttfamily FOLE}
(\cite{kent:iccs2013}, \cite{kent:fole:era:found})
the following definitions.
The set of $\mathcal{A}$-tuples with signature ${\langle{I,s}\rangle}$
is the $\mathrmbf{List}(\mathcal{A})$-extent 
$\mathrmbfit{tup}_{\mathcal{A}}(I,s) = \mathrmbfit{ext}_{\mathrmbf{List}(\mathcal{A})}(I,s)$.
The tuple function
$\mathrmbfit{tup}_{\mathcal{A}}(I',s')\xleftarrow{\mathrmbfit{tup}_{h}}\mathrmbfit{tup}_{\mathcal{A}}(I,s)$
maps ${\langle{I,t}\rangle}{\,\in\,}\mathrmbfit{tup}_{\mathcal{A}}(I,s)$
to ${\langle{I',h{\,\cdot\,}t}\rangle}{\,\in\,}\mathrmbfit{tup}_{\mathcal{A}}(I',s')$.
$\mathrmbf{Rel}_{\mathcal{A}}(I,s)={\wp}\mathrmbfit{tup}_{\mathcal{A}}(I,s)$
is the power set of $\mathcal{A}$-tuples with signature ${\langle{I,s}\rangle}$.}
%%%%%%%%%%%%%%%%%%%%%%%%%%%%%%
\footnote{The semantic quantifiers (and substitution)
are used in the definition of the fibered context $\mathrmbf{Rel}(\mathcal{A})$,
which is defined 
in the paper 
on the {\ttfamily FOLE} Table
(\cite{kent:fole:era:tbl}).
An object of $\mathrmbf{Rel}(\mathcal{A})$,
called an $\mathcal{A}$-relation,
is a pair ${\langle{I,s,R}\rangle}$
consisting of an indexing $X$-signature ${\langle{I,s}\rangle}$
and a subset of $\mathcal{A}$-tuples 
$R \in {\wp}\mathrmbfit{tup}_{\mathcal{A}}(I,s) 
= {\wp}\mathrmbfit{ext}_{\mathrmbf{List}(\mathcal{A})}(I,s)$.}
%%%%%%%%%%%%%%%%%%%%%%%%%%%%%%
\footnote{Since existential quantification and substitution
are direct/inverse operators 
%\[\mbox
{\footnotesize{$
\mathrmbf{Rel}_{\mathcal{A}}(I',s')
%={\wp}\mathrmbfit{tup}_{\mathcal{A}}(I',s')
\xleftrightharpoons[{h}^{{\scriptscriptstyle-}1}\,=\,\mathrmbfit{tup}(h)^{-1}]{\exists_{h}\,=\,{\wp}\mathrmbfit{tup}(h)}
%{\wp}\mathrmbfit{tup}_{\mathcal{A}}(I,s)=
\mathrmbf{Rel}_{\mathcal{A}}(I,s)
$}\normalsize}
%\]
along the tuple function
$\mathrmbfit{tup}_{\mathcal{A}}(I',s')\xleftarrow{\mathrmbfit{tup}(h)}\mathrmbfit{tup}_{\mathcal{A}}(I,s)$,
they are clearly adjoint to each other
$\exists_{h}(R){\;\subseteq\;}R'$
\underline{iff}
$R{\;\subseteq\;}{h}^{{\scriptscriptstyle-}1}(R')$.}
%%%%%%%%%%%%%%%%%%%%%%%%%%%%%%
%
\begin{table}
\begin{center}
{\footnotesize\setlength{\unitlength}{0.5pt}{$\begin{array}[t]{r@{\hspace{10pt}}r@{\hspace{5pt}}c@{\hspace{5pt}}l}
\text{signature morphism} & {\langle{I',s'}\rangle} & \xrightarrow{h} & {\langle{I,s}\rangle} \\
\text{tuple function} & \mathrmbfit{tup}_{\mathcal{A}}(I',s') & \xleftarrow{\mathrmbfit{tup}_{h}} & \mathrmbfit{tup}_{\mathcal{A}}(I,s) \\
\text{substitution} & \mathrmbf{Rel}_{\mathcal{A}}(I',s') & \xrightarrow{\;{h}^{{\scriptscriptstyle-}1}} & \mathrmbf{Rel}_{\mathcal{A}}(I,s) \\
\text{quantification} & \mathrmbf{Rel}_{\mathcal{A}}(I',s') & \xleftarrow[\;\forall_{h}\,]{\;\exists_{h}\,} & \mathrmbf{Rel}_{\mathcal{A}}(I,s) \\
\multicolumn{1}{l}{} & \multicolumn{3}{c}{\rule{0pt}{12pt}\;\;\;\;\;\exists_{{h}}{\;\dashv\;\;}{h}^{{\scriptscriptstyle-}1}{\dashv\;}\forall_{{h}}} 
\end{array}$}}
\end{center}
\caption{Semantic Flow}
\label{tbl:sem:flow}
\end{table}
\begin{description}
\item[Intuitive explanation:] 
For any tuple subset $R\in\mathrmbf{Rel}_{\mathcal{A}}(I,s)$, 
you can get two tuple subsets $\exists_{h}(R),\forall_{h}(R)\in\mathrmbf{Rel}_{\mathcal{A}}(I',s')$ as follows.  
Given any possible tuple $t'\in\mathrmbfit{tup}_{\mathcal{A}}(I',s')$,
you can ask either an existential or a universal question about it:
``Does there \emph{exist} a tuple $t \in R$ with image $t'$?'' ($t'=\mathrmbfit{tup}_{h}(t)$) or 
``Is it the case that \emph{all possible} tuples $t\in\mathrmbfit{tup}_{\mathcal{A}}(I,s)$ with image $t'$ are present in $R$?''
%
%This intuitive explanation shows that 
Clearly, the quantification/substitution operators are monotonic.
\end{description}
%

%%%%%%%%%%%%%%%%%%%%%%%%%%%%%%%%%%%%%%%%%%%%%%%%%%%%%%%%%%%%%%%%%%%%%%%%%%%%%%%%%%%%%%%%%%%%%%%%%%%%
\newpage
\subsubsection{Formula Interpretation.}\label{sub:sub:sec:fml:interp}
%%%%%%%%%%%%%%%%%%%%%%%%%%%%%%%%%%%%%%%%%%%%%%%%%%%%%%%%%%%%%%%%%%%%%%%%%%%%%%%%%%%%%%%%%%%%%%%%%%%%

%%%%%%%%%%%%%%%%%%%%%%%%%%%%%%%%%%%%%%%%%%%%%%%%%%%%%%%%%%%%
%\newpage
\paragraph{Formula Interpretation.}
%%%%%%%%%%%%%%%%%%%%%%%%%%%%%%%%%%%%%%%%%%%%%%%%%%%%%%%%%%%%

The formula interpretation function
%
%\footnote{Despite using the same notation,
%this is not necessarily the same as the traditional interpretation function 
%$\widehat{R}\xrightarrow{\;\mathrmbfit{I}_{\mathcal{M}}\;}\mathrmbf{Rel}(\mathcal{A})$
%of the formula classification $\mathrmbfit{fmla}(\mathcal{M})$,
%%defined in Eqn.~\ref{eqn:rel:ent:typ}
%%of \S\ref{sub:sub:sec:data:model:interp},
%which maps an entity type (formula) $\varphi{\,\in\,}\widehat{R}$ 
%with signature $\widehat{\sigma}(\varphi) = {\langle{I,s}\rangle}$
%to the set of tuples ${\wp}\tau(\mathrmbfit{ext}_{\widehat{\mathcal{E}}}(\varphi))$,
%since as we note below,
%in general we only have the inequality
%${\wp}\tau(\mathrmbfit{ext}_{\widehat{\mathcal{E}}}(\varphi)){\;\subseteq\;}\mathrmbfit{I}_{\mathcal{M}}(\varphi)$.}
%
\begin{equation}\label{eqn:fmla:interp}
\mathrmbfit{I}_{\mathcal{M}} : \widehat{R} \rightarrow \mathrmbf{Rel}(\mathcal{A}),
\end{equation}
which extends the traditional interpretation function 
$\mathrmbfit{I}_{\mathcal{M}} : R\rightarrow\mathrmbf{Rel}(\mathcal{A})$
(see the foundation paper on {\ttfamily FOLE} \cite{kent:fole:era:found}),
is defined by induction on formulas
in Tbl.~\ref{tbl:fml:int}.
\footnote{The 
%formula interpretation 
function 
$\widehat{R}\xrightarrow{\mathrmbfit{I}_{\mathcal{M}}}\mathrmbf{Rel}_{\mathcal{A}}$
is the parallel combination of its fiber functions
\[\mbox{\footnotesize{$
\biggl\{ \widehat{R}(I,s)\xrightarrow{\mathrmbfit{I}^{\mathcal{M}}_{{\langle{I\!,s}\rangle}}}\mathrmbf{Rel}_{\mathcal{A}}(I,s)
\mid {\langle{I,s}\rangle}{\,\in\,}\mathrmbf{List}(X) \biggr\}
$.}\normalsize}\]
The definition of $\mathrmbfit{I}_{\mathcal{M}}$ is directly in terms of these fiber functions.}
At the base step,
it defines the formula interpretation of an entity type $r\in\widehat{R}$ 
as the traditional interpretation of that type
$\mathrmbfit{I}_{\mathcal{M}}(r)
={\wp}\tau(\mathrmbfit{ext}_{\mathcal{E}}(r))$,
%(Eqn.~\ref{eqn:rel:ent:typ}),
which is 
the set of descriptors for entities of that type.
At the the induction step, 
it represents the logical operations by their associated boolean operations: 
intersection of interpretations for conjunction,
union of interpretations for disjunction,
etc.;
and it represents the syntactic flow operators in Tbl.~\ref{tbl:syn:flow} of \S\ref{sub:sub:sec:fmlism}
by their associated semantic flow operators in Tbl.~\ref{tbl:sem:flow}.
\begin{table}
\begin{center}
{\footnotesize{\setlength{\extrarowheight}{2pt}\begin{tabular}{|@{\hspace{5pt}}r@{\hspace{20pt}}l@{\hspace{5pt}}|}
\multicolumn{2}{l}{\rule[-6pt]{0pt}{1pt}\textsf{fiber:} signature ${\langle{I,s}\rangle}$ 
with extent (tuple set) 
$\mathrmbfit{tup}_{\mathcal{A}}(I,s)
%{\,=\,}\mathrmbfit{ext}_{\mathrmbf{List}(\mathcal{A})}(I,s)
{\,=\,}\prod_{i\in{I}}\,\mathcal{A}_{s_{\!i}}$}
\\ \hline
\textit{operator} 
&
\textit{definition}
\hfill
%$\mathrmbfit{I}_{\mathcal{M}}(\varphi)\subseteq\mathrmbfit{tup}_{\mathcal{A}}(I,s)$
$\mathrmbfit{I}_{\mathcal{M}}(\varphi)\in\mathrmbf{Rel}_{\mathcal{A}}(I,s)={\wp}\mathrmbfit{tup}_{\mathcal{A}}(I,s)$
\\
entity type & 
$\mathrmbfit{I}_{\mathcal{M}}(r)
={\wp}\tau(\mathrmbfit{ext}_{\mathcal{E}}(r))$
\hfill
$r{\,\in\,}R(I,s)\subseteq\widehat{R}(I,s)$
\\
meet & 
$\mathrmbfit{I}_{\mathcal{M}}(\varphi{\,\wedge\,}\psi)
=\mathrmbfit{I}_{\mathcal{M}}(\varphi){\,\cap\,}\mathrmbfit{I}_{\mathcal{M}}(\psi)$
\hfill
$\varphi,\psi{\,\in\,}\widehat{R}(I,s)$
\\
join &
$\mathrmbfit{I}_{\mathcal{M}}(\varphi{\,\vee\,}\psi)
=\mathrmbfit{I}_{\mathcal{M}}(\varphi){\,\cup\,}\mathrmbfit{I}_{\mathcal{M}}(\psi)$
\\
top & 
$\mathrmbfit{I}_{\mathcal{M}}({\scriptstyle\top_{{\langle{I,s}\rangle}}})
%=\mathrmbfit{tup}_{\mathcal{A}}(I,s)
%\;\text{(or perhaps better)}\;
=\mathrmbfit{tup}_{\mathcal{A}}(I,s)$
%{\wp}\tau(K)$
\\
bottom & 
$\mathrmbfit{I}_{\mathcal{M}}({\scriptstyle\bot_{{\langle{I,s}\rangle}}})
=\emptyset$
\\
negation & 
$\mathrmbfit{I}_{\mathcal{M}}(\neg\varphi)
=\neg\mathrmbfit{I}_{\mathcal{M}}(\varphi)
=
%{\wp}\tau(K)
\mathrmbfit{tup}_{\mathcal{A}}(I,s){\,\setminus\,}\mathrmbfit{I}_{\mathcal{M}}(\varphi)$
\\
implication & 
$\mathrmbfit{I}_{\mathcal{M}}(\varphi{\,\rightarrowtriangle\,}\psi)
=\mathrmbfit{I}_{\mathcal{M}}(\varphi){\,\rightarrowtriangle\,}\mathrmbfit{I}_{\mathcal{M}}(\psi)
=(\neg\mathrmbfit{I}_{\mathcal{M}}(\varphi)){\,\cup\,}\mathrmbfit{I}_{\mathcal{M}}(\psi)$
\\
difference & 
$\mathrmbfit{I}_{\mathcal{M}}(\varphi{\,\setminus\,}\psi)
= \mathrmbfit{I}_{\mathcal{M}}(\varphi){\,\setminus\,}\mathrmbfit{I}_{\mathcal{M}}(\psi)
=\mathrmbfit{I}_{\mathcal{M}}(\varphi){\,\cap\,}(\neg\mathrmbfit{I}_{\mathcal{M}}(\psi))$
\rule[-5pt]{0pt}{10pt}
\\ \hline
\multicolumn{2}{l}{\rule{0pt}{20pt}\textsf{flow:} signature morphism
${\langle{I',s'}\rangle}\xrightarrow{h}{\langle{I,s}\rangle}$ 
%with tuple map
%$\mathrmbfit{tup}_{\mathcal{A}}(I',s')\xleftarrow{\mathrmbfit{tup}_{\mathcal{A}}(h)}\mathrmbfit{tup}_{\mathcal{A}}(I,s)$
}
\\
\multicolumn{2}{l}{\rule[-6pt]{0pt}{1pt}with tuple map
$\mathrmbfit{tup}_{\mathcal{A}}(I',s')\xleftarrow{\mathrmbfit{tup}_{\mathcal{A}}(h)}\mathrmbfit{tup}_{\mathcal{A}}(I,s)$}
\\ \hline
\textit{operator} & \multicolumn{1}{l|}{\textit{definition}}
\\
existential
& $\mathrmbfit{I}_{\mathcal{M}}({\scriptstyle\sum}_{h}(\varphi))
={\exists}_{h}(\mathrmbfit{I}_{\mathcal{M}}(\varphi))$
\hfill
$\varphi{\,\in\,}\widehat{R}(I,s)$
\\
universal
& $\mathrmbfit{I}_{\mathcal{M}}({\scriptstyle\prod}_{h}(\varphi))
={\forall}_{h}(\mathrmbfit{I}_{\mathcal{M}}(\varphi))$
\\
substitution
& 
$\mathrmbfit{I}_{\mathcal{M}}({h}^{\ast}(\varphi'))
={h}^{-1}(\mathrmbfit{I}_{\mathcal{M}}(\varphi'))$
\rule[-5pt]{0pt}{10pt}
\hfill
$\varphi'{\,\in\,}\widehat{R}(I',s')$
\\ \hline
\end{tabular}}}
\end{center}
\caption{Formula Interpretation}
\label{tbl:fml:int}
\end{table}
%

%%%%%%%%%%%%%%%%%%%%%%%%%%%%%%%%%%%%%%%%%%%%%%%%%%%%%%%%%%%%
\newpage
\paragraph{Formal/Semantics Reflection.}
%%%%%%%%%%%%%%%%%%%%%%%%%%%%%%%%%%%%%%%%%%%%%%%%%%%%%%%%%%%%

The logical semantics of a structure $\mathcal{M}$ resides in its core,
which is defined by its formula interpretation function 
$\widehat{R}\xrightarrow{\mathrmbfit{I}_{\mathcal{M}}}\mathrmbf{Rel}_{\mathcal{A}}$.
To respect this,
the formal flow operators 
(${\scriptstyle\sum}_{h}$,${\scriptstyle\prod}_{h}$,${h}^{\ast}$)
for existential/universal quantification and substitution
reflect the semantic flow operators
(${\exists}_{h}$,${\forall}_{h}$,${h}^{-1}$)
via interpretation
(Tbl.~\ref{tbl:fml-sem:refl}).
\footnote{This tabular/relational reflection 
is a special case of Prop.~5.59 in the text {\itshape Topos Theory}~\cite{johnstone:77},
suggesting that all of the development in this paper could be done in an arbitrary topos,
or even in a more general setting.}
\footnote{The morphic aspect of Tbl.~\ref{tbl:fml-sem:refl} anticipates the definition of sequent satisfaction in \S.\ref{sub:sub:sec:seq:sat}: 
An $\mathcal{S}$-structure $\mathcal{M} \in \mathrmbf{Struc}(\mathcal{S})$
\emph{satisfies} an $\mathcal{S}$-sequent $\varphi{\;\vdash\;}\psi$
when the interpretation widening of views asserted by the sequent actually holds in $\mathcal{M}$:
$\mathrmbfit{I}_{\mathcal{M}}(\varphi){\;\subseteq\;}\mathrmbfit{I}_{\mathcal{M}}(\psi)$.}
\begin{table}
\begin{center}
{\scriptsize{\begin{tabular}{@{\hspace{30pt}}c@{\hspace{50pt}}c}
%%%%%%%%%%%%%%%%%%%%%%%%%%%%%%%%%%%%%%%%%%%%%%%%%%%%%%%%%%%%
{{\begin{tabular}[b]{c}
\setlength{\unitlength}{0.79pt}
\begin{picture}(120,90)(0,-10)
\put(-2,80){\makebox(0,0){\footnotesize{${\bigl\langle{\widehat{R}(I',s'),\vdash'}\bigr\rangle}$}}}
\put(120,80){\makebox(0,0){\footnotesize{${\bigl\langle{\widehat{R}(I,s),\vdash}\bigr\rangle}$}}}
\put(0,0){\makebox(0,0){\footnotesize{$\mathrmbf{Rel}_{\mathcal{A}}(I',s')$}}}
\put(120,0){\makebox(0,0){\footnotesize{$
\underset{{\wp}\mathrmbfit{tup}_{\mathcal{A}}(I,s)}
{\mathrmbf{Rel}_{\mathcal{A}}(I,s)}$}}}
\put(60,100){\makebox(0,0){\scriptsize{${\scriptstyle\sum}_{{h}}$}}}
\put(62,82){\makebox(0,0){\scriptsize{${h}^{\ast}$}}}
\put(60,58){\makebox(0,0){\scriptsize{${\scriptstyle\prod}_{{h}}$}}}
\put(60,20){\makebox(0,0){\scriptsize{$\exists_{{h}}$}}}
\put(64,2){\makebox(0,0){\scriptsize{${h}^{{\scriptscriptstyle-}1}$}}}
\put(60,-20){\makebox(0,0){\scriptsize{$\forall_{{h}}$}}}
\put(6,40){\makebox(0,0)[l]{\scriptsize{$\mathrmbfit{I}^{\mathcal{M}}_{{\langle{I'\!,s'}\rangle}}$}}}
\put(126,40){\makebox(0,0)[l]{\scriptsize{$\mathrmbfit{I}^{\mathcal{M}}_{{\langle{I\!,s}\rangle}}$}}}
\put(85,92){\vector(-1,0){50}}
\qbezier(35,80)(43,80)(51,80)\qbezier(69,80)(77,80)(85,80)\put(85,80){\vector(1,0){0}}
\put(85,68){\vector(-1,0){50}}
\put(85,12){\vector(-1,0){50}}
\put(85,-12){\vector(-1,0){50}}
\qbezier(35,0)(43,0)(51,0)\qbezier(69,0)(77,0)(85,0)\put(85,0){\vector(1,0){0}}
\put(0,65){\vector(0,-1){50}}
\put(120,65){\vector(0,-1){50}}
\end{picture}
\end{tabular}}}
%%%%%%%%%%%%%%%%%%%%%%%%%%%%%%%%%%%%%%%%%%%%%%%%%%%%%%%%%%%%
&
%%%%%%%%%%%%%%%%%%%%%%%%%%%%%%%%%%%%%%%%%%%%%%%%%%%%%%%%%%%%
{{\begin{tabular}[b]{l}
{\footnotesize\setlength{\extrarowheight}{2.5pt}$\begin{array}[b]{l}
{\langle{I',s'}\rangle}\xrightarrow{h}{\langle{I,s}\rangle} \\ 
\mathrmbfit{tup}_{\mathcal{A}}(I',s')\xleftarrow{\mathrmbfit{tup}(h)}\mathrmbfit{tup}_{\mathcal{A}}(I,s)
\end{array}$}
\\ \\
{\footnotesize\setlength{\extrarowheight}{2.5pt}$\begin{array}[b]{|@{\hspace{5pt}}l@{\hspace{2pt}}|}
\hline
{\scriptstyle\sum}_{{h}} \dashv {h}^{\ast} \dashv {\scriptstyle\prod}_{{h}} 
\\
\exists_{{h}} \dashv {h}^{{\scriptscriptstyle-}1} \dashv \forall_{{h}} 
%\\
%\mathrmbfit{im}_{\mathcal{E}_{1}}(I,s) \dashv \mathrmbfit{inc}_{\mathcal{E}_{1}}(I,s) 
%\\
%\mathrmbfit{im}_{\mathcal{E}_{2}}(I',s') \dashv \mathrmbfit{inc}_{\mathcal{E}_{2}}(I',s')
\\ \hline
%\mathrmbfit{inc}_{\mathcal{E}_{2}}(I',s') \circ {h}^{\ast} = {h}^{{\scriptscriptstyle-}1} \circ %\mathrmbfit{inc}_{\mathcal{E}_{1}}(I,s)
%\\
{h}^{\ast} \circ \mathrmbfit{I}^{\mathcal{M}}_{{\langle{I\!,s}\rangle}} 
= \mathrmbfit{I}^{\mathcal{M}}_{{\langle{I'\!,s'}\rangle}} \circ {h}^{{\scriptscriptstyle-}1}	
\\
%\mathrmbfit{inc}_{\mathcal{E}_{1}}(I,s) \circ {\scriptstyle\prod}_{{h}} = \forall_{{h}} \circ %\mathrmbfit{inc}_{\mathcal{E}_{2}}(I',s')
%\\
{\scriptstyle\sum}_{{h}} \circ \mathrmbfit{I}^{\mathcal{M}}_{{\langle{I'\!,s'}\rangle}}
 = \mathrmbfit{I}^{\mathcal{M}}_{{\langle{I\!,s}\rangle}} \circ \exists_{{h}}
\\ \hline
\end{array}$}
\end{tabular}}}
\\ & \\
%%%%%%%%%%%%%%%%%%%%%%%%%%%%%%%%%%%%%%%%%%%%%%%%%%%%%%%%%%%%
\end{tabular}}}
\end{center}
\caption{Formal/Semantics Reflection}
\label{tbl:fml-sem:refl}
\end{table}
%

%%%%%%%%%%%%%%%%%%%%%%%%%%%%%%%%%%%%%%%%%%%%%%%%%%%%%%%%%%%%%%%%%%%%%%%%%%%%%%%%%%%%%%%%%%%%%%%%%%%%
\newpage
\subsubsection{Formula Structures.}\label{sub:sub:sec:fml:struc:obj}
%%%%%%%%%%%%%%%%%%%%%%%%%%%%%%%%%%%%%%%%%%%%%%%%%%%%%%%%%%%%%%%%%%%%%%%%%%%%%%%%%%%%%%%%%%%%%%%%%%%%

Any structure 
$\mathcal{M} = {\langle{\mathcal{E},{\langle{\sigma,\tau}\rangle},\mathcal{A}}\rangle}$ 
has an associated formula structure
$\mathrmbfit{fmla}(\mathcal{M}) 
= \widehat{\mathcal{M}}  
= {\langle{\widehat{\mathcal{E}},{\langle{\widehat{\sigma},\tau}\rangle},\mathcal{A}}\rangle}$
with the same universe $\mathcal{U} = {\langle{K,\tau,Y}\rangle}$
and type domain $\mathcal{A}={\langle{X,Y,\models_{\mathcal{A}}}\rangle}$,
but with the formula schema 
$\mathrmbfit{fmla}(\mathcal{S}) = \widehat{\mathcal{S}} = {\langle{\widehat{R},\widehat{\sigma},X}\rangle}$
defined in \S\ref{sub:sub:sec:fmlism}
and the formula classification
$\widehat{\mathcal{E}} = {\langle{\widehat{R},K,\models_{\widehat{\mathcal{E}}}}\rangle}$
defined here.
%

%%%%%%%%%%%%%%%%%%%%%%%%%%%%%%%%%%%%%%%%%%%%%%%%%%%%%%%%%%%%%%%%%%%%%%%%%%%%%%%%%%%%%%%%%%%%%%%%%%%%
%\newpage
%\subsubsection{Formula Classification.}\label{sub:sub:sec:fml:cls}
%%%%%%%%%%%%%%%%%%%%%%%%%%%%%%%%%%%%%%%%%%%%%%%%%%%%%%%%%%%%%%%%%%%%%%%%%%%%%%%%%%%%%%%%%%%%%%%%%%%%

%%%%%%%%%%%%%%%%%%%%%%%%%%%%%%%%%%%%%%%%%%%%%%%%%%%%%%%%%%%%
%\newpage
\paragraph{Formula Classification.}
%%%%%%%%%%%%%%%%%%%%%%%%%%%%%%%%%%%%%%%%%%%%%%%%%%%%%%%%%%%%

The formula classification
$\widehat{\mathcal{E}} = {\langle{\widehat{R},K,\models_{\widehat{\mathcal{E}}}}\rangle}$,
which extends 
%and strengthens 
the entity classification $\mathcal{E} = {\langle{R,K,\models_{\mathcal{E}}}\rangle}$,
is defined in Tbl.~\ref{tbl:fmla:cls} by induction
on formulas using formula interpretation.
%

%
%The following assumption and corollary strongly coordinate views and interpretations.
%
%\begin{assumption}\label{assump1}
%For any entity type $r\in{R}$,
%$\mathrmbfit{ext}_{\widehat{\mathcal{E}}}(r)=\tau^{-1}(\mathrmbfit{I}_{\mathcal{M}}(r))$.
%\end{assumption}
%
%This fact states that 
%$\mathrmbfit{ext}_{\widehat{\mathcal{E}}}$ is the morphic closure of $\mathrmbfit{ext}_{\mathcal{E}}$
%on entity types: 
%$\mathrmbfit{ext}_{\widehat{\mathcal{E}}}(r)=\tau^{-1}({\wp}\tau(\mathrmbfit{ext}_{\widehat{\mathcal{E}}}(r)))$
%w.r.t.\ the tuple map
%$K\xrightarrow{\tau}\mathrmbf{List}(Y)$.
%$\mathrmbfit{ext}_{\widehat{\mathcal{E}}}(r){\;\subseteq\;}\tau^{-1}(\mathrmbfit{I}_{\mathcal{M}}(r))$
%\underline{iff} ${\wp}\tau(\mathrmbfit{ext}_{\widehat{\mathcal{E}}}(r)){\;\subseteq\;}\mathrmbfit{I}_{\mathcal{M}}(r)$,
%since ${\wp}\tau{\;\dashv\;}\tau^{-1}\!:{\wp}K\rightleftarrows{\wp}\mathrmbf{List}(Y)$.
%

%
\begin{table}
\begin{center}
{\footnotesize{\setlength{\extrarowheight}{2pt}
\begin{tabular}{|@{\hspace{5pt}}r@{\hspace{20pt}}l@{\hspace{10pt}\underline{when}\hspace{10pt}}c@{\hspace{5pt}}|}
\multicolumn{3}{l}{\rule{0pt}{1pt}\textsf{fiber:} signature ${\langle{I,s}\rangle}$ with extent (tuple set) 
$\mathrmbfit{tup}_{\mathcal{A}}(I,s){\,=\,}\prod_{i\in{I}}\,\mathcal{A}_{s_{\!i}}$}
\\
\multicolumn{3}{l}{\rule[-6pt]{0pt}{1pt}
$k{\;\in\;}K$
%\tau^{-1}(\mathrmbfit{tup}_{\mathcal{A}}(I,s))$ 
and $\varphi,\psi{\;\in\;}\widehat{R}(I,s)$}
\\ \hline
\textit{operator} & \multicolumn{1}{l}{\textit{definiendum}} & \multicolumn{1}{c|}{\textit{definiens}}
\\
entity type
& $k{\;\models_{\widehat{\mathcal{E}}}\;}r$
& $\tau(k){\;\in\;}\mathrmbfit{I}_{\mathcal{M}}(r){\,\subseteq\,}\mathrmbfit{tup}_{\mathcal{A}}(I,s)$
%$k{\;\models_{\mathcal{E}}\;}r$
\\
meet
& $k{\;\models_{\widehat{\mathcal{E}}}\;}(\varphi{\,\wedge\,}\psi)$
& $k{\;\models_{\widehat{\mathcal{E}}}\;}\varphi$ and $k{\;\models_{\widehat{\mathcal{E}}}\;}\psi$
\\
join
& $k{\;\models_{\widehat{\mathcal{E}}}\;}(\varphi{\,\vee\,}\psi)$
& $k{\;\models_{\widehat{\mathcal{E}}}\;}\varphi$ or $k{\;\models_{\widehat{\mathcal{E}}}\;}\psi$
\\
top
& $k{\;\models_{\widehat{\mathcal{E}}}\;}{\scriptstyle\top_{{\langle{I,s}\rangle}}}$
& $\tau(k){\;\in\;}\mathrmbfit{tup}_{\mathcal{A}}(I,s)$
%& \multicolumn{2}{l|}{$k{\;\models_{\widehat{\mathcal{E}}}\;}{\scriptstyle\top_{{\langle{I,s}\rangle}}}$}
\\
bottom
%& $k{\;\cancel{\models_{\widehat{\mathcal{E}}}}\;}{\scriptstyle\bot_{{\langle{I,s}\rangle}}}$
%& $k{\;\cancel{\models}_{\widehat{\mathcal{E}}}\;}{\scriptstyle\top_{{\langle{I,s}\rangle}}}$
& \multicolumn{2}{l|}{$k{\;\cancel{\models}_{\widehat{\mathcal{E}}}\;}{\scriptstyle\bot_{{\langle{I,s}\rangle}}}$}
\\
negation
& $k{\;\models_{\widehat{\mathcal{E}}}\;}(\neg\varphi)$
& 
$\tau(k){\;\in\;}\mathrmbfit{tup}_{\mathcal{A}}(I,s)$
and
$k{\;\cancel{\models}_{\widehat{\mathcal{E}}}\;}\varphi$
\\
implication
& $k{\;\models_{\widehat{\mathcal{E}}}\;}(\varphi{\,\rightarrowtriangle\,}\psi)$
& if $k{\;\models_{\widehat{\mathcal{E}}}\;}\varphi$ then $k{\;\models_{\widehat{\mathcal{E}}}\;}\psi$
\\
difference
& $k{\;\models_{\widehat{\mathcal{E}}}\;}(\varphi{\,\setminus\,}\psi)$
& $k{\;\models_{\widehat{\mathcal{E}}}\;}\varphi$ but not $k{\;\models_{\widehat{\mathcal{E}}}\;}\psi$
\rule[-5pt]{0pt}{10pt}
\\ \hline
\multicolumn{3}{l}{\rule{0pt}{20pt}\textsf{flow:} signature morphism
${\langle{I',s'}\rangle}\xrightarrow{h}{\langle{I,s}\rangle}$}
%$\overset{\textstyle\widehat{\sigma}(\varphi')}{\overbrace{\langle{I',s'}\rangle}}
%\xrightarrow{h}\overset{\textstyle\widehat{\sigma}(\varphi)}{\overbrace{\langle{I,s}\rangle}}$}
\\
\multicolumn{3}{l}{\rule[-6pt]{0pt}{1pt}with tuple map
$\mathrmbfit{tup}_{\mathcal{A}}(I',s')\xleftarrow{\mathrmbfit{tup}_{\mathcal{A}}(h)}\mathrmbfit{tup}_{\mathcal{A}}(I,s)$}
%\hfill
%$\varphi'{\,\in\,}\widehat{R}(I',s')$,$\varphi{\,\in\,}\widehat{R}(I,s)$
\\ \hline
\textit{operator} & \multicolumn{1}{l}{\textit{definiendum}} & \multicolumn{1}{c|}{\textit{definiens}}
\\
existential
& $k{\;\models_{\widehat{\mathcal{E}}}\;}{\scriptstyle\sum}_{h}(\varphi)$
&
$\tau(k){\,\in\,}{\exists}_{h}(\mathrmbfit{I}_{\mathcal{M}}(\varphi))$
\\
universal
& $k{\;\models_{\widehat{\mathcal{E}}}\;}{\scriptstyle\prod}_{h}(\varphi)$
& 
$\tau(k){\,\in\,}{\forall}_{h}(\mathrmbfit{I}_{\mathcal{M}}(\varphi))$ 
\\
substitution
& $k{\;\models_{\widehat{\mathcal{E}}}\;}{h}^{\ast}(\varphi')$
& 
$\tau(k){\,\in\,}{h}^{-1}(\mathrmbfit{I}_{\mathcal{M}}(\varphi'))$ 
\rule[-5pt]{0pt}{10pt}
\\ \hline
\end{tabular}}}
\end{center}
\caption{Formula Classification}
\label{tbl:fmla:cls}
\end{table}
\begin{proposition}\label{prop:cls:interp}
For any formula $\varphi\in\widehat{R}$,
$\mathrmbfit{ext}_{\widehat{\mathcal{E}}}(\varphi)=\tau^{-1}(\mathrmbfit{I}_{\mathcal{M}}(\varphi))$.
Hence,
\newline
\mbox{}\hfill
$k \models_{\widehat{\mathcal{E}}} \varphi$ 
\underline{iff}
$\tau(k)\in\mathrmbfit{I}_{\mathcal{M}}(\varphi)$
for all $k\in{K}$.
\footnote{Simply put,
an entity is in the view of a query
exactly when
the descriptor of that entity is in the interpretation of the query.
But,
there may be tuples in the interpretation of the query
that are not descriptors of any entity in the view of the query.}
\hfill\mbox{}
\end{proposition}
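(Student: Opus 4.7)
The plan is to proceed by structural induction on the formula $\varphi \in \widehat{R}$, using the inductive construction of $\widehat{R}$ given in \S\ref{sub:sub:sec:fml} and the parallel inductive definitions of $\mathrmbfit{I}_{\mathcal{M}}$ in Tbl.~\ref{tbl:fml:int} and of $\models_{\widehat{\mathcal{E}}}$ in Tbl.~\ref{tbl:fmla:cls}. The target equivalence to establish at each step is
\[
k \models_{\widehat{\mathcal{E}}} \varphi \;\;\text{iff}\;\; \tau(k) \in \mathrmbfit{I}_{\mathcal{M}}(\varphi),
\]
from which the extent identity $\mathrmbfit{ext}_{\widehat{\mathcal{E}}}(\varphi) = \tau^{-1}(\mathrmbfit{I}_{\mathcal{M}}(\varphi))$ follows by unfolding the definition of the extent.

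First, for the base case $\varphi = r \in R(I,s)$, the first row of Tbl.~\ref{tbl:fmla:cls} defines $k \models_{\widehat{\mathcal{E}}} r$ to mean precisely $\tau(k) \in \mathrmbfit{I}_{\mathcal{M}}(r)$, so the equivalence is immediate. Next, for the fiber-wise logical connectives, I would match the Boolean clauses of Tbl.~\ref{tbl:fmla:cls} against the set-theoretic definitions of Tbl.~\ref{tbl:fml:int}: meet corresponds to intersection, join to union, implication to $(\neg(-)){\cup}(-)$, difference to set-theoretic difference, top to the ambient tuple set $\mathrmbfit{tup}_{\mathcal{A}}(I,s)$, and bottom to $\emptyset$. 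In each case, assuming the induction hypothesis for the immediate subformulas, the satisfaction clause rephrases directly as the membership of $\tau(k)$ in the corresponding Boolean combination of interpretations. The one point requiring mild care is negation: the clause $k \models_{\widehat{\mathcal{E}}}(\neg\varphi)$ carries the side condition $\tau(k)\in\mathrmbfit{tup}_{\mathcal{A}}(I,s)$, which corresponds to the fact that $\mathrmbfit{I}_{\mathcal{M}}(\neg\varphi) = \mathrmbfit{tup}_{\mathcal{A}}(I,s) \setminus \mathrmbfit{I}_{\mathcal{M}}(\varphi)$ is computed relative to the ambient tuple set rather than absolutely.

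For the flow operators, the inductive step is essentially read off from the tables: by Tbl.~\ref{tbl:fmla:cls}, $k \models_{\widehat{\mathcal{E}}} {\scriptstyle\sum}_{h}(\varphi)$ is defined as $\tau(k) \in \exists_{h}(\mathrmbfit{I}_{\mathcal{M}}(\varphi))$, while by Tbl.~\ref{tbl:fml:int} we have $\mathrmbfit{I}_{\mathcal{M}}({\scriptstyle\sum}_{h}(\varphi)) = \exists_{h}(\mathrmbfit{I}_{\mathcal{M}}(\varphi))$; the universal and substitution cases are analogous via $\forall_{h}$ and $h^{-1}$. Notice that these clauses do not actually invoke the inductive hypothesis on $k$ at all, because the classification is defined through the interpretation of the subformula; this reflects precisely the formal/semantic reflection recorded in Tbl.~\ref{tbl:fml-sem:refl}.

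The main conceptual obstacle is not any single calculation but rather keeping the bookkeeping straight between the two tables and ensuring that the ambient-set conditions (especially for $\top$, bottom, and negation) line up with the fact that $\tau(k)$ may or may not lie in $\mathrmbfit{tup}_{\mathcal{A}}(I,s)$ for a given signature. Once the base case and the two kinds of inductive steps are verified, the conclusion $\mathrmbfit{ext}_{\widehat{\mathcal{E}}}(\varphi) = \{k \in K \mid \tau(k) \in \mathrmbfit{I}_{\mathcal{M}}(\varphi)\} = \tau^{-1}(\mathrmbfit{I}_{\mathcal{M}}(\varphi))$ follows directly, justifying the footnote that tuples may appear in the interpretation of a query without being descriptors of any entity in its view.
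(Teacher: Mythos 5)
Your proposal is correct and follows essentially the same route as the paper: a structural induction on formulas that matches each satisfaction clause of Tbl.~\ref{tbl:fmla:cls} against the corresponding set-theoretic clause of Tbl.~\ref{tbl:fml:int}, with the base case immediate from the definition and the flow cases read off directly (the paper's proof likewise dispenses with the induction hypothesis there). Your remarks on the ambient-set bookkeeping for negation and top/bottom are consistent with how the paper handles those cases.
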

\begin{proof}
By induction, for all
$k{\;\in\;}K,\;\varphi,\psi{\,\in\,}\widehat{R}(I,s),\;\varphi'{\,\in\,}\widehat{R}(I',s')$:
%\tau^{-1}(\mathrmbfit{tup}_{\mathcal{A}}(I,s))
%%%%%%%%%%%%%%%%%%%%%%%%%%%%%%%%%%%%%%%%%%%%%%%%%%%%%%%%%%%%%%%%%%%%%%%%%%%%%%%%%%%%%%%%%%%%%%%%%%%%
%%%%%%%%%%%%%%%%%%%%%%%%%%%%%%%%%%%%%%%%%%%%%%%%%%%%%%%%%%%%%%%%%%%%%%%%%%%%%%%%%%%%%%%%%%%%%%%%%%%%
%\comment{
{\footnotesize{
\begin{itemize}
\item 
[meet:]
$k{\;\models_{\widehat{\mathcal{E}}}\;}(\varphi{\,\wedge\,}\psi)$
\underline{when} 
$k{\;\models_{\widehat{\mathcal{E}}}\;}\varphi$ and $k{\;\models_{\widehat{\mathcal{E}}}\;}\psi$
\underline{iff}
$\tau(k)\in\mathrmbfit{I}_{\mathcal{M}}(\varphi)$ and
$\tau(k)\in\mathrmbfit{I}_{\mathcal{M}}(\psi)$
\underline{iff}
$\tau(k)\in
\mathrmbfit{I}_{\mathcal{M}}(\varphi){\,\cap\,}\mathrmbfit{I}_{\mathcal{M}}(\psi)
=\mathrmbfit{I}_{\mathcal{M}}(\varphi{\,\wedge\,}\psi)$;
%for all $k\in{K},\;\varphi,\psi{\,\in\,}\widehat{R}(I,s)$.
%
\item 
[join:]
$k{\;\models_{\widehat{\mathcal{E}}}\;}(\varphi{\,\vee\,}\psi)$
\underline{when} 
$k{\;\models_{\widehat{\mathcal{E}}}\;}\varphi$ or $k{\;\models_{\widehat{\mathcal{E}}}\;}\psi$
\underline{iff}
$\tau(k)\in\mathrmbfit{I}_{\mathcal{M}}(\varphi)$ or
$\tau(k)\in\mathrmbfit{I}_{\mathcal{M}}(\psi)$
\underline{iff}
$\tau(k)\in
\mathrmbfit{I}_{\mathcal{M}}(\varphi){\,\cup\,}\mathrmbfit{I}_{\mathcal{M}}(\psi)
=\mathrmbfit{I}_{\mathcal{M}}(\varphi{\,\vee\,}\psi)$;
%for all $k\in{K},\;\varphi,\psi{\,\in\,}\widehat{R}(I,s)$.
%
\item 
[top:]
$k{\;\models_{\widehat{\mathcal{E}}}\;}{\scriptstyle\top_{{\langle{I,s}\rangle}}}$
\underline{when} 
$\tau(k){\;\in\;}\mathrmbfit{tup}_{\mathcal{A}}(I,s)=\mathrmbfit{I}_{\mathcal{M}}({\scriptstyle\top_{{\langle{I,s}\rangle}}})$
%\underline{iff}
%$k{\;\in\;}\tau^{-1}(\mathrmbfit{tup}_{\mathcal{A}}(I,s))
%{\;=\;}
%\tau^{-1}(\mathrmbfit{I}_{\mathcal{M}}({\scriptstyle\top}))$;
%$(k{\;\models_{\widehat{\mathcal{E}}}\;}{\scriptstyle\top},\forall k\in{K})$
%\underline{iff}
%$\mathrmbfit{ext}_{\widehat{\mathcal{E}}}(\top)
%=K
%=\tau^{-1}(\mathrmbfit{tup}_{\mathcal{A}}(I,s))
%=\tau^{-1}(\mathrmbfit{I}_{\mathcal{M}}(\top))
%$.
%Note,
%one could use the range ${\wp}\tau(K)$,
%instead of the entire $\mathrmbfit{tup}_{\mathcal{A}}(I,s)$.
%
\item 
[bottom:]
$k{\;\models_{\widehat{\mathcal{E}}}\;}{\scriptstyle\bot_{{\langle{I,s}\rangle}}}$
\underline{when} 
$\tau(k){\;\in\;}\emptyset=\mathrmbfit{I}_{\mathcal{M}}({\scriptstyle\bot_{{\langle{I,s}\rangle}}})$
%$k{\;\cancel{\models}_{\widehat{\mathcal{E}}}\;}{\scriptstyle\top}$
%\underline{iff}
%$k{\;\not\in\;}\tau^{-1}(\mathrmbfit{tup}_{\mathcal{A}}(I,s))$
%\underline{implies}
%$\mathrmbfit{ext}_{\widehat{\mathcal{E}}}({\scriptstyle\bot})
%=\emptyset
%=\tau^{-1}(\emptyset)
%=\tau^{-1}(\mathrmbfit{I}_{\mathcal{M}}({\scriptstyle\bot}))$;
%$(k{\;\cancel{\models}_{\widehat{\mathcal{E}}}\;}{\scriptstyle\bot},\forall k\in{K})$
%\underline{iff}
%$\mathrmbfit{ext}_{\widehat{\mathcal{E}}}(\bot)
%=\emptyset
%=\tau^{-1}(\emptyset)
%=\tau^{-1}(\mathrmbfit{I}_{\mathcal{M}}(\bot))
%$.
%
\item 
[negation:]
$k{\;\models_{\widehat{\mathcal{E}}}\;}(\neg\varphi)$
\underline{when}  
$\tau(k){\;\in\;}\mathrmbfit{tup}_{\mathcal{A}}(I,s)$
and
$k{\;\cancel{\models}_{\widehat{\mathcal{E}}}\;}\varphi$
\underline{iff}
$\tau(k){\;\in\;}\mathrmbfit{tup}_{\mathcal{A}}(I,s)$
and
$\tau(k)\not\in\mathrmbfit{I}_{\mathcal{M}}(\varphi)$
\underline{iff}
$\tau(k)\in
\mathrmbfit{tup}_{\mathcal{A}}(I,s){\,\setminus\,}\mathrmbfit{I}_{\mathcal{M}}(\varphi)
=\neg\mathrmbfit{I}_{\mathcal{M}}(\varphi)
=\mathrmbfit{I}_{\mathcal{M}}(\neg\varphi)$;
%for all $k\in{K},\;\varphi{\,\in\,}\widehat{R}(I,s)$.
%
\item 
[implication:]
$k{\;\models_{\widehat{\mathcal{E}}}\;}(\varphi{\,\rightarrowtriangle\,}\psi)$
\underline{when} 
($k{\;\models_{\widehat{\mathcal{E}}}\;}\varphi$ implies $k{\;\models_{\widehat{\mathcal{E}}}\;}\psi$)
\underline{iff}
($\tau(k)\in\mathrmbfit{I}_{\mathcal{M}}(\varphi)$
 implies $\tau(k)\in\mathrmbfit{I}_{\mathcal{M}}(\psi)$)
\underline{iff}
$\tau(k)\in
\bigl(
\mathrmbfit{I}_{\mathcal{M}}(\varphi){\,\rightarrowtriangle\,}\mathrmbfit{I}_{\mathcal{M}}(\psi)
\bigr)
=\mathrmbfit{I}_{\mathcal{M}}(\varphi{\,\rightarrowtriangle\,}\psi)$;
%for all $k\in{K},\;\varphi,\psi{\,\in\,}\widehat{R}(I,s)$.
%
\item 
[difference:]
$k{\;\models_{\widehat{\mathcal{E}}}\;}(\varphi{\,\setminus\,}\psi)$
\underline{when} 
$k{\;\models_{\widehat{\mathcal{E}}}\;}\varphi$ but not $k{\;\models_{\widehat{\mathcal{E}}}\;}\psi$
\underline{iff}
($\tau(k)\in\mathrmbfit{I}_{\mathcal{M}}(\varphi)$
and
$\tau(k)\not\in\mathrmbfit{I}_{\mathcal{M}}(\psi)$)
\underline{iff}
$\tau(k)\in
\mathrmbfit{I}_{\mathcal{M}}(\varphi){\,\cap\,}(\neg\mathrmbfit{I}_{\mathcal{M}}(\psi))
=\mathrmbfit{I}_{\mathcal{M}}(\varphi){\,\setminus\,}\mathrmbfit{I}_{\mathcal{M}}(\psi)
=\mathrmbfit{I}_{\mathcal{M}}(\varphi{\,\setminus\,}\psi)$;
%for all $k\in{K},\;\varphi,\psi{\,\in\,}\widehat{R}(I,s)$.
%
\item 
[existential:]
$k{\;\models_{\widehat{\mathcal{E}}}\;}{\scriptstyle\sum}_{h}(\varphi)$
\underline{when}
$\tau(k){\,\in\,}
{\exists}_{h}(\mathrmbfit{I}_{\mathcal{M}}(\varphi))
=\mathrmbfit{I}_{\mathcal{M}}({\scriptstyle\sum}_{h}(\varphi))$;
%for all $k\in{K},\;\varphi{\,\in\,}\widehat{R}(I,s)$.
%
\item 
[universal:]
$k{\;\models_{\widehat{\mathcal{E}}}\;}{\scriptstyle\prod}_{h}(\varphi)$
\underline{when} 
$\tau(k){\,\in\,}
{\forall}_{h}(\mathrmbfit{I}_{\mathcal{M}}(\varphi))
=\mathrmbfit{I}_{\mathcal{M}}({\scriptstyle\prod}_{h}(\varphi))$;
and
%for all $k\in{K},\;\varphi{\,\in\,}\widehat{R}(I,s)$.
%
\item 
[substitution:]
$k{\;\models_{\widehat{\mathcal{E}}}\;}{h}^{\ast}(\varphi')$
\underline{when} 
$\tau(k){\,\in\,}
{h}^{-1}(\mathrmbfit{I}_{\mathcal{M}}(\varphi'))
=\mathrmbfit{I}_{\mathcal{M}}({h}^{\ast}(\varphi'))$.
%for all $k\in{K},\;\varphi{\,\in\,}\widehat{R}(I,s)$.
\hfill
\rule{5pt}{5pt}
\end{itemize}}}
%}
%
\end{proof}
\begin{lemma}\label{lem:fmla:struc}
The associated formula structure
$\widehat{\mathcal{M}} 
= {\langle{\widehat{\mathcal{E}},{\langle{\widehat{\sigma},\tau}\rangle},\mathcal{A}}\rangle}$
is well-defined.
\end{lemma}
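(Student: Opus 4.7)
The plan is to unfold what it means for $\widehat{\mathcal{M}}$ to be a well-defined structure and then verify each required component. The formula schema $\widehat{\mathcal{S}} = \mathrmbfit{fmla}(\mathcal{S}) = \langle\widehat{R},\widehat{\sigma},X\rangle$ has already been constructed in §\ref{sub:sub:sec:fmlism}, and both the universe $\mathcal{U}=\langle K,\tau,Y\rangle$ and the type domain $\mathcal{A}=\langle X,Y,\models_{\mathcal{A}}\rangle$ are inherited unchanged from $\mathcal{M}$. What remains is to verify that the triple $\langle\widehat{\mathcal{E}},\langle\widehat{\sigma},\tau\rangle,\mathcal{A}\rangle$ satisfies the compatibility condition required of a FOLE structure: whenever $k \models_{\widehat{\mathcal{E}}} \varphi$ with $\widehat{\sigma}(\varphi) = \langle I,s\rangle$, the descriptor $\tau(k)$ must lie in $\mathrmbfit{tup}_{\mathcal{A}}(I,s)$.

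First I would invoke Proposition~\ref{prop:cls:interp}, which reduces this compatibility to the containment $\mathrmbfit{I}_{\mathcal{M}}(\varphi) \subseteq \mathrmbfit{tup}_{\mathcal{A}}(I,s)$ for every $\varphi \in \widehat{R}(I,s)$ --- that is, to confirming that the formula interpretation function really lands in the fiber of $\mathrmbf{Rel}_{\mathcal{A}}$ indicated by the typing in Tbl.~\ref{tbl:fml:int}. Once this inclusion is established, Prop.~\ref{prop:cls:interp} immediately yields $\tau(k) \in \mathrmbfit{I}_{\mathcal{M}}(\varphi) \subseteq \mathrmbfit{tup}_{\mathcal{A}}(I,s)$ whenever $k \models_{\widehat{\mathcal{E}}} \varphi$, which is precisely the structural condition needed.

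The proof of the containment proceeds by induction on the construction of $\varphi$. The base case for an entity type $r \in R(I,s)$ uses the well-definedness of the original structure $\mathcal{M}$: since $k \models_{\mathcal{E}} r$ implies $\tau(k) \in \mathrmbfit{tup}_{\mathcal{A}}(I,s)$, we have $\mathrmbfit{I}_{\mathcal{M}}(r) = \wp\tau(\mathrmbfit{ext}_{\mathcal{E}}(r)) \subseteq \mathrmbfit{tup}_{\mathcal{A}}(I,s)$. The fiber operators (meet, join, negation, implication, difference, top, bottom) preserve containment in $\mathrmbfit{tup}_{\mathcal{A}}(I,s)$, since in Tbl.~\ref{tbl:fml:int} each is realised by the corresponding boolean operation computed relative to that ambient tuple set. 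For the flow operators ${\scriptstyle\sum}_{h}$, ${\scriptstyle\prod}_{h}$, and ${h}^{\ast}$ over a signature morphism $\langle I',s'\rangle \xrightarrow{h} \langle I,s\rangle$, the defining identities $\mathrmbfit{I}_{\mathcal{M}}({\scriptstyle\sum}_{h}\varphi)=\exists_{h}\mathrmbfit{I}_{\mathcal{M}}(\varphi)$, etc., together with the typing of the semantic quantifiers in Tbl.~\ref{tbl:sem:flow}, place the interpretation in the correct fiber $\mathrmbf{Rel}_{\mathcal{A}}(I',s')$ or $\mathrmbf{Rel}_{\mathcal{A}}(I,s)$ as appropriate.

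The main potential pitfall is not a deep obstacle but a bookkeeping one: matching, case by case, the signature assigned to a compound formula by Tbl.~\ref{tbl:fmla:fn} and the extended $\widehat{\sigma}$ against the signature of the ambient tuple set in which the corresponding operation of Tbl.~\ref{tbl:fml:int} evaluates --- e.g.\ ensuring that after applying ${\scriptstyle\sum}_{h}$ one lands in $\mathrmbfit{tup}_{\mathcal{A}}(I',s')$ rather than $\mathrmbfit{tup}_{\mathcal{A}}(I,s)$. This coherence is precisely what the syntactic/semantic reflection of Tbl.~\ref{tbl:fml-sem:refl} encodes, so invoking that reflection closes the argument and certifies that $\widehat{\mathcal{M}}$ is a well-defined structure.
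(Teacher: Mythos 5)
Your proposal is correct and follows essentially the same route as the paper: reduce well-definedness to the list-designation condition $k \models_{\widehat{\mathcal{E}}} \varphi$ implies $\tau(k)\models_{\mathrmbf{List}(\mathcal{A})}\widehat{\sigma}(\varphi)$, and obtain it from Prop.~\ref{prop:cls:interp} together with the containment $\mathrmbfit{I}_{\mathcal{M}}(\varphi)\subseteq\mathrmbfit{tup}_{\mathcal{A}}(I,s)$. The only difference is that you spell out by induction the fact that the interpretation lands in the correct fiber, which the paper treats as immediate from the typing of $\mathrmbfit{I}_{\mathcal{M}}$ in Tbl.~\ref{tbl:fml:int}.
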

\begin{proof}
From Prop.~\ref{prop:cls:interp} above,
${\wp}\tau(\mathrmbfit{ext}_{\widehat{\mathcal{E}}}(\varphi)) 
\subseteq\mathrmbfit{I}_{\mathcal{M}}(\varphi)
\subseteq\mathrmbfit{tup}_{\mathcal{A}}(I,s)
{\,=\,}\mathrmbfit{ext}_{\mathrmbf{List}(\mathcal{A})}(I,s)$
for any formula $\varphi\in\widehat{R}(I,s)$.
Hence,
the condition for the list designation 
${\langle{\widehat{\sigma},\tau}\rangle} : \widehat{\mathcal{E}} \rightrightarrows \mathrmbf{List}(\mathcal{A})$
holds:
$k \models_{\widehat{\mathcal{E}}} \varphi$ 
\underline{implies}
$\tau(k){\;\models_{\mathrmbf{List}(\mathcal{A})}\;}\widehat{\sigma}(\varphi)$
for all keys $k\in{K}$.
\hfill\rule{5pt}{5pt}
\end{proof}
For all $\varphi\in\widehat{R}$,
we have the relationships
\begin{equation}\label{ext:interp:equiv}
\text{\fbox{${\setlength{\extrarowheight}{2pt}\begin{array}{c@{\hspace{20pt}}c}
{\wp}\tau(\mathrmbfit{ext}_{\widehat{\mathcal{E}}}(\varphi)){\;\subseteq\;}\mathrmbfit{I}_{\mathcal{M}}(\varphi)
& 
\mathrmbfit{ext}_{\widehat{\mathcal{E}}}(\varphi)=\tau^{-1}(\mathrmbfit{I}_{\mathcal{M}}(\varphi))
\end{array}}$}}
\end{equation}
%
%%%%%%%%%%%%%%%%%%%%%%%%%%%%%%%%%%%%%%%%%%%%%%%%%%%%%%%%%%%%%%%%%%%%%%%%%%%%%%%%%%%%%%%%%%%%%%%%%%%%
%%%%%%%%%%%%%%%%%%%%%%%%%%%%%%%%%%%%%%%%%%%%%%%%%%%%%%%%%%%%%%%%%%%%%%%%%%%%%%%%%%%%%%%%%%%%%%%%%%%%
%
Compare these orderings to those in 
Eqn.~\ref{ext:interp:equiv:ent:typ}
from the {\ttfamily FOLE} foundation paper \cite{kent:fole:era:found}. 
\newline
\newline
For all ${r}\in{R}$,
we have the relationships
\begin{equation}\label{ext:interp:equiv:ent:typ}
\text{\fbox{${\setlength{\extrarowheight}{2pt}
\begin{array}{c@{\hspace{20pt}}c}
{\wp}\tau(\mathrmbfit{ext}_{\mathcal{E}}(r)){\;=\;}\mathrmbfit{I}_{\mathcal{M}}(r)
& 
\mathrmbfit{ext}_{\mathcal{E}}(r){\;\subseteq\;}\tau^{-1}(\mathrmbfit{I}_{\mathcal{M}}(r))\,.
\end{array}}$}}
\end{equation}
\begin{definition}\label{assump:extens:struc}
A structure $\mathcal{M}$ is \emph{extensive}
when the right hand expression in (\ref{ext:interp:equiv:ent:typ})
is an equality:
$\mathrmbfit{ext}_{\mathcal{E}}(r){\;=\;}\tau^{-1}(\mathrmbfit{I}_{\mathcal{M}}(r))$
for any entity type ${r} \in {R}$.
Then,
the tuple map 
$K\xrightarrow{\tau}\mathrmbf{List}(Y)$
restricts to a bijection:
%
%\begin{center}
{{\begin{tabular}[c]{c}
\setlength{\unitlength}{0.6pt}
\begin{picture}(160,0)(-20,0)
\put(0,0){\makebox(0,0){\footnotesize{$\mathrmbfit{ext}_{\mathcal{E}}(r)$}}}
\put(120,0){\makebox(0,0){\footnotesize{$\mathrmbfit{I}_{\mathcal{M}}(r)$}}}
\put(55,12){\makebox(0,0)[l]{\footnotesize{$\tau$}}}
\put(55,-12){\makebox(0,0)[l]{\footnotesize{$\tau^{-1}$}}}
\put(35,5){\vector(1,0){50}}
\put(85,-5){\vector(-1,0){50}}
\end{picture}
\end{tabular}}}.
%\end{center}
%
\footnote{For an extensive structure,
extent order is equivalent to interpretation order:
$\mathrmbfit{ext}_{\mathcal{E}}(r){\;\subseteq\;}\mathrmbfit{ext}_{\mathcal{E}}(r')$
\underline{iff}
$\mathrmbfit{I}_{\mathcal{M}}(r){\;\subseteq\;}\mathrmbfit{I}_{\mathcal{M}}(r')$
for entity types $r,r'\in{R}$.}
\end{definition}
\begin{lemma}
A structure $\mathcal{M}$ is extensive
when 
%the tuple map 
$K\xrightarrow{\tau}\mathrmbf{List}(Y)$ is injective.
\footnote{This corrects an editing error in the {\ttfamily FOLE} foundation paper (Kent~\cite{kent:fole:era:found}).}
\end{lemma}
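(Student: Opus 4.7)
The plan is to derive extensivity directly from the two relationships already recorded in Eqn.~\ref{ext:interp:equiv:ent:typ}. The left-hand equality ${\wp}\tau(\mathrmbfit{ext}_{\mathcal{E}}(r))=\mathrmbfit{I}_{\mathcal{M}}(r)$ and the right-hand inclusion $\mathrmbfit{ext}_{\mathcal{E}}(r)\subseteq\tau^{-1}(\mathrmbfit{I}_{\mathcal{M}}(r))$ already hold for any structure, so the only remaining task is to upgrade the latter inclusion to an equality, namely $\tau^{-1}(\mathrmbfit{I}_{\mathcal{M}}(r))\subseteq\mathrmbfit{ext}_{\mathcal{E}}(r)$, under the injectivity hypothesis on $\tau$.

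First I would take an arbitrary key $k\in\tau^{-1}(\mathrmbfit{I}_{\mathcal{M}}(r))$, so that $\tau(k)\in\mathrmbfit{I}_{\mathcal{M}}(r)$. Using the left-hand equality in Eqn.~\ref{ext:interp:equiv:ent:typ} to rewrite $\mathrmbfit{I}_{\mathcal{M}}(r)$ as the direct image ${\wp}\tau(\mathrmbfit{ext}_{\mathcal{E}}(r))$, I produce some witness $k'\in\mathrmbfit{ext}_{\mathcal{E}}(r)$ with $\tau(k')=\tau(k)$. Injectivity of $K\xrightarrow{\tau}\mathrmbf{List}(Y)$ then forces $k=k'$, and hence $k\in\mathrmbfit{ext}_{\mathcal{E}}(r)$, completing the inclusion and therefore the equality $\mathrmbfit{ext}_{\mathcal{E}}(r)=\tau^{-1}(\mathrmbfit{I}_{\mathcal{M}}(r))$ for each $r\in R$.

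There is no substantive obstacle here: the lemma is essentially the elementary fact that for an injection $\tau$ one has $\tau^{-1}({\wp}\tau(S))=S$ on any subset $S\subseteq K$. The only care required is to cite the correct half of Eqn.~\ref{ext:interp:equiv:ent:typ} for each direction — the equality on the left supplies the witness needed to go from the inverse image back into the extent, while the inclusion on the right gives the reverse containment at no cost.
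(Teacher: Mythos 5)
Your proposal is correct and matches the paper's own argument: the paper likewise starts from the equality ${\wp}\tau(\mathrmbfit{ext}_{\mathcal{E}}(r))=\mathrmbfit{I}_{\mathcal{M}}(r)$ of Eqn.~\ref{ext:interp:equiv:ent:typ} and invokes (in a footnote) the fact that an injective $f$ with ${\wp}f(X)=Y$ forces $X=f^{-1}(Y)$. You merely unpack that footnoted fact element-wise, which is a harmless expansion of the same proof.
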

\begin{proof}
By Eqn.~\ref{ext:interp:equiv:ent:typ},
${\wp}\tau(\mathrmbfit{ext}_{\mathcal{E}}(r)){\;=\;}\mathrmbfit{I}_{\mathcal{M}}(r)$.
Hence,
$\mathrmbfit{ext}_{\mathcal{E}}(r){\;=\;}\tau^{-1}(\mathrmbfit{I}_{\mathcal{M}}(r))$.
\footnote{For any function $A\xrightarrow{f}B$,
direct image is left adjoint to inverse image: 
${\wp}f(X){\,\subseteq\,}Y$ \underline{iff} $X{\,\subseteq\,}f^{-1}(Y)$
for any subsets $X{\,\subseteq\,}A$ and $Y{\,\subseteq\,}B$.
If $f$ is injective and ${\wp}f(X){\,=\,}Y$,
then $X{\,=\,}f^{-1}(Y)$.}
\end{proof}
Any structure $\mathcal{M}$ has an associated extensive structure.
An example is the key-embedding structure $\dot{\mathcal{M}}$ (see \cite{kent:fole:era:found}).
%
%Recall that we had strict equality
%${\wp}\tau(\mathrmbfit{ext}_{\widehat{\mathcal{E}}}(r)) = \mathrmbfit{I}_{\mathcal{M}}(r)$
%for the base entity types $r\in{R}$,
%since this was the definition for interpretation at the base level.

%\footnote{The base step asserts the equality
%$\mathrmbfit{ext}_{\widehat{\mathcal{E}}}(r){\;=\;}\tau^{-1}(\mathrmbfit{I}_{\mathcal{M}}(r))$.
%%Compare this with Eqn.~\ref{ext:interp:equiv:ent:typ} in \S\ref{sub:sub:sec:data:model:interp},
%%which asserts the ordering
%%$\mathrmbfit{ext}_{\widehat{\mathcal{E}}}(r){\;\subseteq\;}\tau^{-1}(\mathrmbfit{I}_{\mathcal{M}}(r))$.
%Since we have assumed that structure $\mathcal{M}$ is extensive
%(Assump.~\ref{assump:extens:struc} in \S\ref{sub:sub:sec:data:model:interp}),
%this agrees with the traditional view (entity extent).}

%%%%%%%%%%%%%%%%%%%%%%%%%%%%%%%%%%%%%%%%%%%%%%%%%%%%%%%%%%%%
%\newpage
%\paragraph{Comprehensive.}
%%%%%%%%%%%%%%%%%%%%%%%%%%%%%%%%%%%%%%%%%%%%%%%%%%%%%%%%%%%%

\begin{definition}\label{def:compre:struc}
A structure $\mathcal{M}$ is \emph{comprehensive}
\footnote{In the original discussion (Kent~\cite{kent:iccs2013}) about tabular interpretation, 
all {\ttfamily FOLE} structures were assumed to be comprehensive.}
when the left hand expression in Eqn.~\ref{ext:interp:equiv}
is an equality:
${\wp}\tau(\mathrmbfit{ext}_{\widehat{\mathcal{E}}}(\varphi)){\;=\;}\mathrmbfit{I}_{\mathcal{M}}(\varphi)$
for any formula $\varphi\in\widehat{R}$.
Then, the tuple map 
$K\xrightarrow{\tau}\mathrmbf{List}(Y)$
restricts to a bijection:
%
%\begin{center}
{{\begin{tabular}[c]{c}
\setlength{\unitlength}{0.6pt}
\begin{picture}(160,0)(-20,0)
\put(0,0){\makebox(0,0){\footnotesize{$\mathrmbfit{ext}_{\widehat{\mathcal{E}}}(\varphi)$}}}
\put(120,0){\makebox(0,0){\footnotesize{$\mathrmbfit{I}_{\mathcal{M}}(\varphi)$}}}
\put(55,12){\makebox(0,0)[l]{\footnotesize{$\tau$}}}
\put(55,-12){\makebox(0,0)[l]{\footnotesize{$\tau^{-1}$}}}
\put(35,5){\vector(1,0){50}}
\put(85,-5){\vector(-1,0){50}}
\end{picture}
\end{tabular}}}.
%\end{center}
%
\footnote{For a comprehensive structure,
extent order is equivalent to interpretation order:
$\mathrmbfit{ext}_{\widehat{\mathcal{E}}}(\varphi)){\;\subseteq\;}\mathrmbfit{ext}_{\widehat{\mathcal{E}}}(\psi))$
\underline{iff}
$\mathrmbfit{I}_{\mathcal{M}}(\varphi){\;\subseteq\;}\mathrmbfit{I}_{\mathcal{M}}(\psi)$
for formulas $\varphi,\psi\in\widehat{R}$.}
\end{definition}
The condition
${\wp}\tau(\mathrmbfit{ext}_{\widehat{\mathcal{E}}}(\varphi)){\;=\;}\mathrmbfit{I}_{\mathcal{M}}(\varphi)$
means that
the restricted tuple function
$\mathrmbfit{ext}_{\widehat{\mathcal{E}}}(\varphi)
\xrightarrow{\;\tau_{\varphi}\;}\mathrmbfit{I}_{\mathcal{M}}(\varphi)$
is surjective.
Hence, 
we can choose an injective inverse function
$\mathrmbfit{ext}_{\widehat{\mathcal{E}}}(\varphi)
\xleftarrow{\;\gamma_{\varphi}\;}\mathrmbfit{I}_{\mathcal{M}}(\varphi)$
satisfying
$\gamma_{\varphi}{\,\cdot\,}\tau_{\varphi} = \mathrmit{1}_{\mathrmbfit{I}_{\mathcal{M}}(\varphi)}$.
In a comprehensive structure $\mathcal{M}$,
we make this choice for each formula $\varphi\in\widehat{R}$.
\begin{proposition}\label{prop:incl}
Let $\mathcal{M} = {\langle{\mathcal{E},\sigma,\tau,\mathcal{A}}\rangle}$ be a structure, 
whose key set is a subset of $Y$-tuples $K\subseteq\mathrmbf{List}(Y)$ and 
whose tuple map is inclusion $K\xrightarrow{\mathrmit{inc}}\mathrmbf{List}(Y)$.
Then $\mathcal{M}$ is comprehensive.
\end{proposition}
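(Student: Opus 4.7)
The plan is to reduce the comprehensive equality to a general fact about inclusions by invoking Prop.~\ref{prop:cls:interp}. That proposition yields $\mathrmbfit{ext}_{\widehat{\mathcal{E}}}(\varphi) = \tau^{-1}(\mathrmbfit{I}_{\mathcal{M}}(\varphi))$ for any $\varphi \in \widehat{R}$. Taking the direct image along $\tau = \mathrmit{inc}$ and using the elementary identity ${\wp}f(f^{-1}(S)) = S \cap f(\mathrm{dom}\,f)$ applied to an inclusion, I would compute
\[
{\wp}\tau(\mathrmbfit{ext}_{\widehat{\mathcal{E}}}(\varphi)) \;=\; {\wp}\mathrmit{inc}\bigl(\mathrmit{inc}^{-1}(\mathrmbfit{I}_{\mathcal{M}}(\varphi))\bigr) \;=\; \mathrmbfit{I}_{\mathcal{M}}(\varphi) \cap K,
\]
so comprehensiveness reduces to showing $\mathrmbfit{I}_{\mathcal{M}}(\varphi) \subseteq K$ for every formula $\varphi$.

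To establish this containment, I would proceed by induction on the construction of $\varphi$ per Tbl.~\ref{tbl:fml:int}. The base case is immediate: $\mathrmbfit{I}_{\mathcal{M}}(r) = {\wp}\tau(\mathrmbfit{ext}_{\mathcal{E}}(r)) \subseteq \tau(K) = K$ since $\tau$ is an inclusion. The propositional connectives meet, join, implication, and difference preserve the inductive hypothesis, because intersections, unions, and set-theoretic differences of subsets of $K$ remain within $K$. The flow operators ${\scriptstyle\sum}_{h}$, ${\scriptstyle\prod}_{h}$, ${h}^{\ast}$ correspond under $\mathrmbfit{I}_{\mathcal{M}}$ to the semantic quantifiers $\exists_{h}, \forall_{h}, {h}^{{\scriptscriptstyle-}1}$ along the tuple map (Tbl.~\ref{tbl:fml-sem:refl}), which likewise preserve membership in $K$ since these are image/preimage constructions on tuple subsets already contained in $K$.

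The principal obstacle is the top formula $\top_{\langle I,s\rangle}$ (and by duality negation), whose interpretation is the ambient $\mathrmbfit{tup}_{\mathcal{A}}(I,s)$ rather than a subset built up from entity type extents. For the induction to close at this step one requires $\mathrmbfit{tup}_{\mathcal{A}}(I,s) \subseteq K$ for each signature, which is precisely what the hypothesis buys: the key set $K$ is a subset of $\mathrmbf{List}(Y)$ with $\tau = \mathrmit{inc}$ in the intended data model, so every well-typed $\mathcal{A}$-tuple is itself a key. Granted this, the induction closes and the displayed equation above delivers the comprehensive equality ${\wp}\tau(\mathrmbfit{ext}_{\widehat{\mathcal{E}}}(\varphi)) = \mathrmbfit{I}_{\mathcal{M}}(\varphi)$.
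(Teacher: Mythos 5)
Your reduction is the right one: with $\tau=\mathrmit{inc}$, Prop.~\ref{prop:cls:interp} gives $\mathrmbfit{ext}_{\widehat{\mathcal{E}}}(\varphi)=\mathrmit{inc}^{-1}(\mathrmbfit{I}_{\mathcal{M}}(\varphi))=K\cap\mathrmbfit{I}_{\mathcal{M}}(\varphi)$, so comprehensiveness is exactly the containment $\mathrmbfit{I}_{\mathcal{M}}(\varphi)\subseteq K$ for every formula, and you correctly single out $\top_{{\langle{I,s}\rangle}}$ (hence also $\neg$, $\rightarrowtriangle$ and ${\scriptstyle\prod}_{h}$, whose interpretations all involve the ambient tuple set) as the cases where this containment cannot be obtained by induction from the entity-type extents. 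The gap is in how you close it: you assert that the hypothesis ``$K\subseteq\mathrmbf{List}(Y)$ with $\tau=\mathrmit{inc}$'' buys $\mathrmbfit{tup}_{\mathcal{A}}(I,s)\subseteq K$, i.e.\ that every well-typed $\mathcal{A}$-tuple is itself a key. That is the \emph{converse} of the stated hypothesis, which only says that every key is a $Y$-tuple; nothing in the definition of a structure forces the key set to contain all of $\mathrmbfit{tup}_{\mathcal{A}}(I,s)$. So under the hypotheses as written your induction does not close at the top/negation/implication/universal steps, and your displayed identity leaves you with $\mathrmbfit{I}_{\mathcal{M}}(\top_{{\langle{I,s}\rangle}})\cap K=\mathrmbfit{tup}_{\mathcal{A}}(I,s)\cap K$, which is strictly smaller than $\mathrmbfit{I}_{\mathcal{M}}(\top_{{\langle{I,s}\rangle}})$ whenever some well-typed tuple fails to be a key.

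For what it is worth, the paper's own proof is a one-line appeal to Prop.~\ref{prop:cls:interp} that passes from ``$k\models_{\widehat{\mathcal{E}}}\varphi$ iff $k=\mathrmit{inc}(k)\in\mathrmbfit{I}_{\mathcal{M}}(\varphi)$ for all $k\in K$'' directly to $\mathrmbfit{ext}_{\widehat{\mathcal{E}}}(\varphi)=\mathrmbfit{I}_{\mathcal{M}}(\varphi)$, which silently uses the same containment $\mathrmbfit{I}_{\mathcal{M}}(\varphi)\subseteq K$; your analysis exposes the missing step rather than introducing a new one, but it does not repair it. The statement is unproblematic in the one instance the paper actually relies on (the image structure $\mathring{\mathcal{M}}$ of Def.~\ref{def:img:struc}, where $K=\mathrmbf{List}(Y)$ and the tuple map is the identity, so Prop.~\ref{prop:img:compre} carries the load). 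The honest fix is to strengthen the hypothesis to $\mathrmbfit{tup}_{\mathcal{A}}(I,s)\subseteq K$ for every signature (or simply $K=\mathrmbf{List}(Y)$), at which point your induction closes exactly as you wrote it; absent that, the final paragraph of your argument is an assumption, not a deduction.
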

\begin{proof}
Let $\varphi\in\widehat{R}$ be any formula. 
By Prop.~\ref{prop:cls:interp},
%$\mathrmbfit{ext}_{\widehat{\mathcal{E}}}(\varphi)=\tau^{-1}(\mathrmbfit{I}_{\mathcal{M}}(\varphi))$.
%For any formula $\varphi\in\widehat{R}$,
$k \models_{\widehat{\mathcal{E}}} \varphi$ 
\underline{iff}
$k=\mathrmit{inc}(k)\in\mathrmbfit{I}_{\mathcal{M}}(\varphi)$
for all $k\in{K}$;
equivalently,
${\wp}\mathrmit{inc}(\mathrmbfit{ext}_{\widehat{\mathcal{E}}}(\varphi))
=\mathrmbfit{ext}_{\widehat{\mathcal{E}}}(\varphi)
=\mathrmbfit{I}_{\mathcal{M}}(\varphi)$.
\end{proof}

\begin{definition}\label{def:img:struc}
A {\ttfamily FOLE} structure $\mathcal{M}$ has 
an associated \emph{image} structure
$\mathring{\mathcal{M}} = {\langle{\mathring{\mathcal{E}},\sigma,\mathring{\tau},\mathcal{A}}\rangle}$
with the same schema $\mathcal{S} = {\langle{R,\sigma,X}\rangle}$
and typed domain $\mathcal{A} = {\langle{X,Y,\models_{\mathcal{A}}}\rangle}$,
but with the trivial universe
$\mathring{\mathcal{U}} = {\langle{\mathrmbf{List}(Y),\mathrmit{1}_{\mathrmbf{List}(Y)},Y}\rangle}$
and the descriptor entity classification
$\mathring{\mathcal{E}} = 
%\overrightarrow{\mathrmbfit{cls}}_{\tau}(\mathcal{E}) = 
{\exists_{\tau}}(\mathcal{E}) =
{\langle{R,\mathrmbf{List}(Y),\models_{\mathring{\mathcal{E}}}}\rangle}$,
%Of course,
%this may be a moot point,
%since as we see here,
%In the structure $\mathring{\mathcal{M}}$,
where a tuple serves as its own identifier:
${\langle{I,t}\rangle}{\;\models_{\mathring{\mathcal{E}}}\;}r$
for a tuple ${\langle{I,t}\rangle}\in\mathrmbf{List}(Y)$ and an entity type $r\in{R}$
when ${\langle{I,t}\rangle}$ is the descriptor 
$\tau(k)={\langle{I,t}\rangle}$
for some key $k{\,\in\,}K$
%there is a $k{\,\in\,}K$ 
such that
$k{\;\models_{\mathcal{E}}\;}r$.
Thus,
$\mathrmbfit{ext}_{\mathring{\mathcal{E}}}(r)
%=\{ {\langle{I,t}\rangle}\in\mathrmbf{List}(Y) \mid \tau(k)={\langle{I,t}\rangle}, \text{some key}\;k{\;\in\;}\mathrmbfit{ext}_{\mathcal{E}}(r) \}
= {\wp}\tau(\mathrmbfit{ext}_{\mathcal{E}}(r))
{\;\subseteq\;}\mathrmbf{List}(Y)$
is the direct image of 
$\mathrmbfit{ext}_{\mathcal{E}}(r){\;\subseteq\;}K$
along the tuple map $K\xrightarrow{\tau}\mathrmbf{List}(Y)$.
\end{definition}
\begin{center}
{{\begin{tabular}{c}
\setlength{\unitlength}{0.5pt}
\begin{picture}(120,80)(0,0)
\put(0,80){\makebox(0,0){\footnotesize{$R$}}}
\put(0,0){\makebox(0,0){\footnotesize{$K$}}}
\put(87,80){\makebox(0,0)[l]{\footnotesize{$\mathrmbf{List}(X)$}}}
\put(87,0){\makebox(0,0)[l]{\footnotesize{$\mathrmbf{List}(Y)$}}}
\put(8,40){\makebox(0,0)[l]{\scriptsize{$\models_{\mathcal{E}}$}}}
\put(70,40){\makebox(0,0)[l]{\scriptsize{$\models_{\mathring{\mathcal{E}}}$}}}
\put(128,40){\makebox(0,0)[l]{\scriptsize{$\models_{\mathrmbf{List}(\mathcal{A})}$}}}
\put(50,90){\makebox(0,0){\scriptsize{$\sigma$}}}
\put(50,10){\makebox(0,0){\scriptsize{$\tau$}}}
\put(20,80){\vector(1,0){60}}
\put(20,0){\vector(1,0){60}}
\put(0,65){\line(0,-1){50}}
\put(120,65){\line(0,-1){50}}
\put(14,66){\line(3,-2){80}}
\end{picture}
\end{tabular}}}
\end{center}
\begin{proposition}\label{prop:img:compre}
For the image structure,
the formula interpretation in $\mathring{\mathcal{M}}$
is the formula interpretation in $\mathcal{M}$,
and the formula extent in $\mathring{\mathcal{M}}$
is this interpretation:
$\mathrmbfit{ext}_{\widehat{\mathring{\mathcal{E}}}}(\varphi)
{\;=\;}\mathrmbfit{I}_{\mathring{\mathcal{M}}}(\varphi)
{\;=\;}\mathrmbfit{I}_{\mathcal{M}}(\varphi)$
for any formula
$\varphi\in\widehat{R}$.
%For any structure $\mathcal{M}$,
The image structure $\mathring{\mathcal{M}}$ is comprehensive.
\end{proposition}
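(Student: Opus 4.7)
The plan is to prove the three equalities in sequence, then derive comprehensiveness as a direct corollary.

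First, I would establish the base case $\mathrmbfit{I}_{\mathring{\mathcal{M}}}(r) = \mathrmbfit{I}_{\mathcal{M}}(r)$ for any entity type $r \in R$. By the definition of formula interpretation on entity types (Tbl.~\ref{tbl:fml:int}) applied to $\mathring{\mathcal{M}}$, we have $\mathrmbfit{I}_{\mathring{\mathcal{M}}}(r) = {\wp}\mathring{\tau}(\mathrmbfit{ext}_{\mathring{\mathcal{E}}}(r))$. Since $\mathring{\tau} = \mathrmit{1}_{\mathrmbf{List}(Y)}$, this reduces to $\mathrmbfit{ext}_{\mathring{\mathcal{E}}}(r)$, which by Def.~\ref{def:img:struc} equals ${\wp}\tau(\mathrmbfit{ext}_{\mathcal{E}}(r)) = \mathrmbfit{I}_{\mathcal{M}}(r)$.

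Next, I would lift this equality to all formulas by induction on formula structure. The inductive steps for meet, join, top, bottom, negation, implication, difference, existential, universal, and substitution are all defined in Tbl.~\ref{tbl:fml:int} using only set-theoretic boolean operations and the semantic flow operators $\exists_h$, $\forall_h$, $h^{-1}$ on $\mathrmbf{Rel}_{\mathcal{A}}(I,s)$. These operators depend only on the shared type domain $\mathcal{A}$ (not on the universe or classification), so the inductive hypothesis propagates transparently: if $\mathrmbfit{I}_{\mathring{\mathcal{M}}}(\varphi) = \mathrmbfit{I}_{\mathcal{M}}(\varphi)$ and likewise for $\psi$, then the same holds for each constructed formula. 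This yields $\mathrmbfit{I}_{\mathring{\mathcal{M}}}(\varphi) = \mathrmbfit{I}_{\mathcal{M}}(\varphi)$ for all $\varphi \in \widehat{R}$.

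For the first equality $\mathrmbfit{ext}_{\widehat{\mathring{\mathcal{E}}}}(\varphi) = \mathrmbfit{I}_{\mathring{\mathcal{M}}}(\varphi)$, I would apply Prop.~\ref{prop:cls:interp} to $\mathring{\mathcal{M}}$, which gives $\mathrmbfit{ext}_{\widehat{\mathring{\mathcal{E}}}}(\varphi) = \mathring{\tau}^{-1}(\mathrmbfit{I}_{\mathring{\mathcal{M}}}(\varphi))$; since $\mathring{\tau}$ is the identity on $\mathrmbf{List}(Y)$, the inverse-image operator is also the identity. Comprehensiveness then follows immediately: applying ${\wp}\mathring{\tau}$ (again the identity) to both sides gives ${\wp}\mathring{\tau}(\mathrmbfit{ext}_{\widehat{\mathring{\mathcal{E}}}}(\varphi)) = \mathrmbfit{I}_{\mathring{\mathcal{M}}}(\varphi)$, which is precisely the defining condition of Def.~\ref{def:compre:struc}. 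Alternatively, one can invoke Prop.~\ref{prop:incl} directly, observing that $\mathring{\mathcal{M}}$ has key set $\mathrmbf{List}(Y)$ with tuple map equal to the identity (a trivial inclusion), so $\mathring{\mathcal{M}}$ is comprehensive by that proposition.

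There is no serious obstacle here; the argument is a routine induction followed by two appeals to already-established propositions. The only point requiring mild care is bookkeeping the distinction between the classification $\widehat{\mathring{\mathcal{E}}}$ built over $\mathring{\mathcal{M}}$ versus $\widehat{\mathcal{E}}$ built over $\mathcal{M}$, and making sure the identity role of $\mathring{\tau}$ is invoked consistently at every step where the universe enters the definition.
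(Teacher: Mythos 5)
Your proposal is correct and follows essentially the same route as the paper's proof: the identical base-case computation using $\mathring{\tau}=\mathrmit{1}_{\mathrmbf{List}(Y)}$ and Def.~\ref{def:img:struc}, induction over the formula constructors, and an appeal to Prop.~\ref{prop:cls:interp} with the identity tuple map to get $\mathrmbfit{ext}_{\widehat{\mathring{\mathcal{E}}}}(\varphi)=\mathrmbfit{I}_{\mathring{\mathcal{M}}}(\varphi)$ and hence comprehensiveness. Your alternative closing step via Prop.~\ref{prop:incl} (the identity map being a trivial inclusion) is a valid shortcut the paper does not take, but the substance is the same.
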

\begin{proof}
At the base step in Tbl.~\ref{tbl:fml:int},
$\mathrmbfit{I}_{\mathring{\mathcal{M}}}(r)
= {\wp}\mathrmit{1}_{\mathrmbf{List}(Y)}(\mathrmbfit{ext}_{\mathring{\mathcal{E}}}(r))
= \mathrmbfit{ext}_{\mathring{\mathcal{E}}}(r)
= {\wp}\tau(\mathrmbfit{ext}_{\mathcal{E}}(r))
= \mathrmbfit{I}_{\mathcal{M}}(r)$
for any entity type $r\in{R}$
(Def.~\ref{def:img:struc} above).
By induction,
$\mathrmbfit{I}_{\mathring{\mathcal{M}}}(\varphi)
=\mathrmbfit{I}_{\mathcal{M}}(\varphi)$
for any formula $\varphi\in\widehat{R}$.
By Prop.~\ref{prop:cls:interp},
${\langle{I,t}\rangle} \models_{\widehat{\mathring{\mathcal{E}}}} \varphi$ 
\underline{iff}
${\langle{I,t}\rangle}=\mathrmit{1}_{\mathrmbf{List}(Y)}(I,t)\in\mathrmbfit{I}_{\mathring{\mathcal{M}}}(\varphi)$
for any tuple ${\langle{I,t}\rangle}\in\mathrmbf{List}(Y)$;
equivalently,
${\wp}\mathrmit{1}_{\mathrmbf{List}(Y)}(\mathrmbfit{ext}_{\widehat{\mathring{\mathcal{E}}}}(\varphi))
=\mathrmbfit{ext}_{\widehat{\mathring{\mathcal{E}}}}(\varphi)
=\mathrmbfit{I}_{\mathring{\mathcal{M}}}(\varphi)$.
\end{proof}
\begin{corollary}\label{cor:inj}
A structure $\mathcal{M}$ is comprehensive
when $K\xrightarrow{\tau}\mathrmbf{List}(Y)$ is injective.
%Let $\mathcal{M}$ be a structure, 
%whose tuple map is an injection $K\xhookrightarrow{\tau}\mathrmbf{List}(Y)$.
%Then $\mathcal{M}$ is comprehensive.
\end{corollary}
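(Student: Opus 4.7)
The plan is to replay the proof of the extensive-structure Lemma immediately above, but now for arbitrary formulas in $\widehat{R}$ rather than just entity types in $R$. Fix $\varphi \in \widehat{R}$. Prop.~\ref{prop:cls:interp} hands us the unconditional identity $\mathrmbfit{ext}_{\widehat{\mathcal{E}}}(\varphi) = \tau^{-1}(\mathrmbfit{I}_{\mathcal{M}}(\varphi))$, which is the right-hand entry of (\ref{ext:interp:equiv}). Comprehensiveness (Def.~\ref{def:compre:struc}) is exactly the left-hand entry, so the goal reduces to deducing ${\wp}\tau(\tau^{-1}(\mathrmbfit{I}_{\mathcal{M}}(\varphi))) = \mathrmbfit{I}_{\mathcal{M}}(\varphi)$ from injectivity of $\tau$.

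I would carry this out by first applying ${\wp}\tau$ to the identity in Prop.~\ref{prop:cls:interp} to obtain ${\wp}\tau(\mathrmbfit{ext}_{\widehat{\mathcal{E}}}(\varphi)) = {\wp}\tau(\tau^{-1}(\mathrmbfit{I}_{\mathcal{M}}(\varphi)))$. The general set-theoretic rewrite ${\wp}f(f^{-1}(Y)) = Y \cap {\wp}f(A)$ then turns the right-hand side into $\mathrmbfit{I}_{\mathcal{M}}(\varphi) \cap {\wp}\tau(K)$, so the corollary reduces to the containment $\mathrmbfit{I}_{\mathcal{M}}(\varphi) \subseteq {\wp}\tau(K)$. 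To secure this I would route through the image structure of Def.~\ref{def:img:struc}: by Prop.~\ref{prop:img:compre}, $\mathring{\mathcal{M}}$ is comprehensive and satisfies $\mathrmbfit{I}_{\mathring{\mathcal{M}}}(\varphi) = \mathrmbfit{I}_{\mathcal{M}}(\varphi)$, while $\mathrmbfit{ext}_{\mathring{\mathcal{E}}}(r) = {\wp}\tau(\mathrmbfit{ext}_{\mathcal{E}}(r))$ sits inside ${\wp}\tau(K)$ by construction; an induction on formulas (paralleling Tbl.~\ref{tbl:fml:int}) propagates this containment from entity types to all of $\widehat{R}$, using the fact that the quantifier and substitution operators on $\mathrmbf{Rel}_{\mathcal{A}}$ commute with restriction to ${\wp}\tau(K)$ when $\tau$ is injective. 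Combining with the preceding paragraph yields the required equality, and the bijection displayed in Def.~\ref{def:compre:struc} then follows by applying $\tau^{-1}$ to both sides.

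The main obstacle is precisely the containment $\mathrmbfit{I}_{\mathcal{M}}(\varphi) \subseteq {\wp}\tau(K)$. Injectivity of $\tau$ alone supplies the bijection $K \cong {\wp}\tau(K)$ and the cancellation $\tau^{-1} \circ {\wp}\tau = \mathrm{id}$ on subsets of $K$, but it does not by itself place formula interpretations inside the image. The substantive content is the coupling of injectivity with Prop.~\ref{prop:img:compre}, which transports the comprehensiveness of $\mathring{\mathcal{M}}$ back along the injection to $\mathcal{M}$; once that containment is in hand, the remainder is a routine appeal to the adjointness between direct and inverse image spelled out in the footnote to the extensive-structure Lemma.
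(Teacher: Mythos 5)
Your route is the same as the paper's --- pass to the image structure $\mathring{\mathcal{M}}$ and invoke Prop.~\ref{prop:img:compre} --- and your reduction of the corollary to a single containment is exactly right: since Prop.~\ref{prop:cls:interp} gives $\mathrmbfit{ext}_{\widehat{\mathcal{E}}}(\varphi)=\tau^{-1}(\mathrmbfit{I}_{\mathcal{M}}(\varphi))$ unconditionally, comprehensiveness amounts to ${\wp}\tau(\tau^{-1}(\mathrmbfit{I}_{\mathcal{M}}(\varphi)))=\mathrmbfit{I}_{\mathcal{M}}(\varphi)$, i.e.\ to $\mathrmbfit{I}_{\mathcal{M}}(\varphi)\subseteq{\wp}\tau(K)$. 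The gap is in your claim that this containment can be obtained by induction on formulas. The induction does go through for entity types, $\wedge$, $\vee$, $\setminus$ and ${\scriptstyle\sum}_{h}$, but it breaks at $\top_{{\langle{I,s}\rangle}}$, whose interpretation is the whole of $\mathrmbfit{tup}_{\mathcal{A}}(I,s)$ (Tbl.~\ref{tbl:fml:int}); likewise at $\neg$ and $\rightarrowtriangle$ (complements relative to the full tuple set), at ${\scriptstyle\prod}_{h}$ (tuples with empty fibre are vacuously included) and at $h^{\ast}$. Injectivity of $\tau$ gives the bijection $K\cong{\wp}\tau(K)$ and the cancellation you cite, but it says nothing about whether ${\wp}\tau(K)$ exhausts the tuple sets: a structure with a single key $k$ and two tuples in $\mathrmbfit{tup}_{\mathcal{A}}(I,s)$ has $\tau$ injective, yet ${\wp}\tau(\mathrmbfit{ext}_{\widehat{\mathcal{E}}}(\top_{{\langle{I,s}\rangle}}))=\{\tau(k)\}\subsetneq\mathrmbfit{tup}_{\mathcal{A}}(I,s)=\mathrmbfit{I}_{\mathcal{M}}(\top_{{\langle{I,s}\rangle}})$. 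What the argument actually needs is surjectivity of $\tau$ onto the relevant tuple sets --- the hypothesis that makes Prop.~\ref{prop:incl} and Prop.~\ref{prop:img:compre} work --- not injectivity, so the inductive step you sketch cannot be completed.

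In fairness, the paper's own proof is the one-line chain ${\wp}\tau(\mathrmbfit{ext}_{\widehat{\mathcal{E}}}(\varphi))=\mathrmbfit{ext}_{\widehat{\mathring{\mathcal{E}}}}(\varphi)=\mathrmbfit{I}_{\mathring{\mathcal{M}}}(\varphi)=\mathrmbfit{I}_{\mathcal{M}}(\varphi)$, and its first equality tacitly asserts exactly the containment you could not establish; you have not overlooked a trick the paper supplies, you have surfaced the point at which its argument is incomplete. Your proof (and the corollary) is sound on the fragment of formulas built without $\top$, $\neg$, $\rightarrowtriangle$, ${\scriptstyle\prod}_{h}$ and $h^{\ast}$, where your induction works verbatim; in general the correct hypothesis is the image condition $\mathrmbfit{I}_{\mathcal{M}}(\varphi)\subseteq{\wp}\tau(K)$ (e.g.\ ${\wp}\tau(K)\supseteq\mathrmbfit{tup}_{\mathcal{A}}(I,s)$ for each signature), rather than injectivity of $\tau$.
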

\begin{proof}
By Prop.~\ref{prop:img:compre},
$
%\mathrmbfit{ext}_{\widehat{\mathcal{E}}}(\varphi)
%{\;\cong\;}
{\wp}\tau(\mathrmbfit{ext}_{\widehat{\mathcal{E}}}(\varphi))
{\;=\;}\mathrmbfit{ext}_{\widehat{\mathring{\mathcal{E}}}}(\varphi)
{\;=\;}\mathrmbfit{I}_{\mathring{\mathcal{M}}}(\varphi)
{\;=\;}\mathrmbfit{I}_{\mathcal{M}}(\varphi)$
for any $\varphi\in\widehat{R}$.
%By Prop.~\ref{prop:iso:compre} and Prop.~\ref{prop:incl}.
\end{proof}
\begin{proposition}\label{prop:key:inc}
For any structure
$\mathcal{M} = {\langle{\mathcal{E},\sigma,\tau,\mathcal{A}}\rangle}$,
the associated ``key-embedding'' structure
$\dot{\mathcal{M}} = {\langle{\mathcal{E},\dot{\sigma},\dot{\tau},\dot{\mathcal{A}}}\rangle}$
(see the {\ttfamily FOLE} Foundation paper \cite{kent:fole:era:found})
is  comprehensive.
\end{proposition}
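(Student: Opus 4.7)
The plan is to reduce this claim to the preceding results about injective tuple maps, which already deliver comprehensiveness as a package. Specifically, the construction of the key-embedding structure $\dot{\mathcal{M}}$ in the {\ttfamily FOLE} foundation paper is designed precisely so that the underlying tuple map $\dot{\tau}$ becomes an injection into $\mathrmbf{List}(\dot{Y})$: the old key set $K$ is replaced by its quotient under equality of descriptors (or equivalently by its image under $\tau$), thereby identifying distinct keys that carry the same tuple. Under this construction $\dot{\tau}$ is a one-to-one embedding of the new key set into $\mathrmbf{List}(Y)$ --- this is exactly what the qualifier ``key-embedding'' records.

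Given that, the proof is a one-line invocation. First I recall the construction just mentioned, verifying that $\dot{\tau} : \dot{K} \to \mathrmbf{List}(\dot{Y})$ is injective. Then I apply Corollary~\ref{cor:inj}, which asserts comprehensiveness of any structure whose tuple map is injective, to conclude that $\dot{\mathcal{M}}$ is comprehensive. Alternatively, if the construction in the foundation paper presents $\dot{K}$ literally as a subset of $\mathrmbf{List}(\dot{Y})$ with $\dot{\tau}$ the inclusion map, then Proposition~\ref{prop:incl} applies verbatim without even invoking the corollary.

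The main obstacle is not mathematical but bibliographic: the definition of $\dot{\mathcal{M}}$ lives in \cite{kent:fole:era:found}, so the proof must cite that paper for the precise form of $\dot{\tau}$ and then note that this form meets the injectivity hypothesis of Corollary~\ref{cor:inj}. Once that is settled, the chain
\[
{\wp}\dot{\tau}(\mathrmbfit{ext}_{\widehat{\mathcal{E}}}(\varphi))
\;=\;\mathrmbfit{ext}_{\widehat{\mathring{\dot{\mathcal{M}}}}}(\varphi)
\;=\;\mathrmbfit{I}_{\mathring{\dot{\mathcal{M}}}}(\varphi)
\;=\;\mathrmbfit{I}_{\dot{\mathcal{M}}}(\varphi)
\]
from Proposition~\ref{prop:img:compre} closes the argument automatically, with no new inductions on formulas required.
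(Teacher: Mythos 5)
Your proposal is correct and matches the paper's own proof, which is exactly the one-line appeal to Corollary~\ref{cor:inj} on the grounds that the key-embedding tuple map $K\xrightarrow{\dot{\tau}}\mathrmbf{List}(\dot{Y})$ is injective. One small caution: the paper's $\dot{\mathcal{M}}$ keeps the \emph{same} entity classification $\mathcal{E}$ (hence the same key set $K$) and achieves injectivity by enlarging the data value set to $\dot{Y}$, not by quotienting $K$ by equality of descriptors as you speculate, but this does not affect the validity of your argument, which only uses injectivity.
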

\begin{proof}
By Cor.~\ref{cor:inj}, 
%the ``key-inclusion'' structure $\dot{\mathcal{M}}$ is comprehensive,
since the tuple map $K\xrightarrow{\dot{\tau}}\mathrmbf{List}(\dot{Y})$ 
of the key-embedding structure $\dot{\mathcal{M}}$ is injective.
\end{proof}
\begin{proposition}\label{prop:iter:fmla}
If $\mathcal{M}$ is comprehensive,
then $\widehat{\mathcal{M}}$ is comprehensive
with the same
interpretation and entity extent:
$\mathrmbfit{I}_{\widehat{\mathcal{M}}}=\mathrmbfit{I}_{\mathcal{M}}$
and 
$\mathrmbfit{ext}_{\widehat{\widehat{\mathcal{E}}}} = \mathrmbfit{ext}_{\widehat{\mathcal{E}}}$.
\end{proposition}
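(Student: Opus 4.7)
The plan is to exploit the idempotency of the formula passage $\mathrmbfit{fmla}$ noted earlier, which identifies $\widehat{\widehat{R}}$ with $\widehat{R}$, so that the domains of $\mathrmbfit{I}_{\widehat{\mathcal{M}}}$ and $\mathrmbfit{I}_{\mathcal{M}}$ (and of $\mathrmbfit{ext}_{\widehat{\widehat{\mathcal{E}}}}$ and $\mathrmbfit{ext}_{\widehat{\mathcal{E}}}$) already coincide as sets. With this in hand, each claimed equality reduces to a one-line base-case check, invoking comprehensiveness of $\mathcal{M}$ and Prop.~\ref{prop:cls:interp}.

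First I would establish $\mathrmbfit{I}_{\widehat{\mathcal{M}}} = \mathrmbfit{I}_{\mathcal{M}}$ on $\widehat{R}$. Every $\varphi \in \widehat{R}$ is a basic entity type of the structure $\widehat{\mathcal{M}}$, so the entity-type clause of Tbl.~\ref{tbl:fml:int} applied to $\widehat{\mathcal{M}}$ gives $\mathrmbfit{I}_{\widehat{\mathcal{M}}}(\varphi) = {\wp}\tau(\mathrmbfit{ext}_{\widehat{\mathcal{E}}}(\varphi))$, which equals $\mathrmbfit{I}_{\mathcal{M}}(\varphi)$ by the assumed comprehensiveness of $\mathcal{M}$ (Def.~\ref{def:compre:struc}). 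The remaining clauses of Tbl.~\ref{tbl:fml:int} depend only on the shared type domain $\mathcal{A}$ and on previously computed interpretations, so structural induction on formulas --- coupled with the closure of $\widehat{R}$ under every formula construction --- propagates agreement throughout $\widehat{\widehat{R}} = \widehat{R}$.

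Next I would obtain $\mathrmbfit{ext}_{\widehat{\widehat{\mathcal{E}}}} = \mathrmbfit{ext}_{\widehat{\mathcal{E}}}$ by invoking Prop.~\ref{prop:cls:interp} twice: applied to the structure $\widehat{\mathcal{M}}$ it yields $\mathrmbfit{ext}_{\widehat{\widehat{\mathcal{E}}}}(\varphi) = \tau^{-1}(\mathrmbfit{I}_{\widehat{\mathcal{M}}}(\varphi))$, and applied to $\mathcal{M}$ it yields $\mathrmbfit{ext}_{\widehat{\mathcal{E}}}(\varphi) = \tau^{-1}(\mathrmbfit{I}_{\mathcal{M}}(\varphi))$; the previous equality then collapses these to the same subset of $K$. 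Comprehensiveness of $\widehat{\mathcal{M}}$ is then immediate: using the two identities just proved, ${\wp}\tau(\mathrmbfit{ext}_{\widehat{\widehat{\mathcal{E}}}}(\varphi)) = {\wp}\tau(\mathrmbfit{ext}_{\widehat{\mathcal{E}}}(\varphi)) = \mathrmbfit{I}_{\mathcal{M}}(\varphi) = \mathrmbfit{I}_{\widehat{\mathcal{M}}}(\varphi)$, the middle equality being comprehensiveness of $\mathcal{M}$.

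The main obstacle is purely bookkeeping: one must keep the two levels of hat distinct, and must justify that treating an element of $\widehat{R}$ as a basic entity type of $\widehat{\mathcal{M}}$ (so the entity-type clause fires) is consistent with also regarding it as a compound formula (so the inductive clauses fire). Both routes yield the same interpretation precisely because $\mathcal{M}$ is comprehensive, and this is what keeps the argument free of circularity. Beyond that, the whole proof is a transparent double application of Prop.~\ref{prop:cls:interp}.
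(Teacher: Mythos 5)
Your proof is correct and follows essentially the same route as the paper's: the base step $\mathrmbfit{I}_{\widehat{\mathcal{M}}}(\varphi) = {\wp}\tau(\mathrmbfit{ext}_{\widehat{\mathcal{E}}}(\varphi)) = \mathrmbfit{I}_{\mathcal{M}}(\varphi)$ by comprehension, propagation by structural induction, and then a double application of the identity $\mathrmbfit{ext} = \tau^{-1}(\mathrmbfit{I})$ from Prop.~\ref{prop:cls:interp} (Eqn.~\ref{ext:interp:equiv}) to collapse the extents. Your closing verification that $\widehat{\mathcal{M}}$ is itself comprehensive is a detail the paper leaves implicit, and making it explicit is a small improvement rather than a divergence.
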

\begin{proof}
At the base step in Tbl.~\ref{tbl:fml:int},
$\mathrmbfit{I}_{\widehat{\mathcal{M}}}(\varphi)
= {\wp}\tau(\mathrmbfit{ext}_{\widehat{\mathcal{E}}}(\varphi))
= \mathrmbfit{I}_{\mathcal{M}}(\varphi)$
by comprehension.
Hence,
by induction
$\mathrmbfit{I}_{\widehat{\mathcal{M}}} = \mathrmbfit{I}_{\mathcal{M}}$.
By Eqn.~\ref{ext:interp:equiv},
$\mathrmbfit{ext}_{\widehat{\widehat{\mathcal{E}}}}(\varphi)
= \tau^{-1}(\mathrmbfit{I}_{\widehat{\mathcal{M}}}(\varphi))
= \tau^{-1}(\mathrmbfit{I}_{\mathcal{M}}(\varphi))
= \mathrmbfit{ext}_{\widehat{\mathcal{E}}}(\varphi)$.
\end{proof}
%

%%%%%%%%%%%%%%%%%%%%%%%%%%%%%%%%%%%%%%%%%%%%%%%%%%%%%%%%%%%%
%\newpage
%\paragraph{Formula Structure Morphisms.}
%%%%%%%%%%%%%%%%%%%%%%%%%%%%%%%%%%%%%%%%%%%%%%%%%%%%%%%%%%%%

%%%%%%%%%%%%%%%%%%%%%%%%%%%%%%%%%%%%%%%%%%%%%%%%%%%%%%%%%%%%%%%%%%%%%%%%%%%%%%%%%%%%%%%%%%%%%%%%%%%%
\newpage
\subsubsection{Formula Structure Morphisms.}\label{sub:sub:sec:fml:struc:mor}
%%%%%%%%%%%%%%%%%%%%%%%%%%%%%%%%%%%%%%%%%%%%%%%%%%%%%%%%%%%%%%%%%%%%%%%%%%%%%%%%%%%%%%%%%%%%%%%%%%%%

Let
{\footnotesize{
\begin{equation}\label{eqn:fmla:info}
\mathcal{M}_{2} = {\langle{\mathcal{E}_{2},{\langle{\sigma_{2},\tau_{2}}\rangle},\mathcal{A}_{2}}\rangle} 
\xrightleftharpoons{{\langle{r,k,f,g}\rangle}}
{\langle{\mathcal{E}_{1},{\langle{\sigma_{1},\tau_{1}}\rangle},\mathcal{A}_{1}}\rangle} = \mathcal{M}_{1}
\end{equation}
}\normalsize}
be any structure morphism.
We can define a formula structure morphism with certain qualifications.
\begin{lemma}\label{lem:bool:fmla}
Any of the following equivalent conditions hold for Eqn.~\ref{eqn:fmla:info}
\begin{center}
{\footnotesize\setlength{\extrarowheight}{2pt}{\begin{tabular}{@{\hspace{20pt}}c@{\hspace{20pt}}}
$k(k_{1}){\;\models_{\widehat{\mathcal{E}}_{2}}\;}\varphi_{2}$
 \underline{iff} 
$k_{1}{\;\models_{\widehat{\mathcal{E}}_{1}}\;}\widehat{r}(\varphi_{2})$
\\
$k(k_{1}){\;\in\;}\mathrmbfit{ext}_{\widehat{\mathcal{E}}_{2}}(\varphi_{2})$
 \underline{iff} 
$k_{1}{\;\in\;}\mathrmbfit{ext}_{\widehat{\mathcal{E}}_{1}}(\widehat{r}(\varphi_{2}))$
\\
$k^{-1}(\mathrmbfit{ext}_{\widehat{\mathcal{E}}_{2}}(\varphi_{2}))
 = 
\mathrmbfit{ext}_{\widehat{\mathcal{E}}_{1}}(\widehat{r}(\varphi_{2}))\;\;\;\;$
\end{tabular}}}
\end{center}
for any source boolean formula $\varphi_{2}{\,\in\,}\widehat{R}_{2}$ 
(containing no quantification/substitution) 
and any target key $k_{1}{\,\in\,}K_{1}$.
These express entity informorphism conditions.
\end{lemma}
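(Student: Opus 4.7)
The plan is to prove the pointwise biconditional (the first of the three displayed forms) by structural induction on the boolean formula $\varphi_2 \in \widehat{R}_2$. The equivalence of the three forms is immediate once one is proved: (1) $\Leftrightarrow$ (2) unfolds the definition of the classification extent, $\mathrmbfit{ext}_{\widehat{\mathcal{E}}}(\varphi) = \{k \mid k \models_{\widehat{\mathcal{E}}} \varphi\}$, and (2) $\Leftrightarrow$ (3) is the definition of the inverse image under $k$.

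At the base step, $\varphi_2 = r_2 \in R_2 \subseteq \widehat{R}_2$. By Tbl.~\ref{tbl:fmla:fn}, $\widehat{r}(r_2) = r(r_2)$, and the required equivalence $k(k_1) \models_{\mathcal{E}_2} r_2$ iff $k_1 \models_{\mathcal{E}_1} r(r_2)$ is precisely the entity-classification infomorphism condition packaged inside the structure morphism $\langle r,k,f,g \rangle$ (see the foundation paper). Passing from $\models_{\mathcal{E}_i}$ to $\models_{\widehat{\mathcal{E}}_i}$ on an entity type contributes only the list-designation side condition $\tau_i(k) \models_{\mathrmbf{List}(\mathcal{A}_i)} \sigma_i(r)$, which is built into every structure.

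For the inductive step I would treat each boolean connective uniformly: unfold the satisfaction clause on the left via Tbl.~\ref{tbl:fmla:cls}, apply the inductive hypothesis to each proper subformula, and refold on the right using the recursion for $\widehat{r}$ in Tbl.~\ref{tbl:fmla:fn}. For meet, join, implication, and difference this is transparent, since both sides decompose into the same propositional combination of sub-satisfactions. The main obstacle is the trio of negation, top, and bottom, whose satisfaction clauses carry the auxiliary tuple-set condition $\tau(k) \in \mathrmbfit{tup}_{\mathcal{A}}(I,s)$. Here I need the compatibility $\tau_2(k(k_1)) \in \mathrmbfit{tup}_{\mathcal{A}_2}(I_2,s_2)$ iff $\tau_1(k_1) \in \mathrmbfit{tup}_{\mathcal{A}_1}({\scriptstyle\sum}_{f}(I_2,s_2))$, which follows by combining the type-domain infomorphism $\langle f,g \rangle$ with the signature-preservation identity $\widehat{r}\cdot\widehat{\sigma}_1 = \widehat{\sigma}_2\cdot{\scriptstyle\sum}_{f}$ recorded just after Tbl.~\ref{tbl:fmla:fn}. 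With this compatibility in hand, the remaining cases close by a direct propositional calculation.

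The restriction to formulas free of quantification and substitution is essential: the clauses for ${\scriptstyle\sum}_{h}, {\scriptstyle\prod}_{h}, h^{\ast}$ in Tbl.~\ref{tbl:fmla:cls} test membership in $\exists_{h}, \forall_{h}, h^{-1}$ applied to the interpretation, and a witness on the target side need not lift through $k$ to a witness on the source side, so the naive induction would collapse without additional surjectivity hypotheses on the key function.
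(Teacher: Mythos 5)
Your proposal is correct and follows essentially the same route as the paper: induction on boolean formulas, with the base case discharged by the entity infomorphism condition of the structure morphism, the propositional connectives closed by unfolding Tbl.~\ref{tbl:fmla:cls} and refolding via Tbl.~\ref{tbl:fmla:fn}, and the top/bottom/negation cases handled by transporting the tuple-set side condition $\tau_1(k_1)\in\mathrmbfit{tup}_{\mathcal{A}_1}({\scriptstyle\sum}_{f}(I_2,s_2))$ to $\tau_2(k(k_1))\in\mathrmbfit{tup}_{\mathcal{A}_2}(I_2,s_2)$ (the paper splits this into a type-domain-morphism step and a universe-morphism step, which together give exactly your compatibility condition). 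Your closing remark on why quantification and substitution must be excluded is not in the paper's proof but is consistent with Lem.~\ref{lem:comp:fixdata:fmla}, where only one implication survives for the flow operators without additional hypotheses.
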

\begin{proof}
Proved by induction.
This is clearly true for entity types (relation symbols).
Check on all booleans: meets, joins, negations, etc. 
\begin{description}
\item[all:]
${\langle{I_{1},s_{1}}\rangle}
{\;=\;}\sigma_{1}(\widehat{r}(\varphi_{2}))
{\;=\;}{\scriptstyle\sum}_{f}(\sigma_{2}(\varphi_{2}))
{\;=\;}{\scriptstyle\sum}_{f}(I_{2},s_{2})$
\hfill
schema morphism
\newline\rule{100pt}{0.1pt}
%%%%%%%%%%%%%%%%%%%%
\item[meets:] 
\hfill
[entity infomorphism]
\mbox{}\newline
$k_{1}{\;\models_{\widehat{\mathcal{E}}_{1}}\;}\widehat{r}(\varphi_{2}{\,\wedge\,}\psi_{2})
{\;=\;}\widehat{r}(\varphi_{2}){\,\wedge\,}\widehat{r}(\psi_{2})$
\hfill
definition of $\widehat{r}$
\newline
\underline{iff}
$k_{1}{\;\models_{\widehat{\mathcal{E}}_{1}}\,}\widehat{r}(\varphi_{2})$ and
$k_{1}{\;\models_{\widehat{\mathcal{E}}_{1}}\,}\widehat{r}(\psi_{2})$
\hfill
definition of $\models_{\widehat{\mathcal{E}}_{1}}$
\newline
\underline{iff}
$k(k_{1}){\;\models_{\widehat{\mathcal{E}}_{2}}\,}\varphi_{2}$ and
$k(k_{1}){\;\models_{\widehat{\mathcal{E}}_{2}}\,}\psi_{2}$
\hfill
induction
\newline
\underline{iff}
$k(k_{1}){\;\models_{\widehat{\mathcal{E}}_{2}}\,}(\varphi_{2}{\,\wedge\,}\psi_{2})$,
\hfill
definition of $\models_{\widehat{\mathcal{E}}_{2}}$
%%%%%%%%%%%%%%%%%%%%
\newline\rule{100pt}{0.1pt}
\item[top:] 
\hfill
[entity infomorphism]
\mbox{}\newline
$k_{1}{\;\models_{\widehat{\mathcal{E}}_{1}}\,}
\widehat{r}(\top_{{\langle{I_{2},s_{2}}\rangle}})
{\,=\,}\top_{{\langle{I_{1},s_{1}}\rangle}}
{\,=\,}\top_{{\scriptscriptstyle\sum}_{f}(I_{2},s_{2})}$
%, where ${\langle{I_{1},s_{1}}\rangle}={\scriptstyle\sum}_{f}(I_{2},s_{2})$
\hfill
definition of $\widehat{r}$
\newline
\underline{iff}
$\tau_{1}(k_{1}){\,\in\,}\mathrmbfit{tup}_{\mathcal{A}_{1}}({\scriptstyle\sum}_{f}(I_{2},s_{2}))$
\hfill
definition of $\models_{\widehat{\mathcal{E}}_{1}}$
\newline
\underline{iff}
${\scriptstyle\sum}_{g}(\tau_{1}(k_{1})){\,\in\,}\mathrmbfit{tup}_{\mathcal{A}_{2}}(I_{2},s_{2})$
\hfill
type domain morphism 
\newline
\underline{iff}
$\tau_{2}(k(k_{1})){\,\in\,}\mathrmbfit{tup}_{\mathcal{A}_{2}}(I_{2},s_{2})$
\hfill
universe morphism 
\newline
\underline{iff}
$k(k_{1}){\;\models_{\widehat{\mathcal{E}}_{2}}}\top_{{\langle{I_{2},s_{2}}\rangle}}$
\hfill
definition of $\models_{\widehat{\mathcal{E}}_{2}}$
%%%%%%%%%%%%%%%%%%%%
\newline\rule{100pt}{0.1pt}
\item[negation:] 
\hfill
[entity infomorphism]
\mbox{}\newline
$k_{1}{\;\models_{\widehat{\mathcal{E}}_{1}}\;}\widehat{r}(\neg\varphi_{2})
{\;=\;}\neg\widehat{r}(\varphi_{2})$,
\hfill
definition of $\widehat{r}$
\newline
\underline{iff}
$\tau_{1}(k_{1}){\;\in\;}
%\mathrmbfit{tup}_{\mathcal{A}_{1}}(I_{1},s_{1})=
\mathrmbfit{tup}_{\mathcal{A}_{1}}({\scriptstyle\sum}_{f}(I_{2},s_{2}))$
and
$k_{1}{\;\cancel{\models}_{\widehat{\mathcal{E}_{1}}}\;}\widehat{r}(\varphi_{2})$
\hfill
definition of $\models_{\widehat{\mathcal{E}}_{1}}$
\newline
\underline{iff}
${\scriptstyle\sum}_{g}(\tau_{1}(k_{1})){\,\in\,}\mathrmbfit{tup}_{\mathcal{A}_{2}}(I_{2},s_{2})$
and
$k_{1}{\;\cancel{\models}_{\widehat{\mathcal{E}_{1}}}\;}\widehat{r}(\varphi_{2})$
\hfill
type domain morphism 
\newline
\underline{iff}
$\tau_{2}(k(k_{1})){\,\in\,}\mathrmbfit{tup}_{\mathcal{A}_{2}}(I_{2},s_{2})$
and
$k_{1}{\;\cancel{\models}_{\widehat{\mathcal{E}_{1}}}\;}\widehat{r}(\varphi_{2})$
\hfill
universe morphism
\newline
\underline{iff}
$\tau_{2}(k(k_{1})){\,\in\,}\mathrmbfit{tup}_{\mathcal{A}_{2}}(I_{2},s_{2})$
and
$k(k_{1}){\;\cancel{\models}_{\widehat{\mathcal{E}}_{2}}\,}\varphi_{2}$
\hfill
induction
\newline
\underline{iff}
$k(k_{1}){\;\models_{\widehat{\mathcal{E}}_{1}}\;}\neg\varphi_{2}$
\hfill
definition of $\models_{\widehat{\mathcal{E}}_{1}}$
\end{description}
\rule{5pt}{5pt}
\end{proof}
\begin{proposition}\label{bool:struc:fmla}
There is a boolean formula structure passage 
\[\mbox{\footnotesize{$
\mathrmbf{Struc}\xrightarrow{\mathrmbfit{fmla}_{0}}\mathrmbf{Struc}
$,}\normalsize}\]
which is idempotent
$\mathrmbfit{fmla}_{0}(\mathrmbfit{fmla}_{0}(\mathcal{M})){\;\cong\;}\mathrmbfit{fmla}_{0}(\mathcal{M})$:
a formula of formulas is another formula. 
\end{proposition}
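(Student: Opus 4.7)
The plan is to construct $\mathrmbfit{fmla}_0$ in two stages---define it on objects, then on morphisms---and finally establish idempotence.

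For the object assignment, given a structure $\mathcal{M} = {\langle{\mathcal{E},{\langle{\sigma,\tau}\rangle},\mathcal{A}}\rangle}$, I set $\mathrmbfit{fmla}_0(\mathcal{M}) = {\langle{\widehat{\mathcal{E}}_0,{\langle{\widehat{\sigma}_0,\tau}\rangle},\mathcal{A}}\rangle}$, namely the formula structure of Lemma~\ref{lem:fmla:struc} restricted to the subset $\widehat{R}_0 \subseteq \widehat{R}$ of \emph{boolean} formulas---those built only from entity types and the fiber connectives (meet, join, negation, implication, difference, top, bottom), using no quantification or substitution. Well-definedness as a structure is immediate by restricting the argument of Lemma~\ref{lem:fmla:struc}: the list designation condition $\tau(k) \models_{\mathrmbf{List}(\mathcal{A})} \widehat{\sigma}_0(\varphi)$ holds whenever $k \models_{\widehat{\mathcal{E}}_0} \varphi$, via Proposition~\ref{prop:cls:interp} applied to the boolean cases only.

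For the morphism assignment, given a structure morphism $\mathcal{M}_2 \xrightleftharpoons{{\langle{r,k,f,g}\rangle}} \mathcal{M}_1$, I set $\mathrmbfit{fmla}_0({\langle{r,k,f,g}\rangle}) = {\langle{\widehat{r}_0, k, f, g}\rangle}$, where $\widehat{r}_0$ is the restriction of the recursive extension of Table~\ref{tbl:fmla:fn} to $\widehat{R}_{2,0}$. This restriction lands in $\widehat{R}_{1,0}$ because the boolean clauses of Table~\ref{tbl:fmla:fn} send booleans to booleans. The essential entity infomorphism condition $k^{-1}(\mathrmbfit{ext}_{\widehat{\mathcal{E}}_{2,0}}(\varphi_2)) = \mathrmbfit{ext}_{\widehat{\mathcal{E}}_{1,0}}(\widehat{r}_0(\varphi_2))$ is exactly what Lemma~\ref{lem:bool:fmla} delivers, while the list designation and typed domain components are inherited unchanged from ${\langle{r,k,f,g}\rangle}$. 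Functoriality---preservation of identities and composition---then follows by a direct induction over boolean formula depth, each case being a purely formal match of clauses in Table~\ref{tbl:fmla:fn}.

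For idempotence, I would exploit the already-noted schema-level idempotence $\mathrmbfit{fmla}(\mathrmbfit{fmla}(\mathcal{S})) \cong \mathrmbfit{fmla}(\mathcal{S})$, restricted to boolean formulas: a boolean formula built out of boolean formulas (viewed as entity types of $\widehat{\mathcal{S}}$) can be canonically ``flattened'' to a single boolean formula in $\widehat{R}_0$ over the original entity types $R$. This gives a schema isomorphism $\mathrmbfit{fmla}_0(\mathrmbfit{fmla}_0(\mathcal{S})) \cong \mathrmbfit{fmla}_0(\mathcal{S})$. Lifting to structures, the universe $\mathcal{U}$ and type domain $\mathcal{A}$ are preserved by either iteration, so it suffices to check that flattening preserves the entity classification; this is an induction on the outer boolean structure, at each step using Table~\ref{tbl:fmla:cls} to reduce satisfaction of an outer connective to satisfaction of the corresponding inner flattened formula.

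The main obstacle, I expect, is making the flattening isomorphism precise and checking that it acts as an infomorphism between $\widehat{\widehat{\mathcal{E}}}_0$ and $\widehat{\mathcal{E}}_0$: each boolean clause must be tracked through two layers of recursion, verifying that, for instance, a meet-of-meets classifies the same keys as its flattened meet. This is conceptually routine---the computations mirror those in Proposition~\ref{prop:cls:interp} and Lemma~\ref{lem:bool:fmla}---but notationally delicate, since at each step one must distinguish the two classification relations $\models_{\widehat{\widehat{\mathcal{E}}}_0}$ and $\models_{\widehat{\mathcal{E}}_0}$ and confirm that the flattening bijection intertwines them.
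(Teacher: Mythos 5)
Your proposal is correct and follows exactly the route the paper intends: the paper states Proposition~\ref{bool:struc:fmla} immediately after Lemma~\ref{lem:bool:fmla} with no written proof, the point being that the entity-infomorphism condition of that lemma (which holds for boolean formulas \emph{without} the comprehensiveness, fixed-value-set, or surjective-key hypotheses needed later for Prop.~\ref{struc:fmla}) is precisely what makes the restricted passage well-defined on all of $\mathrmbf{Struc}$, while idempotence is the syntactic flattening you describe. Your write-up supplies the details the paper leaves implicit, and correctly identifies the only genuinely fiddly part (checking that flattening intertwines $\models_{\widehat{\widehat{\mathcal{E}}}_{0}}$ and $\models_{\widehat{\mathcal{E}}_{0}}$).
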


\newpage

%\mbox{}\newline\rule{340pt}{1pt}\newline

%\mbox{}\newline\rule{100pt}{1pt}\textbf{From HERE}\rule{100pt}{1pt}\newline

%\mbox{}\newline

%
\begin{lemma}\label{lem:comp:fixdata:fmla}
Assume the structures $\mathcal{M}_{1}$ and $\mathcal{M}_{1}$ are comprehensive
and the data value set $Y$ is fixed.
Any of the following equivalent conditions hold for Eqn.~\ref{eqn:fmla:info}
%
%\marginpar{{\scriptsize$\bullet$ Try to prove this with a data type function $Y_{2}\xleftarrow{g}Y_{1}$,
%using the equality $\mathrmbfit{tup}_{h}\cdot{\scriptscriptstyle\sum}_{g} ={\scriptscriptstyle\sum}_{g}\cdot\mathrmbfit{tup}_{h}$
%thus ${\exists}_{h}\cdot{\exists}_{g} = {\exists}_{g}\cdot{\exists}_{h}$.}}
%
\begin{center}
{\footnotesize\setlength{\extrarowheight}{2pt}{\begin{tabular}{@{\hspace{20pt}}c@{\hspace{20pt}}}
$k(k_{1}){\;\models_{\widehat{\mathcal{E}}_{2}}\;}\varphi_{2}$
 \underline{if} 
$k_{1}{\;\models_{\widehat{\mathcal{E}}_{1}}\;}\widehat{r}(\varphi_{2})$
\\
$k(k_{1}){\;\in\;}\mathrmbfit{ext}_{\widehat{\mathcal{E}}_{2}}(\varphi_{2})$
 \underline{if} 
$k_{1}{\;\in\;}\mathrmbfit{ext}_{\widehat{\mathcal{E}}_{1}}(\widehat{r}(\varphi_{2}))$
\\
$k^{-1}(\mathrmbfit{ext}_{\widehat{\mathcal{E}}_{2}}(\varphi_{2}))
 \supseteq 
\mathrmbfit{ext}_{\widehat{\mathcal{E}}_{1}}(\widehat{r}(\varphi_{2}))\;\;\;\;$
\\
$\;\;\;\;\;\;\;\;\;\;\;\;\mathrmbfit{ext}_{\widehat{\mathcal{E}}_{2}}(\varphi_{2})
 \supseteq
{{\wp}k}(\mathrmbfit{ext}_{\widehat{\mathcal{E}}_{1}}(\widehat{r}(\varphi_{2})))$
\end{tabular}}}
\end{center}
for any source boolean formula $\varphi_{2}{\,\in\,}\widehat{R}_{2}$ 
and any target key $k_{1}{\,\in\,}K_{1}$.
These imply that the following condition holds
\[\mbox{\footnotesize{$
\mathrmbfit{I}_{\mathcal{M}_{2}}(\varphi_{2})
 \supseteq
\mathrmbfit{I}_{\mathcal{M}_{1}}(\widehat{r}(\varphi_{2}))
$}\normalsize}\]
holds for any source formula $\varphi_{2}{\,\in\,}\widehat{R}_{2}$, 
since the structures are comprehensive and the value set is fixed.
\footnote{$
\mathrmbfit{I}_{\mathcal{M}_{2}}(\varphi_{2})
=  {{\wp}\tau_{2}}(\mathrmbfit{ext}_{\widehat{\mathcal{E}}_{2}}(\varphi_{2}))
\supseteq {{\wp}\tau_{2}}({{\wp}k}(\mathrmbfit{ext}_{\widehat{\mathcal{E}}_{1}}(\widehat{r}(\varphi_{2}))))
= {{\wp}\tau_{1}}(\mathrmbfit{ext}_{\widehat{\mathcal{E}}_{1}}(\widehat{r}(\varphi_{2})))
= \mathrmbfit{I}_{\mathcal{M}_{1}}(\widehat{r}(\varphi_{2}))$.}
\end{lemma}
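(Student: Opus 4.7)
The plan is to separate the lemma into three tasks: verify that the four bulleted conditions are equivalent reformulations of a single claim, establish that claim for boolean source formulas, and then upgrade it to the interpretation inclusion for all source formulas using the comprehensiveness and fixed-$Y$ hypotheses. For the equivalence, the first two conditions agree after unfolding $\mathrmbfit{ext}_{\widehat{\mathcal{E}}_i}(\varphi)=\{k:k\models_{\widehat{\mathcal{E}}_i}\varphi\}$, and the passage among the extent reformulations is the standard direct-image/inverse-image adjunction $\wp k(A)\subseteq B$ iff $A\subseteq k^{-1}(B)$ along the key map $K_1\xrightarrow{k}K_2$.

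For the boolean step, the cleanest route is to quote Lemma~\ref{lem:bool:fmla}, which already established the stronger (iff) form of these conditions for boolean formulas under exactly the general structure morphism of Eqn.~\ref{eqn:fmla:info}; the one-directional versions claimed here are an immediate weakening. Attempting instead to reprove this by induction in the weaker direction would stall at the negation case, since the contrapositive of the inductive hypothesis at $\varphi_2$ points the wrong way for $\neg\varphi_2$ — this is precisely the reason the previous lemma is phrased as an iff.

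For the interpretation inclusion, the argument is the footnote chain
\[
\mathrmbfit{I}_{\mathcal{M}_2}(\varphi_2)
= \wp\tau_2\bigl(\mathrmbfit{ext}_{\widehat{\mathcal{E}}_2}(\varphi_2)\bigr)
\supseteq \wp\tau_2\bigl(\wp k(\mathrmbfit{ext}_{\widehat{\mathcal{E}}_1}(\widehat{r}(\varphi_2)))\bigr)
= \wp\tau_1\bigl(\mathrmbfit{ext}_{\widehat{\mathcal{E}}_1}(\widehat{r}(\varphi_2))\bigr)
= \mathrmbfit{I}_{\mathcal{M}_1}(\widehat{r}(\varphi_2)),
\]
with the outer equalities coming from comprehensiveness of $\mathcal{M}_2$ and $\mathcal{M}_1$ respectively (hence both structures are assumed comprehensive), the middle equality from $\tau_2\circ k=\tau_1$ (the universe-morphism condition in the case $g=\mathrmit{1}_Y$, which is where the fixed value set enters), and the central inclusion from the boolean extent condition established in the previous step. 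The main obstacle is that this extent condition is only supplied for boolean $\varphi_2$, whereas the target conclusion is stated for every $\varphi_2\in\widehat{R}_2$; bridging that gap requires a secondary induction over the quantifier/substitution layer, propagating the interpretation inclusion through $\exists_h$, $\forall_h$, $h^{-1}$ by their monotonicity (Tbl.~\ref{tbl:fml-sem:refl}) and using the fact that, with $Y$ fixed, these semantic operators coincide on the two sides.
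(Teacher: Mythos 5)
Your proposal follows essentially the same route as the paper's proof: the boolean layer is discharged via Lem.~\ref{lem:bool:fmla} (the paper says ``by proof analogous to'' it, whereas you cite it outright, which is if anything cleaner since that lemma needs no extra hypotheses), and the quantifier/substitution cases are handled by induction using monotonicity of $\exists_h$, $\forall_h$, ${h}^{-1}$, the universe-morphism identity $\tau_{2}(k(k_{1}))=\tau_{1}(k_{1})$ when $Y$ is fixed, and comprehensiveness to convert extent inclusions into interpretation inclusions. Your added observations --- the direct/inverse-image adjunction behind the equivalence of the four displayed conditions, and the reason the boolean layer must be an \emph{iff} (negation reverses one-directional inclusions) --- are accurate and simply make explicit what the paper leaves implicit.
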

\begin{proof}
Proved by induction.
True for booleans by proof analogous to Lem.~\ref{lem:bool:fmla}.
\begin{description}
%%%%%%%%%%%%%%%%%%%%
\item[all:]
${\langle{I_{1},s_{1}}\rangle}
{\;=\;}\sigma_{1}(\widehat{r}(\varphi_{2}))
{\;=\;}{\scriptstyle\sum}_{f}(\sigma_{2}(\varphi_{2}))
{\;=\;}{\scriptstyle\sum}_{f}(I_{2},s_{2})$
\hfill
schema morphism
\newline
$\mathrmbfit{tup}_{\mathcal{A}_{2}}(I_{2},s_{2})
{\;=\;}\mathrmbfit{tup}_{\mathcal{A}_{1}}({\scriptstyle\sum}_{f}(I_{2},s_{2}))
{\;=\;}\mathrmbfit{tup}_{\mathcal{A}_{1}}(I_{1},s_{1})$
\hfill
type domain morphism
\newline\rule{100pt}{0.1pt}
%%%%%%%%%%%%%%%%%%%%
\item[existential:] 
$k_{1}{\;\models_{\widehat{\mathcal{E}}_{1}}\;}\widehat{r}({\scriptstyle\sum}_{h}(\varphi_{2}))$
\newline
\underline{iff}
$k_{1}{\;\models_{\widehat{\mathcal{E}}_{1}}\;}{\scriptstyle\sum}_{h}(\widehat{r}(\varphi_{2}))$
\hfill
definition of $\widehat{r}$
\newline
\underline{iff}
$\tau_{1}(k_{1}){\,\in\,}
{\exists}_{h}(\mathrmbfit{I}_{\mathcal{M}_{1}}(\widehat{r}(\varphi_{2})))$
\hfill
definition of $\models_{\widehat{\mathcal{E}}_{1}}$
\newline
\underline{implies}
$\tau_{1}(k_{1}){\,\in\,}{\exists}_{h}(\mathrmbfit{I}_{\mathcal{M}_{2}}(\varphi_{2}))$
\hfill
induction
%, ${\exists}_{h}$ mono, $h$/$g$ commute
\newline
\underline{iff}
$\tau_{2}(k(k_{1}))
{\,\in\,}
{\exists}_{h}(\mathrmbfit{I}_{\mathcal{M}_{2}}(\varphi_{2}))$
\hfill
universe morphism
\newline
\underline{iff}
$k(k_{1}){\;\models_{\widehat{\mathcal{E}}_{2}}\;}{\scriptstyle\sum}_{h}(\varphi_{2})$
\hfill
definition of $\models_{\widehat{\mathcal{E}}_{2}}$
\newline\rule{100pt}{0.1pt}
%%%%%%%%%%%%%%%%%%%%
\item[substitution:] 
$k_{1}{\;\models_{\widehat{\mathcal{E}}_{1}}\;}\widehat{r}({h}^{\ast}(\varphi'_{2}))$
\newline
$k_{1}{\;\models_{\widehat{\mathcal{E}}_{1}}\;}{h}^{\ast}(\widehat{r}(\varphi'_{2}))$
\hfill
definition of $\widehat{r}$
\newline
\underline{iff}
$\tau_{1}(k_{1}){\,\in\,}
{h}^{-1}(\mathrmbfit{I}_{\mathcal{M}_{1}}(\widehat{r}(\varphi'_{2})))$
\hfill
definition of $\models_{\widehat{\mathcal{E}}_{1}}$
\newline
\underline{implies}
$\tau_{1}(k_{1})
{\,\in\,}
{h}^{-1}(\mathrmbfit{I}_{\mathcal{M}_{2}}(\varphi'_{2}))$
\hfill
induction
\newline
\underline{iff}
$\tau_{2}(k(k_{1}))
{\,\in\,}
{h}^{-1}(\mathrmbfit{I}_{\mathcal{M}_{2}}(\varphi'_{2}))$
\hfill
universe morphism
\newline
\underline{iff}
$k(k_{1}){\;\models_{\widehat{\mathcal{E}}_{2}}\;}{h}^{\ast}(\varphi'_{2})$
\hfill
definition of $\models_{\widehat{\mathcal{E}}_{2}}$
\end{description}
\rule{5pt}{5pt}
\end{proof}

\newpage

\begin{lemma}\label{lem:all:fmla}
As in Lem.~\ref{lem:comp:fixdata:fmla},
assume the structures $\mathcal{M}_{1}$ and $\mathcal{M}_{1}$ are comprehensive
and the data value set $Y$ is fixed.
In addition, 
assume the key function $K_{2}\xleftarrow{\;k\;}K_{1}$ is surjective.
Any of the following equivalent conditions hold for Eqn.~\ref{eqn:fmla:info}
\begin{center}
{\footnotesize\setlength{\extrarowheight}{2pt}{\begin{tabular}{@{\hspace{20pt}}c@{\hspace{20pt}}}
$k(k_{1}){\;\models_{\widehat{\mathcal{E}}_{2}}\;}\varphi_{2}$
 \underline{iff} 
$k_{1}{\;\models_{\widehat{\mathcal{E}}_{1}}\;}\widehat{r}(\varphi_{2})$
\\
$k(k_{1}){\;\in\;}\mathrmbfit{ext}_{\widehat{\mathcal{E}}_{2}}(\varphi_{2})$
 \underline{iff} 
$k_{1}{\;\in\;}\mathrmbfit{ext}_{\widehat{\mathcal{E}}_{1}}(\widehat{r}(\varphi_{2}))$
\\
$k^{-1}(\mathrmbfit{ext}_{\widehat{\mathcal{E}}_{2}}(\varphi_{2}))
 = 
\mathrmbfit{ext}_{\widehat{\mathcal{E}}_{1}}(\widehat{r}(\varphi_{2}))\;\;\;\;$
\end{tabular}}}
\end{center}
for any source boolean formula $\varphi_{2}{\,\in\,}\widehat{R}_{2}$ 
and any target key $k_{1}{\,\in\,}K_{1}$.
These imply that the following condition holds
\[\mbox{\footnotesize{$
\mathrmbfit{ext}_{\widehat{\mathcal{E}}_{2}}(\varphi_{2})
= {{\wp}k}(\mathrmbfit{ext}_{\widehat{\mathcal{E}}_{1}}(\widehat{r}(\varphi_{2})))
$}\normalsize}\]
for any source formula $\varphi_{2}{\,\in\,}\widehat{R}_{2}$, 
since the function $K_{2}\xleftarrow{\;k\;}K_{1}$ is surjective.
\footnote{For any function $A\xrightarrow{f}B$,
direct image 
%function ${\wp}A\xrightarrow{{\wp}f}{\wp}B$
is left adjoint to 
inverse image: 
%function ${\wp}A\xleftarrow{f^{-1}}{\wp}B$:
${\wp}f(X){\,\subseteq\,}Y$ \underline{iff} $X{\,\subseteq\,}f^{-1}(Y)$
for any subsets $X{\,\subseteq\,}A$ and $Y{\,\subseteq\,}B$.
If $f$ is surjective and $X{\,=\,}f^{-1}(Y)$,
then ${\wp}f(X){\,=\,}Y$.}
This implies that the following condition holds
\[\mbox{\footnotesize{$
\mathrmbfit{I}_{\mathcal{M}_{2}}(\varphi_{2})
= \mathrmbfit{I}_{\mathcal{M}_{1}}(\widehat{r}(\varphi_{2}))
$}\normalsize}\]
for any source formula $\varphi_{2}{\,\in\,}\widehat{R}_{2}$, 
since the structures are comprehensive and the value set is fixed.
\footnote{$\mathrmbfit{I}_{\mathcal{M}_{2}}(\varphi_{2})
= {{\wp}\tau_{2}}(\mathrmbfit{ext}_{\widehat{\mathcal{E}}_{2}}(\varphi_{2}))
= {{\wp}\tau_{2}}({{\wp}k}(\mathrmbfit{ext}_{\widehat{\mathcal{E}}_{1}}(\widehat{r}(\varphi_{2}))))
= {{\wp}\tau_{1}}(\mathrmbfit{ext}_{\widehat{\mathcal{E}}_{1}}(\widehat{r}(\varphi_{2})))
= \mathrmbfit{I}_{\mathcal{M}_{1}}(\widehat{r}(\varphi_{2}))$.}
\end{lemma}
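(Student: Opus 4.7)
The plan is to prove the three displayed iff-conditions by induction on source formulas $\varphi_2 \in \widehat{R}_2$, directly paralleling the induction of Lem.~\ref{lem:comp:fixdata:fmla} but upgrading its ``implies'' step to ``iff'' using surjectivity of $k$ together with a strengthened form of the inductive hypothesis. Before entering the induction, observe that under surjectivity the three displayed conditions are mutually equivalent: applying ${\wp}k$ to both sides of the preimage form $k^{-1}(\mathrmbfit{ext}_{\widehat{\mathcal{E}}_2}(\varphi_2)) = \mathrmbfit{ext}_{\widehat{\mathcal{E}}_1}(\widehat{r}(\varphi_2))$, together with the identity ${\wp}k \circ k^{-1} = \mathrmit{1}$ that holds precisely when $k$ is surjective, yields the direct-image form; the pointwise form is just the membership version of either.

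The base cases (entity types and the boolean connectives meet, join, top, bottom, negation, implication, difference) are exactly those verified in Lem.~\ref{lem:bool:fmla}, where the biconditional holds for any structure morphism without extra hypotheses. For the quantification and substitution cases I would reproduce the chain of equivalences of Lem.~\ref{lem:comp:fixdata:fmla} verbatim, with one crucial change: the inductive hypothesis here supplies the interpretation \emph{equality} $\mathrmbfit{I}_{\mathcal{M}_2}(\varphi_2) = \mathrmbfit{I}_{\mathcal{M}_1}(\widehat{r}(\varphi_2))$ at the sub-formula, rather than the mere inclusion that was available in Lem.~\ref{lem:comp:fixdata:fmla}. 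This equality promotes the ``implies'' step of the earlier proof to ``iff''; for example, in the existential case one has $k_1 \models_{\widehat{\mathcal{E}}_1} \widehat{r}({\scriptstyle\sum}_h \varphi_2)$ iff $\tau_1(k_1) \in \exists_h(\mathrmbfit{I}_{\mathcal{M}_1}(\widehat{r}(\varphi_2)))$ iff $\tau_1(k_1) \in \exists_h(\mathrmbfit{I}_{\mathcal{M}_2}(\varphi_2))$ (by the IH) iff $\tau_2(k(k_1)) \in \exists_h(\mathrmbfit{I}_{\mathcal{M}_2}(\varphi_2))$ (universe morphism, with $Y$ fixed so that $\tau_2 \circ k = \tau_1$) iff $k(k_1) \models_{\widehat{\mathcal{E}}_2} {\scriptstyle\sum}_h \varphi_2$. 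The universal and substitution cases proceed analogously.

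Once the iff-conditions are established, the two final displayed equalities are derived exactly as in the Lemma's footnotes: the extent equation $\mathrmbfit{ext}_{\widehat{\mathcal{E}}_2}(\varphi_2) = {\wp}k(\mathrmbfit{ext}_{\widehat{\mathcal{E}}_1}(\widehat{r}(\varphi_2)))$ is obtained by applying ${\wp}k$ to the preimage form and using surjectivity, and the interpretation equation $\mathrmbfit{I}_{\mathcal{M}_2}(\varphi_2) = \mathrmbfit{I}_{\mathcal{M}_1}(\widehat{r}(\varphi_2))$ then follows by applying ${\wp}\tau_2$ to both sides, invoking comprehensiveness at each end (which gives $\mathrmbfit{I} = {\wp}\tau \circ \mathrmbfit{ext}$) together with $\tau_2 \circ k = \tau_1$ from the fixed data value set. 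The main subtlety, rather than technical difficulty, is that the inductive hypothesis needed to promote ``implies'' to ``iff'' in the quantifier/substitution step is the interpretation equality itself, which is one of the jointly-equivalent forms being proved; the equivalence noted at the outset of the proof makes this joint induction well-posed.
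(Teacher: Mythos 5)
Your proposal is correct and follows essentially the same route as the paper, whose own proof is just the one-line remark that the argument is similar to that of Lem.~\ref{lem:comp:fixdata:fmla}; you have simply spelled out that similarity, namely rerunning the same induction with the base cases from Lem.~\ref{lem:bool:fmla} and using the now-available interpretation \emph{equality} at subformulas (obtained from the iff-condition via surjectivity of $k$ and comprehensiveness) to upgrade the single ``implies'' step in the quantifier/substitution cases to ``iff''. Your closing observation that this is a well-posed joint induction is a worthwhile clarification that the paper leaves implicit.
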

\begin{proof}
Proof is similar to that of Lem.~\ref{lem:comp:fixdata:fmla}.
\hfill\rule{5pt}{5pt}
\end{proof}
\begin{lemma}\label{lem:fmla:struc:mor}
With the assumptions of Lem.~\ref{lem:all:fmla},
there is an associated formula structure morphism
between comprehensive formula structures
\[\mbox{\footnotesize{$
%\mathrmbfit{fmla}(r,k,f,g) = {\langle{\widehat{r},k,f,g}\rangle} : 
\mathrmbfit{fmla}(\mathcal{M}_{2}) = {\langle{\widehat{\mathcal{E}}_{2},{\langle{\widehat{\sigma}_{2},\tau_{2}}\rangle},\mathcal{A}_{2}}\rangle} 
\xrightleftharpoons[{\langle{\widehat{r},k,f,1_{Y}}\rangle}]{\mathrmbfit{fmla}(r,k,f,1_{Y})}
{\langle{\widehat{\mathcal{E}}_{1},{\langle{\widehat{\sigma}_{1},\tau_{1}}\rangle},\mathcal{A}_{1}}\rangle} = \mathrmbfit{fmla}(\mathcal{M}_{1})
$}\normalsize}\]
with schema morphism 
$\mathrmbfit{fmla}(\mathcal{S}_{2})={\langle{\widehat{R}_{2},\widehat{\sigma}_{2},X_{2}}\rangle} 
\xRightarrow[{\langle{\widehat{r},f}\rangle}]{\mathrmbfit{fmla}(r,f)}
{\langle{\widehat{R}_{1},\widehat{\sigma}_{1},X_{1}}\rangle}=\mathrmbfit{fmla}(\mathcal{S}_{1})$
and entity infomorphism
$\widehat{\mathcal{E}_{2}}={\langle{\widehat{R}_{2},K_{2},\models_{\widehat{\mathcal{E}_{2}}}}\rangle}
\xrightleftharpoons{{\langle{\widehat{r},k}\rangle}}
{\langle{\widehat{R}_{1},K_{1},\models_{\widehat{\mathcal{E}_{1}}}}\rangle}=\widehat{\mathcal{E}_{1}}$.
\end{lemma}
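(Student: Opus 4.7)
The plan is to verify, piece by piece, that the quadruple $\langle\widehat{r},k,f,1_{Y}\rangle$ satisfies all the coherence conditions required of a structure morphism between the formula structures $\mathrmbfit{fmla}(\mathcal{M}_{2})$ and $\mathrmbfit{fmla}(\mathcal{M}_{1})$. Most of the ingredients have already been assembled elsewhere in the paper, so the proof is mainly a matter of stitching them together, with the entity-infomorphism condition on formulas being the one non-trivial input.

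First I would dispatch the schema-morphism component. The formula passage $\mathrmbf{Sch}\xrightarrow{\mathrmbfit{fmla}}\mathrmbf{Sch}$ has already been established, and in particular $\mathrmbfit{fmla}(r,f) = \langle\widehat{r},f\rangle$ is a morphism $\mathrmbfit{fmla}(\mathcal{S}_{2})\Rightarrow\mathrmbfit{fmla}(\mathcal{S}_{1})$: the function $\widehat{r}$ on formulas is defined in Tbl.~\ref{tbl:fmla:fn}, and the signature-preservation identity $\widehat{r}\cdot\widehat{\sigma}_{1} = \widehat{\sigma}_{2}\cdot{\scriptstyle\sum}_{f}$ was already noted by induction. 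Next, the type-domain morphism $\langle f,1_{Y}\rangle$ and the universe morphism inherited from $\mathcal{M}_{2}\xrightleftharpoons{\langle r,k,f,1_{Y}\rangle}\mathcal{M}_{1}$ are unchanged, so nothing further needs to be checked there.

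Second, I would verify the entity infomorphism condition, namely that $\langle\widehat{r},k\rangle$ gives an infomorphism $\widehat{\mathcal{E}}_{2}\rightleftharpoons\widehat{\mathcal{E}}_{1}$. This is precisely the biconditional
\[
k(k_{1}){\;\models_{\widehat{\mathcal{E}}_{2}}\;}\varphi_{2}\;\;\underline{\text{iff}}\;\; k_{1}{\;\models_{\widehat{\mathcal{E}}_{1}}\;}\widehat{r}(\varphi_{2})
\]
supplied by Lem.~\ref{lem:all:fmla} under our hypotheses (both $\mathcal{M}_{i}$ comprehensive, $Y$ fixed, $k$ surjective). The signature compatibility $\widehat{r}\cdot\widehat{\sigma}_{1} = \widehat{\sigma}_{2}\cdot{\scriptstyle\sum}_{f}$ ensures this infomorphism sits over the formula schema morphism.

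Third, I would check the list-designation (tuple) compatibility: for $k_{1}\in K_{1}$ and $\varphi_{2}\in\widehat{R}_{2}$, if $k_{1}\models_{\widehat{\mathcal{E}}_{1}}\widehat{r}(\varphi_{2})$, then $\tau_{1}(k_{1})\models_{\mathrmbf{List}(\mathcal{A}_{1})}\widehat{\sigma}_{1}(\widehat{r}(\varphi_{2})) = {\scriptstyle\sum}_{f}(\widehat{\sigma}_{2}(\varphi_{2}))$, and by the type-domain morphism condition this transfers to $\tau_{2}(k(k_{1}))\models_{\mathrmbf{List}(\mathcal{A}_{2})}\widehat{\sigma}_{2}(\varphi_{2})$, exactly as needed; this is essentially the same argument that appears in the ``top'' and ``existential'' cases of Lemmas~\ref{lem:bool:fmla} and \ref{lem:comp:fixdata:fmla} when type-domain compatibility is invoked. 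Comprehensiveness of $\widehat{\mathrmbfit{fmla}(\mathcal{M}_{i})}$ follows from Prop.~\ref{prop:iter:fmla}.

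The main obstacle, and the only genuinely inductive part, is the entity-infomorphism step, but that work has already been done in Lem.~\ref{lem:all:fmla}; everything else is bookkeeping against the definition of a structure morphism. Functoriality of $\mathrmbfit{fmla}$ on structure morphisms then follows from the functoriality of $\widehat{r}\mapsto$ its recursive definition together with functoriality of the underlying schema passage $\mathrmbfit{fmla}$ on $\mathrmbf{Sch}$.
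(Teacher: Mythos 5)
Your proposal is correct and follows essentially the same route as the paper: the paper's own proof simply cites Prop.~\ref{prop:iter:fmla} for comprehensiveness of the source and target formula structures and Lem.~\ref{lem:all:fmla} for the entity infomorphism condition on formulas. Your additional checks (the schema-morphism component via Tbl.~\ref{tbl:fmla:fn} and the list-designation compatibility) are sound bookkeeping that the paper leaves implicit.
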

\begin{proof}
Source and target formula structures are comprehensive by Prop.~\ref{prop:iter:fmla}.
The entity infomorphism condition
$k(k_{1}){\;\models_{\widehat{\mathcal{E}}_{2}}\;}\varphi_{2}$
\underline{iff}
$k_{1}{\;\models_{\widehat{\mathcal{E}}_{1}}\;}\widehat{r}(\varphi_{2})$
holds for any source formula $\varphi_{2}{\,\in\,}\widehat{R}_{2}$ and target key $k_{1}{\,\in\,}K_{1}$
by Lem.~\ref{lem:all:fmla}.
\hfill\rule{5pt}{5pt}
\end{proof}
Let $\mathring{\mathrmbf{Struc}}(Y)$ denote the subcontext of comprehensive structures
with fixed data value set $Y$
whose structure morphisms have a surjective key function.
\begin{proposition}\label{struc:fmla}
There is a formula structure passage 
\[\mbox{\footnotesize{$
\mathring{\mathrmbf{Struc}}(Y)\xrightarrow{\mathring{\mathrmbfit{fmla}}_{Y}}\mathring{\mathrmbf{Struc}}(Y)
$,}\normalsize}\]
which is idempotent
$\mathring{\mathrmbfit{fmla}}_{Y}(\mathring{\mathrmbfit{fmla}}_{Y}(\mathcal{M})){\;\cong\;}\mathring{\mathrmbfit{fmla}}_{Y}(\mathcal{M})$:
a formula of formulas is another formula. 
\end{proposition}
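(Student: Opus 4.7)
The plan is to assemble the passage and its idempotency from three ingredients already in place: Lemma~\ref{lem:fmla:struc:mor} for the action on morphisms, Proposition~\ref{prop:iter:fmla} for preservation of comprehension and of interpretation/extent under iteration, and the idempotency of the schema-level formula passage $\mathrmbfit{fmla}$ on $\mathrmbf{Sch}$ that was recorded earlier in \S\ref{sub:sub:sec:fml}.

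First I would define $\mathring{\mathrmbfit{fmla}}_{Y}$ on objects by $\mathcal{M}\mapsto\widehat{\mathcal{M}}$ and verify that this maps $\mathring{\mathrmbf{Struc}}(Y)$ to itself: $\widehat{\mathcal{M}}$ has the same typed domain and therefore the same data value set $Y$ as $\mathcal{M}$, and by Proposition~\ref{prop:iter:fmla} it is comprehensive. On morphisms, given $\mathcal{M}_{2}\xrightleftharpoons{{\langle{r,k,f,1_{Y}}\rangle}}\mathcal{M}_{1}$ in $\mathring{\mathrmbf{Struc}}(Y)$ with $k$ surjective, I would apply Lemma~\ref{lem:fmla:struc:mor} to obtain the formula structure morphism ${\langle{\widehat{r},k,f,1_{Y}}\rangle}$; because $k$ is reused verbatim this morphism still satisfies the surjectivity requirement, and $1_{Y}$ is reused for the value map, so the image lies in $\mathring{\mathrmbf{Struc}}(Y)$.

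Next I would check functoriality. Identity preservation $\widehat{1_{R}}=1_{\widehat{R}}$ and composition preservation $\widehat{r{\,\cdot\,}r'}=\widehat{r}{\,\cdot\,}\widehat{r}'$ follow by straightforward induction on formulas, using the recursive clauses of Table~\ref{tbl:fmla:fn} together with the fact that the underlying key functions and arity functions compose in their respective ambient categories and that $1_{Y}$ is neutral for composition on the value side.

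For idempotency, the schema component is handled by the already-noted idempotency of the schema passage, giving a canonical identification $\mathrmbfit{fmla}(\widehat{\mathcal{S}})\cong\widehat{\mathcal{S}}$. The semantic component is delivered by Proposition~\ref{prop:iter:fmla}, which supplies the two equalities $\mathrmbfit{I}_{\widehat{\mathcal{M}}}=\mathrmbfit{I}_{\mathcal{M}}$ and $\mathrmbfit{ext}_{\widehat{\widehat{\mathcal{E}}}}=\mathrmbfit{ext}_{\widehat{\mathcal{E}}}$; together with the unchanged universe $\mathcal{U}$ and typed domain $\mathcal{A}$ these yield the desired isomorphism $\mathring{\mathrmbfit{fmla}}_{Y}(\mathring{\mathrmbfit{fmla}}_{Y}(\mathcal{M}))\cong\mathring{\mathrmbfit{fmla}}_{Y}(\mathcal{M})$. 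The main obstacle I anticipate is the schema-level step: one must make precise how a ``formula of formulas'' is collapsed back into a formula, which requires a recursive translation distributing each outer connective, quantifier, and substitution over its inner-formula arguments and then invoking the axioms of Table~\ref{tbl:axioms} to see the translation is well-defined up to provable equivalence. Once that collapse is in hand, the semantic equalities from Proposition~\ref{prop:iter:fmla} promote it to a structure isomorphism and functoriality extends the isomorphism to all morphisms of $\mathring{\mathrmbf{Struc}}(Y)$.
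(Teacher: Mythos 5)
Your proposal is correct and follows exactly the route the paper intends: the paper states Prop.~\ref{struc:fmla} without an explicit proof, leaving it as the assembly of Lem.~\ref{lem:fmla:struc:mor} (morphism action, with the surjective key function and identity value map reused verbatim), Prop.~\ref{prop:iter:fmla} (preservation of comprehension and the equalities $\mathrmbfit{I}_{\widehat{\mathcal{M}}}=\mathrmbfit{I}_{\mathcal{M}}$, $\mathrmbfit{ext}_{\widehat{\widehat{\mathcal{E}}}}=\mathrmbfit{ext}_{\widehat{\mathcal{E}}}$), and the idempotency of the schema-level passage $\mathrmbfit{fmla}$ noted in \S\ref{sub:sub:sec:fml}. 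You correctly identify the one genuinely delicate point (the collapse of a formula of formulas at the schema level), and your filling in of functoriality and the well-definedness of the object and morphism actions supplies the details the paper omits.
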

%

%%%%%%%%%%%%%%%%%%%%%%%%%%%%%%%%%%%%%%%%%%%%%%%%%%%%%%%%%%%%%%%%%%%%%%%%%%%%%%%%%%%%%%%%%%%%%%%%%%%%
%%%%%%%%%%%%%%%%%%%%%%%%%%%%%%%%%%%%%%%%%%%%%%%%%%%%%%%%%%%%%%%%%%%%%%%%%%%%%%%%%%%%%%%%%%%%%%%%%%%%
%\paragraph{Formula Structure.} 
\newpage
\subsection{Satisfaction.}\label{sub:sub:sec:sat} 
%%%%%%%%%%%%%%%%%%%%%%%%%%%%%%%%%%%%%%%%%%%%%%%%%%%%%%%%%%%%%%%%%%%%%%%%%%%%%%%%%%%%%%%%%%%%%%%%%%%%
%%%%%%%%%%%%%%%%%%%%%%%%%%%%%%%%%%%%%%%%%%%%%%%%%%%%%%%%%%%%%%%%%%%%%%%%%%%%%%%%%%%%%%%%%%%%%%%%%%%%

%We define (in logical order)
%basic interpretation,
%satisfaction,
%institutional aspect,
%and
%finally
%interpretation passage.

%\mbox{}\newline

%{\fbox{\bfseries{You need to define formula interpretation before satisfaction!}}

%{\fbox{\bfseries{You need to define satisfaction before institution!}}

%{\fbox{\bfseries{You need to define institution before struc and struc mor interpretation!}}

%%%%%%%%%%%%%%%%%%%%%%%%%%%%%%%%%%%%%%%%%%%%%%%%%%%%%%%%%%%%%%%%%%%%%%%%%%%%%%%%%%%%%%%%%%%%%%%%%%%%
%%%%%%%%%%%%%%%%%%%%%%%%%%%%%%%%%%%%%%%%%%%%%%%%%%%%%%%%%%%%%%%%%%%%%%%%%%%%%%%%%%%%%%%%%%%%%%%%%%%%
%\newpage
%\paragraph{Satisfaction.}
%%%%%%%%%%%%%%%%%%%%%%%%%%%%%%%%%%%%%%%%%%%%%%%%%%%%%%%%%%%%%%%%%%%%%%%%%%%%%%%%%%%%%%%%%%%%%%%%%%%%
%%%%%%%%%%%%%%%%%%%%%%%%%%%%%%%%%%%%%%%%%%%%%%%%%%%%%%%%%%%%%%%%%%%%%%%%%%%%%%%%%%%%%%%%%%%%%%%%%%%%

Satisfaction is a fundamental classification between formalism and semantics.
The atom of formalism used in satisfaction is the {\ttfamily FOLE} constraint,
whereas the atom of semantics used is the {\ttfamily FOLE} structure.
Satisfaction is defined in terms of formula interpretation 
(Eqn.~\ref{eqn:fmla:interp} of \S\ref{sub:sub:sec:fmla:struc}).
\footnote{Important definitions follow a logical order:
\emph{formula interpretation}
$\;\Rightarrow\;$ \emph{satisfaction}
$\;\Rightarrow\;$ \emph{institution}
$\;\Rightarrow\;$ \emph{structure interpretation}
$\;\Rightarrow\;$ \emph{sound logic interpretation}.}
%

%%%%%%%%%%%%%%%%%%%%%%%%%%%%%%%%%%%%%%%%%%%%%%%%%%%%%%%%%%%%%%%%%%%%%%%%%%%%%%%%%%%%%%%%%%%%%%%%%%%%
%\newpage
\subsubsection{Sequent Satisfaction.}\label{sub:sub:sec:seq:sat} 
%%%%%%%%%%%%%%%%%%%%%%%%%%%%%%%%%%%%%%%%%%%%%%%%%%%%%%%%%%%%%%%%%%%%%%%%%%%%%%%%%%%%%%%%%%%%%%%%%%%%

\begin{sloppypar}
An $\mathcal{S}$-structure $\mathcal{M} \in \mathrmbf{Struc}(\mathcal{S})$
\emph{satisfies} an $\mathcal{S}$-sequent $\varphi{\;\vdash\;}\psi$
(\S\ref{sub:sub:sec:seq})
%when
%$\varphi{\;\leq_{\widehat{\mathcal{E}}}\;}\psi$;
%that is,
when the interpretation widening of views asserted by the sequent actually holds in $\mathcal{M}$:
%$\mathrmbfit{ext}_{\widehat{\mathcal{E}}}(\varphi){\;\subseteq\;}\mathrmbfit{ext}_{\widehat{\mathcal{E}}}(\psi)$
%\underline{iff}
$\mathrmbfit{I}_{\mathcal{M}}(\varphi){\;\subseteq\;}\mathrmbfit{I}_{\mathcal{M}}(\psi)$.
Satisfaction is symbolized either by
$\mathcal{M}{\;\models_{\mathcal{S}}\;}(\varphi{\;\vdash\;}\psi)$
or by
$\varphi{\;\vdash_{\mathcal{M}}\;}\psi$.
For each $\mathcal{S}$-signature ${\langle{I,s}\rangle}$,
satisfaction defines the fiber order
$\mathcal{M}^{\mathcal{S}}(I,s)={\langle{\widehat{R}(I,s),\leq_{\mathcal{M}}}\rangle}$,
where
$\varphi{\;\leq_{\mathcal{M}}\;}\psi$
when
$\varphi{\;\vdash_{\mathcal{M}}\;}\psi$
for any two formulas $\varphi,\psi\in\widehat{R}(I,s)$.
Sequent satisfaction can be expressed in terms of implication as
$\top{\;\leq_{\mathcal{M}}\;}(\varphi{\,\rightarrowtriangle\,}\psi)$.
%The enfolding 
%$\mathrmbfit{fold}_{\mathcal{S}}(\varphi'{\;\xrightarrow{h}\;}\varphi)$
%of the constraint is either.
%%$({\scriptstyle\sum}_{h}(\varphi){\,\rightarrowtriangle\,}\varphi')
%%{\;\equiv_{\widehat{\mathcal{E}}}\;}(\varphi{\,\rightarrowtriangle\,}{h}^{\ast}(\varphi'))$.
%Enfolding translates constraints to formulas:
%{\footnotesize{
%$\mathrmit{mor}(\mathrmbfit{fmla}(\mathcal{S}))\xrightarrow{\mathrmbfit{fold}_{\mathcal{S}}}\mathrmbfit{fmla}(\mathcal{S})$.
%}\normalsize}
%Hence,
%constraints are semantically equivalent to formulas,
%and t
Thus, 
satisfaction of sequents is equivalent to satisfaction of formulas.
\end{sloppypar}
\begin{corollary}
Satisfaction in $\mathcal{M}$
is equivalent to
satisfaction in the image $\mathring{\mathcal{M}}$:
$\mathcal{M}{\;\models_{\mathcal{S}}\;}(\varphi{\;\vdash\;}\psi)$
\underline{iff}
$\mathring{\mathcal{M}}{\;\models_{\mathcal{S}}\;}(\varphi{\;\vdash\;}\psi)$.
\footnote{Satisfaction in $\mathcal{M}$ and its key-embedding structure $\dot{\mathcal{M}}$ should also be connected.}
%
%\footnote{This gives two equivalent ways to define the interpretation of a structure $\mathcal{M}$ in \S\ref{sub:sub:sec:strucs:interp:strucs}:
%either as the relational interpretation of the arbitrary structure $\mathcal{M}$,
%or as the tabular interpretation of the comprehensive structure $\mathring{\mathcal{M}}$.}
%
\end{corollary}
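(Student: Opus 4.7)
The plan is to reduce the corollary directly to the equality of formula interpretations established in Prop.~\ref{prop:img:compre}. By the definition of sequent satisfaction in \S\ref{sub:sub:sec:seq:sat}, $\mathcal{M}{\;\models_{\mathcal{S}}\;}(\varphi{\;\vdash\;}\psi)$ unfolds as $\mathrmbfit{I}_{\mathcal{M}}(\varphi){\;\subseteq\;}\mathrmbfit{I}_{\mathcal{M}}(\psi)$, and likewise $\mathring{\mathcal{M}}{\;\models_{\mathcal{S}}\;}(\varphi{\;\vdash\;}\psi)$ unfolds as $\mathrmbfit{I}_{\mathring{\mathcal{M}}}(\varphi){\;\subseteq\;}\mathrmbfit{I}_{\mathring{\mathcal{M}}}(\psi)$. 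So the entire statement comes down to verifying that the two inclusions are the same inclusion.

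First I would invoke Prop.~\ref{prop:img:compre}, which asserts that $\mathrmbfit{I}_{\mathring{\mathcal{M}}}(\varphi){\;=\;}\mathrmbfit{I}_{\mathcal{M}}(\varphi)$ for \emph{every} formula $\varphi\in\widehat{R}$ (not just entity types). Applying this simultaneously to $\varphi$ and to $\psi$ gives $\mathrmbfit{I}_{\mathring{\mathcal{M}}}(\varphi){\;=\;}\mathrmbfit{I}_{\mathcal{M}}(\varphi)$ and $\mathrmbfit{I}_{\mathring{\mathcal{M}}}(\psi){\;=\;}\mathrmbfit{I}_{\mathcal{M}}(\psi)$, so the inclusion between the two interpretations in $\mathring{\mathcal{M}}$ coincides with the inclusion in $\mathcal{M}$. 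The equivalence follows.

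Since Prop.~\ref{prop:img:compre} does all the real work, there is no genuine obstacle here; the only thing to be careful about is that both formulas $\varphi$ and $\psi$ lie in the same fiber $\widehat{R}(I,s)$ (by the definition of an $\mathcal{S}$-sequent in \S\ref{sub:sub:sec:seq}), so comparing their interpretations as subsets of $\mathrmbfit{tup}_{\mathcal{A}}(I,s)$ is well-posed in both $\mathcal{M}$ and $\mathring{\mathcal{M}}$ — which share the same type domain $\mathcal{A}$. Thus the proof is a one-line appeal to Prop.~\ref{prop:img:compre} after unfolding the definition of sequent satisfaction.
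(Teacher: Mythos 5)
Your proof is correct and is essentially identical to the paper's own one-line argument: both unfold sequent satisfaction into an inclusion of formula interpretations and then appeal to Prop.~\ref{prop:img:compre}, which gives $\mathrmbfit{I}_{\mathring{\mathcal{M}}}(\varphi)=\mathrmbfit{I}_{\mathcal{M}}(\varphi)$ for all formulas. The extra remark about $\varphi$ and $\psi$ lying in the same fiber is a harmless bit of added care.
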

\begin{proof}
By Prop.~\ref{prop:img:compre},
the formula interpretation in $\mathcal{M}$ and $\mathring{\mathcal{M}}$ are equal.
%is the formula interpretation in $\mathcal{M}$:
%$\mathrmbfit{I}_{\mathring{\mathcal{M}}}(\varphi){\;=\;}\mathrmbfit{I}_{\mathcal{M}}(\varphi)$
%for any formula $\varphi\in\widehat{R}$.
\end{proof}
For any $\mathcal{S}$-structure $\mathcal{M} \in \mathrmbf{Struc}(\mathcal{S})$,
the formula extent order 
$\mathrmbfit{ord}(\widehat{\mathcal{E}}) = 
{\langle{\widehat{R},\leq_{\widehat{\mathcal{E}}}}\rangle}$
is defined by
$\varphi{\;\leq_{\widehat{\mathcal{E}}}\;}\psi$
when
$\mathrmbfit{ext}_{\widehat{\mathcal{E}}}(\varphi){\;\subseteq\;}\mathrmbfit{ext}_{\widehat{\mathcal{E}}}(\psi)$.
For each signature ${\langle{I,s}\rangle}\in\mathrmbf{List}(X)$,
define the suborder
$\mathrmbfit{ord}_{\widehat{\mathcal{E}}}(I,s)
= {\langle{\widehat{R}(I,s),\leq_{\widehat{\mathcal{E}}}}\rangle}$.
%= \{ \varphi\in\widehat{R} \mid \widehat{\sigma}(\varphi) = {\langle{I,s}\rangle} \}$.
%
\begin{corollary}
For an arbitrary $\mathcal{S}$-structure
the formula interpretation order 
is as strong as or stronger than the extent order 
of the formula classification:
$\varphi{\;\leq_{\mathcal{M}}\;}\psi\;\;\text{\underline{implies}}\;\;\varphi{\;\leq_{\widehat{\mathcal{E}}}\;}\psi$,
but not necessarily the converse.
\end{corollary}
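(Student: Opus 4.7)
The proof follows directly from Proposition~\ref{prop:cls:interp}, which gives the identity $\mathrmbfit{ext}_{\widehat{\mathcal{E}}}(\varphi) = \tau^{-1}(\mathrmbfit{I}_{\mathcal{M}}(\varphi))$ for every formula $\varphi \in \widehat{R}$. The plan is to exploit this identity together with the monotonicity of inverse image under set inclusion.

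First, I would unpack the hypothesis $\varphi \leq_{\mathcal{M}} \psi$ as the inclusion $\mathrmbfit{I}_{\mathcal{M}}(\varphi) \subseteq \mathrmbfit{I}_{\mathcal{M}}(\psi)$ in $\mathrmbf{Rel}_{\mathcal{A}}(I,s)$, where ${\langle{I,s}\rangle}$ is the common signature of $\varphi$ and $\psi$. Second, I would apply the inverse image operator $\tau^{-1} : {\wp}\mathrmbf{List}(Y) \to {\wp}K$, which is monotonic, to obtain $\tau^{-1}(\mathrmbfit{I}_{\mathcal{M}}(\varphi)) \subseteq \tau^{-1}(\mathrmbfit{I}_{\mathcal{M}}(\psi))$. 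Third, I would invoke Proposition~\ref{prop:cls:interp} on both sides to rewrite these as $\mathrmbfit{ext}_{\widehat{\mathcal{E}}}(\varphi) \subseteq \mathrmbfit{ext}_{\widehat{\mathcal{E}}}(\psi)$, which is precisely $\varphi \leq_{\widehat{\mathcal{E}}} \psi$.

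For the clause ``but not necessarily the converse'', I would exhibit a small non-comprehensive structure as a counterexample: take formulas $\varphi, \psi$ with $\mathrmbfit{I}_{\mathcal{M}}(\varphi) \not\subseteq \mathrmbfit{I}_{\mathcal{M}}(\psi)$ but such that every tuple in $\mathrmbfit{I}_{\mathcal{M}}(\varphi) \setminus \mathrmbfit{I}_{\mathcal{M}}(\psi)$ lies outside the image of $\tau$. Then $\tau^{-1}$ cannot distinguish $\mathrmbfit{I}_{\mathcal{M}}(\varphi)$ from its intersection with $\mathrmbfit{I}_{\mathcal{M}}(\psi)$, so the extents satisfy the inclusion while the interpretations do not. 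This is precisely the failure of comprehensiveness noted in Eqn.~\ref{ext:interp:equiv}, where ${\wp}\tau(\mathrmbfit{ext}_{\widehat{\mathcal{E}}}(\varphi)) \subseteq \mathrmbfit{I}_{\mathcal{M}}(\varphi)$ is generally strict.

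The main step is really just the observation that taking inverse image through $\tau$ is monotonic but may not be faithful; there is no genuine obstacle here, as the corollary is essentially a one-line consequence of Proposition~\ref{prop:cls:interp}. The only care needed is in crafting the counterexample for the converse, making sure the ambient universe $\mathcal{U} = {\langle{K,\tau,Y}\rangle}$ has enough ``missing'' tuples (i.e. $\tau$ is sufficiently far from surjective onto $\mathrmbfit{tup}_{\mathcal{A}}(I,s)$) to hide a genuine difference between the two interpretations.
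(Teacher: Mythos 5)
Your proof is correct and follows essentially the same route as the paper: both invoke the identity $\mathrmbfit{ext}_{\widehat{\mathcal{E}}}(\varphi)=\tau^{-1}(\mathrmbfit{I}_{\mathcal{M}}(\varphi))$ from Prop.~\ref{prop:cls:interp} (restated as Eqn.~\ref{ext:interp:equiv}) and then apply monotonicity of the inverse image $\tau^{-1}$. Your additional counterexample sketch for the failure of the converse (hiding the difference $\mathrmbfit{I}_{\mathcal{M}}(\varphi)\setminus\mathrmbfit{I}_{\mathcal{M}}(\psi)$ outside the image of $\tau$) is sound and goes beyond the paper, which leaves that clause unargued.
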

\begin{proof}
Extent is the inverse image of interpretation:
$\mathrmbfit{ext}_{\widehat{\mathcal{E}}}(\varphi)=\tau^{-1}(\mathrmbfit{I}_{\mathcal{M}}(\varphi))$
for any formula $\varphi\in\widehat{R}$ (Eqn.\ref{ext:interp:equiv}).
Since inverse image is monotonic,
$\mathrmbfit{I}_{\mathcal{M}}(\varphi){\;\subseteq\;}\mathrmbfit{I}_{\mathcal{M}}(\psi)$
\underline{implies}
$\mathrmbfit{ext}_{\widehat{\mathcal{E}}}(\varphi){\;\subseteq\;}\mathrmbfit{ext}_{\widehat{\mathcal{E}}}(\psi)$.
\end{proof}
\begin{corollary}
For a comprehensive $\mathcal{S}$-structure,
the entity extent and interpretation fiber orders are identical 
$\mathrmbfit{ord}_{\widehat{\mathcal{E}}}(I,s){\;=\;}\mathcal{M}^{\mathcal{S}}(I,s)$,
since
$\varphi{\;\leq_{\widehat{\mathcal{E}}}\;}\psi$
\underline{iff}
$\varphi{\;\leq_{\mathcal{M}}\;}\psi$.
\end{corollary}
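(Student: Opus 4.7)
The plan is to prove the two inclusions $\leq_{\widehat{\mathcal{E}}}\,\Leftrightarrow\,\leq_{\mathcal{M}}$ separately on each fiber $\widehat{R}(I,s)$. The forward direction ($\leq_{\mathcal{M}}\,\Rightarrow\,\leq_{\widehat{\mathcal{E}}}$) holds for arbitrary structures and was just established in the preceding corollary via the monotonicity of inverse image applied to $\mathrmbfit{ext}_{\widehat{\mathcal{E}}}(\varphi)=\tau^{-1}(\mathrmbfit{I}_{\mathcal{M}}(\varphi))$. So the only new content is the reverse implication, which is exactly where the comprehensive hypothesis enters.

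For the reverse direction, I would take $\varphi,\psi\in\widehat{R}(I,s)$ with $\varphi\leq_{\widehat{\mathcal{E}}}\psi$, i.e.\ $\mathrmbfit{ext}_{\widehat{\mathcal{E}}}(\varphi)\subseteq\mathrmbfit{ext}_{\widehat{\mathcal{E}}}(\psi)$, and apply the direct image ${\wp}\tau$, which is monotonic. This gives
\[
{\wp}\tau(\mathrmbfit{ext}_{\widehat{\mathcal{E}}}(\varphi))\;\subseteq\;{\wp}\tau(\mathrmbfit{ext}_{\widehat{\mathcal{E}}}(\psi)).
\]
Now invoke Def.~\ref{def:compre:struc} for both sides: in a comprehensive $\mathcal{M}$, the left expression of Eqn.~\ref{ext:interp:equiv} is an equality, so ${\wp}\tau(\mathrmbfit{ext}_{\widehat{\mathcal{E}}}(\varphi))=\mathrmbfit{I}_{\mathcal{M}}(\varphi)$ and likewise for $\psi$. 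Substituting yields $\mathrmbfit{I}_{\mathcal{M}}(\varphi)\subseteq\mathrmbfit{I}_{\mathcal{M}}(\psi)$, which is precisely $\varphi\leq_{\mathcal{M}}\psi$ by the definition of sequent satisfaction in \S\ref{sub:sub:sec:seq:sat}.

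Combining both directions, the two fiber orders coincide as subsets of $\widehat{R}(I,s)\times\widehat{R}(I,s)$, so $\mathrmbfit{ord}_{\widehat{\mathcal{E}}}(I,s)=\mathcal{M}^{\mathcal{S}}(I,s)$. There is no real obstacle here; the argument is a short chase exploiting that comprehension upgrades the inclusion ${\wp}\tau(\mathrmbfit{ext}_{\widehat{\mathcal{E}}}(\varphi))\subseteq\mathrmbfit{I}_{\mathcal{M}}(\varphi)$ of the general case to an equality, which is exactly what is needed to invert the monotonicity argument from the previous corollary. The only mild subtlety worth flagging is that this equivalence is fiberwise — it lives inside a fixed signature ${\langle{I,s}\rangle}$, because both orders are defined there — so nothing needs to be said about flow operators.
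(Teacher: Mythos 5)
Your proof is correct and follows essentially the same route as the paper, which simply defers to the footnote of Def.~\ref{def:compre:struc} asserting that for comprehensive structures extent order and interpretation order coincide. Your write-up merely fills in the detail behind that footnote: one direction from the preceding corollary (inverse-image monotonicity applied to $\mathrmbfit{ext}_{\widehat{\mathcal{E}}}(\varphi)=\tau^{-1}(\mathrmbfit{I}_{\mathcal{M}}(\varphi))$), the other from direct-image monotonicity together with the comprehension equality ${\wp}\tau(\mathrmbfit{ext}_{\widehat{\mathcal{E}}}(\varphi))=\mathrmbfit{I}_{\mathcal{M}}(\varphi)$.
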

\begin{proof}
See footnote to Def.~\ref{def:compre:struc}.
%Interpretation is the inverse image of extent:
%${\wp}\tau(\mathrmbfit{ext}_{\widehat{\mathcal{E}}}(\varphi)){\;=\;}\mathrmbfit{I}_{\mathcal{M}}(\varphi)$
%for any formula $\varphi\in\widehat{R}$ (Def.~\ref{def:compre:struc}).
%Since direct image is monotonic,
%$\mathrmbfit{ext}_{\widehat{\mathcal{E}}}(\varphi){\;\subseteq\;}\mathrmbfit{ext}_{\widehat{\mathcal{E}}}(\psi)$
%\underline{implies}
%$\mathrmbfit{I}_{\mathcal{M}}(\varphi){\;\subseteq\;}\mathrmbfit{I}_{\mathcal{M}}(\psi)$.
\end{proof}
\begin{proposition}
Formal 
%existential/universal 
quantification and substitution are monotonic.
\end{proposition}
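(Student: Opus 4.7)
The plan is to unfold the definition of the satisfaction order $\leq_{\mathcal{M}}$ into interpretation inclusion and then translate each formal flow operator into its semantic counterpart via Table~\ref{tbl:fml:int}, reducing the claim to the already noted monotonicity of the semantic operators $\exists_{h}$, $\forall_{h}$, and $h^{-1}$ (Table~\ref{tbl:sem:flow} together with its intuitive explanation).

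Concretely, fix a signature morphism ${\langle{I',s'}\rangle}\xrightarrow{h}{\langle{I,s}\rangle}$ and let $\varphi,\psi\in\widehat{R}(I,s)$ satisfy $\varphi{\;\leq_{\mathcal{M}}\;}\psi$, i.e.\ $\mathrmbfit{I}_{\mathcal{M}}(\varphi)\subseteq\mathrmbfit{I}_{\mathcal{M}}(\psi)$. For existential quantification, Table~\ref{tbl:fml:int} gives $\mathrmbfit{I}_{\mathcal{M}}({\scriptstyle\sum}_{h}(\varphi))=\exists_{h}(\mathrmbfit{I}_{\mathcal{M}}(\varphi))$, and similarly for $\psi$; since $\exists_{h}={\wp}\mathrmbfit{tup}_{h}$ is a direct image map and hence monotone, the inclusion is preserved, giving ${\scriptstyle\sum}_{h}(\varphi){\;\leq_{\mathcal{M}}\;}{\scriptstyle\sum}_{h}(\psi)$. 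The universal case is identical, using that $\forall_{h}$ is monotone as the right adjoint of $h^{-1}$. For substitution, take $\varphi',\psi'\in\widehat{R}(I',s')$ with $\varphi'{\;\leq_{\mathcal{M}}\;}\psi'$; since $\mathrmbfit{I}_{\mathcal{M}}({h}^{\ast}(\varphi'))={h}^{-1}(\mathrmbfit{I}_{\mathcal{M}}(\varphi'))$ and inverse image is monotone, we obtain ${h}^{\ast}(\varphi'){\;\leq_{\mathcal{M}}\;}{h}^{\ast}(\psi')$.

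There is no substantive obstacle: the proof is a three-line induction-free calculation that piggybacks on the formal/semantic reflection of Table~\ref{tbl:fml-sem:refl}. The only mild care needed is to distinguish whether one argues pointwise in interpretations (cleanest) or directly from the order $\leq_{\mathcal{M}}$; since $\leq_{\mathcal{M}}$ is by definition $\subseteq$ on interpretations, these coincide. If desired, one may package the result by noting that $\mathrmbfit{I}_{\mathcal{M}}$ is a morphism of preorders in each fiber and that the squares of Table~\ref{tbl:fml-sem:refl} then transport monotonicity of $(\exists_{h},h^{-1},\forall_{h})$ to monotonicity of $({\scriptstyle\sum}_{h},{h}^{\ast},{\scriptstyle\prod}_{h})$.
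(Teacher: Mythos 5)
Your proof is correct and follows exactly the paper's own route: reduce $\leq_{\mathcal{M}}$ to inclusion of interpretations via Tbl.~\ref{tbl:fml:int} and invoke the monotonicity of $\exists_{h}$, $\forall_{h}$, ${h}^{-1}$. The paper writes out only the existential case and leaves the other two implicit, whereas you state all three; otherwise the arguments coincide.
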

\begin{proof}
\begin{sloppypar}
The formal operators 
${\scriptstyle\sum}_{h}$, ${h}^{\ast}$ and ${\scriptstyle\prod}_{h}$
are monotonic,
since the semantic operators 
%$\tau^{-1}$, $\exists_{h}$ and ${\wp}\tau$ 
$\exists_{h}$, ${h}^{{\scriptscriptstyle-}1}$ and $\forall_{{h}}$
are monotonic.
We show the proof for formal existential quantification. 
{\emph{If ${\langle{I',s'}\rangle}\xrightarrow{h}{\langle{I,s}\rangle}$ is a signature morphism,
then the existential operator 
$\mathcal{M}^{\mathcal{S}}(I',s')
\xleftarrow{{\scriptscriptstyle\sum}_{h}}
\mathcal{M}^{\mathcal{S}}(I,s)$
is monotonic:}}
%For any formula
%$\varphi{\,\in\,}\widehat{R}(I,s)$,
%Since
%$\mathrmbfit{I}_{\mathcal{M}}({\scriptstyle\sum}_{h}(\varphi))
%= {\exists}_{h}(\mathrmbfit{I}_{\mathcal{M}}(\varphi))$
%$\mathrmbfit{ext}_{\widehat{\mathcal{E}}}({\scriptstyle\sum}_{h}(\varphi))
%= \tau^{-1}(\mathrmbfit{I}_{\mathcal{M}}({\scriptstyle\sum}_{h}(\varphi)))
%= \tau^{-1}({\exists}_{h}(\mathrmbfit{I}_{\mathcal{M}}(\varphi)))
%= \tau^{-1}({\exists}_{h}({\wp}\tau(\mathrmbfit{ext}_{\widehat{\mathcal{E}}}(\varphi))))$.
%Hence,
$\varphi{\;\leq_{\mathcal{M}}\;}\psi$
\underline{iff}
$\mathrmbfit{I}_{\mathcal{M}}(\varphi){\;\subseteq\;}\mathrmbfit{I}_{\mathcal{M}}(\psi)$
\underline{implies}
$\mathrmbfit{I}_{\mathcal{M}}({\scriptstyle\sum}_{h}(\varphi))
={\exists}_{h}(\mathrmbfit{I}_{\mathcal{M}}(\varphi))
\subseteq
{\exists}_{h}(\mathrmbfit{I}_{\mathcal{M}}(\psi))
=\mathrmbfit{I}_{\mathcal{M}}({\scriptstyle\sum}_{h}(\psi))$
\underline{iff}
${\scriptstyle\sum}_{h}(\varphi){\;\leq_{\mathcal{M}}\;}{\scriptstyle\sum}_{h}(\psi)$
for any two formulas $\varphi,\psi\in\widehat{R}(I,s)$.
\hfill\rule{5pt}{5pt}
\end{sloppypar}
\end{proof}

%%%%%%%%%%%%%%%%%%%%%%%%%%%%%%%%%%%%%%%%%%%%%%%%%%%%%%%%%%%%%%%%%%%%%%%%%%%%%%%%%%%%%%%%%%%%%%%%%%%%
%\newpage
\subsubsection{Constraint Satisfaction.}\label{sub:sub:sec:constr:sat} 
%%%%%%%%%%%%%%%%%%%%%%%%%%%%%%%%%%%%%%%%%%%%%%%%%%%%%%%%%%%%%%%%%%%%%%%%%%%%%%%%%%%%%%%%%%%%%%%%%%%%

\begin{sloppypar}
An $\mathcal{S}$-structure $\mathcal{M}\in\mathrmbf{Struc}(\mathcal{S})$
\emph{satisfies} 
an $\mathcal{S}$-constraint $\varphi'{\;\xrightarrow{h}\;}\varphi$
%\newline
when $\mathcal{M}$ satisfies the sequent ${\scriptstyle\sum}_{h}(\varphi){\;\vdash\;}\varphi'$
\underline{iff} 
$\varphi'{\;\geq_{\mathcal{M}}\;}{\scriptstyle\sum}_{h}(\varphi)$
\underline{iff} 
$\mathrmbfit{I}_{\mathcal{M}}(\varphi'){\;\supseteq\;}
\mathrmbfit{I}_{\mathcal{M}}({\scriptstyle\sum}_{h}(\varphi))
={\exists}_{h}(\mathrmbfit{I}_{\mathcal{M}}(\varphi))$;
equivalently,
\footnote{See the formal/semantics reflection 
discussed in Sec.~\ref{sub:sub:sec:fml:interp}
and illustrated in Tbl.~\ref{tbl:fml-sem:refl}.}
when
$\mathcal{M}$ satisfies the sequent 
$\varphi{\;\vdash\;}{h}^{\ast}(\varphi')$
\underline{iff} 
${h}^{\ast}(\varphi'){\;\geq_{\mathcal{M}}\;}\varphi$
\underline{iff} 
${h}^{-1}(\mathrmbfit{I}_{\mathcal{M}}(\varphi'))
=\mathrmbfit{I}_{\mathcal{M}}({h}^{\ast}(\varphi'))
{\;\supseteq\;}\mathrmbfit{I}_{\mathcal{M}}(\varphi)$.
%
%\[\mbox{\footnotesize{$
%{h}^{-1}(\mathrmbfit{I}_{\mathcal{M}}(\varphi')){\;\supseteq\;}\mathrmbfit{I}_{\mathcal{M}}(\varphi)
%\;\;\;\text{\underline{iff}}\;\;\; 
%\mathrmbfit{I}_{\mathcal{M}}(\varphi'){\;\supseteq\;}{\exists}_{h}(\mathrmbfit{I}_{\mathcal{M}}(\varphi))
%$.}\normalsize}\]
%
Satisfaction is symbolized by
$\mathcal{M}{\;\models_{\mathcal{S}}\;}(\varphi'{\;\xrightarrow{h}\;}\varphi)$.
Constraint satisfaction can be expressed in terms of implication as
$\top{\;\leq_{\widehat{\mathcal{E}}}\;}({\scriptstyle\sum}_{h}(\varphi){\,\rightarrowtriangle\,}\varphi')$;
equivalently,
$\top{\;\leq_{\widehat{\mathcal{E}}}\;}(\varphi{\,\rightarrowtriangle\,}{h}^{\ast}(\varphi'))$.
Thus, 
satisfaction of constraints is equivalent to satisfaction of formulas.
\end{sloppypar}
%\newpage

%
\begin{lemma}\label{lem:nat:cxt}
A structure $\mathcal{M}$
determines a mathematical context
$\mathcal{M}^{\mathcal{S}}{\;\sqsubseteq\;}\mathrmbf{Cons}(\mathcal{S})$,
called the \emph{conceptual intent} of $\mathcal{M}$,
whose objects are $\mathcal{S}$-formulas and
whose morphisms are $\mathcal{S}$-constraints satisfied by $\mathcal{M}$.
The ${\langle{I,s}\rangle}^{\text{th}}$ fiber is the order
${\mathcal{M}^{\mathcal{S}}(I,s)}^{\text{op}}
={\langle{\widehat{R}(I,s),\geq_{\mathcal{M}}}\rangle}$.
\footnote{The satisfaction relation 
corresponds to the ``truth classification'' in Barwise and Seligman~\cite{barwise:seligman:97},
where the conceptual intent $\mathcal{M}^{\mathcal{S}}$
corresponds to the ``theory of $\mathcal{M}$''.}
\end{lemma}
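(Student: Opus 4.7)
The plan is to show that the satisfied $\mathcal{S}$-constraints form a subcontext of $\mathrmbf{Cons}(\mathcal{S})$ by verifying closure under identities and composition, and then to identify each fiber as the opposite of the satisfaction order $\mathcal{M}^{\mathcal{S}}(I,s)$. All work reduces to monotonicity and functoriality of the semantic existential quantifier $\exists_{(-)}$ on $\mathrmbf{Rel}_{\mathcal{A}}$, together with the defining equivalence $\mathcal{M} \models_{\mathcal{S}} (\varphi' \xrightarrow{h} \varphi)$ iff $\mathrmbfit{I}_{\mathcal{M}}(\varphi') \supseteq \exists_{h}(\mathrmbfit{I}_{\mathcal{M}}(\varphi))$ from \S\ref{sub:sub:sec:constr:sat}.

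First, I would handle identities: for any formula $\varphi$, the identity constraint $\varphi \xrightarrow{1} \varphi$ is satisfied since $\exists_{1}$ is the identity on $\mathrmbf{Rel}_{\mathcal{A}}(I,s)$, so the required inclusion $\mathrmbfit{I}_{\mathcal{M}}(\varphi) \supseteq \mathrmbfit{I}_{\mathcal{M}}(\varphi)$ is trivial. Next, for composition, given satisfied constraints $\varphi'' \xrightarrow{h'} \varphi'$ and $\varphi' \xrightarrow{h} \varphi$ with composite $\varphi'' \xrightarrow{h'\cdot h} \varphi$, I would use functoriality $\exists_{h'\cdot h} = \exists_{h'} \circ \exists_{h}$ (which follows from ${\wp}\mathrmbfit{tup}$ being a passage on $\mathrmbf{List}(X)$) together with monotonicity of $\exists_{h'}$: chaining the hypotheses gives
\[
\exists_{h'\cdot h}(\mathrmbfit{I}_{\mathcal{M}}(\varphi))
= \exists_{h'}(\exists_{h}(\mathrmbfit{I}_{\mathcal{M}}(\varphi)))
\subseteq \exists_{h'}(\mathrmbfit{I}_{\mathcal{M}}(\varphi'))
\subseteq \mathrmbfit{I}_{\mathcal{M}}(\varphi''),
\]
so the composite is satisfied. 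Thus the satisfied constraints form a subcontext $\mathcal{M}^{\mathcal{S}} \sqsubseteq \mathrmbf{Cons}(\mathcal{S})$.

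For the fiber identification, I would restrict to a fixed signature ${\langle I,s \rangle}$, where every constraint uses an identity signature morphism (see the footnote in \S\ref{sub:sub:sec:constr}). Since $\exists_{1}$ is the identity, the satisfaction condition for $\varphi' \xrightarrow{1} \varphi$ becomes $\mathrmbfit{I}_{\mathcal{M}}(\varphi') \supseteq \mathrmbfit{I}_{\mathcal{M}}(\varphi)$, i.e.\ $\varphi' \geq_{\mathcal{M}} \varphi$. Hence the fiber over ${\langle I,s \rangle}$ has the same objects as $\mathcal{M}^{\mathcal{S}}(I,s) = \langle \widehat{R}(I,s), \leq_{\mathcal{M}} \rangle$ but with the order reversed, yielding ${\mathcal{M}^{\mathcal{S}}(I,s)}^{\text{op}} = \langle \widehat{R}(I,s), \geq_{\mathcal{M}} \rangle$.

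The only nontrivial step is the composition closure, and even this is routine once one invokes functoriality and monotonicity of $\exists_{(-)}$; I do not anticipate genuine obstacles, since both properties are inherent to the adjoint triple $\exists_{h} \dashv h^{-1} \dashv \forall_{h}$ along tuple functions recorded in Tbl.~\ref{tbl:sem:flow}.
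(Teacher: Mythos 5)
Your proof is correct and follows essentially the same route as the paper: verify closure under identity constraints and under composition, with the composition step resting on functoriality and monotonicity of the existential operator. The only cosmetic difference is that you unfold everything to the semantic operators $\exists_{h}$ on $\mathrmbf{Rel}_{\mathcal{A}}$, whereas the paper phrases the same chain via the formal operators ${\scriptstyle\sum}_{h}$ and the previously established monotonicity of $\mathcal{M}^{\mathcal{S}}(I'',s'')\xleftarrow{{\scriptscriptstyle\sum}_{h'}}\mathcal{M}^{\mathcal{S}}(I',s')$ --- which is itself proved by passing to $\exists_{h}$, so the two arguments coincide.
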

\begin{proof}
\begin{sloppypar}
$\mathcal{M}^{\mathcal{S}}$ is closed under constraint identities and constraint composition.
\newline
{\bfseries 1:} $\mathcal{M}{\;\models_{\mathcal{S}}\;}(\varphi{\;\xrightarrow{1}\;}\varphi)$,
and
{\bfseries 2:}
if $\mathcal{M}{\;\models_{\mathcal{S}}\;}(\varphi''{\;\xrightarrow{h'}\;}\varphi')$
and $\mathcal{M}{\;\models_{\mathcal{S}}\;}(\varphi'{\;\xrightarrow{h}\;}\varphi)$,
then $\mathcal{M}{\;\models_{\mathcal{S}}\;}(\varphi''{\;\xrightarrow{h'{\,\cdot\,}h}\;}\varphi)$,
since $\varphi''{\;\geq_{\mathcal{M}}\;}{\scriptstyle\sum}_{h'}(\varphi')$
and $\varphi'{\;\geq_{\mathcal{M}}\;}{\scriptstyle\sum}_{h}(\varphi)$
implies
$\varphi''{\;\geq_{\mathcal{M}}\;}
{\scriptstyle\sum}_{h'}({\scriptstyle\sum}_{h}(\varphi))={\scriptstyle\sum}_{h'{\,\cdot\,}h}(\varphi)$
using the monotonicity of the existential operator
$\mathcal{M}^{\mathcal{S}}(I'',s'')
%={\langle{\widehat{R}(I'',s''),\leq_{\mathcal{M}}}\rangle}
\xleftarrow{{\scriptscriptstyle\sum}_{h'}}
%{\langle{\widehat{R}(I',s'),\leq_{\mathcal{M}}}\rangle}=
\mathcal{M}^{\mathcal{S}}(I',s')$.
%We say that
%``satisfaction is closed under constraint composition''.
\hfill\rule{5pt}{5pt}
\end{sloppypar}
\end{proof}
There is an intent(ional) order between $\mathcal{S}$-structures.
Structure $\mathcal{M}_{2}$ is more general than structure $\mathcal{M}_{1}$,
symbolically 
$\mathcal{M}_{1}{\;\leq_{\mathcal{S}}\;}\mathcal{M}_{2}$,
when any constraint satisfied by $\mathcal{M}_{2}$ is also satisfied by $\mathcal{M}_{1}$:
$\mathcal{M}_{2}{\;\models_{\mathcal{S}}\;}(\varphi'{\;\xrightarrow{h}\;}\varphi)$
implies 
$\mathcal{M}_{1}{\;\models_{\mathcal{S}}\;}(\varphi'{\;\xrightarrow{h}\;}\varphi)$;
that is,
when
$\mathcal{M}_{1}^{\mathcal{S}} \sqsupseteq \mathcal{M}_{2}^{\mathcal{S}}$.
%
%\footnote{In the sense of tabular interpretion,
%this also means that all $\mathcal{S}$-databases are ordered by query-sequent intent.}
%

\comment{% goes in the FOLE interpretation paper 
\begin{corollary}\label{lem:nat:sdsg}
A structure $\mathcal{M}$ determines a signed domain diagram 
(\S\ref{append:sub:sub:sec:rel:db:cons})
${\langle{\mathcal{M}^{\mathcal{S}}\!,\mathrmbfit{S},\mathcal{A}}\rangle}$
(with constant type domain $\mathcal{A} = {\langle{X,Y,\models_{\mathcal{A}}}\rangle}$),
where the signature passage
$\mathcal{M}^{\mathcal{S}}\!\xrightarrow{\mathrmbfit{S}}\mathrmbf{List}(X)$
maps an $\mathcal{S}$-constraint 
$\varphi'{\,\xrightarrow{h\,}\,}\varphi$
to its underlying signature morphism
$\widehat{\sigma}(\varphi')={\langle{I',s'}\rangle}\xrightarrow{h}{\langle{I,s}\rangle}=\widehat{\sigma}(\varphi)$.
\footnote{Recall that the type domain $\mathcal{A}$ determines the tuple passage
$\mathrmbf{List}(X)^{\text{op}}\xrightarrow{\mathrmbfit{tup}_{\mathcal{A}}}\mathrmbf{Set}$.
Hence,
the composite
$(\mathcal{M}^{\mathcal{S}})^{\text{op}}\xrightarrow{\mathrmbfit{S}^{\text{op}}{\!\circ\,}\mathrmbfit{tup}_{\mathcal{A}}}\mathrmbf{Set}$
maps a constraint
$\varphi'{\,\xrightarrow{h\,}\,}\varphi$
to the tuple function
$\mathrmbfit{tup}_{\mathcal{A}}(I',s')\xleftarrow{\mathrmbfit{tup}_{\mathcal{A}}(h)}\mathrmbfit{tup}_{\mathcal{A}}(I,s)$.}
\end{corollary}
}% goes in the FOLE interpretation paper 
%

%%%%%%%%%%%%%%%%%%%%%%%%%%%%%%%%%%%%%%%%%%%%%%%%%%%%%%%%%%%%%%%%%%%%%%%%%%%%%%%%%%%%%%%%%%%%%%%%%%%%
%%%%%%%%%%%%%%%%%%%%%%%%%%%%%%%%%%%%%%%%%%%%%%%%%%%%%%%%%%%%%%%%%%%%%%%%%%%%%%%%%%%%%%%%%%%%%%%%%%%%
%\newpage
%\subsection{Institutional Aspect.}\label{sub:sec:inst:asp} 
%%%%%%%%%%%%%%%%%%%%%%%%%%%%%%%%%%%%%%%%%%%%%%%%%%%%%%%%%%%%%%%%%%%%%%%%%%%%%%%%%%%%%%%%%%%%%%%%%%%%
%%%%%%%%%%%%%%%%%%%%%%%%%%%%%%%%%%%%%%%%%%%%%%%%%%%%%%%%%%%%%%%%%%%%%%%%%%%%%%%%%%%%%%%%%%%%%%%%%%%%

%%%%%%%%%%%%%%%%%%%%%%%%%%%%%%%%%%%%%%%%%%%%%%%%%%%%%%%%%%%%%%%%%%%%%%%%%%%%%%%%%%%%%%%%%%%%%%%%%%%%
\newpage
\subsubsection{Institutional Aspect.}\label{sub:sub:sec:inst:asp}
%%%%%%%%%%%%%%%%%%%%%%%%%%%%%%%%%%%%%%%%%%%%%%%%%%%%%%%%%%%%%%%%%%%%%%%%%%%%%%%%%%%%%%%%%%%%%%%%%%%%

%
For any schema $\mathcal{S}$,
the satisfaction classification
$\mathrmbf{Truth}(\mathcal{S})
= {\langle{\mathrmbf{Struc}(\mathcal{S}),\mathrmbf{Cons}(\mathcal{S}),\models_{\mathcal{S}}}\rangle}$
has $\mathcal{S}$-constraints $(\varphi'\xrightarrow{h}\varphi)\in\mathrmbf{Cons}(\mathcal{S})$ as types,
$\mathcal{S}$-structures $\mathcal{M}\in\mathrmbf{Struc}(\mathcal{S})$ as instances,
and satisfaction as the classification relation
$\mathcal{M}{\;\models_{\mathcal{S}}\;}(\varphi'\xrightarrow{h}\varphi)$.
\footnote{The satisfaction classification of the schema $\mathcal{S}$ 
corresponds to the ``truth classification'' of a first-order language 
in Barwise and Seligman~\cite{barwise:seligman:97}.}
In the propositions below,
we give the precise meaning of ``interpretation in first-order logic''
(Barwise and Seligman~\cite{barwise:seligman:97})
in terms of an infomorphism 
between truth classifications
\[\mbox{\footnotesize{$
{{\langle{\mathrmbfit{cons}_{{\langle{r,f}\rangle}},\mathrmbfit{struc}^{\curlywedge}_{{\langle{r,f}\rangle}}}}\rangle}:
\mathrmbf{Truth}(\mathcal{S}_{2})
\rightleftharpoons
\mathrmbf{Truth}(\mathcal{S}_{1})
$.}\normalsize}\]
\begin{proposition}\label{prop:ins}
The triple ${\langle{\mathrmbf{Sch},\mathrmbfit{cons},\mathrmbfit{struc}}\rangle}$ 
forms an institution (Goguen and Burstall \cite{goguen:burstall:92}),
where $\mathrmbf{Sch}$ is the context of schemas,
% and schema morphisms,
$\mathrmbf{Sch}{\;\xrightarrow{\mathrmbfit{cons}}\;}\mathrmbf{Cxt}$ is the indexed context of constraints
(\S\ref{sub:sub:sec:constr}), and
$\mathrmbf{Sch}^{\mathrm{op}}{\!\xrightarrow{\mathrmbfit{struc}^{\curlywedge}}\;}\mathrmbf{Cxt}$ is the indexed context of structures
%
%\footnote{The schema-indexed context
%$\mathrmbf{Sch}^{\mathrm{op}}{\!\xrightarrow{\mathrmbfit{struc}^{\curlywedge}}\;}\mathrmbf{Cxt}$
%is 
(appendix of Kent~\cite{kent:fole:era:found}).
%}
%the {\ttfamily FOLE} foundation paper (
%
\end{proposition}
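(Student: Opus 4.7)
To prove that $\bigl\langle\mathrmbf{Sch},\mathrmbfit{cons},\mathrmbfit{struc}\bigr\rangle$ is an institution, I must verify four things: that $\mathrmbf{Sch}$ is a category, that $\mathrmbfit{cons}:\mathrmbf{Sch}\to\mathrmbf{Cxt}$ is a covariant passage, that $\mathrmbfit{struc}^{\curlywedge}:\mathrmbf{Sch}^{\mathrm{op}}\to\mathrmbf{Cxt}$ is a (contravariant) passage, and finally the satisfaction condition. The first three are already in place: $\mathrmbf{Sch}$ was introduced in the {\ttfamily FOLE} foundation paper, the constraint passage $\mathrmbfit{cons}$ was built in \S\ref{sub:sub:sec:constr} (functoriality follows from the idempotence of $\mathrmbfit{fmla}$ together with the fact that $\widehat{r_1{\cdot}r_2}=\widehat{r_1}{\cdot}\widehat{r_2}$, which is immediate from the recursive clauses of Tbl.~\ref{tbl:fmla:fn}), and the structure passage $\mathrmbfit{struc}^{\curlywedge}$ is the one constructed in the foundation paper's appendix. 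So the real work is the satisfaction condition.

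The plan for the satisfaction condition is as follows. Fix a schema morphism $\mathcal{S}_{2}\stackrel{{\langle{r,f}\rangle}}{\Longrightarrow}\mathcal{S}_{1}$, an $\mathcal{S}_{2}$-constraint $c_{2}=(\varphi_{2}'\xrightarrow{h}\varphi_{2})$, and an $\mathcal{S}_{1}$-structure $\mathcal{M}_{1}$. Write $\mathcal{M}_{2}=\mathrmbfit{struc}^{\curlywedge}_{{\langle{r,f}\rangle}}(\mathcal{M}_{1})$ and $\widehat{c}_{2}=\mathrmbfit{cons}_{{\langle{r,f}\rangle}}(c_{2})=\bigl(\widehat{r}(\varphi_{2}')\xrightarrow{h}\widehat{r}(\varphi_{2})\bigr)$. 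I must show
\[
\mathcal{M}_{1}\models_{\mathcal{S}_{1}}\widehat{c}_{2}
\quad\text{iff}\quad
\mathcal{M}_{2}\models_{\mathcal{S}_{2}}c_{2}.
\]
Unfolding the definition of constraint satisfaction from \S\ref{sub:sub:sec:constr:sat}, both sides reduce to an inclusion between formula interpretations after applying ${h}^{\ast}$. So the satisfaction condition will drop out of the single identity
\[
\mathrmbfit{I}_{\mathcal{M}_{1}}\bigl(\widehat{r}(\varphi_{2})\bigr)
\;=\;
\bigl(\mathrmbfit{tup}_{f}\bigr)^{-1}\mathrmbfit{I}_{\mathcal{M}_{2}}(\varphi_{2})
\]
(or the corresponding equality of subsets of $\mathrmbfit{tup}_{\mathcal{A}_{1}}({\scriptstyle\sum}_{f}(I_{2},s_{2}))$), valid for every $\varphi_{2}\in\widehat{R}_{2}$.

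The proof of this identity is by structural induction on $\varphi_{2}$, following the recursion of Tbl.~\ref{tbl:fmla:fn} and Tbl.~\ref{tbl:fml:int}. The base case (entity types) is exactly the compatibility of the structure passage with entity-type interpretation, which is established in the foundation paper. The fibred boolean cases (meet, join, negation, implication, difference, top, bottom) are then immediate, because both $\widehat{r}$ and $\mathrmbfit{I}_{\mathcal{M}}$ commute with the corresponding boolean operations and inverse image commutes with all of them. The flow cases (${\scriptstyle\sum}_{h}$, ${\scriptstyle\prod}_{h}$, ${h}^{\ast}$) require the Beck--Chevalley-style compatibility between $\exists_{h},\forall_{h},h^{-1}$ and the substitution ${\mathrmbfit{tup}_{f}}^{-1}$ induced by the typed-domain morphism part of ${\langle{r,f}\rangle}$; this is exactly the formal/semantic reflection of Tbl.~\ref{tbl:fml-sem:refl}, transported along the type-domain morphism.

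The main obstacle is the flow step: one must check that the semantic quantifier and substitution operators, defined over the two different tuple extents $\mathrmbfit{tup}_{\mathcal{A}_{1}}$ and $\mathrmbfit{tup}_{\mathcal{A}_{2}}$, commute with the pullback along ${\mathrmbfit{tup}_{f}}$ induced by the sort-map part of the schema morphism. Once this Beck--Chevalley square is in hand (it follows from $\sigma_{2}\cdot{\scriptstyle\sum}_{f}=r\cdot\sigma_{1}$ and the fact that tuple sets are products indexed by arities), the induction closes cleanly, the displayed identity on interpretations holds for all formulas, and both directions of the satisfaction biconditional fall out simultaneously, completing the verification.
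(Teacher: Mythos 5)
Your proposal is sound in outline, but it takes a genuinely different route from the paper for the simple reason that the paper gives no argument at all here: its ``proof'' of Prop.~\ref{prop:ins} is a citation to the earlier paper ``The First-order Logical Environment'' (Kent~\cite{kent:iccs2013}), where the institution verification is actually carried out. What you have written is, in effect, a reconstruction of that delegated argument: reduce the satisfaction condition to a single commutation identity between formula interpretation and translation, $\mathrmbfit{I}_{\mathcal{M}_{1}}(\widehat{r}(\varphi_{2}))$ versus $\mathrmbfit{I}_{\mathcal{M}_{2}}(\varphi_{2})$ for the reduct $\mathcal{M}_{2}=\mathrmbfit{struc}^{\curlywedge}_{{\langle{r,f}\rangle}}(\mathcal{M}_{1})$, and prove it by structural induction following Tbl.~\ref{tbl:fmla:fn} and Tbl.~\ref{tbl:fml:int}. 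That is the standard and correct strategy, and since constraint satisfaction (\S\ref{sub:sub:sec:constr:sat}) is defined purely in terms of $\mathrmbfit{I}_{\mathcal{M}}$, the biconditional does fall out once the identity is established. One point deserves more care than your sketch gives it: whether the right-hand side of your key identity really needs the pullback $(\mathrmbfit{tup}_{f})^{-1}$ depends on how the foundation paper's appendix defines the reduct's type domain. A schema morphism carries only ${\langle{r,f}\rangle}$, so the reduct keeps the universe (and hence $Y$) fixed and transports the type domain along $f$; in that case $\mathrmbfit{tup}_{\mathcal{A}_{2}}(I_{2},s_{2})=\mathrmbfit{tup}_{\mathcal{A}_{1}}({\scriptstyle\sum}_{f}(I_{2},s_{2}))$ (exactly as used in the ``all'' step of Lem.~\ref{lem:comp:fixdata:fmla}), your Beck--Chevalley square degenerates to an identity, and the flow cases close trivially. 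Your hedged formulation covers this, so there is no gap, but stating the identity in the degenerate form would make the induction cleaner and would match the conventions of this paper. The trade-off between the two approaches is the obvious one: the paper's citation keeps this exposition short at the cost of self-containedness, while your argument makes the superstructure paper logically closed but duplicates material already published.
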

\begin{proof}
See the paper 
``The First-order Logical Environment'' 
(Kent~\cite{kent:iccs2013}).
\hfill\rule{5pt}{5pt}
\end{proof}
In an institution
``satisfaction is invariant under change of notation'':
for any schema morphism
$\mathcal{S}_{2}\xRightarrow{\;{\langle{r,f}\rangle}\;}\mathcal{S}_{1}$,
the following satisfaction condition holds:
{\footnotesize{
\begin{equation}\label{eqn:ins}
\mathrmbfit{struc}^{\curlywedge}_{{\langle{r,f}\rangle}}(\mathcal{M}_{1})
{\;\;\models_{\mathcal{S}_{2}}\;}
(\varphi_{2}'\xrightarrow{h}\varphi_{2})
\;\;\;\;\;\text{\underline{iff}}\;\;\;\;\;
\mathcal{M}_{1}
\models_{\mathcal{S}_{1}}
%(\widehat{r}(\varphi_{2}')\xrightarrow{h}\widehat{r}(\varphi_{2}))=
\mathrmbfit{cons}_{{\langle{r,f}\rangle}}(\varphi_{2}'{\;\vdash\;}\varphi_{2}).
\end{equation}
}\normalsize}
Equivalently, (see \S\ref{sub:sub:sec:specs} for the definition of specification flow) 
%
%$\[\mbox{\footnotesize{$
{\footnotesize{
\begin{equation}\label{eqn:ins:abs}
\mathrmbfit{struc}^{\curlywedge}_{{\langle{r,f}\rangle}}(\mathcal{M}_{1})^{\mathcal{S}_{2}}=
\overleftarrow{\mathrmbfit{spec}}_{{\langle{r,f}\rangle}}(\mathcal{M}_{1}^{\mathcal{S}_{1}}),
\end{equation}
}\normalsize}
%
%$,}\normalsize}\]
%
the intent of the structure image is 
the specification image of the intent. 
\comment{
Hence,
any schema morphism 
$\mathcal{S}_{2}\stackrel{{\langle{r,f}\rangle}}{\Longrightarrow}\mathcal{S}_{1}$
defines an \emph{institution} infomorphism
\[\mbox{\footnotesize{$
\overset{\textstyle\mathrmbfit{ins}(\mathcal{S}_{2})}
{\overbrace
{\langle{\mathrmbfit{cons}(\mathcal{S}_{2}),\mathrmbfit{struc}(\mathcal{S}_{2}),\models_{\mathcal{S}_{2}}}\rangle}} 
\;\;\xrightleftharpoons{{\langle{\mathrmbfit{cons}_{{\langle{r,f}\rangle}},\mathrmbfit{struc}_{{\langle{r,f}\rangle}}}\rangle}}\;\; 
\overset{\textstyle\mathrmbfit{ins}(\mathcal{S}_{1})}
{\overbrace
{\langle{\mathrmbfit{cons}(\mathcal{S}_{1}),\mathrmbfit{struc}(\mathcal{S}_{1}),\models_{\mathcal{S}_{1}}}\rangle}} 
$.}\normalsize}\]
}

\begin{proposition}\label{prop:log:env}
The institution ${\langle{\mathrmbf{Sch},\mathrmbfit{cons},\mathrmbfit{struc}}\rangle}$ is a logical environment.
\end{proposition}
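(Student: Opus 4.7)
Proposition~\ref{prop:ins} already establishes that ${\langle{\mathrmbf{Sch},\mathrmbfit{cons},\mathrmbfit{struc}}\rangle}$ is an institution, so the satisfaction condition~(\ref{eqn:ins}) holds along every schema morphism. The remaining task is to verify the additional ``satisfaction respects structure linkage'' requirement that upgrades an institution to a logical environment in the sense of \cite{kent:iccs2013}. Concretely, this requires that for any structure morphism $\mathcal{M}_{2} \xrightleftharpoons{{\langle{r,k,f,g}\rangle}} \mathcal{M}_{1}$, the assignment $\mathcal{M} \mapsto \mathcal{M}^{\mathcal{S}}$ from structures to their conceptual intents (Lem.~\ref{lem:nat:cxt}) is compatible with structure linkage in the same way that $\mathrmbfit{cons}_{{\langle{r,f}\rangle}}$ transports constraints along $\mathcal{S}_{2}\xRightarrow{{\langle{r,f}\rangle}}\mathcal{S}_{1}$.

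First, I would unpack the required condition using the intent-order discussion at the end of \S\ref{sub:sub:sec:constr:sat}: a structure morphism should induce a containment $\mathrmbfit{cons}_{{\langle{r,f}\rangle}}(\mathcal{M}_{2}^{\mathcal{S}_{2}}) \sqsubseteq \mathcal{M}_{1}^{\mathcal{S}_{1}}$ between conceptual intents, transported along the underlying schema morphism. Unfolded, this is the implication: whenever $\mathcal{M}_{2}{\;\models_{\mathcal{S}_{2}}\;}(\varphi_{2}'\xrightarrow{h}\varphi_{2})$, the transported constraint $\widehat{r}(\varphi_{2}')\xrightarrow{h}\widehat{r}(\varphi_{2}) = \mathrmbfit{cons}_{{\langle{r,f}\rangle}}(\varphi_{2}'\xrightarrow{h}\varphi_{2})$ is satisfied by $\mathcal{M}_{1}$.

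Second, the technical input is already in hand. Satisfaction of $\varphi_{2}'\xrightarrow{h}\varphi_{2}$ by $\mathcal{M}_{2}$ unfolds to $\mathrmbfit{I}_{\mathcal{M}_{2}}(\varphi_{2}') \supseteq \exists_{h}(\mathrmbfit{I}_{\mathcal{M}_{2}}(\varphi_{2}))$. I would apply Lem.~\ref{lem:comp:fixdata:fmla} to both sides, giving $\mathrmbfit{I}_{\mathcal{M}_{1}}(\widehat{r}(\varphi_{2}')) \subseteq \mathrmbfit{I}_{\mathcal{M}_{2}}(\varphi_{2}')$ and $\mathrmbfit{I}_{\mathcal{M}_{1}}(\widehat{r}(\varphi_{2})) \subseteq \mathrmbfit{I}_{\mathcal{M}_{2}}(\varphi_{2})$, and then invoke monotonicity of $\exists_{h}$ together with the formula-function recursion of Tbl.~\ref{tbl:fmla:fn} to conclude the required inequality for the transported constraint in $\mathcal{M}_{1}$. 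For the equality (bi-implication) variant, I would instead use Lem.~\ref{lem:all:fmla} under its surjectivity hypothesis on the key function, which realizes the ``linkage'' direction that the logical-environment axiom demands in full strength.

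Third, I would package the result as the structure-linkage analogue of~(\ref{eqn:ins:abs}), then check functoriality: identity structure morphisms act trivially, and composition of two structure morphisms composes the intent inclusions, both of which follow by reassembling the induction of Lem.~\ref{lem:comp:fixdata:fmla}. The main obstacle I expect is bookkeeping rather than mathematics, since the statement combines a change of notation along a schema morphism with a change of model along a structure morphism, and one must keep the quantifier-flow operators $\exists_{h}$ and $h^{-1}$ aligned through the formula recursion; but the inductive cases have already been checked in Lemmas~\ref{lem:bool:fmla}--\ref{lem:all:fmla}, so the verification reduces to assembling those cases together with the institutional property of Prop.~\ref{prop:ins} and the monotonicity entries of Tbl.~\ref{tbl:axioms}.
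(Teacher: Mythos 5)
First, note that the paper does not actually prove this proposition in-text: its ``proof'' is a citation to Kent~\cite{kent:iccs2013}, so you are attempting an argument the text itself defers. Your attempt, however, has a genuine gap. The logical-environment axiom (Eqn.~\ref{eqn:log:env}) quantifies over \emph{all} vertical structure morphisms $\mathcal{M}_{2}\xrightleftharpoons{{\langle{k,g}\rangle}}\mathcal{M}_{1}$ in a fiber $\mathrmbf{Struc}(\mathcal{S})$, with no side conditions. The lemmas you assemble --- Lem.~\ref{lem:comp:fixdata:fmla} and Lem.~\ref{lem:all:fmla} --- carry hypotheses that are not available in this generality: both require the structures to be comprehensive and the data value set $Y$ to be fixed, and the equality version additionally requires the key function $K_{2}\xleftarrow{k}K_{1}$ to be surjective. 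At best your argument establishes the condition on the restricted subcontext $\mathring{\mathrmbf{Struc}}(Y)$, not for the institution ${\langle{\mathrmbf{Sch},\mathrmbfit{cons},\mathrmbfit{struc}}\rangle}$ as a whole.

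Second, even where Lem.~\ref{lem:comp:fixdata:fmla} does apply, its one-sided inclusion does not close your chain of containments. From $\mathcal{M}_{2}{\;\models\;}(\varphi_{2}'\xrightarrow{h}\varphi_{2})$ you obtain $\exists_{h}(\mathrmbfit{I}_{\mathcal{M}_{1}}(\widehat{r}(\varphi_{2})))\subseteq\exists_{h}(\mathrmbfit{I}_{\mathcal{M}_{2}}(\varphi_{2}))\subseteq\mathrmbfit{I}_{\mathcal{M}_{2}}(\varphi_{2}')$, while the lemma also gives $\mathrmbfit{I}_{\mathcal{M}_{1}}(\widehat{r}(\varphi_{2}'))\subseteq\mathrmbfit{I}_{\mathcal{M}_{2}}(\varphi_{2}')$; both sets are thereby bounded by the \emph{same superset} of the intended target $\mathrmbfit{I}_{\mathcal{M}_{1}}(\widehat{r}(\varphi_{2}'))$, which tells you nothing about their relation to each other. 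Only the equality of Lem.~\ref{lem:all:fmla} closes the implication, and that returns you to the unwarranted surjectivity hypothesis. You have also shifted the target: the containment $\mathrmbfit{cons}_{{\langle{r,f}\rangle}}(\mathcal{M}_{2}^{\mathcal{S}_{2}})\sqsubseteq\mathcal{M}_{1}^{\mathcal{S}_{1}}$ you set out to verify is the combined statement of Cor.~\ref{cor:nat:pass}, which the paper \emph{derives from} Prop.~\ref{prop:ins} together with Prop.~\ref{prop:log:env}; the new content to be proved is precisely the vertical case, i.e.\ the intent order $\mathcal{M}_{2}^{\mathcal{S}}{\;\geq_{\mathcal{S}}\;}\mathcal{M}_{1}^{\mathcal{S}}$ of Eqn.~\ref{eqn:log:env:abs}. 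A correct proof must work directly from the entity-infomorphism and universe-morphism conditions of an arbitrary vertical morphism, as in the cited source.
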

\begin{proof}
See the paper ``The First-order Logical Environment'' (Kent~\cite{kent:iccs2013}).
\hfill\rule{5pt}{5pt}
\end{proof}
A logical environment is an institution in which
``satisfaction respects structure morphisms'':
for any vertical structure morphism 
if $\mathcal{M}_{2}\xrightleftharpoons{{\langle{k,g}\rangle}}\mathcal{M}_{1}$
in the fiber context $\mathrmbf{Struc}(\mathcal{S})$ of a schema $\mathcal{S}$, 
the following satisfaction condition holds:
{\footnotesize{
\begin{equation}\label{eqn:log:env}
\mathcal{M}_{2}{\;\;\models_{\mathcal{S}_{2}}\;}(\varphi_{2}'\xrightarrow{h}\varphi_{2})
{\;\;\;\;\text{\underline{implies}}\;\;\;\;}
\mathcal{M}_{1}{\;\;\models_{\mathcal{S}_{2}}\;}(\varphi_{2}'\xrightarrow{h}\varphi_{2}).
\end{equation}
}\normalsize}
Equivalently,
we have the intent order
{\footnotesize{
\begin{equation}\label{eqn:log:env:abs}
\mathcal{M}_{2}^{\mathcal{S}}{\;\geq_{\mathcal{S}}\;}\mathcal{M}_{1}^{\mathcal{S}}
\end{equation}
}\normalsize}

\newpage

\begin{corollary}\label{cor:nat:pass}
A structure morphism
$\mathcal{M}_{2}\xrightleftharpoons{{\langle{r,k,f,g}\rangle}}\mathcal{M}_{1}$
%with  schema morphism
%$\mathcal{S}_{2}\xRightarrow{{\langle{r,f}\rangle}}\mathcal{S}_{1}$
determines an intent passage
$\mathcal{M}_{2}^{\mathcal{S}_{2}}
\xrightarrow{\;\mathrmbfit{int}_{{\langle{r,k,f,g}\rangle}}\;}
\mathcal{M}_{1}^{\mathcal{S}_{1}}
%:
%(\varphi'{\;\xrightarrow{h}\;}\varphi)
%\mapsto
%\mathrmbfit{cons}_{{\langle{r,f}\rangle}}(\varphi'{\;\xrightarrow{h}\;}\varphi)
%=(\widehat{r}(\varphi'){\;\xrightarrow{h}\;}\widehat{r}(\varphi))
$,
which is a restriction of the constraint passage
%\[\mbox{\footnotesize{
$\mathrmbf{Cons}(\mathcal{S}_{2})\xrightarrow{\mathrmbfit{cons}_{{\langle{r,f}\rangle}}}\mathrmbf{Cons}(\mathcal{S}_{1})
:(\varphi_{2}'{\;\xrightarrow{h}\;}\varphi_{2})
\mapsto
%\mathrmbfit{cons}_{{\langle{r,f}\rangle}}(\varphi'{\;\xrightarrow{h}\;}\varphi)=
(\widehat{r}(\varphi_{2}'){\;\xrightarrow{h}\;}\widehat{r}(\varphi_{2}))$.
%}\normalsize}\]
%
\end{corollary}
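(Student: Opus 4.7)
The plan is to establish the corollary by combining the institution and logical environment properties already proved in Prop.~\ref{prop:ins} and Prop.~\ref{prop:log:env}, together with the fact that $\mathrmbfit{cons}_{{\langle{r,f}\rangle}}$ is already known to be a passage (see \S\ref{sub:sub:sec:constr}). Since the proposed $\mathrmbfit{int}_{{\langle{r,k,f,g}\rangle}}$ is defined to be the restriction of $\mathrmbfit{cons}_{{\langle{r,f}\rangle}}$ to the conceptual intent subcontexts $\mathcal{M}_{2}^{\mathcal{S}_{2}}{\;\sqsubseteq\;}\mathrmbf{Cons}(\mathcal{S}_{2})$ and $\mathcal{M}_{1}^{\mathcal{S}_{1}}{\;\sqsubseteq\;}\mathrmbf{Cons}(\mathcal{S}_{1})$, functoriality is automatic once well-definedness is checked. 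Thus the entire proof reduces to the single implication that if $\mathcal{M}_{2}$ satisfies an $\mathcal{S}_{2}$-constraint $\varphi_{2}'\xrightarrow{h}\varphi_{2}$, then $\mathcal{M}_{1}$ satisfies its image $\widehat{r}(\varphi_{2}')\xrightarrow{h}\widehat{r}(\varphi_{2})$.

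The key step is to factor the given structure morphism $\mathcal{M}_{2}\xrightleftharpoons{{\langle{r,k,f,g}\rangle}}\mathcal{M}_{1}$ through its cartesian lift. Because $\mathrmbfit{struc}^{\curlywedge}$ indexes structures over schemas as in the appendix of~\cite{kent:fole:era:found}, the horizontal component is captured by the reduct $\mathrmbfit{struc}^{\curlywedge}_{{\langle{r,f}\rangle}}(\mathcal{M}_{1})\in\mathrmbf{Struc}(\mathcal{S}_{2})$ and what remains is a vertical morphism $\mathcal{M}_{2}\xrightleftharpoons{}\mathrmbfit{struc}^{\curlywedge}_{{\langle{r,f}\rangle}}(\mathcal{M}_{1})$ in the fiber $\mathrmbf{Struc}(\mathcal{S}_{2})$. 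With this decomposition in hand, I would chain two applications:
\begin{enumerate}
\item By the logical environment property (Eqn.~\ref{eqn:log:env}) applied in $\mathrmbf{Struc}(\mathcal{S}_{2})$, satisfaction of $\varphi_{2}'\xrightarrow{h}\varphi_{2}$ transfers from $\mathcal{M}_{2}$ along the vertical morphism to $\mathrmbfit{struc}^{\curlywedge}_{{\langle{r,f}\rangle}}(\mathcal{M}_{1})$.
\item By the institution satisfaction condition (Eqn.~\ref{eqn:ins}), $\mathrmbfit{struc}^{\curlywedge}_{{\langle{r,f}\rangle}}(\mathcal{M}_{1}){\;\models_{\mathcal{S}_{2}}\;}(\varphi_{2}'\xrightarrow{h}\varphi_{2})$ is equivalent to $\mathcal{M}_{1}{\;\models_{\mathcal{S}_{1}}\;}\mathrmbfit{cons}_{{\langle{r,f}\rangle}}(\varphi_{2}'\xrightarrow{h}\varphi_{2})$, which unpacks by definition (see the $\mathrmbfit{cons}$ formula in \S\ref{sub:sub:sec:constr}) to $\widehat{r}(\varphi_{2}')\xrightarrow{h}\widehat{r}(\varphi_{2})$.
\end{enumerate}
Composing the two yields the required implication, so $\mathrmbfit{cons}_{{\langle{r,f}\rangle}}$ sends morphisms of $\mathcal{M}_{2}^{\mathcal{S}_{2}}$ to morphisms of $\mathcal{M}_{1}^{\mathcal{S}_{1}}$, and on objects it acts by the formula function $\widehat{r}$, which lives inside $\widehat{R}_{1}$ and hence inside the object set of $\mathcal{M}_{1}^{\mathcal{S}_{1}}$.

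The likely obstacle is not any calculation but the bookkeeping around the fibred factorization: one must verify that $\mathcal{M}_{2}\xrightleftharpoons{{\langle{r,k,f,g}\rangle}}\mathcal{M}_{1}$ really does decompose (up to the usual cartesian-lift identification) into a vertical structure morphism over $\mathcal{S}_{2}$ composed with the horizontal reduct, and that the logical environment axiom indeed applies to this vertical part. Both facts are essentially the content of Prop.~\ref{prop:ins} and Prop.~\ref{prop:log:env}, however, so the corollary really is a bookkeeping consequence rather than new work. Preservation of identities and composition by $\mathrmbfit{int}_{{\langle{r,k,f,g}\rangle}}$ is then immediate from the corresponding property of $\mathrmbfit{cons}_{{\langle{r,f}\rangle}}$, since intent subcontexts inherit identities and composition from $\mathrmbf{Cons}(\mathcal{S}_{i})$ as noted in Lem.~\ref{lem:nat:cxt}.
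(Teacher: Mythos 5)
Your proposal is correct and takes essentially the same route as the paper's own proof: it factors the structure morphism into a vertical morphism over $\mathcal{S}_{2}$ followed by the reduct $\mathrmbfit{struc}^{\curlywedge}_{{\langle{r,f}\rangle}}(\mathcal{M}_{1})\xrightleftharpoons{{\langle{r,f}\rangle}}\mathcal{M}_{1}$, then applies the logical environment condition (Prop.~\ref{prop:log:env}) to the vertical component and the institution satisfaction condition (Prop.~\ref{prop:ins}) to the horizontal component. Your additional remarks on functoriality of the restriction, inherited from $\mathrmbfit{cons}_{{\langle{r,f}\rangle}}$ via Lem.~\ref{lem:nat:cxt}, are a harmless elaboration of what the paper leaves implicit.
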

\begin{proof}
The structure morphism
$\mathcal{M}_{2}\xrightleftharpoons{{\langle{r,k,f,g}\rangle}}\mathcal{M}_{1}$
factors as
\footnote{Proven in the appendix of Kent~\cite{kent:fole:era:found}.}
\[\mbox{\footnotesize{$
\mathcal{M}_{2}
\xrightleftharpoons{{\langle{k,g}\rangle}}\mathrmbfit{struc}^{\curlywedge}_{{\langle{r,f}\rangle}}(\mathcal{M}_{1})
\xrightleftharpoons{{\langle{r,f}\rangle}}\mathcal{M}_{1}
$.
}\normalsize}\]
Let $(\varphi_{2}'\xrightarrow{h}\varphi_{2})$ be any $\mathcal{S}_{2}$-constraint.
From Prop.~\ref{prop:log:env} we know that
\[\mbox{\footnotesize{$
\mathcal{M}_{2}{\;\;\models_{\mathcal{S}_{2}}\;}(\varphi_{2}'\xrightarrow{h}\varphi_{2})
{\;\;\;\;\text{\underline{implies}}\;\;\;\;}
\mathrmbfit{struc}^{\curlywedge}_{{\langle{r,f}\rangle}}(\mathcal{M}_{1}){\;\;\models_{\mathcal{S}_{2}}\;}(\varphi_{2}'\xrightarrow{h}\varphi_{2})
$.}\normalsize}\]
From Prop.~\ref{prop:ins} we know that
\[\mbox{\footnotesize{$
\mathrmbfit{struc}^{\curlywedge}_{{\langle{r,f}\rangle}}(\mathcal{M}_{1}){\;\;\models_{\mathcal{S}_{2}}\;}(\varphi_{2}'\xrightarrow{h}\varphi_{2})
{\;\;\;\;\text{\underline{iff}}\;\;\;\;}
\mathcal{M}_{1}\models_{\mathcal{S}_{1}}\mathrmbfit{cons}_{{\langle{r,f}\rangle}}(\varphi_{2}'{\;\vdash\;}\varphi_{2})
$.
}\normalsize}\]
\rule{5pt}{5pt}
\end{proof}
\begin{definition}\label{conc:int:cxt}
There is a conceptual intent passage 
\[\mbox{\footnotesize{$
\mathrmbf{Struc}\xrightarrow{\mathrmbfit{int}}\mathrmbf{Cxt}
:\mathcal{M}\mapsto\mathcal{M}^{\mathcal{S}}
$}\normalsize}\]
from structures to mathematical contexts.
\end{definition}

\comment{% goes in the FOLE interpretation paper 
\begin{corollary}\label{cor:nat:sdsg:mor}
A structure morphism
$\mathcal{M}_{2}\xrightleftharpoons{{\langle{r,k,f,g}\rangle}}\mathcal{M}_{1}$
determines a strict signed domain diagram morphism
(with constant type domain morphism)
% $\mathcal{A}_{2}\xrightleftharpoons{{\langle{f,g}\rangle}}\mathcal{A}_{1}$)
(\S\ref{append:sub:sec:rel:db})
\[\mbox{\footnotesize{$
{\langle{\mathcal{M}_{2}^{\mathcal{S}_{2}},\mathrmbfit{S}_{2},\mathcal{A}_{2}}\rangle}
\xrightarrow{{\langle{\mathrmbfit{int},1,f,g}\rangle}}
{\langle{\mathcal{M}_{1}^{\mathcal{S}_{1}},\mathrmbfit{S}_{1},\mathcal{A}_{1}}\rangle}
$.}\normalsize}\]
and hence defines
%A strict database morphism defines 
a ``reduct'' database passage
\[\mbox{\footnotesize{$
\mathrmbf{Db}(\mathcal{M}_{2}^{\mathcal{S}_{2}},\mathrmbfit{S}_{2},\mathcal{A}_{2})
\xleftarrow{\overleftarrow{\mathrmbfit{db}}_{{\langle{\mathrmbfit{int},1,f,g}\rangle}}}
\mathrmbf{Db}(\mathcal{M}_{1}^{\mathcal{S}_{1}},\mathrmbfit{S}_{1},\mathcal{A}_{1})
$}\normalsize}\]
\[\mbox{\scriptsize{$
{\langle{\mathrmbfit{int}^{\text{op}}{\!\circ\,}\mathrmbfit{K}_{1},
(\mathrmbfit{int}^{\text{op}}{\!\circ\,}\tau_{1}){\;\bullet\;}(\mathrmbfit{S}_{2}^{\text{op}}{\!\circ\,}\grave{\tau}_{{\langle{f,g}\rangle}})}\rangle}{\;\mapsfrom\;}{\langle{\mathrmbfit{K}_{1},\tau_{1}}\rangle}
$}\normalsize}\]
\footnote{The ``reduct'' database passage
$\overleftarrow{\mathrmbfit{db}}_{{\langle{\mathrmbfit{int},1,f,g}\rangle}} :
\mathrmbf{Db}(\mathcal{M}_{2}^{\mathcal{S}_{2}},\mathrmbfit{S}_{2},\mathcal{A}_{2})
\leftarrow\mathrmbf{Db}(\mathcal{M}_{1}^{\mathcal{S}_{1}},\mathrmbfit{S}_{1},\mathcal{A}_{1})$
corresponds to 
the structure passage
$\mathrmbfit{struc}^{\curlywedge}_{{\langle{\hat{r},f}\rangle}} : \mathrmbf{Struc}(\mathcal{S}_{2})\leftarrow\mathrmbf{Struc}(\mathcal{S}_{1})$
of \S\ref{sub:sec:struc:fbr:pass}.}
Note: $\mathrmbfit{S}_{2}^{\text{op}}{\!\circ\,}\grave{\tau}_{{\langle{f,g}\rangle}}$ is identity when the data set is fixed.
\footnote{Recall that the type domain morphism 
$\mathcal{A}_{2}\xrightleftharpoons{{\langle{f,g}\rangle}}\mathcal{A}_{1}$
determines the tuple bridge
$\mathrmbfit{tup}_{\mathcal{A}_{2}}
\xLeftarrow{\tau_{{\langle{f,g}\rangle}}}({\scriptstyle\sum}_{f})^{\mathrm{op}}
{\;\circ\;}\mathrmbfit{tup}_{\mathcal{A}_{1}}$.
Hence,
the composite bridge
$\mathrmbfit{S}_{2}^{\text{op}}{\!\circ\,}\mathrmbfit{tup}_{\mathcal{A}_{2}}
\xLeftarrow{\mathrmbfit{S}_{2}^{\text{op}}{\!\circ\,}\grave{\tau}_{{\langle{f,g}\rangle}}}
\mathrmbfit{int}^{\text{op}}{\!\circ\,}\mathrmbfit{S}_{1}^{\text{op}}{\!\circ\,}\mathrmbfit{tup}_{\mathcal{A}_{1}}$
satisfies the naturality diagram
$\mathrmbfit{tup}_{\mathcal{A}_{1}}(h){\,\cdot\,}\tau_{{\langle{f,g}\rangle}}(I_{2}',s_{2}')
=\grave{\tau}_{{\langle{f,g}\rangle}}(I_{2},s_{2}){\,\cdot\,}\mathrmbfit{tup}_{\mathcal{A}_{2}}(h)$
for any constraint $\varphi_{2}'{\,\xrightarrow{h\,}\,}\varphi_{2}$ in ${\mathcal{M}_{2}^{\mathcal{S}_{2}}}$
with underlying signature morphism
$\widehat{\sigma}_{2}(\varphi_{2}')={\langle{I_{2}',s_{2}'}\rangle}\xrightarrow{h}{\langle{I_{2},s_{2}}\rangle}=\widehat{\sigma}_{2}(\varphi_{2})$.}
\end{corollary}
\begin{center}
\begin{tabular}{c@{\hspace{70pt}}c}
{{\begin{tabular}{c}
\setlength{\unitlength}{0.6pt}
\begin{picture}(120,180)(0,0)
\put(5,160){\makebox(0,0){\footnotesize{${\mathcal{M}_{2}^{\mathcal{S}_{2}}}^{\text{op}}$}}}
\put(125,160){\makebox(0,0){\footnotesize{${\mathcal{M}_{1}^{\mathcal{S}_{1}}}^{\text{op}}$}}}
\put(0,80){\makebox(0,0){\footnotesize{$\mathrmbf{List}(X_{2})^{\mathrm{op}}$}}}
\put(124,80){\makebox(0,0){\footnotesize{$\mathrmbf{List}(X_{1})^{\mathrm{op}}$}}}
\put(60,5){\makebox(0,0){\footnotesize{$\mathrmbf{Set}$}}}
\put(65,170){\makebox(0,0){\scriptsize{$\mathrmbfit{int}^{\mathrm{op}}$}}}
\put(-2,120){\makebox(0,0)[r]{\scriptsize{$\mathrmbfit{S}_{2}^{\mathrm{op}}$}}}
\put(126,120){\makebox(0,0)[l]{\scriptsize{$\mathrmbfit{S}_{1}^{\mathrm{op}}$}}}
\put(67,90){\makebox(0,0){\scriptsize{$({\scriptstyle\sum}_{f})^{\mathrm{op}}$}}}
\put(33,38){\makebox(0,0)[r]{\scriptsize{$\mathrmbfit{tup}_{\mathcal{A}_{2}}$}}}
\put(95,38){\makebox(0,0)[l]{\scriptsize{$\mathrmbfit{tup}_{\mathcal{A}_{1}}$}}}
\put(20,160){\vector(1,0){80}}
\put(0,145){\vector(0,-1){50}}
\put(120,145){\vector(0,-1){50}}
\put(35,80){\vector(1,0){50}}
\put(9,68){\vector(3,-4){40}}
\put(111,68){\vector(-3,-4){40}}
\put(60,55){\makebox(0,0){\shortstack{\scriptsize{$\xLeftarrow{\;\grave{\tau}_{{\langle{f,g}\rangle}}\;}$}}}}
\end{picture}
\end{tabular}}}
&
{{\begin{tabular}{c}
\setlength{\unitlength}{0.5pt}
\begin{picture}(220,100)(0,-30)
\put(0,80){\makebox(0,0){\scriptsize{$\mathrmbfit{tup}_{\mathcal{A}_{2}}(I_{2}',s_{2}')$}}}
\put(0,0){\makebox(0,0){\scriptsize{$\mathrmbfit{tup}_{\mathcal{A}_{2}}(I_{2},s_{2})$}}}
\put(220,80){\makebox(0,0){\scriptsize{$\mathrmbfit{tup}_{\mathcal{A}_{1}}({\scriptstyle\sum}_{f}(I_{2}',s_{2}'))$}}}
\put(220,0){\makebox(0,0){\scriptsize{$\mathrmbfit{tup}_{\mathcal{A}_{1}}({\scriptstyle\sum}_{f}(I_{2},s_{2}))$}}}
\put(110,95){\makebox(0,0){\tiny{$\grave{\tau}_{{\langle{f,g}\rangle}}(I_{2}',s_{2}')$}}}
\put(110,65){\makebox(0,0){\tiny{${\scriptstyle\sum}_{g}$}}}
\put(110,15){\makebox(0,0){\tiny{$\grave{\tau}_{{\langle{f,g}\rangle}}(I_{2},s_{2})$}}}
\put(110,-15){\makebox(0,0){\tiny{${\scriptstyle\sum}_{g}$}}}
\put(-10,40){\makebox(0,0)[r]{\tiny{$\mathrmbfit{tup}_{\mathcal{A}_{2}}(h)$}}}
\put(230,40){\makebox(0,0)[l]{\tiny{$\mathrmbfit{tup}_{\mathcal{A}_{1}}(h)$}}}
\put(140,80){\vector(-1,0){80}}
\put(140,0){\vector(-1,0){80}}
\put(0,15){\vector(0,1){50}}
\put(220,15){\vector(0,1){50}}
\put(110,40){\makebox(0,0){\scriptsize{\textit{naturality}}}}
\put(80,-50){\makebox(0,0){\scriptsize{$\mathcal{S}_{2}\text{-constraint}\;\;
\varphi_{2}'{\,\xrightarrow{h\,}\,}\varphi_{2}$}}}
\end{picture}
\end{tabular}}}
\end{tabular}
\end{center}
}% goes in the FOLE interpretation paper 

%%%%%%%%%%%%%%%%%%%%%%%%%%%%%%%%%%%%%%%%%%%%%%%%%%%%%%%%%%%%%%%%%%%%%%%%%%%%%%%%%%%%%%%%%%%%%%%%%%%%
%%%%%%%%%%%%%%%%%%%%%%%%%%%%%%%%%%%%%%%%%%%%%%%%%%%%%%%%%%%%%%%%%%%%%%%%%%%%%%%%%%%%%%%%%%%%%%%%%%%%
%%%%%%%%%%%%%%%%%%%%%%%%%%%%%%%%%%%%%%%%%%%%%%%%%%%%%%%%%%%%%%%%%%%%%%%%%%%%%%%%%%%%%%%%%%%%%%%%%%%%
\newpage
\section{Architectural Components}\label{sub:sec:arch:comps}
%%%%%%%%%%%%%%%%%%%%%%%%%%%%%%%%%%%%%%%%%%%%%%%%%%%%%%%%%%%%%%%%%%%%%%%%%%%%%%%%%%%%%%%%%%%%%%%%%%%%
%%%%%%%%%%%%%%%%%%%%%%%%%%%%%%%%%%%%%%%%%%%%%%%%%%%%%%%%%%%%%%%%%%%%%%%%%%%%%%%%%%%%%%%%%%%%%%%%%%%%
%%%%%%%%%%%%%%%%%%%%%%%%%%%%%%%%%%%%%%%%%%%%%%%%%%%%%%%%%%%%%%%%%%%%%%%%%%%%%%%%%%%%%%%%%%%%%%%%%%%%

%%%%%%%%%%%%%%%%%%%%%%%%%%%%%%%%%%%%%%%%%%%%%%%%%%%%%%%%%%%%%%%%%%%%%%%%%%%%%%%%%%%%%%%%%%%%%%%%%%%%
%%%%%%%%%%%%%%%%%%%%%%%%%%%%%%%%%%%%%%%%%%%%%%%%%%%%%%%%%%%%%%%%%%%%%%%%%%%%%%%%%%%%%%%%%%%%%%%%%%%%
%\newpage
\subsection{Specifications.}\label{sub:sec:specs}
%%%%%%%%%%%%%%%%%%%%%%%%%%%%%%%%%%%%%%%%%%%%%%%%%%%%%%%%%%%%%%%%%%%%%%%%%%%%%%%%%%%%%%%%%%%%%%%%%%%%
%%%%%%%%%%%%%%%%%%%%%%%%%%%%%%%%%%%%%%%%%%%%%%%%%%%%%%%%%%%%%%%%%%%%%%%%%%%%%%%%%%%%%%%%%%%%%%%%%%%%

%%%%%%%%%%%%%%%%%%%%%%%%%%%%%%%%%%%%%%%%%%%%%%%%%%%%%%%%%%%%%%%%%%%%%%%%%%%%%%%%%%%%%%%%%%%%%%%%%%%%
%\newpage
\subsubsection{Specifications.}\label{sub:sub:sec:specs}
%%%%%%%%%%%%%%%%%%%%%%%%%%%%%%%%%%%%%%%%%%%%%%%%%%%%%%%%%%%%%%%%%%%%%%%%%%%%%%%%%%%%%%%%%%%%%%%%%%%%

%%%%%%%%%%%%%%%%%%%%%%%%%%%%%%%%%%%%%%%%%%%%%%%%%%%%%%%%%%%%
%\newpage
\paragraph{Consequence Relations.}
%%%%%%%%%%%%%%%%%%%%%%%%%%%%%%%%%%%%%%%%%%%%%%%%%%%%%%%%%%%%

A {\ttfamily FOLE} consequence relation
(Barwise and Seligman \cite{barwise:seligman:97}) 
is a pair
%$\mathrmbfit{\mathrmbf{T}}	= 
${\langle{\mathcal{S},\vdash}\rangle}$,
where $\mathcal{S}$ is a schema and 
${\;\vdash\;}{\;\subseteq\;}\widehat{R}{\,\times}\widehat{R}$ 
is a set of $\mathcal{S}$-sequents;
that is, a binary relation on $\mathcal{S}$-formulas.
We want each sequent in a consequence relation
to assert logical entailment between component formulas. 
%(queries, defined entity types).
%
A consequence relation
can be used to represent and express all the subtyping relationships of a data model.
%
%For example,
%in 
%the data model illustrated in 
In the example illustrated in the {\ttfamily ERA} data model of the {\ttfamily FOLE} foundation paper \cite{kent:fole:era:found},
%Fig.~\ref{fig:eg} of \S\ref{sub:sub:sec:data:model:rel},
we might have the subtyping relationships 
$\bigl(\mathtt{Manager}{\;\vdash\,}\mathtt{Employee}\bigr)$ and
$\bigl(\mathtt{Engineering}{\;\vdash\,}\mathtt{Department}\bigr)$.
%

%%%%%%%%%%%%%%%%%%%%%%%%%%%%%%%%%%%%%%%%%%%%%%%%%%%%%%%%%%%%
%\newpage
\paragraph{Specifications.}
%%%%%%%%%%%%%%%%%%%%%%%%%%%%%%%%%%%%%%%%%%%%%%%%%%%%%%%%%%%%

Consequence relations only connect formulas within fibers: 
due to the common signature requirement on sequent components,
a {\ttfamily FOLE} consequence relation ${\langle{\mathcal{S},\vdash}\rangle}$
%with ${\;\vdash\;}{\;\subseteq\;}\widehat{R}{\,\times}\widehat{R}$, 
partitions into a collection of fiber consequence relations
$\bigl\{ \vdash_{{\langle{I,s}\rangle}}{\!\subseteq\;}\widehat{R}(I,s){\,\times}\widehat{R}(I,s) 
\mid {\langle{I,s}\rangle} \in \mathrmbf{List}(X) \bigr\}$
indexed by $\mathcal{S}$-signatures.
%an $\mathcal{S}$-sequent $\varphi{\;\vdash\;}\psi$
%is between two formulas with the same signature
%$\widehat{\sigma}(\varphi) = {\langle{I,s}\rangle} = \widehat{\sigma}(\psi)$,
%and hence is an element of the fiber $\widehat{R}(I,s){\,\times}\widehat{R}(I,s)$. 
We now define a useful notion that connects formulas across fibers.

Given a schema $\mathcal{S}$,
an $\mathcal{S}$-specification is a subgraph $\mathrmbf{T}{\;\sqsubseteq\;}\mathrmbf{Cons}(\mathcal{S})$,
whose nodes are $\mathcal{S}$-formulas and whose edges are $\mathcal{S}$-constraints.
A consequence relation is a specification in which all constraints are sequents.
\comment{To be specific,
we assume that the node-set of a specification is minimal;
that is,
the node-set is the collection all source/target formulas of edges.
This is called the (resultant) node-set.
If we allowed larger node-sets,
we would get equivalent specifications.
{\fbox{{\bfseries{or}} we could regard isolated formulas as identity constraints.}}}
Let $\mathrmbf{Spec}(\mathcal{S})={\wp}\mathrmbf{Cons}(\mathcal{S})$ denote the set of all $\mathcal{S}$-specifications.
\footnote{For any graph $\mathcal{G}$,
${\wp}\mathcal{G} = {\langle{{\wp}\mathcal{G},\sqsubseteq}\rangle}$ 
denotes the power preorder of all subgraphs of $\mathcal{G}$.}
%
%A {\ttfamily FOLE} specification
%$\mathrmbfit{T} = {\langle{\mathcal{S},\mathrmbf{T}}\rangle}$
%is a schema $\mathcal{S}$ and an $\mathcal{S}$-specification
%$\mathrmbf{T}$.
Although implicit,
we usually include the schema (language) in the symbolism,
so that a {\ttfamily FOLE} specification (presentation) 
%$\mathrmbfit{\mathrmbf{T}}	= 
$\mathcal{T}={\langle{\mathcal{S},\mathrmbf{T}}\rangle}$ is an indexed notion
consisting of a schema $\mathcal{S}$ and a $\mathcal{S}$-specification $\mathrmbf{T}{\;\in\;}\mathrmbf{Spec}(\mathcal{S})$.
We can place axiomatic restrictions on specifications
(and consequence relations)
in various manners.
A  {\ttfamily FOLE} specification requires entailment to be a preorder,
satisfying reflexivity and transitivity.
It also requires satisfaction of sufficient axioms 
(Tbl.~\ref{tbl:axioms})
to described the various logical operations 
(connectives, quantifiers, etc.) 
used to build formulas in first-order logic.
%
%\mbox{}\newline\noindent{\fbox{{\bfseries Note:}
%[sequent is to consequence relation] as [constraint is to specification].}}

%%%%%%%%%%%%%%%%%%%%%%%%%%%%%%%%%%%%%%%%%%%%%%%%%%
\paragraph{Specification Satisfaction.}
%%%%%%%%%%%%%%%%%%%%%%%%%%%%%%%%%%%%%%%%%%%%%%%%%%

An $\mathcal{S}$-structure $\mathcal{M} \in \mathrmbf{Struc}(\mathcal{S})$
\emph{satisfies} (is a model of) 
an $\mathcal{S}$-specification $\mathrmbf{T}$,
symbolized
$\mathcal{M}{\;\models_{\mathcal{S}}\;}\mathrmbf{T}$,
when it satisfies every constraint in the specification:
$\mathcal{M}{\;\models_{\mathcal{S}}\;}\mathrmbf{T}$
\underline{iff}
$\mathcal{M}^{\mathcal{S}}\sqsupseteq\mathrmbf{T}$.
Hence,
the intent $\mathcal{M}^{\mathcal{S}}$ is the largest 
and most specialized
$\mathcal{S}$-specification satisfied by $\mathcal{M}$.
\footnote{$\mathcal{M}^{\mathcal{S}}$ is not just a mathematical context, but also an $\mathcal{S}$-specification.}
When specification order is defined below,
%(\S\ref{sub:sub:sec:specs}),
$\mathcal{M}_{1} \leq_{\mathcal{S}} \mathcal{M}_{2}$ (intentional order) 
is equivalent to
$\mathcal{M}_{1}^{\mathcal{S}} \leq_{\mathcal{S}} \mathcal{M}_{2}^{\mathcal{S}}$ (intent specification order).

%%%%%%%%%%%%%%%%%%%%%%%%%%%%%%%%%%%%%%%%%%%%%%%%%%%%%%%%%%%%%%%%%%%%%%%%%%%%%%%%%%%%%%%%%%%%%%%%%%%%
\newpage
\subsubsection{Entailment and Consequence.}\label{sub:sub:sec:entail:cons}
%%%%%%%%%%%%%%%%%%%%%%%%%%%%%%%%%%%%%%%%%%%%%%%%%%%%%%%%%%%%%%%%%%%%%%%%%%%%%%%%%%%%%%%%%%%%%%%%%%%%

%%%%%%%%%%%%%%%%%%%%%%%%%%%%%%%%%%%%%%%%%%%%%%%%%%%%%%%%%%%%
%\paragraph{Entailment and Consequence.}
%%%%%%%%%%%%%%%%%%%%%%%%%%%%%%%%%%%%%%%%%%%%%%%%%%%%%%%%%%%%

Let $\mathcal{S} = {\langle{R,\sigma,X}\rangle}$ be a schema.
An $\mathcal{S}$-specification $\mathrmbf{T}$ entails an $\mathcal{S}$-constraint $(\varphi'{\;\xrightarrow{h}\;}\varphi)$,
symbolized by $\mathrmbf{T}{\;\vdash_{\mathcal{S}}\;}(\varphi'{\xrightarrow{h}\,}\varphi)$,
when any model of the specification satisfies the constraint:
$\mathcal{M}{\;\models_{\mathcal{S}}\;}\mathrmbf{T}$
implies
$\mathcal{M}{\;\models_{\mathcal{S}}\;}(\varphi'{\xrightarrow{h}\,}\varphi)$
for any $\mathcal{S}$-structure $\mathcal{M}$;
that is,
when
$\mathcal{M}^{\mathcal{S}}{\,\sqsupseteq\;}\mathrmbf{T}$
implies
$\mathcal{M}^{\mathcal{S}}{\,\ni\,}(\varphi'{\xrightarrow{h}\,}\varphi)$
for any $\mathcal{S}$-structure $\mathcal{M}$.
\footnote{In particular,
the conceptual intent entails a constraint iff it satisfies the constraint:
$\mathcal{M}^{\mathcal{S}}{\;\vdash_{\mathcal{S}}\;}(\varphi'{\xrightarrow{h}\,}\varphi)$
\underline{iff}
$\mathcal{M}{\;\models_{\mathcal{S}}\;}(\varphi'{\xrightarrow{h}\,}\varphi)$.}
The graph 
\[\mbox{\footnotesize{$
\mathrmbf{T}^{\scriptstyle\bullet} 
= \Bigl\{ \varphi'{\xrightarrow{h}\,}\varphi 
\mid \mathrmbf{T}{\;\vdash_{\mathcal{S}}\;}(\varphi'{\xrightarrow{h}\,}\varphi) \Bigr\}
= \bigsqcap_{\mathcal{S}} \Bigl\{ \mathcal{M}^{\mathcal{S}} \mid 
\mathcal{M}{\;\in\;}\mathrmbf{Struc}(\mathcal{S}), \mathcal{M}^{\mathcal{S}}{\,\sqsupseteq\;}\mathrmbf{T} \Bigr\}
$}\normalsize}\]
of all constraints entailed by a specification $\mathrmbf{T}$ is called its consequence.
The consequence 
$\mathrmbf{T}^{\scriptstyle\bullet}$
is a mathematical context,
since each conceptual intent 
$\mathcal{M}^{\mathcal{S}}$
is a mathematical context.
The consequence operator $(\mbox{-})^{\scriptstyle\bullet}$
%,which is defined on specifications,
is a closure operator
on specifications:
%on $\mathrmbf{Cons}(\mathcal{S})$:
%\begin{itemize}
%\item 
(increasing) $\mathrmbf{T}\sqsubseteq\mathrmbf{T}^{\scriptstyle\bullet}$;
%\item 
(monotonic)  $\mathrmbf{T}_{1}\sqsubseteq\mathrmbf{T}_{2}$ 
implies $\mathrmbf{T}_{1}^{\scriptstyle\bullet}\sqsubseteq\mathrmbf{T}_{2}^{\scriptstyle\bullet}$; and
%\item 
(idempotent) $\mathrmbf{T}^{\scriptstyle\bullet\bullet}=\mathrmbf{T}^{\scriptstyle\bullet}$.
%\end{itemize}
%
%
%\begin{flushleft}
%{\fbox{\begin{minipage}{340pt}
Closure operators can be alternatively described as entailment relations
(Mossakowski, Diaconescu and Tarlecki~\cite{mossakowski:diaconescu:tarlecki:wlt}).
%With these definitions of entailment and consequence,
%For any $\mathcal{S}$-structure $\mathcal{M}$,
${\langle{\mathrmbf{Cons}(\mathcal{S}),\vdash_{\mathcal{S}}}\rangle}$
forms an entailment relation:
%\begin{enumerate}
%\item 
(reflexive) $\bigl\{(\varphi'{\xrightarrow{h}\,}\varphi)\bigr\}{\;\vdash_{\mathcal{S}}\;}(\varphi'{\xrightarrow{h}\,}\varphi)$
for any $\mathcal{S}$-constraint $\varphi'{\xrightarrow{h}\,}\varphi$;
%\item 
(monotone) 
$\mathrmbf{T}_{1}{\;\sqsupseteq\;}\mathrmbf{T}_{2}$ and
$\mathrmbf{T}_{2}{\;\vdash_{\mathcal{S}}\;}(\varphi'{\xrightarrow{h}\,}\varphi)$
implies
$\mathrmbf{T}_{1}{\;\vdash_{\mathcal{S}}\;}(\varphi'{\xrightarrow{h}\,}\varphi)$; and
%\item 
(transitive) 
if $\mathrmbf{T}_{1}^{\scriptstyle\bullet}\sqsupseteq\mathrmbf{T}_{2}$
and $\mathrmbf{T}_{1}{\,\sqcup\;}\mathrmbf{T}_{2}{\;\vdash_{\mathcal{S}}\;}(\varphi'{\xrightarrow{h}\,}\varphi)$,
then $\mathrmbf{T}_{1}{\;\vdash_{\mathcal{S}}\;}(\varphi'{\xrightarrow{h}\,}\varphi)$.
%\end{enumerate}
%\end{proposition}
%
%\end{minipage}}}
%\end{flushleft}

There is an intentional (concept lattice) entailment order between specifications that is implicit in satisfaction:
$\mathrmbf{T}_{1}{\;\leq_{\mathcal{S}}\;}\mathrmbf{T}_{2}$ when 
$\mathrmbf{T}_{1}^{\scriptstyle\bullet}{\;\sqsupseteq\;}\mathrmbf{T}_{2}^{\scriptstyle\bullet}$;
equivalently,
$\mathrmbf{T}_{1}^{\scriptstyle\bullet}{\;\sqsupseteq\;}\mathrmbf{T}_{2}$.
This is a specialization-generalization order;
$\mathrmbf{T}_{1}$ is more specialized than $\mathrmbf{T}_{2}$, 
and $\mathrmbf{T}_{2}$ is more generalized than $\mathrmbf{T}_{1}$.
We symbolize this preorder by 
$\mathrmbf{Spec}(\mathcal{S})={\langle{\mathrmbf{Spec}(\mathcal{S}),\leq_{\mathcal{S}}}\rangle}$.
Intersections and unions define joins and meets,
with the bottom specification being the empty join $\bot_{\mathcal{S}} = \bigcap\emptyset = \mathrmbf{Cons}(\mathcal{S})$
and the top specification being the empty meet $\top_{\mathcal{S}} = \bigcup\emptyset = \emptyset$.
\footnote{The nodes (formulas) in a specification can be identified with identity constraints.
Identity constraints added to or subtracted from a specification give an equivalent specification.
Hence,
we can assume the node-set of formulas of a specification is included in the edge-set of constraints of a specification.
With that assumption,
boolean operations need only work on the edge-set of a specification.}
%
%\mathrmbfit{fbr}(\mathcal{S})^{\mathrm{op}}= 
%Its opposite preorder is symbolize by 
%$\mathrmbfit{fbr}(\mathcal{S}) 
%= {\langle{\mathrmbfit{spec}(\mathcal{S}),\geq_{\mathcal{S}}}\rangle}$.
Any specification $\mathrmbf{T}$ is entailment equivalent to its consequence 
$\mathrmbf{T}\equiv\mathrmbf{T}^{\scriptstyle\bullet}$.
A specification $\mathrmbf{T}$ is said to be closed when it is equal to its consequence 
$\mathrmbf{T}=\mathrmbf{T}^{\scriptstyle\bullet}$.
An $\mathcal{S}$-specification $\mathrmbf{T}$ is \emph{consistent} when
some $\mathcal{S}$-structure $\mathcal{M}$ satisfies $\mathrmbf{T}$:
$\mathcal{M}{\;\models_{\mathcal{S}}\;}\mathrmbf{T}$
or
$\bot_{\mathcal{S}}{\;<_{\mathcal{S}}\;}\mathcal{M}^{\mathcal{S}}{\;\leq_{\mathcal{S}}\;}\mathrmbf{T}$.
It is inconsistent otherwise.
Hence,
an $\mathcal{S}$-specification $\mathrmbf{T}$ is inconsistent
when
$\mathrmbfit{T}^{\scriptstyle\bullet} 
= \mathrmbf{Cons}(\mathcal{S})
= \bot_{\mathcal{S}}$. 

%$\mathcal{M}{\;\models_{\mathcal{S}}\;}(\varphi'{\;\xrightarrow{h}\;}\varphi)$
%for any $\mathcal{S}$-structure $\mathcal{M}$
%and any $\mathcal{S}$-constraint $(\varphi'{\;\xrightarrow{h}\;}\varphi)$;

\comment{% boolean ops
\begin{center}
{\fbox{\fbox{\footnotesize{\begin{minipage}{320pt}
Define boolean operations on $\mathcal{S}$-specifications in $\mathrmbf{Spec}(\mathcal{S})$:
%\newline
meet (union) $\mathrmbf{T}_{1}{\,\vee_{\mathcal{S}}\,}\mathrmbf{T}_{2}$,
join (intersection) $\mathrmbf{T}_{1}{\,\wedge_{\mathcal{S}}\,}\mathrmbf{T}_{2}$,
bottom (all) $\bot_{\mathcal{S}} = \mathrmbf{Cons}(\mathcal{S})$,
negation (complement) $\neg\mathrmbf{T} = \mathrmbf{Cons}(\mathcal{S}){\,\setminus\,}\mathrmbf{T}$,
etc.
These should work hand-in-hand with the direct/inverse image operators
%
%\[\mbox{\footnotesize{
%}\normalsize}\]
%
\newline\mbox{}\hfill
$\mathrmbf{Spec}(\mathcal{S}_{2})^{\mathrm{op}} 
\xrightleftharpoons
{{\langle{\overrightarrow{\mathrmbfit{spec}}_{{\langle{r,f}\rangle}},\overleftarrow{\mathrmbfit{spec}}_{{\langle{r,f}\rangle}}}\rangle}} 
\mathrmbf{Spec}(\mathcal{S}_{1})^{\mathrm{op}}$
\hfill\mbox{}\newline
along a schema morphism 
$\mathcal{S}_{2}\xRightarrow{{\langle{r,f}\rangle}}\mathcal{S}_{1}$,
which are adjoint monotonic functions w.r.t. specification order:
%\newline\mbox{}\hfill
$\overrightarrow{\mathrmbfit{spec}}_{{\langle{r,f}\rangle}}(\mathrmbf{T}_{2})\geq_{\mathcal{S}_{1}}\mathrmbf{T}_{1} 
\text{ \underline{iff} } 
\mathrmbf{T}_{2}\geq_{\mathcal{S}_{2}}\overleftarrow{\mathrmbfit{spec}}_{{\langle{r,f}\rangle}}(\mathrmbf{T}_{1})$.
%\hfill\mbox{}\newline
%
%\footnote{To intersect two specifications,
%intersect the edge-sets and take the resultant node-set.
%To complement a specification,
%take the complement of the edge-set (w.r.t.~$\mathrmbf{Cons}(\mathcal{S})$)
%and take the resultant node-set.}
%
\end{minipage}}}}}
\end{center}
}% boolean ops

%%%%%%%%%%%%%%%%%%%%%%%%%%%%%%%%%%%%%%%%%%%%%%%%%%%%%%%%%%%%%%%%%%%%%%%%%%%%%%%%%%%%%%%%%%%%%%%%%%%%
\newpage
\subsubsection{Specification Flow.}\label{sub:sub:sec:spec:flow}
%%%%%%%%%%%%%%%%%%%%%%%%%%%%%%%%%%%%%%%%%%%%%%%%%%%%%%%%%%%%%%%%%%%%%%%%%%%%%%%%%%%%%%%%%%%%%%%%%%%%

%%%%%%%%%%%%%%%%%%%%%%%%%%%%%%%%%%%%%%%%%%%%%%%%%%%%%%%%%%%%
%\newpage
\paragraph{Specification Flow.}
%%%%%%%%%%%%%%%%%%%%%%%%%%%%%%%%%%%%%%%%%%%%%%%%%%%%%%%%%%%%

Specifications can be moved along schema morphisms.
Given 
%any schema morphism 
$\mathcal{S}_{2}\xRightarrow{{\langle{r,f}\rangle}}\mathcal{S}_{1}$,
{\em direct flow} is the direct image operator
\[\mbox{\footnotesize{$
\overrightarrow{\mathrmbfit{spec}}_{{\langle{r,f}\rangle}} = 
%\overrightarrow{\mathrmbfit{spec}}(r,f) = 
%\mathrmbfit{dir}(r,f) = 
{\wp}\mathrmbfit{cons}_{{\langle{r,f}\rangle}}
: \mathrmbf{Spec}(\mathcal{S}_{2})
%={\wp}\mathrmbf{Cons}(\mathcal{S}_{2})
\rightarrow 
%{\wp}\mathrmbf{Cons}(\mathcal{S}_{1})=
\mathrmbf{Spec}(\mathcal{S}_{1})
$}\normalsize}\]
and 
{\em inverse flow} is the inverse image operator (with consequence)
\[\mbox{\footnotesize{$
\overleftarrow{\mathrmbfit{spec}}_{{\langle{r,f}\rangle}} = 
%\mathrmbfit{inv}(r,f) = 
\mathrmbfit{cons}_{{\langle{r,f}\rangle}}^{-1}((\mbox{-})^{\scriptscriptstyle\bullet})
: \mathrmbf{Spec}(\mathcal{S}_{2}) 
\leftarrow 
\mathrmbf{Spec}(\mathcal{S}_{1})
$}\normalsize}\]
along the constraint passage
$\mathrmbf{Cons}(\mathcal{S}_{2})\xrightarrow{\mathrmbfit{cons}_{{\langle{r,f}\rangle}}}\mathrmbf{Cons}(\mathcal{S}_{1})$.
%(\S\ref{sub:sub:sec:constr}).
%
\comment{% direct/inverse graph operators
\footnote{Let
$\mathcal{G}_{2}={\langle{E_{2},s_{2},t_{2},N_{2}}\rangle}
\xrightarrow{{\langle{f,g}\rangle}}
{\langle{E_{1},s_{1},t_{1},N_{1}}\rangle}=\mathcal{G}_{1}$
be a graph morphism.
The direct image monotonic function
${\wp}\mathcal{G}_{2}\xrightarrow{{\wp}{\langle{f,g}\rangle}}{\wp}\mathcal{G}_{1}$
is defined component-wise:
it maps
a source subgraph
$\mathcal{G}={\langle{E,s_{2},t_{2},N}\rangle}\sqsubseteq\mathcal{G}_{2}$
to the target subgraph
${\wp}{\langle{f,g}\rangle}(\mathcal{G})\doteq{\langle{{\wp}f(E),s_{1},t_{1},{\wp}g(N)}\rangle}\sqsubseteq\mathcal{G}_{1}$.
The inverse image monotonic function
${\wp}\mathcal{G}_{2}\xleftarrow{{\langle{f,g}\rangle}^{\scriptscriptstyle{-\!1}}}{\wp}\mathcal{G}_{1}$
is also defined component-wise:
it maps
a target subgraph
$\mathcal{G}={\langle{E,s_{1},t_{1},N}\rangle}\sqsubseteq\mathcal{G}_{1}$
to the source subgraph
${\langle{f,g}\rangle}^{-1}(\mathcal{G})\doteq{\langle{f^{-1}(E),s_{2},t_{2},g^{-1}(N)}\rangle}\sqsubseteq\mathcal{G}_{2}$.
Inverse image maps contexts to contexts.
Apply to the formula passage
$\mathrmbf{Cons}(\mathcal{S}_{2})\xrightarrow{\mathrmbfit{cons}_{{\langle{r,f}\rangle}}}\mathrmbf{Cons}(\mathcal{S}_{1})$.}
}% direct/inverse graph operators
Properties satisfied:
\begin{itemize}
{\footnotesize{
\item 
inverse images are closed,
%for any schema morphism 
%$\mathcal{S}_{2}\xRightarrow{{\langle{r,f}\rangle}}\mathcal{S}_{1}$,
%${(-)}^{\scriptscriptstyle\bullet} \circ \mathrmbfit{cons}_{{\langle{r,f}\rangle}}^{-1} \circ {(-)}^{\scriptscriptstyle\bullet} 
%= {(-)}^{\scriptscriptstyle\bullet} \circ \mathrmbfit{cons}_{{\langle{r,f}\rangle}}^{-1}$; 
%that is, 
\hfill
$\overleftarrow{\mathrmbfit{spec}}_{{\langle{r,f}\rangle}}(\mathrmbf{T}_{1})^{\scriptstyle\bullet}
%=\mathrmbfit{cons}_{{\langle{r,f}\rangle}}^{-1}(\mathrmbf{T}_{1}^{\scriptscriptstyle\bullet})^{\scriptscriptstyle\bullet} 
%=\mathrmbfit{cons}_{{\langle{r,f}\rangle}}^{-1}(\mathrmbf{T}_{1}^{\scriptscriptstyle\bullet})
{\;=\;\;\,}\overleftarrow{\mathrmbfit{spec}}_{{\langle{r,f}\rangle}}(\mathrmbf{T}_{1})$;
%for any target specification $\mathrmbf{T}_{1}\in\mathrmbf{Spec}(\mathcal{S}_{1})$. 
%
\item 
direct image commutes with consequence,
%for any schema morphism 
%$\mathcal{S}_{2}\xRightarrow{{\langle{r,f}\rangle}}\mathcal{S}_{1}$,
%${\wp}\mathrmbfit{cons}_{{\langle{r,f}\rangle}} \circ {(-)}^{\scriptscriptstyle\bullet}
%= {(-)}^{\scriptscriptstyle\bullet} \circ {\wp}\mathrmbfit{cons}_{{\langle{r,f}\rangle}} \circ {(-)}^{\scriptscriptstyle\bullet}$; 
%that is,
\hfill
$\overrightarrow{\mathrmbfit{spec}}_{{\langle{r,f}\rangle}}(\mathrmbf{T}_{2})^{\scriptstyle\bullet} 
%={\wp}\mathrmbfit{cons}_{{\langle{r,f}\rangle}}(\mathrmbf{T}_{2})^{\scriptscriptstyle\bullet}
%={\wp}\mathrmbfit{cons}_{{\langle{r,f}\rangle}}(\mathrmbf{T}_{2}^{\scriptscriptstyle\bullet})^{\scriptscriptstyle\bullet}
=\overrightarrow{\mathrmbfit{spec}}_{{\langle{r,f}\rangle}}(\mathrmbf{T}_{2}^{\scriptstyle\bullet})^{\scriptstyle\bullet}$; 
%and
%for any source specification $\mathrmbf{T}_{2}\in\mathrmbf{Spec}(\mathcal{S}_{2})$.
%
\item 
direct image
%$\mathrmbf{Spec}(\mathcal{S}_{2})
%\xrightarrow{\overrightarrow{\mathrmbfit{spec}}_{{\langle{r,f}\rangle}}}
%\mathrmbf{Spec}(\mathcal{S}_{1})$
is monotonic,
\hfill
$\mathrmbf{T}_{2}{\;\leq_{\mathcal{S}_{2}}}\mathrmbf{T}_{2}'$ 
%\underline{iff} 
%$\mathrmbf{T}_{2}^{\scriptstyle\bullet}\sqsupseteq
%\mathrmbf{T}_{2}'^{\scriptstyle\bullet}$
%\underline{implies} 
%$\overrightarrow{\mathrmbfit{spec}}_{{\langle{r,f}\rangle}}(\mathrmbf{T}_{2}^{\scriptstyle\bullet})
%\sqsupseteq
%\overrightarrow{\mathrmbfit{spec}}_{{\langle{r,f}\rangle}}(\mathrmbf{T}_{2}'^{\scriptstyle\bullet})$
 \underline{implies} 
%$\overrightarrow{\mathrmbfit{spec}}_{{\langle{r,f}\rangle}}(\mathrmbf{T}_{2})^{\scriptstyle\bullet}=
%\overrightarrow{\mathrmbfit{spec}}_{{\langle{r,f}\rangle}}(\mathrmbf{T}_{2}^{\scriptstyle\bullet})^{\scriptstyle\bullet}
%\sqsupseteq
%\overrightarrow{\mathrmbfit{spec}}_{{\langle{r,f}\rangle}}(\mathrmbf{T}_{2}'^{\scriptstyle\bullet})^{\scriptstyle\bullet}=
%\overrightarrow{\mathrmbfit{spec}}_{{\langle{r,f}\rangle}}(\mathrmbf{T}_{2}')^{\scriptstyle\bullet}$
%\underline{iff} 
$\overrightarrow{\mathrmbfit{spec}}_{{\langle{r,f}\rangle}}(\mathrmbf{T}_{2}){\;\leq_{\mathcal{S}_{1}}}\overrightarrow{\mathrmbfit{spec}}_{{\langle{r,f}\rangle}}(\mathrmbf{T}_{2}')$. 
}}
\end{itemize}
%\begin{theorem}
These are adjoint monotonic functions w.r.t. specification order:
\[\mbox
{\footnotesize{
$\overrightarrow{\mathrmbfit{spec}}_{{\langle{r,f}\rangle}}(\mathrmbf{T}_{2})\geq_{\mathcal{S}_{1}}\mathrmbf{T}_{1} 
\text{ \underline{iff} } 
\mathrmbf{T}_{2}\geq_{\mathcal{S}_{2}}\overleftarrow{\mathrmbfit{spec}}_{{\langle{r,f}\rangle}}(\mathrmbf{T}_{1})$
}\normalsize}
\]
so that 
direct image $\overrightarrow{\mathrmbfit{spec}}_{{\langle{r,f}\rangle}}$ preserves all lattice meets $\bigwedge=\bigcup$ and
inverse image $\overleftarrow{\mathrmbfit{spec}}_{{\langle{r,f}\rangle}}$ preserves all lattice joins $\bigvee=\bigcap$.
%\newline\mbox{}\hfill
%\hfill\mbox{}\newline
%Symbolized this adjunction by
%\newline\mbox{}\hfill
%${\langle{\overleftarrow{\mathrmbfit{spec}}_{{\langle{r,f}\rangle}},\overrightarrow{\mathrmbfit{spec}}_{{\langle{r,f}\rangle}}}\rangle} 
%: \mathrmbfit{fbr}(\mathcal{S}_{2})^{\mathrm{op}} \leftarrow \mathrmbfit{fbr}(\mathcal{S}_{1})^{\mathrm{op}}$
%\hfill\mbox{}\newline
%\newline\mbox{}\hfill
%\[\mbox{\footnotesize{
%$\mathrmbfit{fbr}(\mathcal{S}_{2})=\mathrmbf{Spec}(\mathcal{S}_{2})^{\mathrm{op}} 
%\xrightleftharpoons
%{{\langle{\overrightarrow{\mathrmbfit{spec}}_{{\langle{r,f}\rangle}},\overleftarrow{\mathrmbfit{spec}}_{{\langle{r,f}\rangle}}}\rangle}} 
%\mathrmbf{Spec}(\mathcal{S}_{1})^{\mathrm{op}}=\mathrmbfit{fbr}(\mathcal{S}_{1})$
%}\normalsize}\]
%\hfill\mbox{}\newline
%between entailment preorders
%or by
%${\langle{\overrightarrow{\mathrmbfit{spec}}_{{\langle{r,f}\rangle}},\overleftarrow{\mathrmbfit{spec}}_{{\langle{r,f}\rangle}}}\rangle} 
%: \mathrmbfit{fbr}(\mathcal{S}_{2}) \rightarrow \mathrmbfit{fbr}(\mathcal{S}_{1})$
%between fiber preorders.
%

\begin{sloppypar}
The constraint passage 
$\mathrmbf{Cons}(\mathcal{S}_{2})\xrightarrow{\mathrmbfit{cons}_{{\langle{r,f}\rangle}}}\mathrmbf{Cons}(\mathcal{S}_{1})$
%associated with
%any structure morphism
%$\mathcal{M}_{2}\xrightleftharpoons{{\langle{r,k,f,g}\rangle}}\mathcal{M}_{1}$
%with underlying schema morphism
%$\mathcal{S}_{2}\xRightarrow{{\langle{r,f}\rangle}}\mathcal{S}_{1}$
is a closure operator morphism,
since direct image commutes with consequence:
$\overrightarrow{\mathrmbfit{spec}}_{{\langle{r,f}\rangle}}(\mathrmbf{T}_{2}^{\scriptstyle\bullet})
\subseteq
\overrightarrow{\mathrmbfit{spec}}_{{\langle{r,f}\rangle}}(\mathrmbf{T}_{2})^{\scriptstyle\bullet}$
for any $\mathcal{S}_{2}$-specification $\mathrmbf{T}_{2}$.
%\begin{flushleft}
%{\fbox{\begin{minipage}{340pt}
%%%%%%%%%%%%%%%%%%%%%%%%%%%%%%%%%%%%%%%%%%%%%%%%%%%%%%%%%%%%
%\newpage
%\paragraph{Entailment Relations.}
%%%%%%%%%%%%%%%%%%%%%%%%%%%%%%%%%%%%%%%%%%%%%%%%%%%%%%%%%%%%
Morphisms of closure operators can be alternatively described as morphisms of entailment relations
(Mossakowski, Diaconescu and Tarlecki~\cite{mossakowski:diaconescu:tarlecki:wlt}).
%For any 
%structure morphism
%$\mathcal{M}_{2}\xrightleftharpoons{{\langle{r,k,f,g}\rangle}}\mathcal{M}_{1}$
%with underlying 
%schema morphism
%$\mathcal{S}_{2}\xRightarrow{{\langle{r,f}\rangle}}\mathcal{S}_{1}$,
%\newline
%${\langle{\mathrmbf{Cons}(\mathcal{S}_{2}),\vdash_{\mathcal{S}_{2}}}\rangle}
%\xrightarrow{{\langle{r,f}\rangle}}
%{\langle{\mathrmbf{Cons}(\mathcal{S}_{1}),\vdash_{\mathcal{S}_{1}}}\rangle}$
The constraint passage 
$\mathrmbf{Cons}(\mathcal{S}_{2})\xrightarrow{\mathrmbfit{cons}_{{\langle{r,f}\rangle}}}\mathrmbf{Cons}(\mathcal{S}_{1})$
forms a morphism of entailment relations,
since 
%\[
$\mathrmbf{T}_{2}{\;\vdash_{\mathcal{S}_{2}}\;}(\varphi_{2}'{\xrightarrow{h}\,}\varphi_{2})$
implies
$\overrightarrow{\mathrmbfit{spec}}_{{\langle{r,f}\rangle}}(\mathrmbf{T}_{2}){\;\vdash_{\mathcal{S}_{1}}\;}\mathrmbfit{cons}_{{\langle{r,f}\rangle}}(\varphi_{2}'{\xrightarrow{h}\,}\varphi_{2})$
%=(\widehat{r}(\varphi_{2}'){\;\xrightarrow{h}\;}\widehat{r}(\varphi_{2}))
%\]
for any $\mathcal{S}_{2}$-specification $\mathrmbf{T}_{2}$
and any $\mathcal{S}_{2}$-constraint $(\varphi_{2}'{\xrightarrow{h}\,}\varphi_{2})$
by direct image monotonicity.
%;
%more concisely,
%\footnote{
%$\mathrmbf{T}_{2}
%\xrightarrow{{\langle{r,f}\rangle}}
%\overrightarrow{\mathrmbfit{spec}}_{{\langle{r,f}\rangle}}(\mathrmbf{T}_{2})$
%is a specification morphism
%for any $\mathcal{S}_{2}$-specification $\mathrmbf{T}_{2}$.
%}
%
%If also the converse implications hold for any $\mathcal{S}_{2}$-specification $\mathrmbf{T}_{2}$, 
%the entailment relation morphism is said to be conservative.
%\begin{sloppypar}
%\end{sloppypar}
%\end{minipage}}}
%\end{flushleft}
\end{sloppypar}

%%%%%%%%%%%%%%%%%%%%%%%%%%%%%%%%%%%%%%%%%%%%%%%%%%%%%%%%%%%%
%\newpage
\paragraph{Specification Morphisms.}
%%%%%%%%%%%%%%%%%%%%%%%%%%%%%%%%%%%%%%%%%%%%%%%%%%%%%%%%%%%%

A specification morphism
${\langle{\mathcal{S}_{2},\mathrmbf{T}_{2}}\rangle}\xrightarrow{{\langle{r,f}\rangle}}{\langle{\mathcal{S}_{1},\mathrmbf{T}_{1}}\rangle}$
is a schema morphism 
$\mathcal{S}_{2}\xRightarrow{{\langle{r,f}\rangle}}\mathcal{S}_{1}$
that preserves entailment:
\newline\mbox{}\hfill$
\mathrmbf{T}_{2}{\;\vdash_{\mathcal{S}_{2}}\;}(\varphi_{2}'{\xrightarrow{h}\,}\varphi_{2})
\;\text{implies}\;
\mathrmbf{T}_{1}{\;\vdash_{\mathcal{S}_{1}}\;}\mathrmbfit{cons}_{{\langle{r,f}\rangle}}(\varphi_{2}'{\xrightarrow{h}\,}\varphi_{2})
$\hfill\mbox{}\newline
%
%=(\widehat{r}(\varphi_{2}'){\;\xrightarrow{h}\;}\widehat{r}(\varphi_{2}))
for any $\mathcal{S}_{2}$-constraint $(\varphi_{2}'{\xrightarrow{h}\,}\varphi_{2})$;
or more concisely,
%$\overrightarrow{\mathrmbfit{spec}}_{{\langle{r,f}\rangle}}(\mathrmbf{T}^{\scriptstyle\bullet}_{2})
%\sqsubseteq
%\mathrmbf{T}^{\scriptstyle\bullet}_{1}$.
%\newline\mbox{}\hfill
%\hfill\mbox{}\newline
%
\[\mbox{\footnotesize{
$\overrightarrow{\mathrmbfit{spec}}_{{\langle{r,f}\rangle}}(\mathrmbf{T}_{2}^{\scriptstyle\bullet}){\;\sqsubseteq\;}\mathrmbf{T}_{1}^{\scriptstyle\bullet}
\;\;\text{\underline{iff}}\;\;
\overrightarrow{\mathrmbfit{spec}}_{{\langle{r,f}\rangle}}(\mathrmbf{T}_{2}){\;\sqsubseteq\;}\mathrmbf{T}_{1}^{\scriptstyle\bullet}$.
}\normalsize}\]
%
%=\overrightarrow{\mathrmbfit{spec}}_{{\langle{r,f}\rangle}}(\mathrmbf{T}_{2}^{\scriptstyle\bullet})^{\scriptstyle\bullet} 
%\underline{iff}
%$\overrightarrow{\mathrmbfit{spec}}_{{\langle{r,f}\rangle}}(\mathrmbf{T}_{2})
%\sqsubseteq
%\mathrmbf{T}_{1}^{\scriptstyle\bullet}$
%\underline{iff}
%$\overrightarrow{\mathrmbfit{spec}}_{{\langle{r,f}\rangle}}(\mathrmbf{T}_{2})\geq_{\mathcal{S}}\mathrmbf{T}_{1}$.
%\newline
%\newline
%
Equivalently,
that maps the source specification to a generalization of the target specification
$\overrightarrow{\mathrmbfit{spec}}_{{\langle{r,f}\rangle}}(\mathrmbf{T}_{2})\geq_{\mathcal{S}_{1}}\mathrmbf{T}_{1}$
\underline{or}
that maps the target specification to a specialization of the source specification
$\mathrmbf{T}_{2}\geq_{\mathcal{S}_{2}}\overleftarrow{\mathrmbfit{spec}}_{{\langle{r,f}\rangle}}(\mathrmbf{T}_{1})$.
%

%
%A specification morphism has an associated graph passage 
%$\mathrmbf{T}^{\scriptstyle\bullet}_{2}\xrightarrow{\mathrmbfit{T}^{\scriptstyle\bullet}_{\langle{r,f}\rangle}}\mathrmbf{T}^{\scriptstyle\bullet}_{1}$
%that commutes with the signature passage
%$\mathrmbfit{T}^{\scriptstyle\bullet}_{\langle{r,f}\rangle}{\;\circ\;}\mathrmbfit{list}_{\mathcal{S}_{1}}=\mathrmbfit{list}_{\mathcal{S}_{2}}{\;\circ\;}{\scriptstyle\sum}_{f}$.
%
The fibered mathematical context of specifications $\mathrmbf{Spec}$ 
has specifications as objects and specification morphisms as morphisms.
Thus,
the fibered context of specifications $\mathrmbf{Spec}$ is defined in terms of formal information flow.
There is an underlying schema passage ${\langle{\mathcal{S},\mathrmbf{T}}\rangle}\mapsto\mathcal{S}$
from specifications to schemas 
\newline\newline\mbox{}\rule{30pt}{0pt}\hfill{
$\mathrmbf{Spec}\xrightarrow{\mathrmbfit{sch}}\mathrmbf{Sch}$.
}\hfill{
(Fig.~\ref{fbr:ctx})
}\newline

\comment{
A path of constraints
%\hfill
%\[\mbox
{\footnotesize{
$\varphi_{n}\xrightarrow{h_{n-1}}\varphi_{n-1}
{\;\cdots\;}
\varphi_{2}\xrightarrow{h_{1}}\varphi_{1}\xrightarrow{h_{0}}\varphi_{0}$
}\normalsize}
%\]
\newline
with 
%(composable) 
signature morphisms
\newline
\mbox{ }\hfill
%\[\mbox
{\footnotesize{
$\sigma(\varphi_{n})\xrightarrow{h_{n-1}}\sigma(\varphi_{n-1})
{\;\cdots\;}
\sigma(\varphi_{2})\xrightarrow{h_{1}}\sigma(\varphi_{1})\xrightarrow{h_{0}}\sigma(\varphi_{0})$
}\normalsize}
%\]
%
\newline
and sequence of associated binary sequents
\newline
%\[
\mbox{ }\hfill
{\footnotesize{
$
({\scriptstyle\sum}_{h_{n-1}}(\varphi_{n}){\;\vdash\;}\varphi_{n-1}),
{\;\cdots\;}
({\scriptstyle\sum}_{h_{1}}(\varphi_{2}){\;\vdash\;}\varphi_{1}),
\;\;
({\scriptstyle\sum}_{h_{0}}(\varphi_{1}){\;\vdash\;}\varphi_{0})
$
}\normalsize}
%\]
%
\newline
has implied sequence of binary sequents
\newline
%\[\mbox
\mbox{}\hfill
{\footnotesize{
$
{\scriptstyle\sum}_{h_{0}}({\scriptstyle\sum}_{h_{1}}(\cdots{\scriptstyle\sum}_{h_{n-1}}(\varphi_{n})\cdots))
{\;\vdash\;}
{\;\cdots\;}
{\;\vdash\;}
({\scriptstyle\sum}_{h_{0}}({\scriptstyle\sum}_{h_{1}}(\varphi_{2})))
{\;\vdash\;}
{\scriptstyle\sum}_{h_{0}}(\varphi_{1})
{\;\vdash\;}
\varphi_{0}
$
}\normalsize}
%\]
%
\newline
and hence implied binary sequent
\newline
%\[\mbox
\mbox{}\hfill
{\footnotesize{
$
{\scriptstyle\sum}_{h_{0}}({\scriptstyle\sum}_{h_{1}}(\cdots{\scriptstyle\sum}_{h_{n-1}}(\varphi_{n})\cdots))
\equiv
{\scriptstyle\sum}_{h_{n-1}{\cdot\,}{\,\cdots\,}{\,\cdot}h_{1}{\cdot}h_{0}}(\varphi_{n})
{\;\vdash\;}
\varphi_{0}
$
}\normalsize}
%\]
%
\newline
with associated (composite) constraint
%\[\mbox
{\footnotesize{
$\varphi_{n}\xrightarrow{h_{n-1}{\cdot\,}{\,\cdots\,}{\,\cdot}h_{1}{\cdot}h_{0}}\varphi_{0}$.
}\normalsize}
%\]
}
%

%%%%%%%%%%%%%%%%%%%%%%%%%%%%%%%%%%%%%%%%%%%%%%%%%%%%%%%%%%%%
%\newpage
%{\bfseries{00:}} 
\paragraph{Extending Order.}
%%%%%%%%%%%%%%%%%%%%%%%%%%%%%%%%%%%%%%%%%%%%%%%%%%%%%%%%%%%%

We regard a schema morphism 
$\mathcal{S}_{2}\xRightarrow{{\langle{r,f}\rangle}}\mathcal{S}_{1}$
to be a translation device:
any $\mathcal{S}_{2}$-formula $\varphi$ with signature ${\langle{I,s}\rangle}$ 
is translated to
an $\mathcal{S}_{1}$-formula $\hat{r}(\varphi)$ with signature ${\scriptstyle\sum}_{f}(I,s)$.
This notion of a translation device is embodied in the direct image operator
$\mathrmbf{Spec}(\mathcal{S}_{2})
\xrightarrow{\overrightarrow{\mathrmbfit{spec}}_{{\langle{r,f}\rangle}}}
\mathrmbf{Spec}(\mathcal{S}_{1})$.
A specification morphism
$\mathcal{T}_{2}={\langle{\mathcal{S}_{2},\mathrmbf{T}_{2}}
\rangle}\xrightarrow{{\langle{r,f}\rangle}}
{\langle{\mathcal{S}_{1},\mathrmbf{T}_{1}}\rangle}=\mathcal{T}_{2}$
from source specification
$\mathrmbf{T}_{2}\in\mathrmbf{Spec}(\mathcal{S}_{2})$
to target specification
$\mathrmbf{T}_{1}\in\mathrmbf{Spec}(\mathcal{S}_{1})$
is a schema morphism 
$\mathcal{S}_{2}\xRightarrow{{\langle{r,f}\rangle}}\mathcal{S}_{1}$
that translates the source specification to a generalization of the target specification
$\overrightarrow{\mathrmbfit{spec}}_{{\langle{r,f}\rangle}}(\mathrmbf{T}_{2})\geq_{\mathcal{S}_{1}}\mathrmbf{T}_{1}$.
Thus,
we interpret a specification morphism as a link between two specifications,
where the source specification is more general than the target specification,
symbolized as 
$\mathrmbf{T}_{2}{\,\succcurlyeq\,}\mathrmbf{T}_{1}$.
In this way,
we interpret 
the context of specifications $\mathrmbf{Spec}$
to be an extension of $\mathrmbfit{fbr}(\mathcal{S})=\mathrmbf{Spec}(\mathcal{S})^{\text{op}}$,
the opposite of the specification order at some schema $\mathcal{S}$.
This idea is used in later papers 
(mentioned in \S\ref{sec:intro})
to motivate the extension of information systems along indexing passages,
which is an integral component in the definition of system morphisms
and in the extension of the ideas of conservative extension and modularity to the level of systems.

\comment{% May 17
%%%%%%%%%%%%%%%%%%%%%%%%%%%%%%%%%%%%%%%%%%%%%%%%%%
%\newpage
\paragraph{Extent Monotonic Function.}
%%%%%%%%%%%%%%%%%%%%%%%%%%%%%%%%%%%%%%%%%%%%%%%%%%

For any structure  morphism
%\[\mbox{\footnotesize{
$\mathcal{M}_{2}\xrightleftharpoons{{\langle{r,k,f,g}\rangle}}\mathcal{M}_{1}$,
%}\normalsize}\]
the formula function is monotonic 
$\mathrmbfit{ext}(\widehat{\mathcal{R}}_{2})\xrightarrow{\widehat{r}}\mathrmbfit{ext}(\widehat{\mathcal{R}}_{2})$
between formula extent orders: 
%={\langle{\widehat{R}_{2},\leq_{\widehat{\mathcal{R}}_{2}}}\rangle}
%{\langle{\widehat{R}_{2},\leq_{\widehat{\mathcal{R}}_{2}}}\rangle}=
$(\varphi{\;\leq_{\widehat{\mathcal{R}}_{2}}\;}\psi)$
\underline{implies}
$(\widehat{r}(\varphi){\;\leq_{\widehat{\mathcal{R}}_{1}}\;}\widehat{r}(\psi))$.
\footnote{For any infomorphism 
$\mathcal{A}_{2} = {\langle{X_{2},Y_{2},\models_{\mathcal{A}_{2}}}\rangle} 
\xrightleftharpoons{{\langle{f,g}\rangle}}
{\langle{X_{1},Y_{1},\models_{\mathcal{A}_{1}}}\rangle} = \mathcal{A}_{1}$,
the type function is a monotonic function 
$\mathrmbfit{ext}(\mathcal{A}_{2})={\langle{X_{2},\leq_{\mathcal{A}_{2}}}\rangle}
\xrightarrow{f}
{\langle{X_{1},\leq_{\mathcal{A}_{1}}}\rangle}=\mathrmbfit{ext}(\mathcal{A}_{1})$
between extent orders,
since
$x_{2}{\;\leq_{\mathcal{A}_{2}}\;}x_{2}'$
\underline{iff}
$\mathrmbfit{ext}_{\mathcal{A}_{2}}(x_{2}){\;\subseteq\;}\mathrmbfit{ext}_{\mathcal{A}_{2}}(x_{2}')$
\underline{implies}
$\mathrmbfit{ext}_{\mathcal{A}_{1}}(f(x_{2}))=g^{-1}(\mathrmbfit{ext}_{\mathcal{A}_{2}}(x_{2})) \subseteq
g^{-1}(\mathrmbfit{ext}_{\mathcal{A}_{2}}(x_{2}'))=\mathrmbfit{ext}_{\mathcal{A}_{1}}(f(x_{2}'))$
\underline{iff}
$f(x_{2}){\;\leq_{\mathcal{A}_{1}}\;}f(x_{2}')$
for any two types $x_{2},x_{2}'{\,\in\,}X_{2}$.
Recall the instance passage
$\mathrmbf{Cls}^{\mathrm{op}}\xrightarrow{\mathrmbfit{inst}}\mathrmbf{Set}$.
For an $\mathrmbfit{inst}$-vertical infomorphism 
$\mathcal{A}_{2} = {\langle{X_{2},Y,\models_{\mathcal{A}_{2}}}\rangle} 
\xrightleftharpoons{{\langle{f,1_{Y}}\rangle}} 
{\langle{X_{1},Y,\models_{\mathcal{A}_{1}}}\rangle} = \mathcal{A}_{1}$,
the type function
$\mathrmbfit{ext}(\mathcal{A}_{2})\xrightarrow{f}\mathrmbfit{ext}(\mathcal{A}_{1})$
is an isometry:
$x_{2}{\;\leq_{\mathcal{A}_{2}}\;}x_{2}'$
\underline{iff}
$f(x_{2}){\;\leq_{\mathcal{A}_{1}}\;}f(x_{2}')$.}
%
%This helps define 
%the fiber passage of logics along structure morphisms (flow links).
}% May 17

\begin{lemma}\label{lem:nat:spec:mor}
A structure morphism
$\mathcal{M}_{2}\xrightleftharpoons{{\langle{r,k,f,g}\rangle}}\mathcal{M}_{1}$
with  schema morphism
$\mathcal{S}_{2}\xRightarrow{{\langle{r,f}\rangle}}\mathcal{S}_{1}$
determines an intent specification morphism
${\langle{\mathcal{S}_{2},\mathcal{M}_{2}^{\mathcal{S}_{2}}}\rangle}
\xrightarrow{\langle{r,f}\rangle}
{\langle{\mathcal{S}_{1},\mathcal{M}_{1}^{\mathcal{S}_{1}}}\rangle}
%:
%(\varphi'{\;\xrightarrow{h}\;}\varphi)
%\mapsto
%\mathrmbfit{cons}_{{\langle{r,f}\rangle}}(\varphi'{\;\xrightarrow{h}\;}\varphi)
%=(\widehat{r}(\varphi'){\;\xrightarrow{h}\;}\widehat{r}(\varphi))
$,
which is a restriction of the constraint passage
%\[\mbox{\footnotesize{
$\mathrmbf{Cons}(\mathcal{S}_{2})\xrightarrow{\mathrmbfit{cons}_{{\langle{r,f}\rangle}}}\mathrmbf{Cons}(\mathcal{S}_{1})
:(\varphi_{2}'{\;\xrightarrow{h}\;}\varphi_{2})
\mapsto
%\mathrmbfit{cons}_{{\langle{r,f}\rangle}}(\varphi'{\;\xrightarrow{h}\;}\varphi)=
(\widehat{r}(\varphi_{2}'){\;\xrightarrow{h}\;}\widehat{r}(\varphi_{2}))$ 
(compare this to Cor.~\ref{cor:nat:pass}).
\end{lemma}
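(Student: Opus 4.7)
The plan is to reduce this lemma essentially to Corollary~\ref{cor:nat:pass}, translating its conclusion about a restricted constraint passage into the language of specification morphisms. Recall from \S\ref{sub:sub:sec:spec:flow} that exhibiting a specification morphism ${\langle{\mathcal{S}_{2},\mathcal{M}_{2}^{\mathcal{S}_{2}}}\rangle}\xrightarrow{\langle{r,f}\rangle}{\langle{\mathcal{S}_{1},\mathcal{M}_{1}^{\mathcal{S}_{1}}}\rangle}$ over the schema morphism $\mathcal{S}_{2}\xRightarrow{{\langle{r,f}\rangle}}\mathcal{S}_{1}$ amounts to establishing the entailment-preservation condition $\overrightarrow{\mathrmbfit{spec}}_{{\langle{r,f}\rangle}}(\mathcal{M}_{2}^{\mathcal{S}_{2}}) \sqsubseteq (\mathcal{M}_{1}^{\mathcal{S}_{1}})^{\scriptstyle\bullet}$, i.e., the direct image of the source intent must be contained in the closure of the target intent.

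Next, I would invoke Corollary~\ref{cor:nat:pass} directly: the given structure morphism already produces an intent passage $\mathcal{M}_{2}^{\mathcal{S}_{2}} \xrightarrow{\mathrmbfit{int}_{{\langle{r,k,f,g}\rangle}}} \mathcal{M}_{1}^{\mathcal{S}_{1}}$ which is a restriction of $\mathrmbfit{cons}_{{\langle{r,f}\rangle}}$. Passing to subgraphs, this restriction translates precisely into the inclusion $\overrightarrow{\mathrmbfit{spec}}_{{\langle{r,f}\rangle}}(\mathcal{M}_{2}^{\mathcal{S}_{2}}) \sqsubseteq \mathcal{M}_{1}^{\mathcal{S}_{1}}$, because $\overrightarrow{\mathrmbfit{spec}}_{{\langle{r,f}\rangle}}$ is by definition the direct image operator associated with $\mathrmbfit{cons}_{{\langle{r,f}\rangle}}$, and the latter sends each constraint $(\varphi_{2}'\xrightarrow{h}\varphi_{2})$ to $(\widehat{r}(\varphi_{2}')\xrightarrow{h}\widehat{r}(\varphi_{2}))$ as needed by the statement.

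Finally, observe that the target intent is already closed: by the footnote after the definition of entailment in \S\ref{sub:sub:sec:entail:cons}, a conceptual intent entails a constraint iff it satisfies it, so $\mathcal{M}_{1}^{\mathcal{S}_{1}}$ equals its own consequence $(\mathcal{M}_{1}^{\mathcal{S}_{1}})^{\scriptstyle\bullet}$. Combining this with the inclusion from the previous step yields the required entailment-preservation condition, completing the verification that ${\langle{r,f}\rangle}$ underlies a specification morphism of the stated form.

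There is essentially no main obstacle: the conceptual content lives inside Corollary~\ref{cor:nat:pass} (which itself rests on Propositions~\ref{prop:ins} and~\ref{prop:log:env}), and the present lemma simply repackages that result in specification-theoretic vocabulary, with the closure-of-intent observation bridging the two formulations. The only minor point worth stating explicitly is that the restriction of $\mathrmbfit{cons}_{{\langle{r,f}\rangle}}$ given in Corollary~\ref{cor:nat:pass} is exactly what $\overrightarrow{\mathrmbfit{spec}}_{{\langle{r,f}\rangle}}$ computes on $\mathcal{M}_{2}^{\mathcal{S}_{2}}$ viewed as a subgraph of $\mathrmbf{Cons}(\mathcal{S}_{2})$, and this is immediate from the definitions in \S\ref{sub:sub:sec:constr} and \S\ref{sub:sub:sec:spec:flow}.
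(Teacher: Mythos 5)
Your proposal is correct and rests on the same machinery as the paper's proof: Corollary~\ref{cor:nat:pass} is itself established from the factorization of the structure morphism together with Propositions~\ref{prop:ins} and~\ref{prop:log:env}, which is exactly what the paper's proof of this lemma redoes directly via Eqns.~\ref{eqn:log:env:abs} and~\ref{eqn:ins:abs}. The only difference is cosmetic: you verify the direct-image form $\overrightarrow{\mathrmbfit{spec}}_{{\langle{r,f}\rangle}}(\mathcal{M}_{2}^{\mathcal{S}_{2}}){\;\sqsubseteq\;}(\mathcal{M}_{1}^{\mathcal{S}_{1}})^{\scriptstyle\bullet}$ (using closedness of the intent), whereas the paper derives the equivalent inverse-image form $\mathcal{M}_{2}^{\mathcal{S}_{2}}{\;\geq_{\mathcal{S}_{2}}\;}\overleftarrow{\mathrmbfit{spec}}_{{\langle{r,f}\rangle}}(\mathcal{M}_{1}^{\mathcal{S}_{1}})$.
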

\begin{proof}
The structure morphism
$\mathcal{M}_{2}\xrightleftharpoons{{\langle{r,k,f,g}\rangle}}\mathcal{M}_{1}$
factors as
\footnote{Proven in the appendix of Kent~\cite{kent:fole:era:found}.}
\[\mbox{\footnotesize{$
\mathcal{M}_{2}
\xrightleftharpoons{{\langle{k,g}\rangle}}\mathrmbfit{struc}^{\curlywedge}_{{\langle{r,f}\rangle}}(\mathcal{M}_{1})
\xrightleftharpoons{{\langle{r,f}\rangle}}\mathcal{M}_{1}
$.
}\normalsize}\]
%
%$\mathcal{M}_{2}
%\xrightleftharpoons{{\langle{k,g}\rangle}}\mathrmbfit{struc}^{\curlywedge}_{{\langle{r,f}\rangle}}(\mathcal{M}_{1})
%\xrightleftharpoons{{\langle{r,f}\rangle}}\mathcal{M}_{1}$.
Hence,
\[\mbox{\footnotesize{$
\mathcal{M}^{\mathcal{S}_{2}}_{2}
\stackrel{
\overset{\text{Eqn.}}{\scriptscriptstyle{\text{\ref{eqn:log:env:abs}}}}
}{\;\geq_{\mathcal{S}_{2}}\;}
\mathrmbfit{struc}^{\curlywedge}_{{\langle{r,f}\rangle}}(\mathcal{M}_{1})^{\mathcal{S}_{2}}
\stackrel{
\overset{\text{Eqn.}}{\scriptscriptstyle{\text{\ref{eqn:ins:abs}}}}
}{=}
\overleftarrow{\mathrmbfit{spec}}_{{\langle{r,f}\rangle}}(\mathcal{M}_{1}^{\mathcal{S}_{1}})
$.}\normalsize}\]
\rule{5pt}{5pt}
\end{proof}

\comment{
Alternate proof:
$\overrightarrow{\mathrmbfit{spec}}_{{\langle{r,f}\rangle}}(\mathcal{M}_{2}^{\mathcal{S}_{2}})
\geq_{\mathcal{S}_{1}}
\mathcal{M}_{1}^{\mathcal{S}_{1}}$,
since
$(\varphi'{\;\xrightarrow{h}\;}\varphi)\in\mathcal{M}_{2}^{\mathcal{S}_{2}}$
\underline{iff}
$\mathcal{M}_{2}{\;\models_{\mathcal{S}_{2}}\;}(\varphi'{\;\xrightarrow{h}\;}\varphi)$
%\newline
\underline{iff}
$\varphi'{\;\leq_{\widehat{\mathcal{R}}_{2}}\;}{\scriptstyle\sum}_{h}(\varphi)$
\underline{$\stackrel{\widehat{r}\;\text{mono}}{\text{implies}}$}
%\newline
$\widehat{r}(\varphi'){\;\leq_{\widehat{\mathcal{R}}_{1}}\;}
\widehat{r}({\scriptstyle\sum}_{h}(\varphi))
={\scriptstyle\sum}_{h}(\widehat{r}(\varphi))$
%\newline
\underline{iff}
$\mathcal{M}_{1}{\;\models_{\mathcal{S}_{1}}\;}(\widehat{r}(\varphi'){\;\xrightarrow{h}\;}\widehat{r}(\varphi))$
\underline{iff}
$\mathrmbfit{cons}_{{\langle{r,f}\rangle}}(\varphi'{\;\xrightarrow{h}\;}\varphi)=
(\widehat{r}(\varphi'){\;\xrightarrow{h}\;}\widehat{r}(\varphi))\in\mathcal{M}_{1}^{\mathcal{S}_{1}}$.
}

\begin{definition}\label{conc:int:spec}
There is a conceptual intent passage
\footnote{Compare this to the conceptual intent passage 
$\mathrmbf{Struc}\xrightarrow{\mathrmbfit{int}}\mathrmbf{Cxt}$
in Def.~\ref{conc:int:cxt}.}
\newline\newline\mbox{}\rule{30pt}{0pt}\hfill{
$\mathrmbf{Struc}\xrightarrow{\mathrmbfit{int}}\mathrmbf{Spec}
:\mathcal{M}\mapsto{\langle{\mathcal{S},\mathcal{M}^{\mathcal{S}}}\rangle}$
}\hfill{
(Fig.~\ref{fbr:ctx})
}\newline\newline
from structures to specifications,
which respects schema $\mathrmbfit{int}{\;\circ\;}\mathrmbfit{sch} = \mathrmbfit{sch}$. 
\footnote{This defines the lower part of the {\ttfamily FOLE} superstructure
(Fig.~\ref{fbr:ctx}).}
\end{definition}
%

%%%%%%%%%%%%%%%%%%%%%%%%%%%%%%%%%%%%%%%%%%%%%%%%%%%%%%%%%%%%%%%%%%%%%%%%%%%%%%%%%%%%%%%%%%%%%%%%%%%%
\newpage
\subsubsection{Legacy Notions.}\label{sub:sub:sec:legacy}
%%%%%%%%%%%%%%%%%%%%%%%%%%%%%%%%%%%%%%%%%%%%%%%%%%%%%%%%%%%%%%%%%%%%%%%%%%%%%%%%%%%%%%%%%%%%%%%%%%%%

%\begin{flushleft}
%{\fbox{\fbox{\begin{minipage}{340pt}
%%%%%%%%%%%%%%%%%%%%%%%%%%%%%%%%%%%%%%%%%%%%%%%%%%%%%%%%%%%%
%\newpage
\paragraph{Conservative Extensions.}
%%%%%%%%%%%%%%%%%%%%%%%%%%%%%%%%%%%%%%%%%%%%%%%%%%%%%%%%%%%%
%
In mathematical logic, 
theory $T_{1}$ is a 
%(proof theoretic) 
conservative extension of theory $T_{2}$ 
when 
(1) the language of $T_{1}$ extends the language of $T_{2}$,
(2) every theorem of $T_{2}$ is a theorem of $T_{1}$, and 
(3) any theorem of $T_{1}$ that is in the language of $T_{2}$ is already a theorem of $T_{2}$.
\comment{% extraction Feb24
We first abstract this to a mathematical definition.
%\begin{itemize}
%\item 
``Language $L_{1}$ extends language $L_{2}$''
when there is a language translation $L_{2}\xrightarrow{t}L_{1}$ from $L_{2}$ to $L_{1}$.
%\item 
``Every theorem of $T_{2}$ is a theorem of $T_{1}$''
means that the direct image satisfies
$\overrightarrow{\mathrmit{th}}_{t}(T_{2})\geq_{L_{1}}T_{1}$ or
$T_{2}\geq_{L_{2}}\overleftarrow{\mathrmit{th}}_{t}(T_{1})$.
Hence,
the language translation $L_{2}\xrightarrow{t}L_{1}$ is a specification morphism $T_{2}\xrightarrow{t}T_{1}$.
\footnote{The collection of all theories with underlying logical language $L$
forms an intentional (concept lattice) entailment order 
$\mathrmit{Th}(L)={\langle{\mathrmit{Th}(L),\leq_{L}}\rangle}$,
where
$T_{1}\leq_{L}T_{2}$ when 
$T_{1}$ is more specialized than $T_{2}$ 
or $T_{2}$ is more generalized than $T_{1}$.
The set $T^{\scriptstyle\bullet}$ 
of all formulas entailed by $T$ is called its entialment closure or consequence.
Formulas in $T$ are called axioms, and 
formulas in its consequence $T^{\scriptstyle\bullet}$ are called theorems. 
Order and consequence are related by
${T}_{1}\leq_{L}{T}_{2}$ when 
${T}_{1}^{\scriptstyle\bullet}\supseteq{T}_{2}^{\scriptstyle\bullet}$, equivalently
${T}_{1}^{\scriptstyle\bullet}\supseteq{T}_{2}$.}
\footnote{A language translation
defines direct and inverse operations
${\langle{\overrightarrow{\mathrmit{th}}_{t},\overleftarrow{\mathrmit{th}}_{t}}\rangle} 
: \mathrmit{Th}(L_{2})^{\mathrm{op}}\rightleftarrows\mathrmit{Th}(L_{1})^{\mathrm{op}}$
between entailment preorders
that map theories adjointly:
$\overrightarrow{\mathrmit{th}}_{t}(T_{2})\geq_{L_{1}}T_{1}$ or
$T_{2}\geq_{L_{2}}\overleftarrow{\mathrmit{th}}_{t}(T_{1})$.
Direct-inverse image composition defines the translation consequence operator
$T_{2}^{\scriptscriptstyle\blacklozenge_{t}} = 
\overleftarrow{\mathrmit{th}}_{t}(\overrightarrow{\mathrmit{th}}_{t}(T_{2}))$
for $T_{2}{\,\in\,}\mathrmit{Th}(L_{2})$.
The consequence operator $(\mbox{-})^{\scriptscriptstyle\blacklozenge_{t}}$ is a closure operator:
(increasing) ${T}\subseteq{T}^{\scriptscriptstyle\blacklozenge}$,
(monotonic)  ${T}_{1}\subseteq{T}_{2}$ 
implies ${T}_{1}^{\scriptscriptstyle\blacklozenge}\subseteq{T}_{2}^{\scriptscriptstyle\blacklozenge}$, and
(idempotent) ${T}^{\scriptscriptstyle\blacklozenge\blacklozenge}={T}^{\scriptscriptstyle\blacklozenge}$.
The traditional consequence operator 
is the translation consequence operator
$(\mbox{-})^{\scriptstyle\bullet}=(\mbox{-})^{\scriptscriptstyle\blacklozenge_{1_{L}}}$
along the identity language translation $L\xrightarrow{1_{L}}L$
%{\bfseries{\fbox{Relate entailment consequence and morphism consequence}}}
A specification morphism $T_{2}\xrightarrow{t}T_{1}$
is a language translation $L_{2}\xrightarrow{t}L_{1}$
that satisfies either of these equivalent conditions.}
%\item 
``Any theorem of $T_{1}$ that is in the language of $T_{2}$ is already a theorem of $T_{2}$''
means that the inverse image satisfies
$\overleftarrow{\mathrmit{th}}_{t}(T_{1})\geq_{L_{2}}T_{2}$.
Hence,
$T_{2}$ is closed along $L_{2}\xrightarrow{t}L_{1}$:
$T_{2}^{\scriptscriptstyle\blacklozenge_{t}}\equiv_{L_{2}}T_{2}$.
%\end{itemize}
%
}% extraction Feb24
We translate this definition to {\ttfamily FOLE} morphisms. 
Given 
(1) 
a schema morphism
$\mathcal{S}_{2}\xRightarrow{\,\sigma\;}\mathcal{S}_{1}$
extending the language of $\mathcal{S}_{2}$ to the language of $\mathcal{S}_{1}$,
a target specification $\mathrmbf{T}_{1}\in\mathrmbf{Spec}(\mathcal{S}_{1})$
is a conservative extension of 
a source specification $\mathrmbf{T}_{2}\in\mathrmbf{Spec}(\mathcal{S}_{2})$ 
when the following conditions hold:
%\begin{itemize}
%\item
(2)
the image of any constraint entailed by $\mathrmbf{T}_{2}$ is entailed by $\mathrmbf{T}_{1}$
with
$\overrightarrow{\mathrmbfit{spec}}_{\sigma}(\mathrmbf{T}_{2})\geq_{\mathcal{S}_{1}}\mathrmbf{T}_{1}$,
%or 
%$\mathrmbf{T}_{2}\geq_{\mathcal{S}_{2}}\overleftarrow{\mathrmbfit{spec}}_{\sigma}(\mathrmbf{T}_{1})$,
so that
$\mathrmbf{T}_{2}\xrightarrow{\sigma}\mathrmbf{T}_{1}$
is a specification morphism;
%\begin{enumerate}
%\item 
%The image of any constraint entailed by $\mathrmbf{T}_{2}$ is entailed by $\mathrmbf{T}_{1}$.
%Symbolically,
%$\mathrmbf{T}_{2}\vdash_{\mathcal{S}_{2}}(\varphi_{2}'{\;\xrightarrow{h}\;}\varphi_{2})$
%implies 
%$\mathrmbf{T}_{1}
%\vdash_{\mathcal{S}_{1}}\mathrmbfit{cons}_{{\langle{r,f}\rangle}}(\varphi_{2}'{\;\xrightarrow{h}\;}\varphi_{2})
%=(\widehat{r}(\varphi_{2}'){\;\xrightarrow{h}\;}\widehat{r}(\varphi_{2}))$
%for any constraint $(\varphi_{2}'{\;\xrightarrow{h}\;}\varphi_{2}) \in \mathrmbfit{cons}(\mathcal{S}_{2})$;
%or more concisely,
%\newline\mbox{}\hfill
%{\footnotesize{
%}}
%\hfill\mbox{}\newline
%$\mathrmbf{T}_{2}^{\scriptstyle\bullet} \sqsubseteq \mathrmbf{T}_{1}^{\scriptstyle\bullet}$ 
%or 
%$\mathrmbfit{dir}_{inc}(\mathrmbf{T}_{2}^{\scriptstyle\bullet})^{\scriptstyle\bullet} 
%=
%\sqsubseteq 
%\mathrmbf{T}_{1}^{\scriptstyle\bullet}$
%or 
%$
%\mathrmbfit{dir}_{inc}(\mathrmbf{T}_{2})^{\scriptstyle\bullet} 
%\sqsubseteq 
%\mathrmbf{T}_{1}^{\scriptstyle\bullet}$
%or $\mathrmbfit{dir}_{inc}(\mathrmbf{T}_{2}) \geq \mathrmbf{T}_{1}$
%iff $\mathrmbf{T}_{2} \geq \mathrmbfit{inv}_{inc}(\mathrmbf{T}_{1})$.
%\newline
%\item
and
(3)
%the inverse image of 
any constraint 
%in $\mathcal{S}_{2}$
%that is 
whose image is entailed by $\mathrmbf{T}_{1}$ 
is already entailed by $\mathrmbf{T}_{2}$
with
$\mathrmbf{T}_{2}\leq_{\mathcal{S}_{2}}\overleftarrow{\mathrmbfit{spec}}_{\sigma}(\mathrmbf{T}_{1})$,
so that 
$\mathrmbf{T}_{2}$ is closed along $\mathcal{S}_{2}\xRightarrow{\,\sigma\;}\mathcal{S}_{1}$ 
with
$\mathrmbf{T}_{2}^{\scriptscriptstyle\blacklozenge_{\sigma}}{\;\equiv_{\mathcal{S}_{2}}\;}\mathrmbf{T}_{2}$.
\footnote{Morphic closure is defined at the system level in Kent \cite{kent:iccs2009}.}
Conservative extensions are closed under composition.

\paragraph{Consistency.}
%%%%%%%%%%%%%%%%%%%%%%%%%%%%%%%%%%%%%%%%%%%%%%%%%%%%%%%%%%%%

A conservative extension of a consistent specification is consistent,
since the inverse image operator preserves concept lattice joins,
mapping the empty join to the empty join:
$\mathrmbf{T}_{1}{\;\equiv_{\mathcal{S}_{1}}\;}\bot_{\mathcal{S}_{1}}$ 
implies
$\mathrmbf{T}_{2}{\;\equiv_{\mathcal{S}_{2}}\;}\overleftarrow{\mathrmbfit{spec}}_{\sigma}(\mathrmbf{T}_{1}) 
{\;\equiv_{\mathcal{S}_{2}}\;} \bot_{\mathcal{S}_{2}}$.
In contrast,
the specification component of \underline{any} 
structure morphism
preserves soundness (specific consistency).
\comment{
\begin{quotation}
{\bfseries{
So how much like conservative extensions
are the (intent) specification components of structure morphisms, or
the specification components of sound logic morphisms?}}
\end{quotation}
}
\begin{corollary}\label{cor:pres:cons}
Structure morphisms preserve soundness (specific consistency) (\S\ref{sub:sec:snd:log}):
for any structure morphism
$\mathcal{M}_{2}\xrightleftharpoons{{\langle{r,k,f,g}\rangle}}\mathcal{M}_{1}$
with  schema morphism
$\mathcal{S}_{2}\xRightarrow{{\langle{r,f}\rangle}}\mathcal{S}_{1}$,
%direct flow preserves consistency:
if $\mathcal{S}_{2}$-structure $\mathcal{M}_{2}$ satisfies $\mathrmbf{T}_{2}$,
then $\mathcal{S}_{1}$-structure $\mathcal{M}_{1}$ satisfies $\overrightarrow{\mathrmbfit{spec}}_{{\langle{r,f}\rangle}}(\mathrmbf{T}_{2})$.
\end{corollary}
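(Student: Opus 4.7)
The plan is to convert the satisfaction statement into an inequality in the specification preorder and then chain two inequalities that are already supplied by the excerpt: one from Lemma~\ref{lem:nat:spec:mor} applied to the given structure morphism, and one from monotonicity of direct image applied to the hypothesis.

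First I would rewrite the hypothesis $\mathcal{M}_{2}\models_{\mathcal{S}_{2}}\mathrmbf{T}_{2}$ as $\mathcal{M}_{2}^{\mathcal{S}_{2}}\leq_{\mathcal{S}_{2}}\mathrmbf{T}_{2}$, using the equivalence between the defining condition of satisfaction ($\mathcal{M}^{\mathcal{S}}\sqsupseteq\mathrmbf{T}$) and the specification preorder, in which the intent is characterized as the most specialized specification that $\mathcal{M}$ satisfies. The goal $\mathcal{M}_{1}\models_{\mathcal{S}_{1}}\overrightarrow{\mathrmbfit{spec}}_{{\langle{r,f}\rangle}}(\mathrmbf{T}_{2})$ correspondingly becomes $\mathcal{M}_{1}^{\mathcal{S}_{1}}\leq_{\mathcal{S}_{1}}\overrightarrow{\mathrmbfit{spec}}_{{\langle{r,f}\rangle}}(\mathrmbf{T}_{2})$.

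Next I would feed in the two tools. Lemma~\ref{lem:nat:spec:mor} says that the given structure morphism induces an intent specification morphism ${\langle{\mathcal{S}_{2},\mathcal{M}_{2}^{\mathcal{S}_{2}}}\rangle}\xrightarrow{{\langle{r,f}\rangle}}{\langle{\mathcal{S}_{1},\mathcal{M}_{1}^{\mathcal{S}_{1}}}\rangle}$, i.e., $\overrightarrow{\mathrmbfit{spec}}_{{\langle{r,f}\rangle}}(\mathcal{M}_{2}^{\mathcal{S}_{2}})\geq_{\mathcal{S}_{1}}\mathcal{M}_{1}^{\mathcal{S}_{1}}$. The monotonicity of $\overrightarrow{\mathrmbfit{spec}}_{{\langle{r,f}\rangle}}$, listed among the properties of specification flow in \S\ref{sub:sub:sec:spec:flow}, turns the rewritten hypothesis into $\overrightarrow{\mathrmbfit{spec}}_{{\langle{r,f}\rangle}}(\mathcal{M}_{2}^{\mathcal{S}_{2}})\leq_{\mathcal{S}_{1}}\overrightarrow{\mathrmbfit{spec}}_{{\langle{r,f}\rangle}}(\mathrmbf{T}_{2})$. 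Transitivity in the specification preorder then delivers $\mathcal{M}_{1}^{\mathcal{S}_{1}}\leq_{\mathcal{S}_{1}}\overrightarrow{\mathrmbfit{spec}}_{{\langle{r,f}\rangle}}(\mathrmbf{T}_{2})$, which re-reads as the desired satisfaction $\mathcal{M}_{1}\models_{\mathcal{S}_{1}}\overrightarrow{\mathrmbfit{spec}}_{{\langle{r,f}\rangle}}(\mathrmbf{T}_{2})$.

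I do not expect a real obstacle: the corollary is an immediate consequence of Lemma~\ref{lem:nat:spec:mor} together with monotonicity of direct image. The only delicate point to verify is the orientation of the preorder --- that ``more specialized'' corresponds to being smaller under $\leq_{\mathcal{S}}$ and hence larger (as a subgraph on closures) --- so that the set-theoretic condition $\mathcal{M}^{\mathcal{S}}\sqsupseteq\mathrmbf{T}$ defining satisfaction really does coincide with $\mathcal{M}^{\mathcal{S}}\leq_{\mathcal{S}}\mathrmbf{T}$, and so that the two inequalities chain in the correct direction.
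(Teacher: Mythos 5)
Your proposal is correct and matches the paper's own argument: the paper's proof is exactly the chain $\mathcal{M}_{1}^{\mathcal{S}_{1}}\leq_{\mathcal{S}_{1}}\overrightarrow{\mathrmbfit{spec}}_{{\langle{r,f}\rangle}}(\mathcal{M}_{2}^{\mathcal{S}_{2}})\leq_{\mathcal{S}_{1}}\overrightarrow{\mathrmbfit{spec}}_{{\langle{r,f}\rangle}}(\mathrmbf{T}_{2})$, using Lemma~\ref{lem:nat:spec:mor} for the first inequality and monotonicity of direct flow together with the soundness hypothesis $\mathcal{M}_{2}^{\mathcal{S}_{2}}\leq_{\mathcal{S}_{2}}\mathrmbf{T}_{2}$ for the second. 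Your check on the orientation of the preorder is also consistent with the paper's conventions in \S\ref{sub:sec:snd:log}.
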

\begin{proof}
\mbox{}
\vspace{-10pt}
\begin{center}
$\begin{array}{@{\hspace{25pt}}c@{\hspace{-5pt}}c@{\hspace{-5pt}}c}
\mathcal{S}_{2}
&
\xRightarrow{{\langle{r,f}\rangle}}
&
{\hspace{-80pt}}\mathcal{S}_{1}
\\
\mathcal{M}_{2}^{\mathcal{S}_{2}}
\stackrel{\scriptscriptstyle{\text{sound}}}{\;\leq_{\mathcal{S}_{2}}\,}
\mathrmbf{T}_{2}
&&
\mathcal{M}_{1}^{\mathcal{S}_{1}}
\stackrel{\scriptscriptstyle{\text{Lem.~\ref{lem:nat:spec:mor}}}}{\;\leq_{\mathcal{S}_{1}}\,}
\overrightarrow{\mathrmbfit{spec}}_{{\langle{r,f}\rangle}}(\mathcal{M}_{2}^{\mathcal{S}_{2}})
\stackrel{\overset{\text{mono}}{\scriptscriptstyle{\text{tonic}}}}{\;\leq_{\mathcal{S}_{1}}\;}
\overrightarrow{\mathrmbfit{spec}}_{{\langle{r,f}\rangle}}(\mathrmbf{T}_{2})
\\\\
\end{array}$
\end{center}
\end{proof}
\begin{corollary}\label{cor:refl:cons}
Reductions reflect consistency:
if $\mathcal{M}_{2}=\mathrmbfit{struc}^{\curlywedge}_{{\langle{r,f}\rangle}}(\mathcal{M}_{1})$
is the reduct of $\mathcal{M}_{1}$
%with (bridging) structure morphism
%$\mathcal{M}_{2}\xrightleftharpoons{{\langle{r,f}\rangle}}\mathcal{M}_{1}$ 
with underlying schema morphism
$\mathcal{S}_{2}\xRightarrow{{\langle{r,f}\rangle}}\mathcal{S}_{1}$ and 
$\mathcal{S}_{1}$-structure $\mathcal{M}_{1}$ satisfies $\mathrmbf{T}_{1}$,
then $\mathcal{S}_{2}$-structure $\mathcal{M}_{2}$ satisfies $\overleftarrow{\mathrmbfit{spec}}_{{\langle{r,f}\rangle}}(\mathrmbf{T}_{1})$.
\end{corollary}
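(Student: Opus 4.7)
The plan is to derive this statement as the mirror image of Cor.~\ref{cor:pres:cons}: where that corollary combined Lem.~\ref{lem:nat:spec:mor} with the monotonicity of $\overrightarrow{\mathrmbfit{spec}}_{{\langle{r,f}\rangle}}$ to push satisfaction forward along a structure morphism, here one combines the institution identity Eqn.~\ref{eqn:ins:abs} with the monotonicity of $\overleftarrow{\mathrmbfit{spec}}_{{\langle{r,f}\rangle}}$ to pull it back along a reduct.

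First I would restate the hypothesis $\mathcal{M}_{1}{\;\models_{\mathcal{S}_{1}}\;}\mathrmbf{T}_{1}$ as the intent inequality $\mathcal{M}_{1}^{\mathcal{S}_{1}}{\;\leq_{\mathcal{S}_{1}}\;}\mathrmbf{T}_{1}$, using the definition from \S\ref{sub:sub:sec:specs} (the intent is the most specialized specification satisfied by $\mathcal{M}_{1}$). Since $\overleftarrow{\mathrmbfit{spec}}_{{\langle{r,f}\rangle}}$ is right adjoint to $\overrightarrow{\mathrmbfit{spec}}_{{\langle{r,f}\rangle}}$ with respect to $\leq_{\mathcal{S}}$ (\S\ref{sub:sub:sec:spec:flow}), it is monotone in that order, so applying it to both sides yields $\overleftarrow{\mathrmbfit{spec}}_{{\langle{r,f}\rangle}}(\mathcal{M}_{1}^{\mathcal{S}_{1}}){\;\leq_{\mathcal{S}_{2}}\;}\overleftarrow{\mathrmbfit{spec}}_{{\langle{r,f}\rangle}}(\mathrmbf{T}_{1})$.

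Next I would invoke Eqn.~\ref{eqn:ins:abs}, which gives $\mathrmbfit{struc}^{\curlywedge}_{{\langle{r,f}\rangle}}(\mathcal{M}_{1})^{\mathcal{S}_{2}}=\overleftarrow{\mathrmbfit{spec}}_{{\langle{r,f}\rangle}}(\mathcal{M}_{1}^{\mathcal{S}_{1}})$. By the hypothesis $\mathcal{M}_{2}=\mathrmbfit{struc}^{\curlywedge}_{{\langle{r,f}\rangle}}(\mathcal{M}_{1})$, the left-hand side is exactly $\mathcal{M}_{2}^{\mathcal{S}_{2}}$, and chaining with the previous inequality produces $\mathcal{M}_{2}^{\mathcal{S}_{2}}{\;\leq_{\mathcal{S}_{2}}\;}\overleftarrow{\mathrmbfit{spec}}_{{\langle{r,f}\rangle}}(\mathrmbf{T}_{1})$, which unwinds to $\mathcal{M}_{2}{\;\models_{\mathcal{S}_{2}}\;}\overleftarrow{\mathrmbfit{spec}}_{{\langle{r,f}\rangle}}(\mathrmbf{T}_{1})$. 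A display in the style of Cor.~\ref{cor:pres:cons} would suffice.

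There is no significant obstacle; the argument is a two-step chase, one application of adjoint monotonicity and one application of the institutional satisfaction identity. The only mild subtlety is keeping the direction of $\leq_{\mathcal{S}}$ straight, since it is opposite to subgraph inclusion $\sqsubseteq$. As an alternative presentation, I could argue constraint-by-constraint: for each $(\varphi_{2}'{\xrightarrow{h}\,}\varphi_{2}){\;\in\;}\overleftarrow{\mathrmbfit{spec}}_{{\langle{r,f}\rangle}}(\mathrmbf{T}_{1})$, the image $\mathrmbfit{cons}_{{\langle{r,f}\rangle}}(\varphi_{2}'{\xrightarrow{h}\,}\varphi_{2})$ lies in $\mathrmbf{T}_{1}^{\scriptstyle\bullet}$ by definition of inverse flow, hence is satisfied by $\mathcal{M}_{1}$, whereupon the satisfaction condition Eqn.~\ref{eqn:ins} transfers satisfaction to $\mathcal{M}_{2}=\mathrmbfit{struc}^{\curlywedge}_{{\langle{r,f}\rangle}}(\mathcal{M}_{1})$.
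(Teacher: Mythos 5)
Your proposal is correct and follows essentially the same route as the paper's own one-line proof: the paper writes the chain $\mathcal{M}_{2}^{\mathcal{S}_{2}} \overset{\text{reduct}}{=} \mathrmbfit{struc}^{\curlywedge}_{{\langle{r,f}\rangle}}(\mathcal{M}_{1})^{\mathcal{S}_{2}} \overset{\text{Eqn.~\ref{eqn:ins:abs}}}{=} \overleftarrow{\mathrmbfit{spec}}_{{\langle{r,f}\rangle}}(\mathcal{M}_{1}^{\mathcal{S}_{1}}) \leq_{\mathcal{S}_{2}} \overleftarrow{\mathrmbfit{spec}}_{{\langle{r,f}\rangle}}(\mathrmbf{T}_{1})$, which is exactly your two-step argument (the institutional identity plus monotonicity of inverse flow) read in the opposite order. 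No gap; your constraint-by-constraint alternative is a harmless unwinding of the same facts.
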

\begin{proof}
\[\mbox{\footnotesize{$
\mathcal{M}^{\mathcal{S}_{2}}_{2}
\overset{\text{reduct}}{=}
\mathrmbfit{struc}^{\curlywedge}_{{\langle{r,f}\rangle}}(\mathcal{M}_{1})^{\mathcal{S}_{2}}
\stackrel{\overset{\text{Eqn.}}{\scriptscriptstyle{\text{\ref{eqn:ins:abs}}}}}{=}
%\overset{\text{instit}}{\scriptscriptstyle{\text{ution}}}}
\overleftarrow{\mathrmbfit{spec}}_{{\langle{r,f}\rangle}}(\mathcal{M}_{1}^{\mathcal{S}_{1}})
\stackrel{\overset{\text{mono}}{\scriptscriptstyle{\text{tonic}}}}{\;\leq_{\mathcal{S}_{2}}\;}
\overleftarrow{\mathrmbfit{spec}}_{{\langle{r,f}\rangle}}(\mathrmbf{T}_{1})
$.}\normalsize}\]
\end{proof}

\comment{
In the fiber content 
$\mathrmbf{Struc}(\mathcal{U})$
of structures over
a fixed universe
$\mathcal{U}={\langle{K,\tau,Y}\rangle}$,
consistency flows back along any 

with schema morphism
$\mathcal{S}_{2}
%={\langle{R_{2},\sigma_{2},X_{2}}\rangle} 
\xRightarrow{\langle{r,f}\rangle}
%{\langle{R_{1},\sigma_{1},X_{1}}\rangle}=
\mathcal{S}_{1}$.

and 
%specification 
$\mathrmbf{T}_{1}\in\mathrmbf{Spec}(\mathcal{S}_{1})$ 
is consistent
$\mathcal{M}_{1}^{\mathcal{S}_{1}}{\;\geq_{\mathcal{S}_{1}}\;}\mathrmbf{T}_{1}$,
then the inverse image 
%specification 
$\overleftarrow{\mathrmbfit{spec}}_{{\langle{r,f}\rangle}}(\mathrmbf{T}_{1})\in\mathrmbf{Spec}(\mathcal{S}_{2})$ 
is consistent
}

%
%\[\mbox{\footnotesize{$
%\mathrmbfit{struc}_{{\langle{r,f}\rangle}}(\mathcal{M}_{1})^{\mathcal{S}_{2}}=
%overleftarrow{\mathrmbfit{spec}}_{{\langle{r,f}\rangle}}(\mathcal{M}_{1}^{\mathcal{S}_{1}})
%$.}\normalsize}\]
%

\comment{
\mbox{}\newline\rule{340pt}{1pt}\newline
\begin{itemize}
\item 
A structure morphism
$\mathcal{M}_{2}\xrightleftharpoons{{\langle{r,k,f,g}\rangle}}\mathcal{M}_{1}$,
consists of a universe morphism 
$\mathcal{U}_{2}\xLeftarrow[\gamma]{{\langle{k,g}\rangle}}\mathcal{U}_{1}$
and a $\mathrmbf{Struc}(\mathcal{U}_{1})$-morphism 
$\mathrmbfit{struc}_{\gamma}(\mathcal{M}_{2})\xrightleftharpoons{{\langle{r,f}\rangle}}\mathcal{M}_{1}$.
Hence,
$\overrightarrow{\mathrmbfit{spec}}_{{\langle{r,f}\rangle}}(\mathrmbf{T}_{2})
{\;\geq_{\mathcal{M}_{1}}\;}
\mathrmbf{T}_{1}$
or
$\mathrmbf{T}_{2}
{\;\geq_{\mathcal{M}_{1}}\;}
\overleftarrow{\mathrmbfit{spec}}_{{\langle{r,f}\rangle}}(\mathrmbf{T}_{1})$
\item 
{\ttfamily FOLE} institution:
\[\mbox{\footnotesize{$
\mathrmbfit{struc}_{{\langle{r,f}\rangle}}(\mathcal{M}_{1})^{\mathcal{S}_{2}}=
\overleftarrow{\mathrmbfit{spec}}_{{\langle{r,f}\rangle}}(\mathcal{M}_{1}^{\mathcal{S}_{1}})
$.}\normalsize}\]
\item 
For any schema $\mathcal{S} = {\langle{R,\sigma,X}\rangle}$, if 
$\widehat{\mathcal{M}}_{2}\xrightleftharpoons{{\langle{1_{\widehat{R}},k,1_{X},g}\rangle}}\widehat{\mathcal{M}}_{1}$
is a vertical structure morphism in the fiber context $\mathrmbf{Struc}(\widehat{\mathcal{S}})$, 
then we have the intent order $\mathcal{M}_{2}\geq_{\mathcal{S}}\mathcal{M}_{1}$.
\item 
A structure morphism
$\mathcal{M}_{2}\xrightleftharpoons{{\langle{r,k,f,g}\rangle}}\mathcal{M}_{1}$,
consists of a schema morphism 
$\mathcal{S}_{2}\xRightarrow[\gamma]{{\langle{r,f}\rangle}}\mathcal{S}_{1}$
and a $\mathrmbf{Struc}(\mathcal{M}_{2})$-morphism 
$\mathcal{M}_{2}
\xrightleftharpoons{{\langle{k,g}\rangle}}
\mathrmbfit{struc}_{{\langle{r,f}\rangle}}(\mathcal{M}_{1})$.
Hence,
we have the intent order 
$\mathcal{M}^{\mathcal{S}_{2}}_{2}
{\;\geq_{\mathcal{S}_{2}}\;}\mathrmbfit{struc}_{{\langle{r,f}\rangle}}(\mathcal{M}_{1})^{\mathcal{S}_{2}}
=\overleftarrow{\mathrmbfit{spec}}_{{\langle{r,f}\rangle}}(\mathcal{M}_{1}^{\mathcal{S}_{1}})$.
\end{itemize}
\mbox{}\newline\rule{340pt}{1pt}\newline
}

%%%%%%%%%%%%%%%%%%%%%%%%%%%%%%%%%%%%%%%%%%%%%%%%%%%%%%%%%%%%%%%%%%%%%%%%%%%%%%%%%%%%%%%%%%%%%%%%%%%%
%%%%%%%%%%%%%%%%%%%%%%%%%%%%%%%%%%%%%%%%%%%%%%%%%%%%%%%%%%%%%%%%%%%%%%%%%%%%%%%%%%%%%%%%%%%%%%%%%%%%
\newpage
\subsection{Logics}\label{sub:sec:log}
%%%%%%%%%%%%%%%%%%%%%%%%%%%%%%%%%%%%%%%%%%%%%%%%%%%%%%%%%%%%%%%%%%%%%%%%%%%%%%%%%%%%%%%%%%%%%%%%%%%%
%%%%%%%%%%%%%%%%%%%%%%%%%%%%%%%%%%%%%%%%%%%%%%%%%%%%%%%%%%%%%%%%%%%%%%%%%%%%%%%%%%%%%%%%%%%%%%%%%%%%

%%%%%%%%%%%%%%%%%%%%%%%%%%%%%%%%%%%%%%%%%%%%%%%%%%%%%%%%%%%%%%%%%%%%%%%%%%%%%%%%%%%%%%%%%%%%%%%%%%%%
%\newpage
\subsubsection{Logics.}\label{sub:sub:sec:log:obj}
%%%%%%%%%%%%%%%%%%%%%%%%%%%%%%%%%%%%%%%%%%%%%%%%%%%%%%%%%%%%%%%%%%%%%%%%%%%%%%%%%%%%%%%%%%%%%%%%%%%%

A logic $\mathcal{L}={\langle{\mathcal{S},\mathcal{M},\mathrmbf{T}}\rangle}$
consists of 
a structure $\mathcal{M}$ and
a specification $\mathcal{T}={\langle{\mathcal{S},\mathrmbf{T}}\rangle}$
that share 
a common schema $\mathcal{S}$.
%
%%%%%%%%%%%%%%%%%%%%%%%%%%%%%%%%%%%%%%%%%%%%%%%%%%%%%%%%%%%%%%%%%%%%%%%%%%%%%%%%%%%%%%%%%%%%%%%%%%%%
%\newpage
%\paragraph{Logic Fiber Order.}
%%%%%%%%%%%%%%%%%%%%%%%%%%%%%%%%%%%%%%%%%%%%%%%%%%%%%%%%%%%%%%%%%%%%%%%%%%%%%%%%%%%%%%%%%%%%%%%%%%%%
%
For any fixed structure $\mathcal{M}$,
the set of all logics $\mathrmbf{Log}(\mathcal{M})$
with that structure is a preordered set under the specification order:
${\langle{\mathcal{S},\mathcal{M},\mathrmbf{T}_{1}}\rangle}{\;\leq_{\mathcal{M}}\;}{\langle{\mathcal{S},\mathcal{M},\mathrmbf{T}_{2}}\rangle}$ 
when 
$\mathrmbf{T}_{1}\leq_{\mathcal{S}}\mathrmbf{T}_{2}$.
%;
%that is, when $\mathrmbf{T}_{1}^{\scriptscriptstyle\bullet}\sqsupseteq\mathrmbf{T}_{2}$.
%We denote this preorder by 
%$\mathrmbfit{fbr}(\mathcal{M})^{\mathrm{op}} 
%= {\langle{\mathrmbf{Log}(\mathcal{M}),\leq_{\mathcal{M}}}\rangle}
%\{\langle{\mathrmbf{Spec}(\mathcal{S}),\leq_{\mathcal{S}}}\rangle}
%= \mathrmbfit{fbr}(\mathcal{S})^{\mathrm{op}}$.
%
Note that
\[\mbox{\footnotesize{$
\bot_{\mathcal{M}}
={\langle{\mathcal{S},\mathcal{M},\mathrmbf{Cons}(\mathcal{S})}\rangle}
{\;\leq_{\mathcal{M}}\;}\mathcal{L}
{\;\leq_{\mathcal{M}}\;}{\langle{\mathcal{S},\mathcal{M},\emptyset}\rangle}
=\top_{\mathcal{M}}
$}\normalsize}\]
%
%\newline\mbox{}\hfill
%$\bot_{\mathcal{M}}
%={\langle{\mathcal{S},\mathcal{M},\mathrmbf{Cons}(\mathcal{S})}\rangle}
%{\;\leq_{\mathcal{M}}\;}\mathcal{L}
%{\;\leq_{\mathcal{M}}\;}{\langle{\mathcal{S},\mathcal{M},\emptyset}\rangle}
%=\top_{\mathcal{M}}$
%\hfill\mbox{}\newline
for any logic $\mathcal{L}={\langle{\mathcal{S},\mathcal{M},\mathrmbf{T}}\rangle}$,
since
$\mathrmbf{Cons}(\mathcal{S})
=
\mathrmbf{Cons}(\mathcal{S})^{\scriptscriptstyle\bullet}{\;\sqsupseteq\;}\mathrmbf{T}\sqsupseteq\emptyset$.
For any structure $\mathcal{M}$ with underlying schema
$\mathrmbfit{sch}(\mathcal{M})=\mathcal{S}$,
the logic order over $\mathcal{M}$
is isomorphic to 
the specification order over $\mathcal{S}$, 
$\mathrmbf{Log}(\mathcal{M})\cong\mathrmbf{Spec}(\mathcal{S})$. 
%(Fig.~\ref{fig:log:ord})
%
%\mbox{}\newline\rule{350pt}{1pt}\newline
%\mbox{}\newline\rule{175pt}{1pt}\newline
%\begin{wrapfigure}{r}{155pt}
%
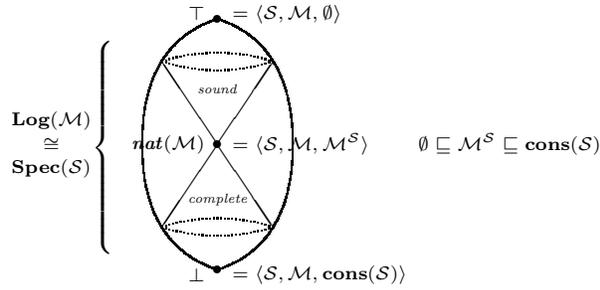
\begin{figure}
\begin{center}
{{\begin{tabular}{@{\hspace{15pt}}c}
\\
\setlength{\unitlength}{0.95pt}
\begin{picture}(150,90)(-45,5)
\put(15,102){\makebox(0,0)[r]{\scriptsize{$\top$}}}
\put(26,102){\makebox(0,0)[l]{\scriptsize{$={\langle{\mathcal{S},\mathcal{M},\emptyset}\rangle}$}}}
\put(15,50){\makebox(0,0)[r]{\scriptsize{$\mathrmbfit{nat}(\mathcal{M})$}}}
\put(26,50){\makebox(0,0)[l]{\scriptsize{$={\langle{\mathcal{S},\mathcal{M},\mathcal{M}^{\mathcal{S}}}\rangle}$}}}
\put(15,-2){\makebox(0,0)[r]{\scriptsize{$\bot$}}}
\put(26,-2){\makebox(0,0)[l]{\scriptsize{$={\langle{\mathcal{S},\mathcal{M},\mathrmbf{cons}(\mathcal{S})}\rangle}$}}}
\put(100,50){\makebox(0,0)[l]{\scriptsize{$\emptyset\sqsubseteq\mathcal{M}^{\mathcal{S}}\sqsubseteq\mathrmbf{cons}(\mathcal{S})$}}}
\put(20,72){\makebox(0,0){\tiny{$\emph{sound}$}}}
\put(20,28){\makebox(0,0){\tiny{$\emph{complete}$}}}
\put(-62,50){\makebox(0,0)[l]{\scriptsize{\shortstack{$\mathrmbf{Log}(\mathcal{M})$\\$\cong$\\$\mathrmbf{Spec}(\mathcal{S})$}}}}
\put(-30,50){\makebox(0,0)[l]{\small{$\left\{ \rule[15pt]{0pt}{30pt} \right.$}}}
\put(20,100){\circle*{3}}
\put(20,50){\circle*{3}}
\put(20,0){\circle*{3}}
\qbezier(-10,50)(-10,90)(20,100)
\qbezier(-10,50)(-10,10)(20,0)
\qbezier(50,50)(50,90)(20,100)
\qbezier(50,50)(50,10)(20,0)
\put(20,50){\line(-2,3){22}}
\put(20,50){\line(2,3){22}}
\qbezier[30](-2,83)(20,90)(42,83)
\qbezier[30](-2,83)(20,76)(42,83)
\put(20,50){\line(-2,-3){22}}
\put(20,50){\line(2,-3){22}}
\qbezier[30](-2,17)(20,24)(42,17)
\qbezier[30](-2,17)(20,10)(42,17)
\end{picture}
\end{tabular}}}
\end{center}
\caption{Logic Order}
\label{fig:log:ord}
\end{figure}
%\end{wrapfigure}
%

%%%%%%%%%%%%%%%%%%%%%%%%%%%%%%%%%%%%%%%%%%%%%%%%%%%%%%%%%%%%%%%%%%%%%%%%%%%%%%%%%%%%%%%%%%%%%%%%%%%%
%\newpage
\subsubsection{Logic Flow.}\label{sub:sub:sec:log:flow}
%%%%%%%%%%%%%%%%%%%%%%%%%%%%%%%%%%%%%%%%%%%%%%%%%%%%%%%%%%%%%%%%%%%%%%%%%%%%%%%%%%%%%%%%%%%%%%%%%%%%

The semantic molecules (logics) can be moved along structure morphisms.
For any structure morphism       
$\mathcal{M}_{2}
% = {\langle{\mathcal{E}_{2},{\langle{\sigma_{2},\tau_{2}}\rangle},\mathcal{A}_{2}}\rangle} 
\xrightleftharpoons{{\langle{r,k,f,g}\rangle}}
%{\langle{\mathcal{E}_{1},{\langle{\sigma_{1},\tau_{1}}\rangle},\mathcal{A}_{1}}\rangle} = 
\mathcal{M}_{1}$
with underlying schema morphism 
$\mathcal{S}_{2}\stackrel{{\langle{r,f}\rangle}}{\Longrightarrow}\mathcal{S}_{1}$,
define the {\em direct/inverse flow} operators 
\begin{center}
{\footnotesize{$\begin{array}{l@{\hspace{4pt}:\hspace{4pt}}l}
\mathrmbf{Log}(\mathcal{M}_{2})
\xrightarrow{\overrightarrow{\mathrmbfit{log}}_{{\langle{r,k,f,g}\rangle}}}
\mathrmbf{Log}(\mathcal{M}_{1})
&
{\langle{\mathcal{S}_{2},\mathcal{M}_{2},\mathrmbf{T}_{2}}\rangle}
\mapsto
{\langle{\mathcal{S}_{1},\mathcal{M}_{1},\overrightarrow{\mathrmbfit{spec}}_{{\langle{r,f}\rangle}}(\mathrmbf{T}_{2})}\rangle}
\\
\mathrmbf{Log}(\mathcal{M}_{2})
\xleftarrow{\overleftarrow{\mathrmbfit{log}}_{{\langle{r,k,f,g}\rangle}}}
\mathrmbf{Log}(\mathcal{M}_{1})
&
{\langle{\mathcal{S}_{2},\mathcal{M}_{2},\overleftarrow{\mathrmbfit{spec}}_{{\langle{r,f}\rangle}}(\mathrmbf{T}_{1})}\rangle}\mapsfrom
{\langle{\mathcal{S}_{1},\mathcal{M}_{1},\mathrmbf{T}_{1}}\rangle}
\end{array}$}}
\end{center}
These are adjoint monotonic functions w.r.t. logic order:
{\footnotesize{
\begin{equation}\label{eqn:log:mor}
\overrightarrow{\mathrmbfit{log}}_{{\langle{r,k,f,g}\rangle}}(\mathcal{L}_{2})
{\;\geq_{\mathcal{M}_{1}}\;}
\mathcal{L}_{1}
\;\;\text{\underline{iff}}\;\; 
\mathcal{L}_{2}
{\;\geq_{\mathcal{M}_{2}}\;}
\overleftarrow{\mathrmbfit{log}}_{{\langle{r,k,f,g}\rangle}}(\mathcal{L}_{1})
\end{equation}
}\normalsize}
for all target logics $\mathcal{L}_{1}$ 
and source logics $\mathcal{L}_{2}$.

%%%%%%%%%%%%%%%%%%%%%%%%%%%%%%%%%%%%%%%%%%%%%%%%%%
\paragraph{Logic Morphisms}
%%%%%%%%%%%%%%%%%%%%%%%%%%%%%%%%%%%%%%%%%%%%%%%%%%

A logic morphism
$\mathcal{L}_{2}={\langle{\mathcal{S}_{2},\mathcal{M}_{2},\mathrmbf{T}_{2}}\rangle}
\xrightleftharpoons{{\langle{r,k,f,g}\rangle}}
{\langle{\mathcal{S}_{1},\mathcal{M}_{1},\mathrmbf{T}_{1}}\rangle}=\mathcal{L}_{1}$
consists of 
a structure morphism
$\mathcal{M}_{2}
%={\langle{\mathcal{R}_{2},{\langle{\sigma_{2},\tau_{2}}\rangle},\mathcal{E}_{2}}\rangle}
\xrightarrow{{\langle{r,k,f,g}\rangle}}
%{\langle{\mathcal{R}_{1},{\langle{\sigma_{1},\tau_{1}}\rangle},\mathcal{E}_{1}}\rangle}=
\mathcal{M}_{1}$ and
a specification morphism
%$\mathrmbf{T}_{2}\xrightarrow{\langle{r,f}\rangle}\mathrmbf{T}_{1}$
$\mathcal{T}_{2}={\langle{\mathcal{S}_{2},\mathrmbf{T}_{2}}\rangle}
\xrightleftharpoons{{\langle{r,f}\rangle}}
{\langle{\mathcal{S}_{1},\mathrmbf{T}_{1}}\rangle}=\mathcal{T}_{1}$
that share a common schema morphism  
$\mathcal{S}_{2}\xRightarrow{{\langle{r,f}\rangle}}\mathcal{S}_{1}$.
%
%\footnote{
A logic morphism
$\mathcal{L}_{2}
%={\langle{\mathcal{S}_{2},\mathcal{M}_{2},\mathrmbf{T}_{2}}\rangle}
\xrightarrow{{\langle{r,k,f,g}\rangle}}
%{\langle{\mathcal{S}_{1},\mathcal{M}_{1},\mathrmbf{T}_{1}}\rangle}=
\mathcal{L}_{1}$
is a structure morphism 
$\mathcal{M}_{2}\xrightleftharpoons{{\langle{r,k,f,g}\rangle}}\mathcal{M}_{1}$
that maps the source logic to a generalization of the target logic
$\overrightarrow{\mathrmbfit{log}}_{{\langle{r,k,f,g}\rangle}}(\mathcal{L}_{2})\geq_{\mathcal{M}_{1}}\mathcal{L}_{1}$,
or equivalently,
%\underline{or}
that maps the target logic to a specialization of the source logic
$\mathcal{L}_{2}\geq_{\mathcal{M}_{2}}\overleftarrow{\mathrmbfit{log}}_{{\langle{r,k,f,g}\rangle}}(\mathcal{L}_{1})$.
%}

%
The context of logics $\mathrmbf{Log}$ 
has logics as objects and logic morphisms as morphisms.
It is the fibered product
\newline\newline\mbox{}\rule{30pt}{0pt}\hfill{
$\mathrmbf{Struc}\xleftarrow{\mathrmbfit{struc}}\mathrmbf{Log}\xrightarrow{\mathrmbfit{spec}}\mathrmbf{Spec}$,
}\hfill{
(Fig.~\ref{fbr:ctx})
}\newline\newline
of the contexts of structures and specifications 
$\mathrmbf{Struc}\xrightarrow{\mathrmbfit{sch}}\mathrmbf{Sch}\xleftarrow{\mathrmbfit{sch}}\mathrmbf{Spec}$.
The projective passages 
satisfy the condition 
$\mathrmbfit{struc}{\;\circ\;}\mathrmbfit{sch} = \mathrmbfit{spec}{\;\circ\;}\mathrmbfit{sch}$.
\footnote{This defines the upper-right part of the {\ttfamily FOLE} superstructure
(Fig.~\ref{fbr:ctx}).}
\comment{% Jul 29
\footnote{Hence, a logic morphism is
(1) a structure morphism
$\mathcal{M}_{2}\xrightleftharpoons{{\langle{r,k,f,g}\rangle}}\mathcal{M}_{1}$,
consisting of a universe morphism 
$\mathcal{U}_{2}\xLeftarrow{{\langle{k,g}\rangle}}\mathcal{U}_{1}$
and a $\mathrmbf{Struc}(\mathcal{U}_{1})$-morphism 
$\mathrmbfit{struc}_{{\langle{k,g}\rangle}}(\mathcal{M}_{2})\xrightleftharpoons{{\langle{r,f}\rangle}}\mathcal{M}_{1}$,
and
(2) a specification morphism
${\langle{\mathcal{S}_{2},\mathrmbf{T}_{2}}\rangle} \xrightarrow{{\langle{r,f}\rangle}} {\langle{\mathcal{S}_{1},\mathrmbf{T}_{1}}\rangle}$
consisting of a schema morphism 
$\mathcal{S}_{2}\xRightarrow{{\langle{r,f}\rangle}}\mathcal{S}_{1}$
and a $\mathrmbf{Spec}(\mathcal{S}_{1})^{\mathrm{op}}$ ordering
$\overrightarrow{\mathrmbfit{spec}}_{{\langle{r,f}\rangle}}(\mathrmbf{T}_{2})\geq_{\mathcal{S}_{1}}\mathrmbf{T}_{1}$
\underline{or} a $\mathrmbf{Spec}(\mathcal{S}_{2})^{\mathrm{op}}$ ordering
$\mathrmbf{T}_{2}\geq_{\mathcal{S}_{2}}\overleftarrow{\mathrmbfit{spec}}_{{\langle{r,f}\rangle}}(\mathrmbf{T}_{1})$.}
}% Jul 29

%%%%%%%%%%%%%%%%%%%%%%%%%%%%%%%%%%%%%%%%%%%%%%%%%%%%%%%%%%%%%%%%%%%%%%%%%%%%%%%%%%%%%%%%%%%%%%%%%%%%
%\newpage
\subsubsection{Legacy Notions.}\label{sub:sub:sec:log:cons:ext}
%%%%%%%%%%%%%%%%%%%%%%%%%%%%%%%%%%%%%%%%%%%%%%%%%%%%%%%%%%%%%%%%%%%%%%%%%%%%%%%%%%%%%%%%%%%%%%%%%%%%

%%%%%%%%%%%%%%%%%%%%%%%%%%%%%%%%%%%%%%%%%%%%%%%%%%
\paragraph{Conservative Extensions}
%%%%%%%%%%%%%%%%%%%%%%%%%%%%%%%%%%%%%%%%%%%%%%%%%%

%
Given a structure morphism
$\mathcal{M}_{2}\xrightleftharpoons{{\langle{r,k,f,g}\rangle}}\mathcal{M}_{1}$,
a target logic 
$\mathcal{L}_{1}
%={\langle{\mathcal{S}_{1},\mathcal{M}_{1},\mathrmbf{T}_{1}}\rangle}
\in\mathrmbf{Log}(\mathcal{M}_{1})$
is a conservative extension of 
a source logic 
$\mathcal{L}_{2}
%={\langle{\mathcal{S}_{2},\mathcal{M}_{2},\mathrmbf{T}_{2}}\rangle}
\in\mathrmbf{Log}(\mathcal{M}_{2})$
when the following conditions hold:
(1)
the image of any constraint entailed by $\mathcal{L}_{2}$ is entailed by $\mathcal{L}_{1}$
with
$\overrightarrow{\mathrmbfit{log}}_{{\langle{r,k,f,g}\rangle}}(\mathcal{L}_{2})\geq_{\mathcal{M}_{1}}\mathcal{L}_{1}$,
so that
$\mathcal{L}_{2}\xrightleftharpoons{{\langle{r,k,f,g}\rangle}}\mathcal{L}_{1}$
is a logic morphism;
and
(2)
any constraint 
whose image is entailed by $\mathcal{L}_{1}$ 
is already entailed by $\mathcal{L}_{2}$
with
$\mathcal{L}_{2}\leq_{\mathcal{M}_{2}}\overleftarrow{\mathrmbfit{log}}_{{\langle{r,k,f,g}\rangle}}(\mathcal{L}_{1})$,
so that 
$\mathcal{L}_{2}$ is closed along $\mathcal{M}_{2}\xrightleftharpoons{{\langle{r,k,f,g}\rangle}}\mathcal{M}_{1}$ with
$\mathcal{L}_{2}^{\scriptscriptstyle\blacklozenge_{{\langle{r,k,f,g}\rangle}}}{\;\equiv_{\mathcal{M}_{2}}\;}\mathcal{L}_{2}$.
Hence,
logic $\mathcal{L}_{1}\in\mathrmbf{Log}(\mathcal{M}_{1})$
is a conservative extension of 
logic $\mathcal{L}_{2}\in\mathrmbf{Log}(\mathcal{M}_{2})$
along structure morphism $\mathcal{M}_{2}\xrightleftharpoons{{\langle{r,k,f,g}\rangle}}\mathcal{M}_{1}$
when the underlying specifications satisfy this property along the underlying schema morphisms.
Conservative extensions are closed under composition.

\comment{
\begin{center}
{\footnotesize{$\begin{array}{c}
\overrightarrow{\mathrmbfit{spec}}_{{\langle{r,f}\rangle}}(\mathrmbf{T}_{2})\geq_{\mathcal{S}_{1}}\mathrmbf{T}_{1}
\\
\mathrmbf{T}_{2}\geq_{\mathcal{S}_{2}}\overleftarrow{\mathrmbfit{spec}}_{{\langle{r,f}\rangle}}(\mathrmbf{T}_{1})
\\
\mathrmbf{T}_{2}\leq_{\mathcal{S}_{2}}\overleftarrow{\mathrmbfit{spec}}_{{\langle{r,f}\rangle}}(\mathrmbf{T}_{1})
\\
\mathrmbfit{struc}_{{\langle{r,f}\rangle}}(\mathcal{M}_{1})^{\mathcal{S}_{2}}
\equiv_{\mathcal{S}_{2}}
\overleftarrow{\mathrmbfit{spec}}_{{\langle{r,f}\rangle}}(\mathcal{M}_{1}^{\mathcal{S}_{1}})
\\ \hline
\mathrmbf{T}_{2}
\bigvee_{\mathcal{S}_{2}}
\mathrmbfit{struc}_{{\langle{r,f}\rangle}}(\mathcal{M}_{1})^{\mathcal{S}_{2}}
\geq_{\mathcal{S}_{2}}
\Bigl(\overleftarrow{\mathrmbfit{spec}}_{{\langle{r,f}\rangle}}(\mathrmbf{T}_{1})
\bigvee_{\mathcal{S}_{2}}
\overleftarrow{\mathrmbfit{spec}}_{{\langle{r,f}\rangle}}(\mathcal{M}_{1}^{\mathcal{S}_{1}})\Bigr)
\equiv
\overleftarrow{\mathrmbfit{spec}}_{{\langle{r,f}\rangle}}
\bigl(\mathrmbf{T}_{1}\bigvee_{\mathcal{S}_{1}}\mathcal{M}_{1}^{\mathcal{S}_{1}}\bigr)
\end{array}$}}
\end{center}
}

\comment{% miscellany
\begin{table}
\begin{center}
\begin{tabular}{c}
{\scriptsize\setlength{\extrarowheight}{2pt}{$\begin{array}{|r@{\hspace{10pt}}l@{\hspace{10pt}}l@{\hspace{10pt}}l|}\hline
sequent
&
\mathcal{M}{\;\models_{\mathcal{S}}\;}(\varphi{\;\vdash\;}\psi)
&
\varphi{\;\leq_{\widehat{\mathcal{E}}}\;}\psi
&
\mathrmbfit{ext}_{\widehat{\mathcal{E}}}(\psi){\;\supseteq\;}\mathrmbfit{ext}_{\widehat{\mathcal{E}}}(\varphi)
\\
constraint
&
\mathcal{M}{\;\models_{\mathcal{S}}\;}(\varphi'{\;\xrightarrow{h}\;}\varphi)
&
\varphi'{\;\geq_{\widehat{\mathcal{E}}}\;}{\scriptstyle\sum}_{h}(\varphi)
&
{h}^{\ast}(\varphi'){\;\geq_{\widehat{\mathcal{E}}}\;}\varphi
\\
specification
&
\mathcal{M}{\;\models_{\mathcal{S}}\;}\mathrmbf{T}
&
\mathcal{M}^{\mathcal{S}}\sqsupseteq\mathrmbf{T}^{\scriptstyle\bullet}\equiv\mathrmbf{T}
&
\mathrmbf{T}{\,\vdash\,}(\varphi'{\;\xrightarrow{h}\;}\varphi) 
\text{ implies } 
\mathcal{M}{\;\models_{\mathcal{S}}\;}(\varphi'{\;\xrightarrow{h}\;}\varphi)
\\
&&
\mathcal{M}^{\mathcal{S}}\leq\mathrmbf{T}
&
\\ \hline
\end{array}$}}
\\ \\
{\footnotesize\setlength{\extrarowheight}{2pt}{$\begin{array}{|l@{\hspace{10pt}}l|}\hline
{\text{\scriptsize{direct flow operator}}}
&
{\text{\scriptsize{preserves soundness}}}
\\
%\overset{\text{direct flow operator}}
{{\overrightarrow{\mathrmbfit{spec}}_{{\langle{r,f}\rangle}}}:
\mathrmbf{Spec}(\mathcal{S}_{2})\rightarrow\mathrmbf{Spec}(\mathcal{S}_{1})}
&
\mathrmbf{T}_{2}
\geq_{\mathcal{S}_{2}}
\mathcal{M}_{2}^{\mathcal{S}_{2}}
\text{ implies }
\mathrmbfit{spec}_{{\langle{r,f}\rangle}}(\mathrmbf{T}_{2})
%\geq_{\mathcal{S}_{1}}
%\mathrmbfit{spec}_{{\langle{r,f}\rangle}}(\mathcal{M}_{2}^{\mathcal{S}_{2}})
\geq_{\mathcal{S}_{1}}
\mathcal{M}_{1}^{\mathcal{S}_{1}}
\\ \hline
{\text{\scriptsize{specification morphism}}}
&\\
%\overset{\text{specification morphism}}
{{\langle{\mathcal{S}_{2},\mathrmbf{T}_{2}}\rangle}\xrightarrow{{\langle{r,f}\rangle}}
{\langle{\mathcal{S}_{1},\mathrmbf{T}_{1}}\rangle}}
&
\overrightarrow{\mathrmbfit{spec}}_{{\langle{r,f}\rangle}}(\mathrmbf{T}_{2})\geq_{\mathcal{S}_{1}}\mathrmbf{T}_{1}
\text{ or }
\mathrmbf{T}_{2}\geq_{\mathcal{S}_{2}}\overleftarrow{\mathrmbfit{spec}}_{{\langle{r,f}\rangle}}(\mathrmbf{T}_{1})
\\ \hline
{\text{\scriptsize{intent operator}}}
&
{\text{\scriptsize{preserves satisfaction}}}
\\
{\langle{\mathcal{S}_{2},\mathcal{M}_{2}^{\mathcal{S}_{2}}}\rangle}
\xrightarrow{\langle{r,f}\rangle}
{\langle{\mathcal{S}_{1},\mathcal{M}_{1}^{\mathcal{S}_{1}}}\rangle}
&
\mathrmbfit{spec}_{{\langle{r,f}\rangle}}(\mathcal{M}_{2}^{\mathcal{S}_{2}})
\geq_{\mathcal{S}_{1}}
\mathcal{M}_{1}^{\mathcal{S}_{1}}
\\ \hline
\end{array}$}}
\end{tabular}
\end{center}
\caption{Miscellany}
\label{misc}
\end{table}
}% miscellany

%%%%%%%%%%%%%%%%%%%%%%%%%%%%%%%%%%%%%%%%%%%%%%%%%%%%%%%%%%%%%%%%%%%%%%%%%%%%%%%%%%%%%%%%%%%%%%%%%%%%
%%%%%%%%%%%%%%%%%%%%%%%%%%%%%%%%%%%%%%%%%%%%%%%%%%%%%%%%%%%%%%%%%%%%%%%%%%%%%%%%%%%%%%%%%%%%%%%%%%%%
\newpage
\subsection{Sound Logics.}\label{sub:sec:snd:log}
%%%%%%%%%%%%%%%%%%%%%%%%%%%%%%%%%%%%%%%%%%%%%%%%%%%%%%%%%%%%%%%%%%%%%%%%%%%%%%%%%%%%%%%%%%%%%%%%%%%%
%%%%%%%%%%%%%%%%%%%%%%%%%%%%%%%%%%%%%%%%%%%%%%%%%%%%%%%%%%%%%%%%%%%%%%%%%%%%%%%%%%%%%%%%%%%%%%%%%%%%

A logic $\mathcal{L}={\langle{\mathcal{S},\mathcal{M},\mathrmbf{T}}\rangle}$ is sound
when the component structure $\mathcal{M}$
satisfies 
%every constraint 
%$(\varphi'{\;\xrightarrow{h}\;}\varphi)$
%in 
the component specification $\mathrmbf{T}$:
$\mathcal{M}{\;\models_{S}\;}\mathrmbf{T}$ or
$\mathcal{M}^{\mathcal{S}}{\;\leq_{\mathcal{S}}\;}\mathrmbf{T}$.
Associated with any $\mathcal{S}$-structure 
$\mathcal{M}$
% = {\langle{\mathcal{E},{\langle{\sigma,\tau}\rangle},\mathcal{A}}\rangle}$
is the natural logic 
$\mathrmbfit{nat}(\mathcal{M}) = {\langle{\mathcal{S},\mathcal{M},\mathcal{M}^{\mathcal{S}}}\rangle}$,
whose specification is the conceptual intent of $\mathcal{M}$.
The natural logic is the least sound logic: 
$\mathrmbfit{nat}(\mathcal{M}){\;\leq_{\mathcal{M}}\;}\mathcal{L}$
for any sound logic $\mathcal{L}={\langle{\mathcal{S},\mathcal{M},\mathrmbfit{T}}\rangle}$,
since soundness means
$\mathcal{M}^{\mathcal{S}}\leq_{\mathcal{S}}\mathrmbf{T}$
(Fig.~\ref{fig:log:ord}).
\comment{% no need
\begin{center}
$\begin{array}{@{\hspace{25pt}}c@{\hspace{-5pt}}c@{\hspace{-5pt}}c}
\mathcal{M}_{2}
&
\xrightleftharpoons{{\langle{r,k,f,g}\rangle}}
&
{\hspace{-80pt}}\mathcal{M}_{1}
\\
%\mathcal{M}^{\mathcal{S}}{\;\leq_{\mathcal{S}}\;}\mathrmbf{T}
%\mathrmbfit{int}_{\mathcal{S}_{2}}(\mathcal{M}_{2})
\mathcal{M}_{2}^{\mathcal{S}_{2}}
\stackrel{\scriptscriptstyle{\text{sound}}}{\;\leq_{\mathcal{S}_{2}}\,}
\mathrmbf{T}_{2}
&&
%\mathrmbfit{int}_{\mathcal{S}_{1}}(\mathcal{M}_{1})
\mathcal{M}_{1}^{\mathcal{S}_{1}}
\stackrel{\scriptscriptstyle{\text{sound}}}{\;\leq_{\mathcal{S}_{2}}\,}
\mathrmbf{T}_{1}
\stackrel{\overset{\text{spec}}{\scriptscriptstyle{\text{morph}}}}{\;\leq_{\mathcal{S}_{2}}\;}
\overrightarrow{\mathrmbfit{spec}}_{{\langle{r,f}\rangle}}
%{\wp}{\mathrmbfit{cons}_{{\langle{r,f}\rangle}}}
(\mathrmbf{T}_{2})
\\\\
\end{array}$
\end{center}
}% no need
Any structure morphism
$\mathcal{M}_{2}\xrightarrow{\langle{r,k,f,g}\rangle}\mathcal{M}_{1}$
induces the natural logic morphism 
$\mathrmbfit{nat}(\mathcal{M})\xrightarrow{{\langle{r,k,f,g}\rangle}}\mathrmbfit{nat}(\mathcal{M})$,
since
${\langle{\mathcal{S}_{2},\mathcal{M}_{2}^{\mathcal{S}_{2}}}\rangle}
\xrightarrow{\langle{r,f}\rangle}
{\langle{\mathcal{S}_{1},\mathcal{M}_{1}^{\mathcal{S}_{1}}}\rangle}$
is a specification morphism
(Lem.~\ref{lem:nat:spec:mor}).
Hence,
there is a natural logic passage
\newline\newline\mbox{}\rule{30pt}{0pt}\hfill{
$\mathrmbf{Struc}\xrightarrow{\mathrmbfit{nat}}\mathrmbf{Snd}$
}\hfill{
(Fig.~\ref{fbr:ctx})
}\newline\newline
to the subcontext 
$\mathrmbf{Snd}\xhookrightarrow{\mathrmbfit{inc}}\mathrmbf{Log}$
of sound logics.
Structures form a reflective subcontext of sound logics,
since the pair 
${\langle{\mathrmbfit{struc},\mathrmbfit{nat}}\rangle} : \mathrmbf{Snd}\rightleftarrows\mathrmbf{Struc}$
forms an adjunction 
\footnote{An adjunction (generalized pair) of passages; 
that is,
a pair of oppositely-directed passages 
that satisfy inverse equations up to morphism.
Any ``canonical construction from one species of structure to another'' 
is represented by an adjunction between corresponding categories of the two species (Goguen~\cite{goguen:cm91}).}
with $\mathcal{L}{\;\geq\;}\mathrmbfit{nat}(\mathrmbfit{struc}(\mathcal{L}))$
and $\mathrmbfit{nat}{\;\circ\;}\mathrmbfit{struc} = 1_{\mathrmbf{Struc}}$.
\footnote{This defines half of the upper-left part of the {\ttfamily FOLE} superstructure
(Fig.~\ref{fbr:ctx}).}
%

%%%%%%%%%%%%%%%%%%%%%%%%%%%%%%%%%%%%%%%%%%%%%%%%%%%%%%%%%%%%%%%%%%%%%%%%%%%%%%%%%%%%%%%%%%%%%%%%%%%%
%\newpage
\subsubsection{Residuation.}\label{sub:sub:sec:resid}
%%%%%%%%%%%%%%%%%%%%%%%%%%%%%%%%%%%%%%%%%%%%%%%%%%%%%%%%%%%%%%%%%%%%%%%%%%%%%%%%%%%%%%%%%%%%%%%%%%%%

Associated with any logic 
$\mathcal{L}={\langle{\mathcal{S},\mathcal{M},\mathrmbf{T}}\rangle}$
is its restriction
$\mathrmbfit{res}_{\mathcal{M}}(\mathcal{L})	
= \mathcal{L}{\;\vee_{\mathcal{M}}\;}\mathrmbfit{nat}(\mathcal{M})
={\langle{\mathcal{S},\mathcal{M},\mathcal{M}^{\mathcal{S}}{\,\vee_{\mathcal{S}}\,}\mathrmbf{T}^{\scriptstyle\bullet}}\rangle}
= {\langle{\mathcal{S},\mathcal{M},\mathcal{M}^{\mathcal{S}}{\,\cap\,}\mathrmbf{T}^{\scriptstyle\bullet}}\rangle}$,
which is the conceptual join 
in $\mathrmbf{Log}(\mathcal{M})$
(Fig.~\ref{fig:log:ord})
of the logic with the natural logic of its structure component.
Clearly,
the restriction is a sound logic and
$\mathrmbfit{res}_{\mathcal{M}}(\mathcal{L})\geq_{\mathcal{S}}\mathcal{L}$.
There is a restriction passage 
%$\mathrmbfit{res} : \mathrmbf{Log} \rightarrow \mathrmbf{Snd}$, 
%
\newline\newline\mbox{}\rule{30pt}{0pt}\hfill{
$\mathrmbf{Log}\xrightarrow{\mathrmbfit{res}}\mathrmbf{Snd}$,
}\hfill{
(Fig.~\ref{fbr:ctx})
}\newline\newline
which maps a logic $\mathcal{L}$ to the sound logic
$\mathrmbfit{res}(\mathcal{L})$ 
and maps a logic morphism 
%\newline\mbox{}\hfill
$\mathcal{L}_{2}
={\langle{\mathcal{S}_{2},\mathcal{M}_{2},T_{2}}\rangle}
\xrightarrow{{\langle{r,k,f,g}\rangle}}
{\langle{\mathcal{S}_{2},\mathcal{M}_{2},T_{2}}\rangle}=
\mathcal{L}_{2}$
%\hfill\mbox{}\newline
to the morphism of sound logics
\[\mbox{\footnotesize{$
\mathrmbfit{res}(\mathcal{L}_{2})
=
{\langle{\mathcal{S}_{2},\mathcal{M}_{2},
\mathcal{M}_{2}^{\mathcal{S}_{2}}{\vee_{\mathcal{S}_{2}}}\mathrmbf{T}_{2}^{\scriptstyle\bullet}
}\rangle}
\xrightarrow{{\langle{r,k,f,g}\rangle}}
{\langle{\mathcal{S}_{1},\mathcal{M}_{1},
\mathcal{M}_{1}^{\mathcal{S}_{1}}{\vee_{\mathcal{S}_{1}}}\mathrmbf{T}_{1}^{\scriptstyle\bullet}
}\rangle}
=
\mathrmbfit{res}(\mathcal{L}_{1})
$}\normalsize}\]
This is well-defined, 
since it just couples the specification morphism conditions
for the theories of $\mathcal{L}$ and $\mathrmbfit{nat}(\mathcal{M})$
\begin{center}
{\footnotesize{\setlength{\extrarowheight}{3pt}$\begin{array}{c}
\mathcal{M}_{2}^{\mathcal{S}_{2}}
\geq_{\mathcal{S}_{2}}
\overleftarrow{\mathrmbfit{spec}}_{{\langle{r,f}\rangle}}(\mathcal{M}_{1}^{\mathcal{S}_{1}})
\text{ and }
\mathrmbf{T}_{2}^{\scriptstyle\bullet}
\geq_{\mathcal{S}_{2}}
\overleftarrow{\mathrmbfit{spec}}_{{\langle{r,f}\rangle}}(\mathrmbf{T}_{1}^{\scriptstyle\bullet})
\text{ implies }
\\
\mathcal{M}_{2}^{\mathcal{S}_{2}}{\,\vee_{\mathcal{S}_{1}}\,}\mathrmbf{T}_{2}^{\scriptstyle\bullet}
\geq_{\mathcal{S}_{2}}
\overleftarrow{\mathrmbfit{spec}}_{{\langle{r,f}\rangle}}(\mathcal{M}_{1}^{\mathcal{S}_{1}})
{\,\vee_{\mathcal{S}_{2}}\,}
\overleftarrow{\mathrmbfit{spec}}_{{\langle{r,f}\rangle}}(\mathrmbf{T}_{1}^{\scriptstyle\bullet})
=
\overleftarrow{\mathrmbfit{spec}}_{{\langle{r,f}\rangle}}(\mathcal{M}_{1}^{\mathcal{S}_{1}}
{\,\vee_{\mathcal{S}_{1}}\,}
\mathrmbf{T}_{1}^{\scriptstyle\bullet}).
\end{array}$}}
\end{center}
The context of sound logics forms a coreflective subcontext of the context of logics,
since the pair 
${\langle{\mathrmbfit{inc},\mathrmbfit{res}}\rangle} : \mathrmbf{Log} \rightarrow \mathrmbf{Snd}$
forms an adjunction
with $\mathrmbfit{inc}{\;\circ\;}\mathrmbfit{res}{\;\cong\;}\mathrmbfit{1}_{\mathrmbf{Snd}}$
and $\mathrmbfit{inc}(\mathrmbfit{res}(\mathcal{L})){\;\geq_{\mathcal{M}}\;}\mathcal{L}$
for any logic $\mathcal{L}$.
For any structure $\mathcal{M}$,
restriction and inclusion on fibers are adjoint monotonic functions
${\langle{\mathrmbfit{res}_{\mathcal{M}},\mathrmbfit{inc}_{\mathcal{M}}}\rangle}
: \mathrmbf{Log}(\mathcal{M})\rightarrow\mathrmbf{Snd}(\mathcal{M})$,
where
%(Fig.~\ref{fig:log:ord})
$\mathrmbf{Log}(\mathcal{M})$
is the opposite fiber of logics over $\mathcal{M}$ 
and
$\mathrmbf{Snd}(\mathcal{M})$ is the opposite fiber of sound logics
(Fig.~\ref{fig:log:ord}).

%\mbox{}\newline\newline\fbox{\large\bfseries{sound logic flow along morphism}}
%\mbox{}\newline\rule{340pt}{1pt}\mbox{}\newline

%%%%%%%%%%%%%%%%%%%%%%%%%%%%%%%%%%%%%%%%%%%%%%%%%%%%%%%%%%%%%%%%%%%%%%%%%%%%%%%%%%%%%%%%%%%%%%%%%%%%
%\newpage
\subsubsection{Sound Logic Flow.}\label{sub:sub:sec:snd:log:flow}
%%%%%%%%%%%%%%%%%%%%%%%%%%%%%%%%%%%%%%%%%%%%%%%%%%%%%%%%%%%%%%%%%%%%%%%%%%%%%%%%%%%%%%%%%%%%%%%%%%%%

%
The movement of sound logics is a modification of logic flow.
Direct flow preserves soundness (Cor.~\ref{cor:pres:cons}).
Hence, there is no change.
%For any structure morphism 
%$\mathcal{M}_{2}\xrightleftharpoons{{\langle{r,k,f,g}\rangle}}\mathcal{M}_{1}$,
%direct flow preserves soundness:
%if a source logic 
%$\mathcal{L}_{2} = {\langle{\mathcal{S}_{2},\mathcal{M}_{2},\mathrmbf{T}_{2}}\rangle}$ 
%is sound, 
%then the direct flow logic 
%$\overrightarrow{\mathrmbfit{log}}_{{\langle{r,k,f,g}\rangle}}(\mathcal{L}_{2})
%= {\langle{\mathcal{S}_{1},\mathcal{M}_{1},
%\overrightarrow{\mathrmbfit{spec}}_{{\langle{r,f}\rangle}}(\mathrmbf{T}_{2})}\rangle}$ 
%is also sound.
%
Augment inverse flow by restricting to sound logics, 
via residuation,
by joining with structure-intent.
For any structure morphism       
$\mathcal{M}_{2}
% = {\langle{\mathcal{E}_{2},{\langle{\sigma_{2},\tau_{2}}\rangle},\mathcal{A}_{2}}\rangle} 
\xrightleftharpoons{{\langle{r,k,f,g}\rangle}}
%{\langle{\mathcal{E}_{1},{\langle{\sigma_{1},\tau_{1}}\rangle},\mathcal{A}_{1}}\rangle} = 
\mathcal{M}_{1}$
with underlying schema morphism 
$\mathcal{S}_{2}\stackrel{{\langle{r,f}\rangle}}{\Longrightarrow}\mathcal{S}_{1}$,
define the {\em direct/inverse flow} operators 
\begin{center}
{\footnotesize{$\begin{array}{l@{\hspace{4pt}:\hspace{4pt}}l}
\mathrmbf{Snd}(\mathcal{M}_{2})
\xrightarrow{\overrightarrow{\mathrmbfit{snd}}_{{\langle{r,k,f,g}\rangle}}}
\mathrmbf{Snd}(\mathcal{M}_{1})
&
{\langle{\mathcal{S}_{2},\mathcal{M}_{2},\mathrmbf{T}_{2}}\rangle}
\mapsto
\overset{\textstyle{\overrightarrow{\mathrmbfit{log}}_{{\langle{r,k,f,g}\rangle}}(\mathcal{L}_{2})}}
{\overbrace{{\langle{\mathcal{S}_{1},\mathcal{M}_{1},\overrightarrow{\mathrmbfit{spec}}_{{\langle{r,f}\rangle}}(\mathrmbf{T}_{2})}\rangle}}}
\\
\mathrmbf{Snd}(\mathcal{M}_{2})
\xleftarrow{\overleftarrow{\mathrmbfit{snd}}_{{\langle{r,k,f,g}\rangle}}}
\mathrmbf{Snd}(\mathcal{M}_{1})
\newline
&
\underset{\textstyle{\mathrmbfit{res}_{\mathcal{M}_{2}}(\overleftarrow{\mathrmbfit{log}}_{{\langle{r,k,f,g}\rangle}}(\mathcal{L}_{1}))}}
{\underbrace{{\langle{\mathcal{S}_{2},\mathcal{M}_{2},
\mathcal{M}_{2}^{\mathcal{S}_{2}}{\,\vee_{\mathcal{M}_{2}}\,}
\overleftarrow{\mathrmbfit{spec}}_{{\langle{r,f}\rangle}}(\mathrmbf{T}_{1})}\rangle}}}
\mapsfrom
{\langle{\mathcal{S}_{1},\mathcal{M}_{1},\mathrmbf{T}_{1}}\rangle}
\end{array}$}}
\end{center}
These are adjoint monotonic functions w.r.t. sound logic order:
{\footnotesize{
\begin{equation}\label{eqn:snd:log:mor}
\overrightarrow{\mathrmbfit{snd}}_{{\langle{r,k,f,g}\rangle}}(\mathcal{L}_{2}){\;\geq_{\mathcal{M}_{1}}\;}\mathcal{L}_{1}
\;\;\text{\underline{iff}}\;\; 
\mathcal{L}_{2}{\;\geq_{\mathcal{M}_{2}}\;}\overleftarrow{\mathrmbfit{snd}}_{{\langle{r,k,f,g}\rangle}}(\mathcal{L}_1)
\end{equation}
}\normalsize}
for all sound target logics $\mathcal{L}_{1}$ 
and sound source logics $\mathcal{L}_{2}$.
\begin{corollary}
Any logic morphism 
%\newline\mbox{}\hfill
$\mathcal{L}_{2}
%={\langle{\mathcal{S}_{2},\mathcal{M}_{2},T_{2}}\rangle}
\xrightarrow{{\langle{r,k,f,g}\rangle}}
%{\langle{\mathcal{S}_{2},\mathcal{M}_{2},T_{2}}\rangle}=
\mathcal{L}_{2}$
%\hfill\mbox{}\newline
between sound logics 
satisfies sound logic flow adjointness:
Eqn.~\ref{eqn:log:mor} implies Eqn.~\ref{eqn:snd:log:mor}.
\end{corollary}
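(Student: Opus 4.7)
The plan is to exploit the fact that, by Cor.~\ref{cor:pres:cons}, structure morphisms preserve soundness, so $\overrightarrow{\mathrmbfit{snd}}_{{\langle{r,k,f,g}\rangle}}$ agrees with $\overrightarrow{\mathrmbfit{log}}_{{\langle{r,k,f,g}\rangle}}$ on sound source logics. Hence the left-hand inequalities of Eqn.~\ref{eqn:log:mor} and Eqn.~\ref{eqn:snd:log:mor} are literally identical, and the task reduces to showing that, under the assumption that $\mathcal{L}_{2}$ is sound, the right-hand inequalities $\mathcal{L}_{2}{\;\geq_{\mathcal{M}_{2}}\;}\overleftarrow{\mathrmbfit{log}}_{{\langle{r,k,f,g}\rangle}}(\mathcal{L}_{1})$ and $\mathcal{L}_{2}{\;\geq_{\mathcal{M}_{2}}\;}\overleftarrow{\mathrmbfit{snd}}_{{\langle{r,k,f,g}\rangle}}(\mathcal{L}_{1})$ are equivalent.

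Abbreviate $\mathcal{K} = \overleftarrow{\mathrmbfit{log}}_{{\langle{r,k,f,g}\rangle}}(\mathcal{L}_{1})$. Unpacking the definition of the inverse sound-logic flow operator from \S\ref{sub:sub:sec:snd:log:flow} yields $\overleftarrow{\mathrmbfit{snd}}_{{\langle{r,k,f,g}\rangle}}(\mathcal{L}_{1}) = \mathrmbfit{res}_{\mathcal{M}_{2}}(\mathcal{K}) = \mathcal{K}\vee_{\mathcal{M}_{2}}\mathrmbfit{nat}(\mathcal{M}_{2})$, so the desired equivalence collapses to: $\mathcal{L}_{2}{\;\geq_{\mathcal{M}_{2}}\;}\mathcal{K}$ iff $\mathcal{L}_{2}{\;\geq_{\mathcal{M}_{2}}\;}\mathcal{K}\vee_{\mathcal{M}_{2}}\mathrmbfit{nat}(\mathcal{M}_{2})$. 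The ($\Leftarrow$) direction is immediate from transitivity, since the join dominates its summands: $\mathcal{K}\vee_{\mathcal{M}_{2}}\mathrmbfit{nat}(\mathcal{M}_{2})\geq_{\mathcal{M}_{2}}\mathcal{K}$. For the ($\Rightarrow$) direction, I invoke soundness of $\mathcal{L}_{2}$, which under the fiber isomorphism $\mathrmbf{Log}(\mathcal{M}_{2})\cong\mathrmbf{Spec}(\mathcal{S}_{2})$ translates into $\mathcal{L}_{2}\geq_{\mathcal{M}_{2}}\mathrmbfit{nat}(\mathcal{M}_{2})$ (as noted after Fig.~\ref{fig:log:ord}). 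Combining this with the hypothesis $\mathcal{L}_{2}\geq_{\mathcal{M}_{2}}\mathcal{K}$ and invoking the universal property of the join in the fiber preorder delivers $\mathcal{L}_{2}\geq_{\mathcal{M}_{2}}\mathcal{K}\vee_{\mathcal{M}_{2}}\mathrmbfit{nat}(\mathcal{M}_{2})$.

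The content is essentially lattice-theoretic: sound logics populate the upper cone of $\mathrmbfit{nat}(\mathcal{M}_{2})$ in the fiber $\mathrmbf{Log}(\mathcal{M}_{2})$, and within that cone joining with $\mathrmbfit{nat}(\mathcal{M}_{2})$ is absorbed. The only bookkeeping concern is keeping the two orderings straight: the intentional specification order $\leq_{\mathcal{S}}$ (where ``more specialized'' is smaller) transports to the logic order $\leq_{\mathcal{M}}$ via the fiber isomorphism, with soundness reading as $\mathcal{L}\geq_{\mathcal{M}}\mathrmbfit{nat}(\mathcal{M})$. I do not anticipate any real obstacle; the corollary is just the observation that residuation on the inverse flow operator is invisible to sound source logics, which reflects the adjoint pair ${\langle{\mathrmbfit{inc}_{\mathcal{M}_{2}},\mathrmbfit{res}_{\mathcal{M}_{2}}}\rangle}$ operating inside the fiber $\mathrmbf{Log}(\mathcal{M}_{2})$.
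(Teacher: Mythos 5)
Your proposal is correct and follows essentially the same route as the paper's proof: identify $\overrightarrow{\mathrmbfit{snd}}$ with $\overrightarrow{\mathrmbfit{log}}$ on sound sources, apply the logic-flow adjointness of Eqn.~\ref{eqn:log:mor}, and then use soundness of $\mathcal{L}_{2}$ (i.e.\ $\mathcal{L}_{2}\geq_{\mathcal{M}_{2}}\mathrmbfit{nat}(\mathcal{M}_{2})$) together with the universal property of the join to absorb the residuation in $\overleftarrow{\mathrmbfit{snd}}$. The paper compresses this into a single chain of \underline{iff}s; you merely unfold the key equivalence into its two directions, which is the same argument.
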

\begin{proof}
$\overrightarrow{\mathrmbfit{snd}}_{{\langle{r,k,f,g}\rangle}}(\mathcal{L}_{2}){\;\geq_{\mathcal{M}_{1}}\;}\mathcal{L}_{1}$
\underline{iff}
$\overrightarrow{\mathrmbfit{log}}_{{\langle{r,k,f,g}\rangle}}(\mathcal{L}_{2}){\;\geq_{\mathcal{M}_{1}}\;}\mathcal{L}_{1}$
\underline{iff}
$\mathcal{L}_{2}{\;\geq_{\mathcal{M}_{2}}\;}\overleftarrow{\mathrmbfit{log}}_{{\langle{r,k,f,g}\rangle}}(\mathcal{L}_1)$
\newline
\underline{iff}
$\mathcal{L}_{2}{\;\geq_{\mathcal{M}_{2}}\;}\mathrmbfit{nat}(\mathrmbfit{struc}(\mathcal{L}_{2}))
{\;\vee_{\mathcal{M}_{2}}\;}\overleftarrow{\mathrmbfit{log}}_{{\langle{r,k,f,g}\rangle}}(\mathcal{L}_1)$
\underline{iff}
$\mathcal{L}_{2}{\;\geq_{\mathcal{M}_{2}}\;}\overleftarrow{\mathrmbfit{snd}}_{{\langle{r,k,f,g}\rangle}}(\mathcal{L}_1)$.
\rule{5pt}{5pt}
\end{proof}
%
%%%%%%%%%%%%%%%%%%%%%%%%%%%%%%%%%%%%%%%%%%%%%%%%%%
%\newpage
%\paragraph{Addendum.}
%%%%%%%%%%%%%%%%%%%%%%%%%%%%%%%%%%%%%%%%%%%%%%%%%%
%
\begin{corollary}
For any structure morphism 
$\mathcal{M}_{2}\xrightleftharpoons{{\langle{r,k,f,g}\rangle}}\mathcal{M}_{1}$,
the restriction-inclusion adjunction on fibers is compatible with the inverse-direct flow adjunction.
This means that the following composite adjoint pairs are equal:
%(Fig.~\ref{fig:fbr:compat}):
\[\mbox{\footnotesize{$
{\langle{\mathrmbfit{res}_{\mathcal{M}_{1}},\mathrmbfit{inc}_{\mathcal{M}_{1}}}\rangle}{\;\cdot\;}{\langle{\overleftarrow{\mathrmbfit{snd}}_{{\langle{r,k,f,g}\rangle}},\overrightarrow{\mathrmbfit{snd}}_{{\langle{r,k,f,g}\rangle}}}\rangle}
= 
{\langle{\overleftarrow{\mathrmbfit{log}}_{{\langle{r,k,f,g}\rangle}},\overrightarrow{\mathrmbfit{log}}_{{\langle{r,k,f,g}\rangle}}}\rangle}{\;\cdot\;}{\langle{\mathrmbfit{res}_{\mathcal{M}_{2}},\mathrmbfit{inc}_{\mathcal{M}_{2}}}\rangle}
$.}\normalsize}\]
\end{corollary}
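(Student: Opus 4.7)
The plan is to establish the equality by splitting it into two separate equalities of functors: one for the left-adjoint components and one for the right-adjoint components of the two composite adjoint pairs. In a preorder-enriched setting like this, equality of either component already pins down the adjunction; nevertheless, I would check both, since the two sides are driven by two distinct mechanisms, namely (i) preservation of soundness under direct flow and (ii) compatibility of residuation with inverse flow.

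The right-adjoint equality $\mathrmbfit{inc}_{\mathcal{M}_{1}} \circ \overrightarrow{\mathrmbfit{snd}}_{{\langle{r,k,f,g}\rangle}} = \overrightarrow{\mathrmbfit{log}}_{{\langle{r,k,f,g}\rangle}} \circ \mathrmbfit{inc}_{\mathcal{M}_{2}}$ is essentially tautological. By construction, $\overrightarrow{\mathrmbfit{snd}}$ is just the restriction of $\overrightarrow{\mathrmbfit{log}}$ to the subcategory of sound logics; that this restriction lands in $\mathrmbf{Snd}$ at all is precisely Cor.~\ref{cor:pres:cons}. Both sides therefore send a sound $\mathcal{L}_{2} = \langle\mathcal{S}_{2},\mathcal{M}_{2},\mathrmbf{T}_{2}\rangle$ to the same logic $\langle\mathcal{S}_{1},\mathcal{M}_{1},\overrightarrow{\mathrmbfit{spec}}_{{\langle{r,f}\rangle}}(\mathrmbf{T}_{2})\rangle$.

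The substantive content is the left-adjoint equality $\overleftarrow{\mathrmbfit{snd}}_{{\langle{r,k,f,g}\rangle}} \circ \mathrmbfit{res}_{\mathcal{M}_{1}} = \mathrmbfit{res}_{\mathcal{M}_{2}} \circ \overleftarrow{\mathrmbfit{log}}_{{\langle{r,k,f,g}\rangle}}$. For an arbitrary $\mathcal{L}_{1} = \langle\mathcal{S}_{1},\mathcal{M}_{1},\mathrmbf{T}_{1}\rangle$, I would unfold both sides at the specification level. The right-hand side yields $\mathcal{M}_{2}^{\mathcal{S}_{2}} \vee_{\mathcal{S}_{2}} \overleftarrow{\mathrmbfit{spec}}_{{\langle{r,f}\rangle}}(\mathrmbf{T}_{1})$, using that inverse images are closed. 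The left-hand side expands to $\mathcal{M}_{2}^{\mathcal{S}_{2}} \vee_{\mathcal{S}_{2}} \overleftarrow{\mathrmbfit{spec}}_{{\langle{r,f}\rangle}}\bigl(\mathcal{M}_{1}^{\mathcal{S}_{1}} \vee_{\mathcal{S}_{1}} \mathrmbf{T}_{1}^{\scriptscriptstyle\bullet}\bigr)$; applying the join-preservation and closure-insensitivity of $\overleftarrow{\mathrmbfit{spec}}$ from \S\ref{sub:sub:sec:spec:flow} reduces this to $\mathcal{M}_{2}^{\mathcal{S}_{2}} \vee_{\mathcal{S}_{2}} \overleftarrow{\mathrmbfit{spec}}_{{\langle{r,f}\rangle}}(\mathcal{M}_{1}^{\mathcal{S}_{1}}) \vee_{\mathcal{S}_{2}} \overleftarrow{\mathrmbfit{spec}}_{{\langle{r,f}\rangle}}(\mathrmbf{T}_{1})$. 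The decisive final step is to invoke Lem.~\ref{lem:nat:spec:mor}, which gives $\mathcal{M}_{2}^{\mathcal{S}_{2}} \geq_{\mathcal{S}_{2}} \overleftarrow{\mathrmbfit{spec}}_{{\langle{r,f}\rangle}}(\mathcal{M}_{1}^{\mathcal{S}_{1}})$, so that the middle term is absorbed into $\mathcal{M}_{2}^{\mathcal{S}_{2}}$ and the two sides match.

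The principal obstacle is not conceptual but organisational: the specification preorder is oriented so that \emph{more specialised is smaller}, joins correspond to intersections of closed specifications, and Lem.~\ref{lem:nat:spec:mor} delivers its conclusion in the $\geq$ direction. Keeping these orientations consistent when applying the absorption identity $A \vee B = A$ whenever $B \leq A$ is the only place the argument is at risk of slipping. All genuinely logical input is already packaged in Cor.~\ref{cor:pres:cons}, the join-preservation and closedness properties of $\overleftarrow{\mathrmbfit{spec}}$, and the intent-order inequality of Lem.~\ref{lem:nat:spec:mor}.
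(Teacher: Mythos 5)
Your proposal is correct and matches the paper's own proof: the paper likewise establishes only the substantive composite $\mathrmbfit{res}_{\mathcal{M}_{1}}\cdot\overleftarrow{\mathrmbfit{snd}}_{{\langle{r,k,f,g}\rangle}}=\overleftarrow{\mathrmbfit{log}}_{{\langle{r,k,f,g}\rangle}}\cdot\mathrmbfit{res}_{\mathcal{M}_{2}}$ by expanding at the specification level, distributing $\overleftarrow{\mathrmbfit{spec}}_{{\langle{r,f}\rangle}}$ over the join, and absorbing $\overleftarrow{\mathrmbfit{spec}}_{{\langle{r,f}\rangle}}(\mathcal{M}_{1}^{\mathcal{S}_{1}})$ into $\mathcal{M}_{2}^{\mathcal{S}_{2}}$ via Lem.~\ref{lem:nat:spec:mor}. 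Your additional explicit check of the direct-flow component via Cor.~\ref{cor:pres:cons}, and your care about closure-insensitivity of $\overleftarrow{\mathrmbfit{spec}}$ and the orientation of the absorption identity, are points the paper leaves implicit.
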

%
%\begin{figure}
\begin{center}
{{\begin{tabular}{c}
\setlength{\unitlength}{1.0pt}
\begin{picture}(80,90)(0,-10)
\put(0,60){\makebox(0,0){\footnotesize{$\mathrmbf{Snd}(\mathcal{M}_{2})$}}}
\put(80,60){\makebox(0,0){\footnotesize{$\mathrmbf{Log}(\mathcal{M}_{2})$}}}
\put(0,0){\makebox(0,0){\footnotesize{$\mathrmbf{Snd}(\mathcal{M}_{1})$}}}
\put(80,0){\makebox(0,0){\footnotesize{$\mathrmbf{Log}(\mathcal{M}_{1})$}}}
\put(40,74){\makebox(0,0){\scriptsize{$\mathrmbfit{inc}_{\mathcal{M}_{2}}$}}}
\put(42,46){\makebox(0,0){\scriptsize{$\mathrmbfit{res}_{\mathcal{M}_{2}}$}}}
\put(40,14){\makebox(0,0){\scriptsize{$\mathrmbfit{int}_{\mathcal{M}_{1}}$}}}
\put(42,-14){\makebox(0,0){\scriptsize{$\mathrmbfit{res}_{\mathcal{M}_{1}}$}}}
\put(15,28){\makebox(0,0)[r]{\scriptsize{$\overleftarrow{\mathrmbfit{snd}}_{{\langle{r,k,f,g}\rangle}}$}}}
\put(10,32){\makebox(0,0)[l]{\scriptsize{$\overrightarrow{\mathrmbfit{snd}}_{{\langle{r,k,f,g}\rangle}}$}}}
\put(95,28){\makebox(0,0)[r]{\scriptsize{$\overleftarrow{\mathrmbfit{log}}_{{\langle{r,k,f,g}\rangle}}$}}}
\put(90,32){\makebox(0,0)[l]{\scriptsize{$\overrightarrow{\mathrmbfit{log}}_{{\langle{r,k,f,g}\rangle}}$}}}
\put(22,66){\vector(1,0){36}}
\put(58,54){\vector(-1,0){36}}
\put(22,6){\vector(1,0){36}}
\put(58,-6){\vector(-1,0){36}}
\put(-6,12){\vector(0,1){36}}
\put(6,48){\vector(0,-1){36}}
\put(74,12){\vector(0,1){36}}
\put(86,48){\vector(0,-1){36}}
\end{picture}
\end{tabular}}}
\end{center}
%\caption{Fiber Compatibility}
%\label{fig:fbr:compat}
%\end{figure}
%
\begin{proof}
We show
$\mathrmbfit{res}_{\mathcal{M}_{1}}{\;\cdot\;}\overleftarrow{\mathrmbfit{snd}}_{{\langle{r,k,f,g}\rangle}}
= \overleftarrow{\mathrmbfit{log}}_{{\langle{r,k,f,g}\rangle}}{\;\cdot\;}\mathrmbfit{res}_{\mathcal{M}_{2}}$.
\newline
Since
$\mathcal{M}_{2}^{\mathcal{S}_{2}}
{\,\geq_{\mathcal{M}_{2}}\,}
\overleftarrow{\mathrmbfit{spec}}_{{\langle{r,f}\rangle}}(\mathcal{M}_{1}^{\mathcal{S}_{1}})$,
%\newline
we have
$\overleftarrow{\mathrmbfit{snd}}_{{\langle{r,k,f,g}\rangle}}(\mathrmbfit{res}_{\mathcal{M}_{1}}(\mathcal{L}_{1}))
\newline
=
{\langle{\mathcal{S}_{2},\mathcal{M}_{2},
\mathcal{M}_{2}^{\mathcal{S}_{2}}{\,\vee_{\mathcal{M}_{2}}\,}
\overleftarrow{\mathrmbfit{spec}}_{{\langle{r,f}\rangle}}(
\mathcal{M}_{1}^{\mathcal{S}_{1}}{\,\vee_{\mathcal{M}_{1}}\,}
\mathrmbf{T}_{1})}\rangle}
\newline
=
{\langle{\mathcal{S}_{2},\mathcal{M}_{2},
\mathcal{M}_{2}^{\mathcal{S}_{2}}
{\,\vee_{\mathcal{M}_{2}}\,}
\overleftarrow{\mathrmbfit{spec}}_{{\langle{r,f}\rangle}}(\mathcal{M}_{1}^{\mathcal{S}_{1}})
{\,\vee_{\mathcal{M}_{2}}\,}
\overleftarrow{\mathrmbfit{spec}}_{{\langle{r,f}\rangle}}(\mathrmbf{T}_{1})}\rangle}
\newline
=
{\langle{\mathcal{S}_{2},\mathcal{M}_{2},
\mathcal{M}_{2}^{\mathcal{S}_{2}}
{\,\vee_{\mathcal{M}_{2}}\,}
\overleftarrow{\mathrmbfit{spec}}_{{\langle{r,f}\rangle}}(\mathrmbf{T}_{1})}\rangle}
=
\mathrmbfit{res}_{\mathcal{M}_{2}}(\overleftarrow{\mathrmbfit{log}}_{{\langle{r,k,f,g}\rangle}}(\mathcal{L}_{1}))$.
\rule{5pt}{5pt}
\end{proof}
%

%%%%%%%%%%%%%%%%%%%%%%%%%%%%%%%%%%%%%%%%%%%%%%%%%%%%%%%%%%%%%%%%%%%%%%%%%%%%%%%%%%%%%%%%%%%%%%%%%%%%
%%%%%%%%%%%%%%%%%%%%%%%%%%%%%%%%%%%%%%%%%%%%%%%%%%%%%%%%%%%%%%%%%%%%%%%%%%%%%%%%%%%%%%%%%%%%%%%%%%%%
%%%%%%%%%%%%%%%%%%%%%%%%%%%%%%%%%%%%%%%%%%%%%%%%%%%%%%%%%%%%%%%%%%%%%%%%%%%%%%%%%%%%%%%%%%%%%%%%%%%%

%
\begin{figure}
\begin{center}
\begin{tabular}[t]{@{\hspace{50pt}}c@{\hspace{60pt}}c}
& \\
{{\begin{tabular}[t]{c}
\\
\setlength{\unitlength}{0.95pt}
\begin{picture}(80,120)(-10,-10)
%\put(-75,90){\makebox(0,0){\normalsize{$\mathrmbf{Db}$}}}
\put(0,120){\makebox(0,0){\normalsize{$\mathrmbf{Snd}$}}}
\put(80,120){\makebox(0,0){\normalsize{$\mathrmbf{Log}$}}}
\put(0,60){\makebox(0,0){\normalsize{$\mathrmbf{Struc}$}}}
\put(40,5.5){\makebox(0,0){\normalsize{$\mathrmbf{Sch}$}}}
\put(80,60){\makebox(0,0){\normalsize{$\mathrmbf{Spec}$}}}
%
%\put(-40,117){\makebox(0,0){\footnotesize{$\mathfrak{D}$}}}
%\put(-40,63){\makebox(0,0){\footnotesize{$\mathfrak{R}$}}}
\put(42,70){\makebox(0,0){\footnotesize{$\mathrmbfit{int}$}}}
\put(17,29){\makebox(0,0)[r]{\footnotesize{$\mathrmbfit{sch}$}}}
\put(64,29){\makebox(0,0)[l]{\footnotesize{$\mathrmbfit{sch}$}}}
\put(-8,90){\makebox(0,0)[r]{\footnotesize{$\mathrmbfit{nat}$}}}
\put(8,90){\makebox(0,0)[l]{\footnotesize{$\mathrmbfit{struc}$}}}
\put(46,90){\makebox(0,0)[l]{\footnotesize{$\mathrmbfit{struc}$}}}
\put(83,90){\makebox(0,0)[l]{\footnotesize{$\mathrmbfit{spec}$}}}
\put(40,133){\makebox(0,0){\footnotesize{$\mathrmbfit{inc}$}}}
\put(40,107){\makebox(0,0){\footnotesize{$\mathrmbfit{res}$}}}
%
%\put(-65,95){\vector(-3,-2){0}}\qbezier(-65,95)(-37.5,115)(-15,118)
%\put(-65,85){\vector(-3,2){0}}\qbezier(-65,85)(-40,67)(-19,64)
\qbezier(40,37)(45,32)(50,27)
\qbezier(40,37)(35,32)(30,27)
\put(-6,72){\vector(0,1){36}}
\put(6,108){\vector(0,-1){36}}
\put(80,108){\vector(0,-1){36}}
\put(22,126){\vector(1,0){36}}
\put(58,114){\vector(-1,0){36}}
\put(72,48){\vector(-2,-3){22}}
\put(22,60){\vector(1,0){40}}
\put(8,48){\vector(2,-3){22}}
\put(65,107){\vector(-4,-3){48}}
\put(-20,50){\oval(20,20)[l]}
\put(-10,50){\oval(20,20)[br]}
\qbezier(-20,40)(-15,40)(-10,40)
\put(0,55){\vector(0,1){0}}
\put(-14,46){\makebox(0,0){\scriptsize{$\mathrmbfit{fmla}$}}}
\put(-14,35){\makebox(0,0){\scriptsize{$\mathrmbfit{im}$}}}
%\put(-41,45){\makebox(0,0){\scriptsize{$\mathrmbfit{fmla}$}}}
%
\put(25,-5){\oval(20,20)[l]}
\put(30,-5){\oval(20,20)[br]}
\qbezier(25,-15)(27.5,-15)(30,-15)
\put(40,0){\vector(0,1){0}}
\put(28,-8){\makebox(0,0){\scriptsize{$\mathrmbfit{fmla}$}}}
%\put(4,-15){\makebox(0,0){\scriptsize{$\mathrmbfit{fmla}$}}}
%\put(-42,90){\makebox(0,0){\footnotesize{$\iota$}}}\put(-35,89){\makebox(0,0){\footnotesize{$\Downarrow$}}}
\end{picture}
\\
\end{tabular}}}
&
{\setlength{\extrarowheight}{2pt}\footnotesize{$\begin{array}[t]{l}

\mathrmbfit{inc}{\;\circ\;}\mathrmbfit{res}{\;\cong\;}\mathrmbfit{1}_{\mathrmbf{Snd}}
\\
\mathrmbfit{inc}(\mathrmbfit{res}(\mathcal{L})){\;\geq_{\mathcal{M}}\;}\mathcal{L}
\\ \hline
\mathrmbfit{nat}{\;\circ\;}\mathrmbfit{struc}{\;=\;}\mathrmbfit{1}_{\mathrmbf{Struc}}
\\
\mathcal{L}{\;\geq_{\mathcal{M}}\;}\mathrmbfit{nat}(\mathrmbfit{struc}(\mathcal{L}))
\\ \hline
\mathrmbfit{int}{\;\circ\;}\mathrmbfit{sch}{\;=\;}\mathrmbfit{sch}
%\\ \hline
%\mathrmbfit{nat}{\;\circ\;}\mathfrak{D}{\;=\;}\mathfrak{R}
%\\
%\mathfrak{D}(\mathcal{L})\xrightarrow[\mathrmbfit{inc}]{\iota_{\mathcal{L}}}\mathfrak{R}(\mathrmbfit{struc}(\mathcal{L}))
\end{array}$}}
\\ & \\
\end{tabular}
\end{center}
\caption{{\ttfamily FOLE} Superstructure}
%: Fibered Contexts}
\label{fbr:ctx}
\end{figure}
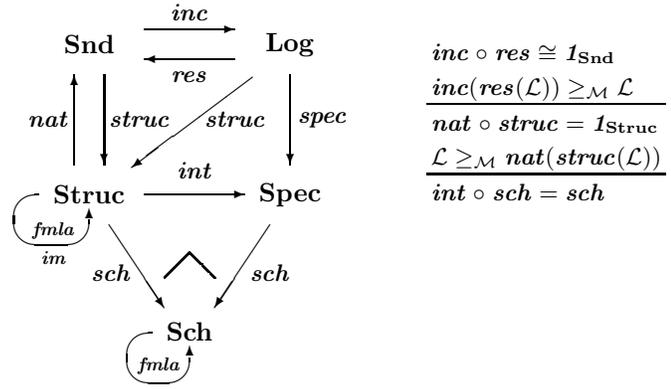
%

%\include{super-structure}

%%%%%%%%%%%%%%%%%%%%%%%%%%%%%%%%%%%%%%%%%%%%%%%%%%%%%%%%%%%%%%%%%%%%%%%%%%%%%%%%%%%%%%%%%%%%%%%%%%%%
%%%%%%%%%%%%%%%%%%%%%%%%%%%%%%%%%%%%%%%%%%%%%%%%%%%%%%%%%%%%%%%%%%%%%%%%%%%%%%%%%%%%%%%%%%%%%%%%%%%%
%%%%%%%%%%%%%%%%%%%%%%%%%%%%%%%%%%%%%%%%%%%%%%%%%%%%%%%%%%%%%%%%%%%%%%%%%%%%%%%%%%%%%%%%%%%%%%%%%%%%
\newpage
\section{Conclusion and Future Work}\label{sec:conclu}
%%%%%%%%%%%%%%%%%%%%%%%%%%%%%%%%%%%%%%%%%%%%%%%%%%%%%%%%%%%%%%%%%%%%%%%%%%%%%%%%%%%%%%%%%%%%%%%%%%%%
%%%%%%%%%%%%%%%%%%%%%%%%%%%%%%%%%%%%%%%%%%%%%%%%%%%%%%%%%%%%%%%%%%%%%%%%%%%%%%%%%%%%%%%%%%%%%%%%%%%%
%%%%%%%%%%%%%%%%%%%%%%%%%%%%%%%%%%%%%%%%%%%%%%%%%%%%%%%%%%%%%%%%%%%%%%%%%%%%%%%%%%%%%%%%%%%%%%%%%%%%

The work in this paper consisted of two parts:
development of the {\ttfamily FOLE} logical environment and
presentation of the {\ttfamily FOLE} superstructure.
The development of the {\ttfamily FOLE} logical environment centered on 
the satisfaction relation between the polar opposition (formalism\rule[2.5pt]{5pt}{0.4pt}\,semantics).
At the upper pole, we defined the formalism of formulas, sequents and constraints;
the latter two allow us to specify ontological hierarchies.
At the lower pole, we developed semantics through the interpretation and  classification of formulas;
here we defined the valuable concept of comprehension. 
Bridging the poles
is the satisfaction relation between a structure and a formalism (sequent or constraint).
Finally,
to finish the work on the {\ttfamily FOLE} logical environment, 
we expressed {\ttfamily FOLE} as an institution;
and more particularly, as a logical environment. 

The presentation of the {\ttfamily FOLE} superstructure 
involved the mathematical contexts, passages and adjunctions
illustrated in the {\ttfamily FOLE} architectural diagram (Fig.~\ref{fbr:ctx}).
%
%As Fig.~\ref{fbr:ctx} illustrates, this 
This diagram
consists of four components:
structures, specifications, logics and sound logics.
Structures, which represent the semantic aspect of {\ttfamily FOLE}, 
were handled in the {\ttfamily FOLE} foundation paper \cite{kent:fole:era:found}.
In this paper, we present the remaining architectural components: specifications, logics and sound logics.
Specifications represent the formal aspect of {\ttfamily FOLE}; 
here, we define the notions of entailment, consequence and flow of formalism.
Logics combine the formal and semantic aspects of {\ttfamily FOLE}.
Logics are sound when semantics satisfies formalism.

%\marginpar
%[\mbox{\textbf{10/02/2016}} $\blacktriangleright$]
%{\mbox{$\blacktriangleleft$ \textbf{10/02/2016}}} 

%\begin{description}
%\item[Future work:] \mbox{ } 
%\begin{itemize}
%\item 

As outlined in the introduction \S\ref{sec:intro},
this paper 
is one of a series of papers
that provide a rigorous mathematical representation
for ontologies within the first-order logical environment {\ttfamily FOLE}.
The {\ttfamily FOLE} representation can be expressed in two forms: 
a classification form and interpretative form.
%\begin{itemize}
%\item 
The foundation paper \cite{kent:fole:era:found} and 
the current superstructure paper 
develop the classification form of {\ttfamily FOLE}.
%\item 
The paper 
\cite{kent:fole:era:tbl}
and
the paper 
\cite{kent:fole:era:db}
develop the interpretative form of {\ttfamily FOLE} 
as a transformational passage from sound logics \cite{kent:iccs2013},
thereby defining the formalism and semantics of first-order logical/relational database systems 
\cite{kent:db:sem}.
%\end{itemize}

%\item 
System interoperability, 
in the general setting of institutions and logical environments,
was defined in the paper ``System Consequence'' (Kent~\cite{kent:iccs2009}). 
This was inspired by the channel theory of information flow presented in 
the book 
{\itshape Information Flow: The Logic of Distributed Systems} (Barwise and Seligman \cite{barwise:seligman:97}).
Since {\ttfamily FOLE} is a logical environment (\S\ref{sub:sub:sec:inst:asp}),
in two further papers we apply this approach to interoperability
for information systems based on first-order logic and relational databases:
one paper discusses integration over a fixed type domain and
the other paper discusses integration over a fixed universe.
%\end{itemize}
%\end{description}
%

%%%%%%%%%%%%%%%%%%%%%%%%%%%%%%%%%%%%%%%%%%%%%%%%%%%%%%%%%%%%%%%%%%%%%%%%%%%%%%%%%%%%%%%%%%%%%%%%%%%%%%%%%%%%%%%%%%%%%%%%
\appendix
%\section{Appendix}\label{sec:append}
%\input{append}
%\input{struc-fbr-pass}
%%%%%%%%%%%%%%%%%%%%%%%%%%%%%%%%%%%%%%%%%%%%%%%%%%%%%%%%%%%%%%%%%%%%%%%%%%%%%%%%%%%%%%%%%%%%%%%%%%%%%%%%%%%%%%%%%%%%%%%%

%\newpage

%\newpage

\end{document}